\theoremstyle{plain}
\newtheorem{thm}{Theorem}[section]
\newtheorem{rem}[thm]{Remark}
\newtheorem{remark}{Remark}[section]
\newtheorem{lemma}[thm]{Lemma}
\newtheorem{prop}[thm]{Proposition}
\newtheorem{cor}[thm]{Corollary}
\def\tr{{\mathop {{\rm tr}}}}
\newcommand\Z{\mathbb Z}
\newcommand\R{\mathbb R}
\newcommand\N{\mathbb N}
\begin{document}

\title{``Mixed spectral nature"    of   Thue--Morse Hamiltonian}

\author{Qinghui LIU}
\address[Q.H. LIU]{
Department of  Computer Science,
Beijing Institute of Technology,
Beijing 100081, P. R. China.}
\email{qhliu@bit.edu.cn}
\author{Yanhui QU}
\address[Y.H. QU]{Department of  Mathematical Science, Tsinghua University, Beijing 100084, P. R. China.}
\email{yhqu@math.tsinghua.edu.cn}
\author{Xiao Yao}
\address[X. Yao]{Department of  Mathematical Science, Tsinghua University, Beijing 100084, P. R. China.}
\email{yaox11@mails.tsinghua.edu.cn}

\begin{abstract}
We find three dense subsets $\Sigma_I,\Sigma_{II}$ and $\Sigma_{III}$ of the spectrum of the Thue-Morse Hamiltonian,
such that each energy in $\Sigma_I$ is extended;   each energy in $\Sigma_{II}$ is pseudo-localized and each energy in $\Sigma_{III}$ is one-sided pseudo-localized. We also obtain exact estimations on the norm of the transfer matrices and the norm of the formal solutions for these energies. Especially, for $E\in \Sigma_{II}\cup\Sigma_{III}, $  the norms of the transfer matrices behave like
$$
e^{c_1\gamma\sqrt{n}}\le\|T_{ n}(E)\|\le  e^{c_2\gamma\sqrt{n}}.
$$
The local dimensions of the spectral measure on these subsets are also studied.  The local dimension is $0$ for energy  in  $\Sigma_{II}$ and is larger than $1$ for energy in  $\Sigma_{I}\cup\Sigma_{III}.$
In summary,   the Thue-Morse Hamiltonian exhibits ``mixed spectral nature".

\end{abstract}

\maketitle

%%%%%%%%%%%%%%%%%%%%%%%%%%%%%%%%%%
\section{Introduction}

Discrete Schr\"odinger operator  originates from crystal physics. To describe the motion of one electron in a one-dimensional  crystal, physicists use the following ``tight binding" model: $H_V:\ell^2(\Z)\to \ell^2(\Z)$ defined as
$$
(H_V\psi)_n= \psi_{n+1} +\psi_{n-1}+V(n)\psi_n,
$$
where $V:\Z\to \R$ is the potential. To understand the spectral property of $H_V$, the central issue  is to analyze the eigen-equation $H_V\psi=E\psi$, where $E\in \R$ is an energy. Define transfer matrices as follows:
$$
%\begin{array}{l}
M_n(E):=
\begin{pmatrix}
E-V(n)&-1\\
1&0
\end{pmatrix};\quad
T_{m\to n}(E):=
\begin{cases}
M_{n}\cdots M_{m+1}& m< n\\
I&m=n\\
\left(M_{m}\cdots M_{n+1})\right)^{-1}& m>n.
\end{cases}
%\end{array}
$$
Write $T_n(E):=T_{0\to n}(E)$ and  $\vec{\psi}_n:=(\psi_{n+1},\psi_n)^t$,
where $v^t$ is the transpose of $v$.
Then $H_V\psi=E\psi$ if and only if for any integers $m, n$,
\begin{equation}\label{transfer-vector}
\vec{\psi}_n=T_{m\to n} \vec{\psi}_m.
\end{equation}

%%%%%%%%%%%%%%%%%%%%%%%%
\subsection{Literature and background}
For crystal, the potential $V$ is periodic.  Assume the period of $V$ is $N$, then  it is well-known that  the spectrum $\sigma(H_V)$ of $H_V$ satisfies
 \begin{equation*}%\label{spectrum-periodic}
\sigma(H_V)=\{E\in \R: |\tr(T_{N}(E))|\le 2\},
\end{equation*}
which is made of finitely many intervals.  The spectral measure of $H_V$ is absolutely continuous. See for example \cite{CL,To}. From fractal geometry point of view, the spectral measure  has Hausdorff dimension $1$. In particular,  $T_N(E)$ is elliptic for any  $E\in \sigma(H_V)$ with $|\tr(T_{N}(E))|< 2$. As a consequence, there exists a constant $C>1$ such that, for any $n\in\Z$,
\begin{equation}\label{transfer-periodic}
1\le \|T_n(E)\|\le C
\end{equation}
 and for any solution $\psi$ of $H_V\psi=E\psi,$ any $n\in\Z$,
\begin{equation}\label{solution-periodic}
  C^{-1}\|\vec{\psi}_0\|\le \|\vec{\psi}_n\|\le C\|\vec{\psi}_0\|.
\end{equation}
In this case, $E$ is called an {\it extended} state.

In order to study crystal with impurities,
Anderson \cite{A} proposed the following model: assume $\{V_\omega(n): n\in \Z\}$ are i.i.d. random variables. For this model, the typical spectral picture is as follows.  Assume $\varphi$ is the probability distribution of $V_\omega(0)$, then almost surely
\begin{equation*}%\label{spectrum-random}
\sigma(H_{\omega})={\rm supp}(\varphi)+[-2,2].
\end{equation*}
Thus the spectrum still has a band structure. On the other hand, almost surely, the spectral measure of $H_\omega$ is pure point; the set of eigenvalues $\{E^{\omega,i}:i\in \N\}$ is dense in $\sigma(H_\omega)$; the eigenfunctions $\{\psi^{\omega,i}: i\in \N\}$ form a basis of $\ell^2(\Z)$; each $\psi^{\omega,i}$ decays exponentially: there exists a constant $\gamma>0$ such that for any $\epsilon>0$ small,
\begin{equation}\label{exp-decay-random}
e^{-(\gamma+\epsilon)|n|}\lesssim \|\vec{\psi}^{\omega,i}_{ n}\|\lesssim e^{-(\gamma-\epsilon)|n|}.
\end{equation}
If $\phi$ is another solution of the eigen-equation and independent with $\psi^{\omega,i}$, then
\begin{equation}\label{exp-increase-random}
e^{(\gamma-\epsilon)|n|}\lesssim \|\vec{\phi}_n\|\lesssim e^{(\gamma+\epsilon)|n|}.
\end{equation}
Moreover, the following estimation on the norms of the transfer matrices holds:
\begin{equation}\label{transfer-random}
e^{(\gamma-\epsilon)|n|}\lesssim\|T_n(E^{\omega,i})\|\lesssim e^{(\gamma+\epsilon)|n|}.
\end{equation}
The energy $E^{\omega,i}$ is called a {\it localized } state. Since the spectral measure is supported on a countable set, it has Hausdorff dimension $0.$
The above property  is known as { \it Anderson localization}, see for example \cite{GMP,Mo,Pa,Ca,Ko,CKM,DSS}.

The periodic potentials and random potentials lie in two extremes according to the degree of orderness  of potentials.  In between, there are rich varieties.  The most popular potentials which have been extensively studied in the past three decades are the so-called {\it quasiperiodic} potentials. They all have their  roots in  physics, see for example \cite{AA,KKT,OPRSS,SBGC}.
   The spectral properties of the related operators exhibit big varieties.
Here we focus on two typical classes of potentials. One class consists of analytic potentials, the most famous example is the so-called Almost Mathieu potential defined by $V^{am}(n)= \cos 2\pi (n\alpha+\theta)$,
where $\alpha\in\R\setminus\mathbb{Q}$, $\theta\in\R$.
Another class consists of potentials generated by primitive substitutions, the most famous example is the so-called Fibonacci potential, defined by $ V^F(n)=\chi_{(\alpha,1]}(n\alpha+\theta\pmod 1)$, where $\alpha=(\sqrt{5}-1)/2$ is the inverse of Golden mean.

The spectrum  of Almost Mathieu operator $H_{\lambda V^{am}}$ is well understood now, it is known that the spectrum is a Cantor set (this is the famous Ten Martini problem, see \cite{AJ}) independent with $\theta,$ and has  Lebesgue measure
\begin{equation*}
|\sigma(H_{\lambda V^{am}})|=|4-2\lambda|
\end{equation*}
(see \cite{L,JK,AK}). For the spectral measure,
  Aubry and Andre \cite{AA} conjectured that for irrational frequency $\alpha$,   the spectral measure of the operator is absolutely continuous when $|\lambda|<2$, while it  is pure point with exponentially decaying eigenfunctions when $|\lambda|>2$. That is,  there exists a metal-insulator transition at $|\lambda|=2.$   It was  soon realized  that this conjecture is not true for certain frequencies $\alpha$ and phases $\theta$, see \cite{AS,JS}.  On the other hand, Jitomirskaya   \cite{J99} established  this conjecture for Lebesgue a.e. frequencies and phases. Consequently for such $\alpha$ and $\theta$, the eigenvalues, eigenfunctions and transfer matrices of the corresponding operator have similar properties as \eqref{exp-decay-random}, \eqref{exp-increase-random}  and \eqref{transfer-random}. The spectral measure still has Hausdorff dimension 0 since it is pure point. In general,  by subordinate theory, Jitomirskaya and Last \cite{JL00} showed that when $|\lambda|>2$, for any irrational $\alpha$ and $\theta,$ the spectral measure of related operator also has Hausdorff dimension 0, even not necessarily pure point. This means that  the spectral measure is still  close  to pure point in some sense. Recently Avila \cite{Av} established the absolute continuity of  spectral measure for any $\alpha$ and $\theta$ when $|\lambda|<2$.

  Now we turn to the second class. It is known that, if $V$ is defined through a primitive substitution, then
the spectrum $\sigma(H_{\lambda V})$ has  zero Lebesgue measure for any $\lambda\ne 0$,  see  \cite{BG,Lenz,LTWW}. Consequently the spectrum is a Cantor set. Remark that the Fibonacci potential can also be defined through the famous Fibonacci substitution $\tau$: $\tau(a)=ab$ and $\tau(b)=a$(see for example \cite{Fo} Sect. 5.4), thus $\sigma(H_{\lambda V^F})$ has Lebesgue measure 0. The Hausdorff dimension of $\sigma(H_{\lambda V^F})$ has been extensively studied, see \cite{R, JL00,LW,DEGT,DG,DG2}, especially the recent work \cite{DGY}. In particular,  the following property is shown in \cite{DEGT}:
\begin{equation}\label{asym-Fibonacci}
\lim_{|\lambda|\to\infty} \dim_H \sigma(H_{\lambda V^F})\ln  |\lambda|=\ln (1+\sqrt{2}).
\end{equation}
This implies that $\dim_H \sigma(H_{\lambda V^F})\to 0$   when $|\lambda|\to\infty.$ On the other hand, for any $E\in \sigma(H_{\lambda V^F})$
and any  solution $\psi$ of $H_{\lambda V^{F}}\psi=E\psi$,  by \cite{IT,IRT,JL00,DKL}, there exist $0<\alpha_1\le \alpha_2$ and $\beta>0$ such that
\begin{equation}\label{power-Fibonacci}
L^{\alpha_1}\lesssim \|\psi\|_L\lesssim L^{\alpha_2}\ \ \ \text{ and }\ \ \  \|T_n(E)\|\lesssim |n|^\beta,
\end{equation}
where $\|\psi\|_L$ is defined by
\begin{equation*}
\|\psi\|_L:=\left(\sum_{n=0}^{[L]}|\psi(n)|^2+(L-[L])|\psi([L]+1)|^2\right)^{1/2}.
\end{equation*}
Remark that \eqref{power-Fibonacci} is different from \eqref{exp-decay-random}, \eqref{exp-increase-random} and \eqref{transfer-random}, where for  the formal solution $\psi$,   $\|\psi\|_L$  is either bounded, or increasing   exponentially; the norms of transfer matrices also increase exponentially.
As a consequence of \eqref{power-Fibonacci},  the spectral measure $\mu$ is uniformly $\alpha$-H\"older continuous, where $\alpha=2\alpha_1/(\alpha_1+\alpha_2)$. This implies that  $d_\mu(E)\ge \alpha$ for any $E\in \sigma(H_{\lambda V^F})$(see \cite{DKL}), thus $\dim_H \mu\ge \alpha$. It is also known that $\dim_H\sigma(H_{\lambda V^F})<1$ for any $\lambda\ne 0$ (\cite{C,DGY}).  Since the spectral measure $\mu$ is supported on the spectrum, we have $\dim_H\mu <1.$
This implies that, in \eqref{power-Fibonacci}, $\alpha_1$ is strictly less than $\alpha_2$. Thus \eqref{power-Fibonacci} is also different from the periodic case, for which,  \eqref{transfer-periodic} implies that for any formal solution $\psi$,  $\|\psi\|_L\sim L^{1/2}$.

Now we consider Thue-Morse potential, which is the potential we will study in this paper.
It has  been  studied by many authors, see \cite{AAKMP,MBNP,RSL,AP1,AP,Lu,B,BBG2,BG,DT}.
The   Thue-Morse potential $w$ is defined   as follows:  Let $\varsigma$ be the Thue-Morse substitution  given by $\varsigma(a)=ab$ and $\varsigma(b)=ba$.   Let $u=u_1u_2\cdots:=\varsigma^\infty(a).$ For $n\ge 1$, let  $w(n)=1$ if $u_n=a$; let $w(n)=-1$
 if $u_n=b;$ let $w(1-n)=w(n)$ for $n\ge 1.$ The operator $H_{\lambda w}$
   is called the {\it Thue-Morse Hamiltonian}.  Since $\varsigma$ is primitive,  $\sigma(H_{\lambda w})$  has Lebesgue measure  0 when $\lambda\ne 0.$  On the other hand, it is shown in \cite{HKS,DGR} that the operator has no eigenvalues, consequently the spectral measure is purely singular continuous.  Recently it was shown in \cite{LQ} that  the Hausdorff dimension of $\sigma(H_{\lambda w})$ has an absolute positive lower bound for any $\lambda,$ which form a big contrast
  with \eqref{asym-Fibonacci}.

In this work, we concentrate on three subsets of the spectrum  $\sigma(H_{\lambda w})$  and study some finer  spectral properties of $H_{\lambda w}$.  More precisely, we  study the  behavior of the transfer matrices and  eigenfunctions for those energies,  we also study the local dimension of the spectral measure  at those energies.  From now on, we only consider the Thue-Morse Hamiltonian for $\lambda\ne 0$, thus we simplify the notation $H_{\lambda w}$ to $H_\lambda$.

\subsection{Main results}

We start with an estimation on the norm of the transfer matrices for any energy in the spectrum. It is known that(see e.g. \cite{BBG,BG,Lenz}),
for any potential generated by primitive substitution,
the Lyapunov exponent vanishes on the spectrum,
and it was implied by their proofs that,
there are $c=c(\lambda,E)>0$ and $0<\alpha<1$ such that, for any $n>0$,
$$\|T_n(E)\|\lesssim e^{cn^\alpha}, $$
where in the case of Thue-Morse, $\alpha>3/4$.
Here, we improve the exponent $\alpha$ to $1/2$.

\begin{thm}\label{main-subexp}
There exists $c=c(\lambda)>0$ such that, for any $E\in \sigma(H_\lambda)$ and $ n\in\N$,
\begin{equation}\label{uniform}
\|T_{\pm n}(E)\|\le e^{c n^{1/2}}.
\end{equation}
\end{thm}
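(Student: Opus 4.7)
The plan is to reduce the bound for arbitrary $n$ to an estimate over the canonical Thue--Morse blocks and then to exploit the trace map on the spectrum. Let $A_n(E)$ and $B_n(E)$ denote the transfer matrices over the words $\varsigma^n(a)$ and $\varsigma^n(b)$, both of length $2^n$. From $\varsigma^{n+1}(a) = \varsigma^n(a)\varsigma^n(b)$ and $\varsigma^{n+1}(b) = \varsigma^n(b)\varsigma^n(a)$ we obtain the recursions $A_{n+1} = B_n A_n$ and $B_{n+1} = A_n B_n$, whence $x_n := \tr A_n = \tr B_n$ for $n\ge 1$ satisfies the well-known Thue--Morse trace map $x_{n+2} = x_n^2(x_{n+1}-2)+2$. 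A standard argument for primitive substitutions identifies $\sigma(H_\lambda)$ with those $E$ for which $\{x_n(E)\}$ stays bounded, and yields a uniform constant $K=K(\lambda)$ such that $|x_n(E)|\le K$ for all $n$ and all $E\in\sigma(H_\lambda)$.

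The core step is a block-level estimate $\|A_n(E)\|,\|B_n(E)\| \le \exp(c_\lambda\cdot 2^{n/2})$. From $A_{n+2}=B_{n+1}A_{n+1}=A_nB_n^2A_n$ and the Cayley--Hamilton identity $B_n^2 = x_nB_n-I$, one derives the structural relation
\[
A_{n+2} = x_n A_n B_n A_n - A_n^2,
\]
together with a symmetric formula for $B_{n+2}$. Combined with the uniform trace bound $|x_n|\le K$, the goal is to extract a two-step contraction $\|A_{n+2}\|\le C_\lambda \|A_n\|^2$, and likewise for $B_{n+2}$. Iterating separately on even and odd indices produces the desired $\exp(c_\lambda\cdot 2^{n/2})$ bound, which in terms of block length $L=2^n$ is precisely $e^{c\sqrt{L}}$.

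To pass to an arbitrary $m\in\N$, decompose $m$ using the Thue--Morse hierarchy: writing $m=2^{k_1}+\cdots+2^{k_s}$ with $k_1>\cdots>k_s\ge 0$, the matrix $T_m(E)$ factors as an ordered product of at most $s=O(\log m)$ block matrices, each of the form $A_{k_j}$ or $B_{k_j}$ selected by the position of the corresponding block in the Thue--Morse sequence. Using the block bound,
\[
\|T_m(E)\|\le \prod_{j=1}^s \exp\bigl(c_\lambda\cdot 2^{k_j/2}\bigr)\le \exp\bigl(c_\lambda'\cdot 2^{k_1/2}\bigr)\le e^{c_\lambda'\sqrt{m}},
\]
because $\sum_j 2^{k_j/2}$ is a geometric-type sum dominated by $2^{k_1/2}\le\sqrt m$. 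Negative indices reduce to positive ones via the palindromic symmetry $w(1-n)=w(n)$ of the Thue--Morse potential.

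The main obstacle is establishing the two-step contraction $\|A_{n+2}\|\le C_\lambda\|A_n\|^2$. The naive estimate from the structural relation only yields $\|A_{n+2}\|\le K\|A_n\|^2\|B_n\|+K\|A_n\|+1$, which together with the symmetric estimate for $B_{n+2}$ gives at best a Fibonacci-type recursion and thus exponent $\log_2\phi\approx 0.694$ in $L$. To reach exponent $1/2$ one must exploit further structure: either a sharper identity expressing $A_{n+2}$ in terms of $A_n$ alone up to a bounded correction controlled by $x_n$, or a joint induction on $\|A_n\|$ and $\|B_n\|$ using the commutator identity $A_{n+1}-B_{n+1}=[B_n,A_n]$ and the fact that $\tr(A_nB_n)=x_{n+1}$ is also uniformly bounded, so that the \emph{product} $\|A_n\|\|B_n\|$ can be controlled more tightly than either factor. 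This cancellation analysis, specific to the Thue--Morse palindromic/mirror structure, is the heart of the argument.
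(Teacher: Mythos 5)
Your skeleton (a block bound $\|A_n\|,\|B_n\|\le e^{c\,2^{n/2}}$, binary decomposition of an arbitrary index into at most $O(\log m)$ blocks of the form $A_{k_j}$ or $B_{k_j}$, and the reduction $T_{-n}=UT_nU$ for negative indices) is exactly the paper's route, but both ingredients you rely on for the block bound fail. The premise that $\sigma(H_\lambda)$ consists of energies with a \emph{uniformly bounded} trace orbit is false for Thue--Morse: the correct description is \eqref{structure-spectrum}, i.e.\ $\sigma(H_\lambda)=\bigcap_n(\sigma_n\cup\sigma_{n+1})$ with $\sigma_n=\{|t_n|\le 2\}$, so at each step only one of $|t_n|\le2$, $|t_{n+1}|\le 2$ is guaranteed. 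In fact the paper shows that the sets $\Sigma_{II},\Sigma_{III}$, on which $|t_{2n-1}(E)|$ (resp.\ $|t_{2n}(E)|$) grows like $e^{2^n\gamma}$, are dense and uncountable in the spectrum, so traces are badly unbounded there. The bounded-trace characterization you invoke is a Fibonacci/Sturmian fact (S\"ut\H{o}) and does not transfer; moreover, were it true, the Cayley--Hamilton identity of Lemma~\ref{recurrence-AB}, $A_{n+2}=t_n(t_{n+1}-1)A_n+t_nB_n+(1-t_n^2)I$, would give a \emph{linear} recursion with bounded coefficients and hence power-law bounds on $\|A_n\|$, incompatible with the $e^{c_1\gamma\sqrt n}$ lower bounds of Theorem~\ref{main-norm}.

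The second, decisive gap is that the two-step contraction $\|A_{n+2}\|\le C\|A_n\|^2$, which you correctly identify as giving exponent $1/2$, is precisely the content of the paper's proof and you leave it unproven, offering only possible directions. It is obtained not from a uniform trace bound but from the dichotomy built into the spectrum, fed into the \emph{linear} relation $A_{2n+1}=t_{2n-1}(t_{2n}-1)A_{2n-1}+t_{2n-1}B_{2n-1}+(1-t_{2n-1}^2)I$ (your relation $A_{n+2}=x_nA_nB_nA_n-A_n^2$ is correct but too weak, as your own Fibonacci-type estimate shows). If $|t_{2n}|\le 2$ one simply uses $|t_{2n-1}|\le 2\|A_{2n-1}\|$; if $|t_{2n}|>2$, then by Lemma~\ref{basic} and \eqref{structure-spectrum} one has $t_{2n}<-2$ and $|t_{2n\pm1}|\le 2$, and the trace recurrence yields the compensation $|t_{2n-1}|=\sqrt{(2-t_{2n+1})/(2-t_{2n})}\le 2/\sqrt{|t_{2n}|}$, while $|t_{2n}|\le 2\|A_{2n-1}\|^2$; in all cases $\|A_{2n+1}\|\le 12\|A_{2n-1}\|^2$, and since $\|A_{2n-1}\|=\|B_{2n-1}\|$ (Remark~\ref{rembasic}) this single inequality closes the induction on odd indices and then bounds the even ones via $\|A_{2n}\|\le\|A_{2n-1}\|^2$. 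Without this quantitative interplay between a large even-index trace and the small adjacent odd-index traces, your argument does not reach the exponent $1/2$.
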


One may ask, whether $\alpha=1/2$ is sharp, or equivalently, does  there exist some energy such that the norm of related transfer matrices increases at least as $e^{\tilde c\sqrt{n}}$ for some positive constant $\tilde c?$ We will answer this question later. Indeed, we will construct three subsets of the spectrum, such that among each of them, we can get more precise estimations on the norm of the transfer matrices.

Let us give a
description
on the structure of the spectrum. Define $t_n(E)$ to be the trace of $T_{2^n}(E)$.  It is a polynomial of order $2^n$ and is called the $n$-th {\it trace polynomial }related  to Thue--Morse Hamiltonian.
Define $\sigma_n:=\{E\in \R: |t_n(E)|\le 2\}$.
It is shown in \cite{LQ} that $\{\sigma_n\cup\sigma_{n+1}: n\ge 1\}$ is a decreasing set sequence and
\begin{equation}\label{structure-spectrum}
\sigma(H_\lambda)=\bigcap_{n\ge 1}(\sigma_n\cup \sigma_{n+1}).
\end{equation}

$E\in \R$ is called a {\it type-I} energy if $t_k(E)=0$ for some $k\in\N.$
$E\in \R$ is called a {\it type-II} energy if there exists some $n_0$ depending on $E$,
such that $E\in \sigma_{2n}\setminus \sigma_{2n+1}$ for all $n\ge n_0.$
$E\in \R$ is called a {\it type-III} energy if there exists some $n_0$ depending on $E$,
such that $E\in \sigma_{2n+1}\setminus \sigma_{2n}$ for all $n\ge n_0.$ For $T\in \{I,II,III\}$, define
\begin{equation*}
\Sigma_T:=\{E\in \R: E \text{ is a type-T energy}\}.
\end{equation*}

It is seen that $\Sigma_I$ is countable. Moreover, it is  shown in \cite{LQ} that $\Sigma_I$ is dense in $\sigma(H_\lambda)$.

\begin{prop}\label{extended}
For any $E\in \Sigma_I$, there exists a constant $C(E)>1$ such that for any formal solution $\psi$ of $H_\lambda\psi=E\psi$,
\begin{equation}\label{bd-transfer}
C^{-1}\|\vec{\psi}_0\|\le \|\vec{\psi}_n\|\le C\|\vec{\psi}_0\|\ \ \ \text{ and }\ \ \ 1\le \|T_n(E)\|\le C.
\end{equation}
\end{prop}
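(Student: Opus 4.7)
The plan is to exploit the self-similar structure of the Thue--Morse word to show that $t_k(E)=0$ forces the transfer matrix to be the identity at scale $2^{k+2}$. Introduce $A_j:=T_{2^j}(E)$ and let $B_j$ be the transfer matrix across one copy of the word $\varsigma^j(b)$. Since $\varsigma(a)=ab$ and $\varsigma(b)=ba$, the block-concatenation pattern of $\varsigma^{j+1}(a)=\varsigma^j(a)\varsigma^j(b)$ and $\varsigma^{j+1}(b)=\varsigma^j(b)\varsigma^j(a)$ gives the matrix recursions
\begin{equation*}
A_{j+1} \;=\; B_j A_j, \qquad B_{j+1} \;=\; A_j B_j,
\end{equation*}
and by cyclic invariance of trace one has $\tr(A_j)=\tr(B_j)=t_j(E)$.

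Assuming $t_k(E)=0$, the matrices $A_k,B_k\in SL(2,\R)$ have zero trace, so Cayley--Hamilton gives $A_k^2=B_k^2=-I$. Then
\begin{equation*}
A_{k+1}B_{k+1} \;=\; (B_kA_k)(A_kB_k) \;=\; B_k A_k^2 B_k \;=\; -B_k^2 \;=\; I,
\end{equation*}
so $B_{k+1}=A_{k+1}^{-1}$, and consequently $A_{k+2}=B_{k+1}A_{k+1}=I$ and $B_{k+2}=A_{k+1}B_{k+1}=I$. This two-step ``trace-zero collapse'' is the heart of the argument.

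Now the Thue--Morse fixed point decomposes, at scale $2^{k+2}$, into consecutive blocks each equal to $\varsigma^{k+2}(a)$ or $\varsigma^{k+2}(b)$, with transfer matrices $A_{k+2}$ and $B_{k+2}$ respectively; since both are $I$, we obtain $T_{q\cdot 2^{k+2}}(E)=I$ for every $q\in\N$ (the negative side follows from the symmetry $w(1-n)=w(n)$ of the extended potential). For a general $n\in\Z$, write $n=q\cdot 2^{k+2}+r$ with $|r|<2^{k+2}$; then $T_n(E)$ is a product of at most $2^{k+2}$ one-step matrices $M_j(E)$, each of norm bounded by some $C_1(E,\lambda)$, so $\|T_n(E)\|\le C_1^{2^{k+2}}=:C$. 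Combined with $\|T_n(E)\|\ge 1$ and $\|T_n(E)^{-1}\|=\|T_n(E)\|$ for $SL(2,\R)$-matrices, applying both inequalities to $\vec{\psi}_n=T_n(E)\vec{\psi}_0$ yields \eqref{bd-transfer}.

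The only nontrivial step is the identity $A_{k+2}=B_{k+2}=I$; once this is in hand, the rest is bookkeeping. Note that the constant $C$ depends on $k$, and hence on $E$, and blows up as $k\to\infty$: this is why the estimates in \eqref{bd-transfer} cannot be made uniform across $\Sigma_I$, and why the sharper bounds announced for $\Sigma_{II}\cup\Sigma_{III}$ have to be proved by a different mechanism.
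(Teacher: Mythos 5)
Your proof is correct and follows essentially the same route as the paper: the key fact $A_{K+2}=B_{K+2}=I$ when $t_K(E)=0$ (which the paper quotes from \cite{AP,DT} and deduces from Lemma \ref{recurrence-AB}, itself a Cayley--Hamilton computation equivalent to your trace-zero collapse $A_K^2=B_K^2=-I$), followed by the periodicity-type decomposition of $T_n$ at scale $2^{K+2}$ and the reflection symmetry $T_{-n}=UT_nU$ for the negative side. Your crude bound $C_1^{2^{K+2}}$ in place of the paper's maximum over the finitely many partial products is immaterial.
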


Compare with \eqref{transfer-periodic} and \eqref{solution-periodic}, we conclude that  $E$ is an extended state for any $E\in \Sigma_I.$ Proposition \ref{extended} follows quite easily from an observation stated in \cite{DT}, and implicitly proved in \cite{AP}.

Now we study the other two subsets.
By  \eqref{structure-spectrum} and the definitions of $\Sigma_{II}$ and $\Sigma_{III}$,  we know that
$
\Sigma_{II}, \Sigma_{III} \subset \sigma(H_\lambda).
$
Of course, apriori $\Sigma_{II}$ and $\Sigma_{III}$ may be empty.
However we have the  following:

\begin{thm}\label{exist-dense}
Both $\Sigma_{II} $ and $\Sigma_{III}$ are  dense in $\sigma(H_\lambda)$ and uncountable.

\end{thm}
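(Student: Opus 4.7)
The plan is to combine the density of $\Sigma_I$ (established in \cite{LQ}) with a Cantor-like branching construction inside the nested band picture \eqref{structure-spectrum}. Since $\Sigma_I$ is already dense in $\sigma(H_\lambda)$, it suffices to produce, inside every small neighborhood $U$ of every type-I energy $E_\ast$, an uncountable subset of $\Sigma_{II}$ together with an uncountable subset of $\Sigma_{III}$. I therefore fix $E_\ast$ with $t_k(E_\ast)=0$, let $B_k$ denote the connected component (``band'') of $\sigma_k$ containing $E_\ast$, and shrink $U$ until $U\subset B_k$ and $|t_k|$ is small on $U$.

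The main technical input I would need is a \emph{splitting lemma}: for every sufficiently large $n$ and every band $B$ of $\sigma_n$ meeting $U$, (a)~$B$ contains at least two disjoint sub-bands each contained in $\sigma_{n+2}$ and disjoint from $\sigma_{n+1}$, and (b)~$B$ contains at least two disjoint sub-bands each contained in $\sigma_{n+1}$ and disjoint from $\sigma_{n+2}$. To prove this I would use the Thue--Morse trace recurrence---the known closed-form polynomial relation among $t_{n-1},t_n,t_{n+1}$ stemming from the factorization of the prefix of length $2^{n+1}$ as the prefix of length $2^n$ concatenated with its complement, which yields $T_{2^{n+1}}(E)$ as a product of $T_{2^n}(E)$ with its ``sign-flipped'' analogue. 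The uniform subexponential bound of Theorem~\ref{main-subexp} would enter to keep the derivatives of $t_{n+1}$ and $t_{n+2}$ under quantitative control and thereby rule out degenerate sub-bands.

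Granted the splitting lemma, the construction of a point in $\Sigma_{II}\cap U$ is a routine nested-set argument: starting from $B_k$, at each even level $2m$ I would use clause~(a) to pick a sub-band contained in $\sigma_{2m}\setminus\sigma_{2m+1}$, producing a decreasing chain of compact bands whose intersection is nonempty by compactness; any element of that intersection lies in $\Sigma_{II}$. Because at every step at least two admissible sub-bands are available, the set of possible branches forms a full binary tree, and the Cantor set of branches has cardinality $2^{\aleph_0}$. This gives uncountability of $\Sigma_{II}\cap U$, which together with the arbitrariness of $U$ gives density. Density and uncountability of $\Sigma_{III}$ follow from the symmetric construction using clause~(b).

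The principal obstacle is the splitting lemma itself. Concretely, one must argue that as one passes from level $n$ to levels $n+1,n+2$, the band $B$ is not entirely absorbed into $\sigma_{n+1}\cap \sigma_{n+2}$ and that the bifurcation into two sub-bands of each type really occurs. This requires detailed analysis of the Thue--Morse trace map near the zero set of $t_k$ (where $E_\ast$ lives) together with a quantitative understanding of how bands of $\sigma_n$ shrink and split under one and two iterations of the recurrence, which in turn leans on the effective estimates underlying Theorem~\ref{main-subexp}.
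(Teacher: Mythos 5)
Your reduction (use density of $\Sigma_I$ and work inside a small neighborhood $U$ of a type-I energy) matches the paper's starting point, and the nested-band/binary-tree logic would indeed yield density and uncountability \emph{if} your ``splitting lemma'' were available. But that lemma is precisely where the whole difficulty of the theorem sits, and your proposal leaves it unproved while the tools you point to will not deliver it. To find inside a band $B$ of $\sigma_n$ a sub-band contained in $\sigma_{n+2}$ and disjoint from $\sigma_{n+1}$, you need to know that somewhere on $B$ one has simultaneously $|t_n|<1$ (in fact $t_n^2(2-t_{n+1})\le 4$) and $t_{n+1}<-2$, and that this happens on two disjoint sub-intervals; this is a statement about the \emph{joint} value distribution of the pair $(t_n,t_{n+1})$ along $B$, not about derivative bounds. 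Theorem \ref{main-subexp} controls $\|T_n(E)\|$ only for $E$ in the spectrum and gives no control of $t_{n+1}'$, $t_{n+2}'$ in $E$; and even granting derivative control, nondegeneracy of sub-bands is not the issue --- their \emph{existence} is. Moreover, as stated (``every band of $\sigma_n$ meeting $U$'') the lemma is stronger than what your induction needs and is not obviously true: nothing in the one-dimensional recurrence $t_{n+2}=t_n^2(t_{n+1}-2)+2$ by itself rules out bands of $\sigma_n$ on which $|t_{n+1}|\le 2$ throughout, for which clause (a) fails outright. You would also need the chosen sub-bands to be full bands of $\sigma_{n+2}$ (so the lemma can be iterated), which is an additional unverified point.

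Any honest proof of such a splitting statement ends up re-deriving the two-dimensional information the paper extracts from the trace map $f(x,y)=(x^2(y-2)+2,\,x^2y^2(y-2)+2)$: the invariance of the strip $S=\{|x|\le1,\ y\le x^2-2\}$ under the two inverse branches $F_0,F_1$, the disjointness $F_0(S)\cap F_1(S)=\emptyset$, and --- crucially --- the topological fact (Lemma \ref{fiber}, proved via stereographic projection and a connectedness argument) that each fiber $\Lambda_\omega$ of $\Lambda(S)$ has an unbounded component rooted in the parabola $y=x^2-2$. With that, the paper does not split bands at every scale at all: it shows that a single sub-band $[E_1,E_2]$ with $t_k([E_1,E_2])=[-1,1]$ produces a curve $\varphi_k([E_1,E_2])$ crossing $S$ from its left to its right boundary below the parabola, hence meeting \emph{every} one of the uncountably many pairwise disjoint fibers $\Lambda_\omega$, which immediately gives uncountably many type-II (or type-III, according to the parity of $k$) energies in $U$ via Lemma \ref{char-II-III}. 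So the gap in your proposal is concrete: the splitting lemma is asserted, not proved, and the route you sketch for it (trace recurrence plus the estimates behind Theorem \ref{main-subexp}) does not supply the planar-dynamics input that the actual argument requires.
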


 As we will see, the proof of the theorem is far from trivial, which involves the detailed analysis of the dynamics of a two-dimensional polynomial map, constructed  through the recurrence relation of the trace polynomials.

\smallskip

Next,  we obtain the estimations for the norms of transfer matrices for type-II and type-III energies.

\begin{thm}\label{main-norm}
For any $E\in \Sigma_{II}\cup \Sigma_{III}$,
there exist  constant $\gamma=\gamma(E)>0$ and absolute constants $c_2>c_1>0$ such that,
for $n$  large enough,
\begin{equation}\label{main-norm-eq}
e^{c_1\gamma\sqrt{n}}\le\|T_{ \pm n}(E)\|\le  e^{c_2\gamma\sqrt{n}}.
\end{equation}
 \end{thm}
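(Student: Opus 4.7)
The plan is to identify $\gamma(E)$ as the exponential growth rate of the trace polynomials $t_k(E)$ along the hyperbolic subsequence $\{k:|t_k(E)|>2\}$, transfer this rate to the transfer matrices $T_{2^k}(E)$ via Cayley--Hamilton, and extend the two-sided estimate to all $n$ using the hierarchical block structure of the Thue--Morse word. The symmetry $w(1-n)=w(n)$ reduces the case of $T_{-n}(E)$ to that of $T_n(E)$, so I focus on $n>0$ below.

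Fix $E\in \Sigma_{II}\cup\Sigma_{III}$. By the definitions of the two types, the set $\mathcal{K}(E):=\{k\ge n_0:|t_k(E)|>2\}$ coincides, up to a finite correction, with either $2\N+1$ or $2\N$; listing its elements in increasing order as $k_1<k_2<\cdots$ gives $k_{j+1}-k_j=2$ for all large $j$. Using the Thue--Morse trace recurrence (the same recurrence that drives the two-dimensional polynomial map appearing in the proof of Theorem~\ref{exist-dense}), the first task is to show that
\begin{equation*}
\gamma(E):=\lim_{j\to\infty}2^{-k_j/2}\log|t_{k_j}(E)|
\end{equation*}
exists and is strictly positive. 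The heuristic is that, because the intervening trace $t_{k_j+1}(E)$ is trapped in $[-2,2]$, the recurrence magnifies $\log|t_{k_j}|$ by a factor tending to $\sqrt 2$ (rather than $2$) in one step $k_j\mapsto k_{j+1}=k_j+2$; this dyadic-square-root scaling is precisely what produces the $\sqrt n$ exponent.

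Granted the existence of $\gamma(E)$, the estimate at powers of two follows quickly. Since $T_{2^{k_j}}(E)$ has determinant $1$ and eigenvalues $\mu,\mu^{-1}$ with $|\mu+\mu^{-1}|=|t_{k_j}(E)|$,
\begin{equation*}
\|T_{2^{k_j}}(E)\|\;\ge\;\tfrac12|t_{k_j}(E)|\;\ge\;\tfrac13\exp\!\bigl(\gamma(E)\,2^{k_j/2}\bigr),
\end{equation*}
giving the lower bound at $N=2^{k_j}$. For the matching upper bound I would refine the inductive argument behind Theorem~\ref{main-subexp}: at each doubling step control $\|T_{2^{n+1}}(E)\|$ through the block recurrence in terms of $\|T_{2^n}(E)\|$ and $|t_n(E)|$, and use the alternation $|t_{k_j\pm 1}(E)|\le 2$ forced by type-II or type-III to collapse the recursion to a geometric series of sum $\le c_2\gamma(E)\,2^{n/2}$. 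The passage from $N=2^{k_j}$ to a general $n$ with $2^{k_j}\le n<2^{k_j+2}$ uses the factorisation $T_n(E)=T_{2^{k_j}\to n}(E)\,T_{2^{k_j}}(E)$: the upper bound follows by sub-multiplicativity combined with the same refined induction applied to the prefix factor (which reads a beginning segment of the block $\varsigma^{k_j}(b)$), and the lower bound follows from $\|T_n\|\ge\|T_{2^{k_j}}\|/\|T_{2^{k_j}\to n}\|$ together with $2^{k_j/2}\ge\sqrt n/2$.

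The main obstacle is the first step. The alternate trace is only constrained to $[-2,2]$ without any a priori lower bound on its distance from $\pm 2$, so the recurrence can become nearly degenerate along a sparse set of indices; establishing the clean limit defining $\gamma(E)$ and ruling out such degeneracies is a quantitative hyperbolicity (or non-accumulation) statement for the two-dimensional polynomial trace map on the invariant set cut out by $\Sigma_{II}$ or $\Sigma_{III}$, in the same spirit as -- but quantitatively stronger than -- what is required for Theorem~\ref{exist-dense}. A related subtlety is that the lower-bound interpolation from powers of $2$ to arbitrary $n$ requires the prefix-level upper-bound constant to be strictly smaller than a universal threshold, so that the loss does not absorb the exponential gain coming from $\|T_{2^{k_j}}(E)\|$; this must be arranged by a careful tracking of the constants in the refined induction.
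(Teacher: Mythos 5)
There is a genuine gap, and it sits exactly where you try to pass from the hyperbolic scales to the other dyadic scales. For $E\in\Sigma_{II}$ the traces $|t_{2n-1}(E)|$ blow up but $t_{2n}(E)\to 0$, so your only lower-bound input (the trace bound $\|T_{2^{k}}\|\ge\frac12|t_k|$) says nothing at the scales $2^{2m}$, and your proposed propagation mechanism cannot bridge the gap. From the left end the quotient degenerates identically: by Remark \ref{rembasic} one has $\|A_{2m-1}\|=\|B_{2m-1}\|$, so $\|T_{2^{2m}}\|\ge\|T_{2^{2m-1}}\|/\|T_{2^{2m-1}\to2^{2m}}\|=\|A_{2m-1}\|/\|B_{2m-1}\|=1$. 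From the right end you would need $\|A_{2m+1}\|/\|B_{2m}\|$ to be exponentially large, i.e.\ an upper bound $\|B_{2m}\|\le e^{(2-\delta)\,2^m\gamma}$ with constant strictly below the submultiplicative threshold $2$ (coming from $\|B_{2m}\|\le\|A_{2m-1}\|\,\|B_{2m-1}\|\sim e^{2^{m+1}\gamma}$). No ``careful tracking of constants'' in a trace-plus-submultiplicativity induction can get below that threshold, because what actually happens is a massive cancellation: the product of two matrices each of norm $\sim e^{2^m\gamma}$ has norm $\sim e^{2^m\gamma}$, namely $\|A_{2m}\|\sim|t_{2m}|^{-1}$, and this is invisible both to the trace (which is tiny there) and to norm inequalities. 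The paper obtains it from the algebraic structure: the Delyon--Peyri\`ere relation $A_{2n-1}-B_{2n-1}=\mu_nU$ in \eqref{A-B-odd}, its even counterpart \eqref{A-B-even}, the asymptotics of $\mu_n,\nu_n,\omega_n$ (Theorem \ref{asy-t-mu-II}, Proposition \ref{more-precise-II}), and then Theorems \ref{exact-struc-odd} and \ref{exact-struc-even}, which show $A_{2n+1}/t_{2n+1}\to\frac12(I\mp U)$ and $t_{2n}A_{2n}\to$ a \emph{nonzero} multiple of $V\pm W$ (the nonvanishing itself requiring a separate contradiction argument). Only with all block norms $\|A_k\|,\|B_k\|$ and the mixed products pinned to $e^{2^{\lceil k/2\rceil}\gamma}$ up to constants does the interpolation of Lemma \ref{norm-II} give the two-sided bound for all $n$; your sketch supplies none of this machinery, and for type-III the same problem recurs with the parities exchanged.

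Two smaller remarks. First, what you single out as the main obstacle --- existence of $\gamma(E)$ and possible near-degeneracy of the bounded trace at $\pm2$ --- is in fact the easy part: by Lemma \ref{char-II-III} the bounded subsequence of traces tends to $0$ (not merely stays in $[-2,2]$), and the Cauchy-sequence argument of Theorem \ref{asy-t-mu-II} then gives $\gamma>0$ directly. Second, your heuristic that one recurrence step magnifies $\log|t_{k_j}|$ by a factor tending to $\sqrt2$ is off: the step $k_j\mapsto k_j+2$ essentially doubles $\log|t_{k_j}|$ (the $\sqrt2$ is the per-unit-index average), though this does not affect the $\sqrt n$ scaling you draw from it.
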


\begin{remark}
{\rm
(i)\  \eqref{main-norm-eq} implies that $\|T_{\pm n}(E)\|$ increase at least with the speed $e^{c_1\gamma\sqrt{n}}$ for $n$ large enough, thus the exponent $\alpha=1/2$ in \eqref{uniform} is sharp.

(ii)\
One may expect that  the following limit exists
$$
\lim_{n\to\infty}\frac{\log \|T_{\pm n}(E)\|}{\sqrt{n}}.
$$
However, it is not the case,
as suggested by Remark \ref{osci-II} and Remark \ref{osci-III}.

(iii) It is clear that \eqref{main-norm-eq} is different from all of \eqref{transfer-periodic}, \eqref{transfer-random} and \eqref{power-Fibonacci}. To our best knowledge, this is the first time that  energies  with \eqref{main-norm-eq} are  presented.

}
\end{remark}

 Even the definitions of $\Sigma_{II}$ and $\Sigma_{III}$ are quite similar,
 the related spectral properties  are quite different. To state the result, we introduce a new  notation.
 Write ${v}_\theta:=(\cos\theta,-\sin\theta)^t$ for any $\theta\in \R.$

\smallskip

For type-II energy, we have the following result:

\begin{thm}\label{main-type-II}
For each $E\in\Sigma_{II}$, the equation $H_\lambda\psi=E\psi$ has a unique subordinate solution $\psi$ satisfying the following:

(i) There exist $\alpha>0$ and  $c\ne0$ such that $\vec{\psi}_0=v_{\frac{\eta\pi}{4}}$ with $\eta\in\{+,-\}$ and
\begin{equation}\label{subord-II}
\begin{cases}
\|\vec{\psi}_{\pm n}\|\lesssim n^\alpha, \ \ \|\vec{\psi}_{\pm 2^{2n}}\| \lesssim e^{-2^n\gamma},\\
\vec{\psi}_{2^{2n+1}}=-\eta\vec{\psi}_{-2^{2n+1}}=\pm c \ v_{-\frac{\eta\pi}{4}}+O(e^{-2^{n+1}\gamma}).
\end{cases}
\end{equation}

(ii) If $\phi$ is such that $H_\lambda\phi=E\phi$ and $\phi$ is independent with $\psi$, then
\begin{equation}\label{non-subord-II}
\|\vec{\phi}_{\pm n}\|\gtrsim e^{\frac{\gamma}{4}\sqrt{n}}.
\end{equation}
\end{thm}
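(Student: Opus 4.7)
The plan is to exploit the symmetry $w(1-n)=w(n)$ of the Thue--Morse potential together with the hyperbolic structure of $T_{2^{2n+1}}(E)$ guaranteed by $E\in\Sigma_{II}$. The involution $\psi_n\mapsto\psi_{1-n}$ commutes with $H_\lambda$, so the solution space of $H_\lambda\psi=E\psi$ splits into its symmetric and antisymmetric eigenspaces, which correspond precisely to the initial data $\vec{\psi}_0=v_{-\pi/4}$ and $\vec{\psi}_0=v_{\pi/4}$. The subordinate solution will be the one among these whose initial vector asymptotically lies in the stable direction of $T_{2^{2n+1}}(E)$; this selects the sign $\eta$, and uniqueness follows because any other solution acquires a nonzero component along the unstable direction.

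The next step is to analyze $T_{2^n}(E)$ via the substitution recurrence $T_{2^{n+1}}(E)=\widetilde T_{2^n}(E)\,T_{2^n}(E)$, where $\widetilde T_{2^n}$ is the transfer matrix of the $a\leftrightarrow b$ flipped block. Since $|t_{2n}(E)|\le 2$ and $|t_{2n+1}(E)|>2$ for all $n\ge n_0$, combining with Theorem~\ref{main-norm} one sees that the larger eigenvalue $\mu_n$ of $T_{2^{2n+1}}(E)$ satisfies $\log\mu_n\asymp\gamma\,2^n$. A careful study of how the flip acts on the trace map orbit $(t_n(E),y_n(E))$ for $E\in\Sigma_{II}$ should show that the stable and unstable directions of $T_{2^{2n+1}}(E)$ converge to $v_{\eta\pi/4}$ and $v_{-\eta\pi/4}$ with error $O(e^{-2^{n+1}\gamma})$. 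Taking $\vec{\psi}_0=v_{\eta\pi/4}$ then yields $\vec{\psi}_{2^{2n+1}}=\pm c\,v_{-\eta\pi/4}+O(e^{-2^{n+1}\gamma})$ for a non-zero constant $c$, where the image direction flips because $T_{2^{2n+1}}$ is hyperbolic with the two axes almost orthogonal to each other. For the intermediate estimate $\|\vec{\psi}_{2^{2n}}\|\lesssim e^{-2^n\gamma}$, observe that $\vec{\psi}_{2^{2n+1}}=\widetilde T_{2^{2n}}(E)\,\vec{\psi}_{2^{2n}}$ while $\widetilde T_{2^{2n}}(E)$ is elliptic of norm $\le C$; since $\vec{\psi}_{2^{2n+1}}$ has to align with the unstable direction of $T_{2^{2n+1}}(E)$ and the latter is related to the contracting eigendirection of the next hyperbolic stage by factor $\mu_n$, one extracts the geometric-mean bound $\|\vec{\psi}_{2^{2n}}\|\lesssim e^{-2^n\gamma}$.

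For general $n$, I would decompose $n$ dyadically and write $T_n(E)$ as an alternating product of blocks $T_{2^k}(E)$ and their flipped versions. The elliptic blocks at even scales contribute uniformly bounded factors, while the hyperbolic blocks contract the subordinate direction; summing the (at most logarithmically many) alignment errors produces only a polynomial loss, giving $\|\vec{\psi}_{\pm n}\|\lesssim n^\alpha$. The identity $\vec{\psi}_n=\pm\vec{\psi}_{1-n}$ coming from the symmetric/antisymmetric initial condition extends all estimates to negative $n$ and supplies the relation $\vec{\psi}_{2^{2n+1}}=-\eta\,\vec{\psi}_{-2^{2n+1}}$ in the theorem. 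For (ii), any independent solution $\phi$ has $\vec{\phi}_0$ with non-zero component along $v_{-\eta\pi/4}$; this component is amplified by $T_{2^{2n+1}}(E)$ at rate $\mu_n\gtrsim e^{\gamma 2^n}$, so $\|\vec{\phi}_{\pm 2^{2n+1}}\|\gtrsim e^{\gamma 2^n}=e^{\gamma\sqrt{2^{2n+1}}/\sqrt 2}$, and interpolating between dyadic scales yields $\|\vec{\phi}_{\pm n}\|\gtrsim e^{\gamma\sqrt{n}/4}$.

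The main obstacle will be establishing the exponentially precise alignment of the axes of $T_{2^{2n+1}}(E)$ with $v_{\pm\pi/4}$. This requires detailed control of the two-dimensional trace map on the invariant subset associated with $\Sigma_{II}$, in particular understanding how the auxiliary Fricke--Vogt-type companion of $t_n(E)$ evolves when $|t_{2n+1}(E)|$ blows up while $|t_{2n}(E)|\le 2$. A secondary difficulty is the non-dyadic interpolation that produces the polynomial bound $n^\alpha$: the alignment errors at different scales must be summed without destroying the subordination, which will likely require that the rotation angles of the elliptic matrices $T_{2^{2k}}(E)$ stay quantitatively bounded away from resonance with the contracting direction.
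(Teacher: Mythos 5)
Your overall strategy (dyadic renormalization, stable directions at the scales $2^k$, the reflection symmetry to handle $n<0$, growth for independent solutions) is broadly the same as the paper's, but one step you do spell out is false and it carries most of the weight. You assert that the even-scale blocks $\widetilde T_{2^{2n}}(E)$ (and implicitly $T_{2^{2n}}(E)$) are ``elliptic of norm $\le C$'' because $|t_{2n}(E)|\le 2$, and you use this both to extract $\|\vec{\psi}_{2^{2n}}\|\lesssim e^{-2^n\gamma}$ and to let the even-scale blocks contribute $O(1)$ factors in the dyadic product giving $\|\vec{\psi}_{\pm n}\|\lesssim n^\alpha$. For $E\in\Sigma_{II}$ a trace in $[-2,2]$ does not bound the norm: by Theorem \ref{exact-struc-even} the matrices $t_{2n}A_{2n}$ and $t_{2n}B_{2n}$ converge (up to signs) to nonzero rank-one matrices while $|t_{2n}|\sim 2e^{-2^n\gamma}$, so $\|A_{2n}\|\sim\|B_{2n}\|\sim e^{2^n\gamma}$. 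Consequently your ``geometric-mean'' derivation of the decay at times $2^{2n}$ and your bounded-elliptic-factor argument for the polynomial bound both collapse. The actual mechanisms are different: $\|A_{2n}s\|=O(e^{-2^n\gamma})$ holds because the limiting rank-one matrix annihilates exactly $s\parallel v_{\pm\pi/4}$ and the error in $t_{2n}A_{2n}$ is $O(e^{-2^{n+1}\gamma})$, i.e.\ smaller than the main term by a full factor $e^{-2^n\gamma}$; and the bound $\|\vec{\psi}_n\|\lesssim n^\alpha$ comes from factorizations such as $T_n=\overline{T}_{n-2^{2k}}A_{2k}$ (or $T_n=UT_{2^{2(k+1)}-n}UA_{2(k+1)}$), cancelling the growth $\|\overline{T}_j\|\le C^{k}e^{2^k\gamma}$ of \eqref{up-T-k-II} against $\|A_{2k}s\|=O(e^{-2^k\gamma})$, the loss $C^k\sim n^\alpha$ being the source of the polynomial.

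Beyond this, the decisive steps are deferred rather than proved. Knowing that $v_{\eta\pi/4}$ is exponentially close to the contracted singular direction of $T_{2^{2n+1}}$ gives only $\|\vec{\psi}_{2^{2n+1}}\|=O(1)$; that the limit exists and equals a \emph{nonzero} multiple of $v_{-\eta\pi/4}$ requires the refined expansions of $A_{2n+1}$ and $A_{2n}$ (Theorems \ref{exact-struc-odd}--\ref{exact-struc-even}, where the nonvanishing of the limit matrix is itself proved by a contradiction argument, and the computation \eqref{const-limit-II} uses the symmetric structure of the entries from Remark \ref{rembasic}); you label exactly this as ``the main obstacle'' without carrying it out, so the constant $c\ne0$ and the identification $\vec{\psi}_0=v_{\pm\pi/4}$ are not established. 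Finally, in (ii) a lower bound at the dyadic times plus ``interpolation'' is not automatic: a lower bound on $\|T_n\|$ from Theorem \ref{main-norm} does not bound $\|T_n\vec{\phi}_0\|$ for a fixed vector, and the blocks between dyadic times have exponentially large norms, so the norm could a priori dip. One needs, as in Lemma \ref{stab-globe-II}, that the most-contracted directions of the full sequence $\{T_n\}$ converge (necessarily to $s$), so that $\vec{\phi}_0$ and $U\vec{\phi}_0$ stay at a definite angle from them for all large $n$; this again rests on two-sided norm estimates at every scale (Lemma \ref{norm-II}), which your outline does not provide.
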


\begin{remark}
{\rm
(i)\
\eqref{subord-II} and \eqref{non-subord-II} obviously implies  that
$\psi$ is the unique  subordinate solution of the related eigen-equation.
\eqref{subord-II} also reveals  another interesting phenomenon:
while the solution $\psi$ is sub-exponentially localized along the subsequences $\{ 2^{2n}: n\ge 1\}$ and $\{ -2^{2n}: n\ge 1\}$,
it is extended along the subsequences $\{ 2^{2n+1}: n\ge 1\}$
and $\{ -2^{2n+1}: n\ge 1\}$. So $\psi$ is an oscillated solution with polynomial bound.

(ii)\ Compare with \eqref{exp-decay-random} and \eqref{exp-increase-random}, it is reasonable to call
each $E\in \Sigma_{II}$ a {\it pseudo-localized} state  of $H_\lambda$,
and call $\psi$ the  {\it pseudo-eigenfunction} related to $E$.
Since $\Sigma_{II}$ is  dense in the spectrum,  we may  say that $H_\lambda$ exhibits {\it  pseudo-localization}.
}
\end{remark}

\smallskip

For type-III energy, we have the following result:

\begin{thm}\label{main-type-III}
For each $E\in\Sigma_{III}$, the equation $H_\lambda\psi=E\psi$ has a unique right-subordinate
solution $\psi^r$ and a unique left-subordinate solution $\psi^l$ satisfying the following:

(i) there exist  $\theta\ne \pm\frac{\pi}{4}+k\pi$ and  $\alpha>0$ such that $\vec{\psi}_0^r=v_{\theta}$, and
\begin{equation}\label{subord-r-III}
\begin{cases}
\|\vec{\psi}_{ n}^r\|\lesssim n^\alpha,&
\|\vec{\psi}_{- n}^r\|\gtrsim  e^{\frac{\gamma}{4}\sqrt{n}},\\
\|\vec{\psi}_{ 2^{2n+1}}^r\| \lesssim e^{-2^n\gamma},&
\vec{\psi}_{ 2^{2n}}^r=\pm \frac{1}{\sqrt{2}}v_{-\frac{\pi}{2}-\theta }+O(e^{-2^n\gamma});
\end{cases}
\end{equation}
$\vec{\psi}_0^l=v_{-\frac{\pi}{2}-\theta}$, and
\begin{equation}\label{subord-l-III}
\begin{cases}
\|\vec{\psi}_{-n}^l\|\lesssim n^\alpha,&
\|\vec{\psi}_{n}^l\|\gtrsim  e^{\frac{\gamma}{4}\sqrt{n}},\\
\|\vec{\psi}_{ -2^{2n+1}}^l\| \lesssim e^{-2^n\gamma},&
\vec{\psi}_{ -2^{2n}}^l=\pm \frac{1}{\sqrt{2}}v_{\theta }+O(e^{-2^n\gamma}).
\end{cases}
\end{equation}

(ii) If $\phi$ is such that $H_\lambda\phi=E\phi$ and $\phi\not\in {\rm span}(\psi^l)\cup {\rm span}(\psi^r)$, then
\begin{equation}\label{non-subord-III}
\|\vec{\phi}_{\pm n}\|\gtrsim  e^{\frac{\gamma}{4}\sqrt{n}}.
\end{equation}
\end{thm}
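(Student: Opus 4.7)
The plan is to mirror the proof of Theorem~\ref{main-type-II} for type-II energies, with the roles of the even and odd dyadic scales interchanged. Let $A_n(E):=T_{2^n}(E)$ and let $B_n(E)$ denote the transfer matrix across the word $\varsigma^n(b)$; the Thue--Morse substitution rules give the matrix recurrences $A_{n+1}=B_nA_n$ and $B_{n+1}=A_nB_n$. For $E\in\Sigma_{III}$, fix $n_0$ with $E\in\sigma_{2n+1}\setminus\sigma_{2n}$ for all $n\ge n_0$. Then $A_{2n}(E)$ is hyperbolic with eigenvalues $e^{\pm\lambda_n}$ while $A_{2n+1}(E)$ is elliptic. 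Analyzing the Thue--Morse trace-map recurrence I would first show $\lambda_n=\gamma\,2^n(1+o(1))$ for some $\gamma=\gamma(E)>0$; this is the quantitative input behind Theorem~\ref{main-norm}.

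Construction of $\psi^r$: let $v_n^s$ be the unit stable eigendirection of $A_{2n}$. The telescoping identity $A_{2n+2}=A_{2n}B_{2n}^2 A_{2n}$ combined with the sub-exponential control from Theorem~\ref{main-subexp} shows that $\angle(v_{n+1}^s,v_n^s)$ decays exponentially in $n$, so $v_n^s$ converges to some unit vector $v_\theta$. The type-III hypothesis forces $\theta\not\equiv\pm\pi/4\pmod\pi$, for the excluded angles are precisely those characterizing the type-II pattern. Setting $\vec{\psi}^r_0:=v_\theta$, one has $\vec{\psi}^r_{2^{2n}}=A_{2n}v_\theta=e^{-\lambda_n}v_n^s+O(e^{-\gamma\cdot 2^{n+1}})$, and to convert this into the identity $\pm\frac{1}{\sqrt 2}v_{-\pi/2-\theta}+O(e^{-\gamma\cdot 2^n})$ of \eqref{subord-r-III} I would exploit an algebraic Fricke-type invariant of the cocycle $(A_n,B_n)$: at type-III energies this invariant couples the hyperbolic eigendirection of $A_{2n}$ to the coordinate-swap direction $v_{-\pi/2-\theta}$ and fixes the normalizing factor to $1/\sqrt 2$. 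The decay $\|\vec{\psi}^r_{2^{2n+1}}\|\lesssim e^{-\gamma\cdot 2^n}$ then follows from $A_{2n+1}\vec{\psi}^r_0=B_{2n}(A_{2n}v_\theta)$ and the sub-exponential bound on $B_{2n}$. For general $n$, expanding $n$ in binary and applying Theorem~\ref{main-subexp} to the intermediate blocks gives $\|\vec{\psi}^r_n\|\lesssim n^\alpha$; the exponential lower bound $\|\vec{\psi}^r_{-n}\|\gtrsim e^{\gamma\sqrt n/4}$ is obtained by noting that $v_\theta$ has a nonzero component along the backward-unstable direction of the cocycle.

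The left-subordinate $\psi^l$ is produced by the mirror construction via the palindromic symmetry $w(1-n)=w(n)$, which identifies the backward Thue--Morse cocycle with a structurally identical forward one; its stable direction in backward iteration is $v_{-\pi/2-\theta}$, i.e., $v_\theta$ with its two coordinates swapped. Since $\theta\not\equiv\pm\pi/4\pmod\pi$, $v_\theta$ and $v_{-\pi/2-\theta}$ are not parallel, which is why $\psi^r$ and $\psi^l$ are genuinely distinct and type-III energies are one-sided pseudo-localized. For \eqref{non-subord-III}, any $\phi$ independent of $\psi^r$ and $\psi^l$ has nonzero components along both the forward- and backward-unstable directions, giving $\|\vec{\phi}_{\pm 2^{2n}}\|\gtrsim e^{\gamma\cdot 2^n}$; interpolating between consecutive dyadic scales via Theorem~\ref{main-subexp} then yields $\|\vec{\phi}_{\pm n}\|\gtrsim e^{\gamma\sqrt n/4}$. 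Uniqueness of $\psi^r$ and $\psi^l$ follows automatically from subordinacy. The main obstacle is the algebraic step identifying the limit direction $v_\theta$ and the factor $1/\sqrt 2$: one must propagate a Fricke-like invariant of the Thue--Morse trace map through the hyperbolic cycles so that at every scale $2n$ the image $A_{2n}v_\theta$ is recovered, up to $O(e^{-\gamma\cdot 2^n})$, in the direction $\pm v_{-\pi/2-\theta}/\sqrt 2$. This delicate bookkeeping relies on the specific algebra of Thue--Morse (rather than a generic primitive substitution), and once it is in place, all quantitative growth and decay bounds follow from binary decomposition and Theorem~\ref{main-subexp}.
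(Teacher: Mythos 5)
Your plan has the right flavor (dyadic-scale stable directions, reflection symmetry for $\psi^l$), but the central construction is wrong at its key step, and the quantitative bounds do not follow from the tools you invoke. You take $v_\theta$ to be the limit of the stable eigendirections of the hyperbolic matrices $A_{2n}$ and assert $A_{2n}v_\theta=e^{-\lambda_n}v_n^s+O(e^{-\gamma 2^{n+1}})$, i.e.\ exponential decay at the \emph{even} dyadic scales. This contradicts the very statement you are proving: \eqref{subord-r-III} says $\vec{\psi}^r_{2^{2n}}=\pm\frac{1}{\sqrt2}v_{-\pi/2-\theta}+O(e^{-2^n\gamma})$, a vector of norm tending to $1/\sqrt2$, not to $0$. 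In fact no direction decays under $A_{2n}$: since $A_{2n}/t_{2n}$ converges to a rank-one matrix $C_{\hat\eta}(E)/2$ with $|t_{2n}|\sim e^{2^n\gamma}$, any $v$ outside the kernel blows up like $e^{2^n\gamma}$, and for the kernel direction $s$ the naive estimate only gives $\|A_{2n}s\|\lesssim e^{2^{n-1}\gamma}$; showing that it is actually $\pm\frac{1}{\sqrt2}\hat s+O(e^{-2^n\gamma})$ is the genuinely hard point (in the paper, Proposition \ref{addition-III}, a second-order cancellation computation), which you defer to an unspecified ``Fricke-type invariant'' without an argument. The subordinate initial vector must instead be the stable direction of the \emph{odd}-scale sequence $\{A_{2n-1}\}$ (Proposition \ref{Oseledets-III}), where exponential decay genuinely holds; your derivation of the odd-scale decay from ``$A_{2n+1}\vec\psi^r_0=B_{2n}(A_{2n}v_\theta)$ and the sub-exponential bound on $B_{2n}$'' also fails, because $\|B_{2n}\|\sim e^{2^n\gamma}$, so even a bounded $A_{2n}v_\theta$ would only give $\|\vec\psi^r_{2^{2n+1}}\|\lesssim e^{2^n\gamma}$, not $\lesssim e^{-2^n\gamma}$; one needs $A_{2n}s$ to be, up to exponentially small error, parallel to the distinguished direction $\hat s$ annihilated by the rank-one limit of $B_{2n}/t_{2n}$.

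The growth and interpolation steps are likewise not obtainable from Theorem \ref{main-subexp}, which is only an \emph{upper} bound: ``interpolating between dyadic scales via Theorem \ref{main-subexp}'' cannot produce the lower bounds $\|\vec\psi^r_{-n}\|,\ \|\vec\phi_{\pm n}\|\gtrsim e^{\frac{\gamma}{4}\sqrt n}$, nor can applying it blockwise yield the polynomial bound $\|\vec\psi^r_n\|\lesssim n^\alpha$ (sub-exponential factors on blocks do not multiply out to a polynomial). What is needed, and what the paper supplies, is the two-sided estimate $C^{-n-2}e^{2^{n-1}\gamma}\le\|T_k\|,\|\overline T_k\|\le C^{n+2}e^{2^n\gamma}$ for $2^{2n-1}\le k<2^{2n+1}$ (the type-III interpolation lemma behind Theorem \ref{main-norm}), an Oseledets-type argument for the full sequence $\{T_n\}$ producing a single direction $\hat v$ off which $\|T_nw\|\gtrsim n^{-\alpha}e^{\frac{\gamma}{2\sqrt2}\sqrt n}$, the identification $\hat v\parallel s$ (which also forces $\theta\not\equiv\pm\pi/4$, via the explicit computation $s\parallel v_{-\theta(E)/2}$ or $v_{(\theta(E)-\pi)/2}$ — your claim that the exclusion is ``precisely the type-II pattern'' is not an argument), and the cancellation between the exponential decay at the odd dyadic anchor and the growth of the remaining block to get $n^\alpha$. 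As written, the proposal would not close; the even/odd roles of the scales must be corrected and the structure theorems for $A_{2n}$, $A_{2n-1}$ and the asymptotics of $A_{2n}s$ must be proved rather than postulated.
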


\begin{remark}
{\rm
(i) Indeed, $\psi^l$ is just a reflection of $\psi^r$ with respect to $1/2.$
Since $\theta\ne \pm\frac{\pi}{4}+k\pi$, we have $\vec{\psi}_0^r \nparallel \vec{\psi}_0^l$,
consequently $\psi^r$ and $\psi^l$ are independent and $\{\psi^r,\psi^l\}$ is  a basis of the space of the solutions of $H_\lambda\psi=E\psi.$

(ii)\
\eqref{subord-r-III},  \eqref{subord-l-III} and \eqref{non-subord-III} implies that
$\psi^r$ is the unique right-subordinate solution and $\psi^l$ is the unique left-subordinate solution.
However, in this case, there is  no nonzero two-sided subordinate solution.
Indeed if $\phi$ is a global subordinate solution, then $\phi$ is both a right- and left-subordinate solution, then
$
\phi=t_1\psi^r=t_2\psi^l,
$ for some $t_1,t_2\in \R.$ Since $\psi^l$ and $\psi^r$ are independent,
the only possibility is $t_1=t_2=0.$

(iii)\ Not like the type-II energy case, \eqref{subord-r-III} reveals  a more complex  phenomenon:
while the solution $\psi^r$ is sub-exponentially localized along the subsequence $\{ 2^{2n+1}: n\ge 1\}$,
it is extended along the subsequence $\{ 2^{2n}: n\ge 1\}$ and  sub-exponentially exploring  along $\Z^-$.
However, $\psi^r$ is still an oscillated solution with polynomial bound on $\Z^+$. For $\psi^l$ we have similar observation.

(iv)\ Compare with the classical definition of  localization, it is still reasonable to call
each $E\in \Sigma_{III}$ a {\it one-sided pseudo-localized} state of $H_\lambda$,
and call $\psi^r (\psi^l)$ a {\it right (left) pseudo-eigenfunction} related to $E$.
Since $\Sigma_{III}$ is  dense in the spectrum,  we may  say that $H_\lambda$ exhibits  {\it one-sided pseudo-localization}.
}
\end{remark}

We summarize the spectral properties of  all the   operators mentioned above in the following table, where $\Theta(\gamma)  $ means  $c\gamma\le  \Theta(\gamma) \le d \gamma$ for constants $c,d>0.$

\bigskip

\begin{tabular}{|c|c|c|c|}\hline Model \& Energy &$\|T_{\pm n}(E)\|$&subordinate $\|\vec{\psi}_{\pm n}\|$&other sol. $\|\vec{\phi}_{\pm n}\|$\\%&character\\
\hline Periodic\ $\sigma(H)$&$\le C$&no&$\le C$\\ %&extended state\\
\hline Anderson\ $\sigma_p(H)$ &$e^{(\gamma+o(1))n}$& $e^{-(\gamma+o(1))n}$&$e^{(\gamma+o(1)) n}$\\%&Anderson localization\\
\hline Fibonacci\ $\sigma(H)$&$\le  n^\beta$&$\le n^\beta$&$\le n^\beta$\\ \hline Thue-Morse\  $\Sigma_{I}$&$\le C$&no&$\le C$\\ \hline Thue-Morse\  $\Sigma_{II}$&$e^{\Theta(\gamma)\sqrt{n}}$&$\le n^\alpha$&$e^{\Theta(\gamma)\sqrt{n}}$\\ \hline Thue-Morse\  $\Sigma_{III}$&$e^{\Theta(\gamma)\sqrt{n}}$&no&$e^{\Theta(\gamma)\sqrt{n}}$\\ \hline
\end{tabular}

\bigskip

%%%%%%%%%%%%%%%

With these properties in hand, by using  the subordinacy theory developed in \cite{JL1,JL00},  we can  estimate  the local dimensions of the spectral measure at these energies.

We introduce several more notations.
Given a finite measure $\mu$ on $\R$. Fix $x\in {\rm supp}\mu$ and $\alpha>0$. The  $\alpha$-density of $\mu$ at $x$ is defined as
\begin{equation*}
D_\mu^\alpha(x):=\limsup_{\epsilon\to0}\frac{\mu((x-\epsilon,x+\epsilon))}{(2\epsilon)^\alpha}.
\end{equation*}
The local Hausdorff dimension of $\mu$ at $x$ is defined as
\begin{equation*}
d_\mu(x):=\liminf_{\epsilon\to 0} \frac{\log\mu((x-\epsilon,x+\epsilon)) }{\log \epsilon}.
\end{equation*}
$d_\mu(x)$ gives the local H\"older exponent of the distribution of $\mu$ at $x$.
The relation of these two quantities is
\begin{equation}\label{cha-locdim}
d_\mu(x)=\sup\{\alpha:D_\mu^\alpha(x)=0 \}=\inf \{\alpha:D_\mu^\alpha(x)=\infty \}.
\end{equation}

Now assume $\mu$ is the spectral measure of $H_\lambda$,
we study the local Hausdorff dimension of $\mu$ at energy $E$ of type-I, II and III.
We divide the set $\Sigma_{III}$ further  into two parts:
\begin{equation*}%\label{sigma-3-1-2}
\Sigma_{III,1}:=\{E\in \Sigma_{III}: \vec{\psi}_0^r=v_0 \text{ or } v_{\frac{\pi}{2}}\};\ \ \
\Sigma_{III,2}=\Sigma_{III}\setminus \Sigma_{III,1}.
\end{equation*}

\begin{thm}\label{main-locdim}
We have
\begin{equation}\label{loc-dim}
\begin{cases}
d_\mu(E)\ge 1& E\in \Sigma_I,\\
d_\mu(E)=0& E\in \Sigma_{II},\\
d_\mu(E)\ge 2& E\in \Sigma_{III,1},\\
d_\mu(E)\ge 1& E\in \Sigma_{III,2}.
\end{cases}
\end{equation}
Consequently we have
\begin{equation*}
\mu(\Sigma_{I})=0;\ \ \ \dim_H \Sigma_{II}=0\ \ \ \text{ and }\ \ \ \mu(\Sigma_{III,1})=0.
\end{equation*}
\end{thm}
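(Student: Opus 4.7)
The plan is to apply the Jitomirskaya--Last (JL) subordinacy theory (\cite{JL1,JL00}) to translate the solution estimates of Proposition \ref{extended}, Theorem \ref{main-type-II}, and Theorem \ref{main-type-III} into lower bounds on the local Hausdorff dimension $d_\mu(E)$, and then to use the Rogers--Taylor density theorems to deduce the measure-theoretic consequences. In the whole-line setting the relevant JL statement expresses $d_\mu(E)$ through the ratio of the subordinate and independent solution norms $\|\psi\|_{L(\epsilon)},\|\phi\|_{L(\epsilon)}$, with the length scale $L(\epsilon)$ defined implicitly by $\|\psi\|_{L(\epsilon)}\|\phi\|_{L(\epsilon)}\sim\epsilon^{-1}$.

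For $E\in\Sigma_I$: Proposition \ref{extended} gives uniformly bounded solutions and transfer matrices on both sides, so no subordinate solution exists on either half-line and JL forces $d_\mu(E)\ge 1$. For $E\in\Sigma_{II}$: the unique subordinate $\psi$ from Theorem \ref{main-type-II} satisfies $\|\vec\psi_{\pm n}\|\lesssim n^\alpha$, which, after summing squares and using the oscillation structure in \eqref{subord-II}, gives $\|\psi\|_L\lesssim L^{\alpha+1/2}$ on each half-line, while every independent $\phi$ satisfies $\|\phi\|_L\gtrsim e^{c\sqrt L}$ by \eqref{non-subord-II}. Solving $\|\psi\|_L\|\phi\|_L\sim\epsilon^{-1}$ yields $L(\epsilon)\sim(\log\epsilon^{-1})^2$, hence $\log\|\psi\|_{L(\epsilon)}=O(\log\log\epsilon^{-1})$, so $\log\|\psi\|_{L(\epsilon)}/\log\epsilon^{-1}\to 0$ and the JL formula gives $d_\mu(E)=0$.

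For $E\in\Sigma_{III}$: Theorem \ref{main-type-III} provides a right-subordinate $\psi^r$ with polynomial growth on $\Z^+$ and $e^{c\sqrt n}$ growth on $\Z^-$, and a symmetric left-subordinate $\psi^l$; I would then distinguish two cases according to the orientation of $\vec\psi^r_0$. In $\Sigma_{III,2}$, $\vec\psi^r_0=v_\theta$ with $\theta\ne 0,\pi/2\pmod\pi$, so the canonical Dirichlet solution on each half-line is a non-trivial combination of the subordinate solution and an $e^{c\sqrt n}$-growing one and hence itself grows like $e^{c\sqrt n}$; JL then produces only the generic bound $d_\mu(E)\ge 1$. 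In $\Sigma_{III,1}$, the alignment $\vec\psi^r_0\in\{v_0,v_{\pi/2}\}$ forces, by the reflection relation, the corresponding alignment $\vec\psi^l_0\in\{v_{-\pi/2},v_{-\pi}\}$, so on each half-line the Dirichlet solution coincides with the subordinate one (polynomial growth rather than sub-exponential); this ``double alignment'' sharpens the JL estimate on both halves, and combining them through the whole-line JL formalism (applied to $\mu$ viewed as the spectral measure of a cyclic vector such as $\delta_0$) upgrades the bound to $d_\mu(E)\ge 2$.

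The consequences follow from \eqref{cha-locdim} and Rogers--Taylor. Since $\mu$ has no atoms (\cite{HKS,DGR}) and $\Sigma_I$ is countable, $\mu(\Sigma_I)=0$. For $E\in\Sigma_{II}$, $d_\mu(E)=0$ gives $D_\mu^\alpha(E)=\infty$ for every $\alpha>0$; Rogers--Taylor then yields $\mathcal H^\alpha(\Sigma_{II})=0$ for every $\alpha>0$, i.e.\ $\dim_H\Sigma_{II}=0$. For $E\in\Sigma_{III,1}$, $d_\mu(E)\ge 2$ gives $D_\mu^\alpha(E)=0$ for any $\alpha\in(1,2)$; the upper-density form of Rogers--Taylor bounds $\mu(\Sigma_{III,1})\le C_\alpha\mathcal H^\alpha(\Sigma_{III,1})$, and the right-hand side vanishes because $\alpha>1$ implies $\mathcal H^\alpha$ is identically zero on subsets of $\R$. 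The main obstacle I expect is the sharpening $d_\mu(E)\ge 2$ in $\Sigma_{III,1}$: it demands a careful whole-line JL argument that exploits the simultaneous alignment of both subordinate solutions with the coordinate axes, and it requires converting the oscillatory pointwise bounds $\|\vec\psi_n\|$ of Theorem \ref{main-type-III} (polynomial at most indices but exponentially small along $\{2^{2n+1}\}$) into the precise $\|\cdot\|_L$ norms that feed into the JL formula.
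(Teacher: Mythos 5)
Your overall strategy coincides with the paper's: feed the solution estimates of Proposition \ref{extended}, Theorem \ref{main-type-II} and Theorem \ref{main-type-III} into Jitomirskaya--Last subordinacy through the length scale \eqref{def-L-epsilon}, then use \eqref{cha-locdim} together with Rogers--Taylor/Falconer for the measure statements (your route to $\mu(\Sigma_I)=0$ via non-atomicity plus countability is fine, and even simpler than the paper's dimension argument). But the two $\Sigma_{III}$ cases contain genuine gaps. For $\Sigma_{III,1}$ you only assert that the ``double alignment'' upgrades the bound to $d_\mu(E)\ge 2$, and you explicitly flag this as an unresolved obstacle; this is exactly the step that has to be carried out. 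The mechanism is: by the reflection symmetry of the Thue--Morse potential about $1/2$, the whole-line Borel transform satisfies \eqref{formula-M}, $M=\bigl(m^+_\beta m^+_{\pi/2-\beta}-1\bigr)/\bigl(m^+_\beta+m^+_{\pi/2-\beta}\bigr)$; when $\beta(E)\in\{0,\pi/2\}$ the two factors are reciprocal in size, $|m^+_{\beta(E)}|\to\infty$, $|m^+_{\pi/2-\beta(E)}|\to0$ with product $\sim1$, so $|M(E+i\epsilon)|\lesssim|m^+_{\beta(E)}(E+i\epsilon)|^{-1}$. Moreover \eqref{M-density} only covers exponents $\alpha\in[0,1)$, so it cannot give exponents in $(1,2)$; the paper instead uses $\Im M(E+i\epsilon)\ge\mu((E-\epsilon,E+\epsilon))/(2\epsilon)$ together with \eqref{def-L-epsilon}, \eqref{est-M+}, \eqref{norm-u-L-III} to get $\mu((E-\epsilon,E+\epsilon))/(2\epsilon)^\eta\lesssim\|u^+_{\beta(E)+\pi/2,E}\|_{L}^{\eta-2}\,\|u^+_{\beta(E),E}\|_{L}^{\eta}\to0$ for every $\eta<2$, whence $d_\mu(E)\ge2$. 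None of this quantitative chain appears in your sketch.

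For $\Sigma_{III,2}$, the assertion that ``JL then produces only the generic bound $d_\mu(E)\ge1$'' is not a proof: there is no generic lower bound $1$ (on $\Sigma_{II}$ one has $d_\mu=0$), and a subordinate solution does exist on each half line here. What is needed is $|M(E+i\epsilon)|\lesssim1$, which the paper obtains by showing $|m^+_{\pi/2-\beta(E)}|\sim1$ --- because $\beta(E)\ne0,\pm\pi/4,\pi/2$ forces $u^+_{\pi/2-\beta(E)}$ and $u^+_{\pi-\beta(E)}$ to be independent of the subordinate solution, hence of comparable norm --- and then inserting this together with $|m^+_{\beta(E)}|\to\infty$ into \eqref{formula-M}. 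Similarly, your $\Sigma_{II}$ conclusion $d_\mu(E)=0$ requires a lower bound on $|M|$, i.e.\ both half-line $m$-functions large simultaneously; this is exactly where the alignment $\theta=\pm\pi/4$ combined with the symmetry (so that $m^+_{\pi/2-\theta}=m^+_{\theta}$) enters, a point your sketch leaves implicit. In short, the architecture is right, but the whole-line combination via \eqref{formula-M}, which is the crux of all three nontrivial cases, is asserted rather than proved.
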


\begin{remark}
{\rm
(i)\ Loosely  speaking,  $\mu|_{\Sigma_I}$ is reminiscent of the spectral measure of periodic Hamiltonian, while $\mu|_{\Sigma_{II}}$ is quite close to  the spectral measure of almost Mathieu operator for $|\lambda|>2$, as studied in \cite{JL00}. The behavior of $\mu|_{\Sigma_{III}}$ is a bit strange, because if one consider the half-line operator, the spectral measure should be as singular as $\mu|_{\Sigma_{II}}$, however for the whole line operator, the spectral measure is very smooth in the sense that, the local H\"older exponent is very large(can be $\ge 2$). It is the same phenomenon as explained in \cite{JL00}(see the remark after Lemma 4 \cite{JL00}).

(ii)\   Unfortunately, we do not know which subset essentially supports the spectral measure.  It is left as an interesting and challenging  problem.

(iii)\ From fractal geometry point of view,  \eqref{loc-dim} suggests that $\mu$ is a multifractal measure, another interesting problem is   to estimate  the local dimensions for all the other  energies in the spectrum.

(iv)  $\Sigma_{III,1}$ contains at most two energies. Indeed,  $\Sigma_{III,1}$ is empty for $|\lambda|>1$; on the other hand, there exists a uncountable subset $\Gamma\subset [-1,1]$, such that $\Sigma_{III,1}=\{\lambda,-\lambda\}$ for any $\lambda\in \Gamma.$ See Lemma \ref{sigma-3-1} for detail.  For the coupling parameter $\lambda\in \Gamma$, the energies $\pm \lambda$ may have some physical implication due to their strange local dimensions.
}
\end{remark}

The following table offers a summary  on  these three types of  energies.
%%%%%%%%%%%%%%%%

\bigskip

\begin{tabular}{|c|c|c|c|c|c|c|}
\hline
\multirow{2}{*}{ Type} & \multicolumn{2}{c|}{\multirow{2}{*}{Type I}}& \multicolumn{2}{c|}{\multirow{2}{*}{Type II}}&
\multicolumn{2}{c|}{Type III}
 \\
\cline{6-7}&\multicolumn{2}{c|}{} &\multicolumn{2}{c|}{}
  & III-1 & III-2   \\
\hline
Existence &\multicolumn{2}{c|}{{Yes}}&\multicolumn{2}{c|}{{Yes}}& Sometimes Yes & Yes \\
\hline
Ordinality &\multicolumn{2}{c|}{{Countable}}&\multicolumn{2}{c|}{{Uncountable}}& $0$ or $2$ & Uncountable \\
\hline
Denseness &\multicolumn{2}{c|}{{Yes}}&\multicolumn{2}{c|}{{Yes}}& No & Yes \\
\hline
 $d_{\mu}(\cdot)$&\multicolumn{2}{c|}{{$\geq 1$}}&\multicolumn{2}{c|}{{0}}& $\geq 2$   & $\geq 1 $ \\
 \hline
\end{tabular}

%%%%%%%%%%%%%%%%%%%%%%%%%%%%%%%%%%

\subsection{Ideas of the proof}

In the following, we explain several key points in our proof.

At first we  explain the proof of the existence and  denseness of $\Sigma_{II}$ and $\Sigma_{III}$.  We take $\Sigma_{III} $ as example. Based on the recurrence relation of trace polynomials(see \eqref{recurrence}),
we define a mapping $f:\mathbb{R}^2\to \mathbb{R}^2$ as
\begin{equation*}\label{def-2-dim-map}
f(x,y)=(x^2(y-2)+2,x^2y^2(y-2)+2).
\end{equation*}
Then   for any $E\in\mathbb{R}$ and any $n\ge1$,
\begin{equation*}%\label{recurrent}
f(t_n(E),t_{n+1}(E))=(t_{n+2}(E),t_{n+3}(E)).
\end{equation*}

Write  $D:=[-2,2]\times(-\infty,-2).$
In view of the definition of   type-III energies and a simple fact deduced from the recurrence relation(see Lemma \ref{basic} (iii)),  we have
\begin{equation}\label{equi-def}
\Sigma_{III}=\{E\in \R: f^n(t_{2k-1}(E),t_{2k}(E))\in D, \ \  \exists k\in \N, \forall n \ge 0\},
\end{equation}
 If we write
$$
\varphi_k(E):=(t_{k}(E),t_{k+1}(E)) \ \ \text{ and }\  \ \Lambda(D):=\bigcap_{n\ge 0} f^{-n}(D),
$$
we can rewrite \eqref{equi-def} as
\begin{eqnarray*}
\Sigma_{III} =\bigcup_{k \text{ odd}}\{E:  \varphi_k(E)\in \Lambda(D)\}.
\end{eqnarray*}
Thus $\Sigma_{III}$ is nonempty  if  the curve $\varphi_k(\R)$ intersects $\Lambda(D)$ for some odd $k$. It is dense in the spectrum if for any $E\in \sigma(H_\lambda)$ and $\epsilon>0$, the sub-curve $\varphi_k([E-\epsilon,E+\epsilon])$ intersects $\Lambda(D)$ for some odd $k$.  Hence  it is crucial to understand $\Lambda(D).$
However, it turns out that  this set is very hard to deal with, because $D$ is not matched with the dynamical properties of $f$. On the other hand, observe  that the parabola $y=x^2-2$ is invariant under $f$ and $(-1,-1)$ is a fixed point of $f$, then a  more suitable starting domain is
\begin{equation*}\label{def-S}
S=\{(x,y):-1\le x\le 1;\ y\le x^2-2\}.
\end{equation*}

%%%%%%%%%%%%%

\begin{figure}[h]
%\begin{center}
\includegraphics[width=0.8\textwidth]{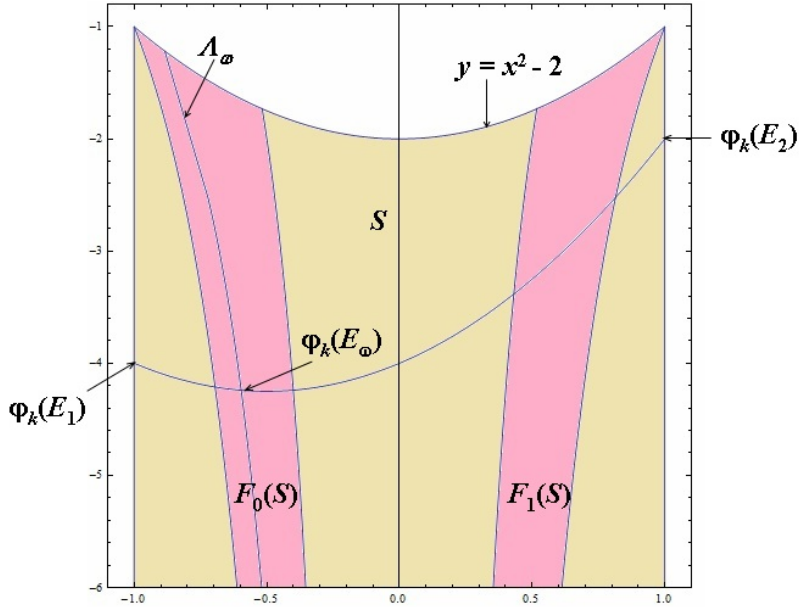}
\caption{Proof of the denseness of $\Sigma_{III}$ }\label{pic-sigma-3}
%\end{center}
\end{figure}

%%%%%%%%%%%%%

Indeed, by a bit extra work, we can show that
\begin{eqnarray*}
\Sigma_{III}=\bigcup_{k \text{ odd}} \{E: \varphi_k(E)\in \Lambda(S)\}.
\end{eqnarray*}
The good point is that the structure of $\Lambda(S)$ is relatively simple. In fact,
$f$ has four inverse branches $F_0,F_1, F_2,F_3$ with
$$
F_0(S), F_1(S)\subset S, \ \ F_0(S)\cap F_1(S)=\emptyset\ \ \text{ and }\ \  F_2(S)\cap S=F_3(S)\cap S=\emptyset.
$$
Thus
$$
\Lambda(S)=\bigcap_{n\ge 0}f^{-n}(S)=\bigcap_{n\ge 1}\bigcup_{w_i\in \{0,1\}} F_{w_1}\circ \cdots\circ F_{w_n}(S).
$$
$\Lambda(S)$ is kind of  repeller and
has a ``lamination" structure.  Moreover, each ``fiber" $\Lambda_\omega:=\bigcap_{n}F_{\omega_1}\circ \cdots\circ  F_{\omega_n}(S)$ is rooted in the parabola $y=x^2-2$ and goes down  to $-\infty$.  Now, due to the denseness of $\Sigma_{I}$, for each $E\in \sigma(H_\lambda)$ and $\epsilon>0$, we can find  odd $k$ large enough and a subinterval $[E_1,E_2]\subset [E-\epsilon,E+\epsilon]$,  such that the sub-curve
$\varphi_k([E_1,E_2])$ stays in $S$ and touches the left and right boundaries of  $S$.  As a result, this sub-curve must intersect with each fiber $\Lambda_\omega$, which finishes the proof(see Figure \ref{pic-sigma-3}).

Next, we explain  how to get the lower bound of the norms of the transfer matrices for type-II and type-III energies. We take type-II energy as example. Fix $E\in \Sigma_{II}$,  due to the combinatorial nature of Thue-Morse potential, it is natural to consider firstly the norms of
$$
A_n=T_{2^n} \ \ \ \text{ and } B_n=T_{2^n\to 2^{n+1}}.
$$
Define four special matrices as follows:
\begin{equation}\label{4matrices}
I=
\begin{pmatrix}
1& 0\\
0& 1
\end{pmatrix}
, \
U=
\begin{pmatrix}
0& 1\\
1& 0
\end{pmatrix}
, \
V=\begin{pmatrix}
1& 0\\
0& -1
\end{pmatrix}
, \
W=\begin{pmatrix}
0& -1\\
1& 0
\end{pmatrix}
.\end{equation}
  A key observation  in  \cite{DP} is that, 
\begin{equation}\label{key-1}
A_{2n-1}-B_{2n-1}=\mu_n U
\end{equation}
for some $\mu_n\in \R$.
By this observation, we obtain 
\begin{equation}\label{key-2}
A_{2n}-B_{2n}=\nu_n V+\omega_n W
\end{equation}
for some $\nu_n,\omega_n\in \R.$
%(See \eqref{4matrices} for the definitions of $U,V,W$).
 Combine \eqref{key-1} with  the recurrence relations of $A_n$ and $B_n$, we get
$$
A_{2n+1}=t_{2n}t_{2n-1}A_{2n-1}-t_{2n-1}\mu_{n}U+(1-t_{2n-1}^2)I,
$$
where $t_n$ is the $n$-th trace polynomial.
By studying the recurrence relation of $t_n$(see \eqref{t12} and \eqref{recurrence}), we show that  there is $\gamma>0$ such that
$$
|t_{2n-1}|\sim e^{2^n\gamma},\quad |t_{2n}|\sim e^{-2^n\gamma}\quad \text{ and } \quad |\mu_n|\sim e^{2^n\gamma}.
$$ By more precise estimations on these three quantities, we can show that $\{t_{2n-1}^{-1}A_{2n-1}:n\ge 1\}$ is a Cauchy sequence and moreover
\begin{equation}\label{keyvec}
\frac{A_{2n-1}}{t_{2n-1}}= \frac{I\pm U}{2}+O(e^{-2^{n-1}\gamma}).
\end{equation}
Similarly but more complicatedly, by using \eqref{key-2}, we can show that 
\begin{equation}\label{keyvec-even}
t_{2n}A_{2n}=\pm c(V\mp W)+O(e^{-2^{n+1}\gamma}),
\end{equation}
 where $c$ is a nonzero constant. These two equations suggest that
$$
\|A_{2n}\|, \|A_{2n-1}\| \sim e^{2^n\gamma}.
$$
Now an interpolation argument shows that for any $n\in \N$, we have $\|T_n\|\gtrsim  e^{c\gamma \sqrt{n}}$.  Figure \ref{pic-log-norm} is a simulation on the norms of the transfer matrices for some $E\in \Sigma_{II}$, which matches very well with the theoretical prediction \eqref{main-norm-eq}.

%%%%%%%%%%%%%

\begin{figure}[h]
%\begin{center}
\includegraphics[width=0.7\textwidth]{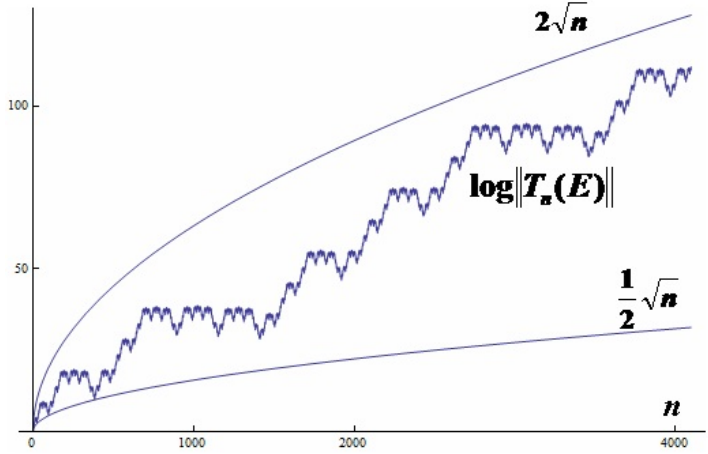}
\caption{$\log\|T_n(E)\|, n=1,\cdots, 4096$, where $\lambda=1, E = 1.571613\cdots$}\label{pic-log-norm}
%\end{center}
\end{figure}

%%%%%%%%%%%%%

Now we explain how to find the   subordinate solution at $\Z^+$. By \eqref{transfer-vector}, we only need to find the suitable initial condition $\vec{\psi}_0$ such that  the norm of $\vec{\psi}_n=T_n\vec{\psi}_0$ is as small as possible. In the language of dynamical system, we should find the stable direction of $\{T_n:n\in\N\}$.

If the constants $c_1$ and $c_2$ in Theorem \ref{main-norm} satisfy $c_2<2c_1$, then we could apply Oseledets' argument
to find a stable direction $s$ such that $\|T_ns\|\lesssim e^{-c\gamma\sqrt{n}}$,
and for any vector $v$ that is linear independent with $s$, $\|T_nv\|\gtrsim e^{c\gamma\sqrt{n}}$. However,  comparing with Theorem \ref{main-type-II} and Theorem \ref{main-type-III}, it is impossible to choose $c_1, c_2$ such that $c_2<2c_1$ in our case. On the other hand,
  one can always apply Oseledets' argument to find a ``pre"-stable direction $s$ such that
for any vector $v$ that is linear independent with $s$, 
\begin{equation}\label{est-other}
\|T_nv\|\gtrsim e^{c\gamma\sqrt{n}}.
\end{equation}
 This $s$ should be the initial condition $\vec{\psi}_0$ we are looking for,
but this approach say nothing about $\|T_ns\|$ itself. 

Instead of considering the whole sequence $\{T_n:n\in\N\}$, we treat the subsequence $\{A_n=T_{2^n}: n\in\N\}$ firstly. The advantage is that, in this case we know the exact asymptotic behaviors of  $\|A_n\|$ due to \eqref{keyvec} and \eqref{keyvec-even}.  But here there is a further subtlety.   It turns out there is no stable direction for $\{A_n: n\in \N\}$, and we need to consider a further subsequence. It is at this point that  we need to treat type-II
and type-III energies separately. 

At first suppose $E\in\Sigma_{II}$.
By Oseledets' argument, there exists a  stable direction $\tilde s$ for $\{A_{2n}: n\in \N\}$ such that
$\|A_{2n}\tilde s\|\lesssim e^{- 2^n\gamma}.$ In view of \eqref{est-other}, $s=\tilde s.$
Moreover,  by \eqref{keyvec-even},  $s$ must be parallel to   $v_{\pm\pi/4}$.
By some extra work, we can show that  $\|A_{2n+1}s\|\sim 1$ and $\|T_ns\|\le n^\alpha$ for  some $\alpha>0$ and any $n$. As a result,
we find the  subordinate solution at $\Z^+$.

Next suppose $E\in\Sigma_{III}$.
Again by Oseledets' argument, 
there exists a  stable direction $\hat s$ for $\{A_{2n+1}: n\in \N\}$ such that
$\|A_{2n+1}\hat s\|\lesssim e^{- 2^n\gamma}.$ Still by \eqref{est-other}, $s=\hat s.$ However,   $s$ is never parallel to $v_{\pm\pi/4}$ in this case. Indeed, we can show a formula analogous to \eqref{keyvec}:
$$\frac{A_{2n}}{t_{2n}}= \frac{1}{2}(I\pm\sec\theta\  V\pm\tan\theta\  W) +O(e^{-2^{n-1}\gamma}),$$
where $\theta\in(-\pi/2,\pi/2)$ is determined by $\sin\theta=(E^2-\lambda^2)/2E$.
By a simple computation, we get  $s\parallel v_{-\theta/2}$ or $v_{(\theta-\pi)/2}$. Thus $s\nparallel v_{\pm\pi/4}$.
We can still show that  $\|A_{2n}s\|\sim 1$ and $\|T_ns\|\le n^\alpha$ for  some $\alpha>0$ and any $n$, so
we  also find a  subordinate solution at $\Z^+$ for type-III energy.

Let us check whether this one-sided subordinate solution extends to a global one. From the definition, it is seen that the Thue-Morse potential is symmetric with respect to $1/2$. If $E$ is a type-II energy, then $\vec{\psi}_0=(\psi_1,\psi_0)=s\parallel v_{\pm\pi/4}$, which is either symmetric or anti-symmetric with respect to $1/2$. Consequently, $\psi$ extends to a global subordinate solution. As a comparison, if $E$ is a type-III energy, then $\vec{\psi}_0\nparallel v_{\pm\pi/4}$, hence $\psi$ never extends to a global subordinate solution. 
 
 \bigskip

 Finally, we say some words on notations.

By $a_n\lesssim b_n$, we mean
there exists $C>0$ such that, for any $n>0$, $a_n\le C b_n.$
The notation $\gtrsim$ can be defined analogously.
By $a_n\sim b_n$, we mean $a_n\lesssim b_n$ and $a_n\gtrsim b_n$.

We denote by $\|x\|$ the $2$-norm of $x\in\mathbb{R}^2$.
For any matrix $A$ of order $2$,
$$\|A\|:=\sup_{\|x\|=1}\|Ax\|.$$
Note that if $\det A=1$, then $\|A\|=\|A^{-1}\|\ge1$.

The rest of the paper is organized as follows. In Section \ref{sec-preliminary}, we give basic definitions and some useful facts related to  Thue-Morse potential and the first structure of the transfer matrices. In Section \ref{sec-th1-prop2}, we prove Theorem \ref{main-subexp} and Proposition \ref{extended}, which are relatively easy. In Section \ref{sec-exist-dense}, we prove Theorem \ref{exist-dense}. In Section \ref{sec-type-II}, we prove Theorem \ref{main-norm} for type-II energy and Theorem \ref{main-type-II}. In Section \ref{sec-type-III}, we prove Theorem \ref{main-norm} for type-III energy and Theorem \ref{main-type-III}. In Section \ref{sec-loc-dim}, we prove Theorem \ref{main-locdim}. Finally, in the Appendix, we give an alternative proof of the fact that $H_\lambda$ has no point spectrum, which is based on our new understanding on the operator.

\section{Preliminaries}\label{sec-preliminary}

In this section, we give the basic definitions and collect some basic facts related to trace polynomials. We also give the first structure of the transfer matrices.
%%%%%%%%%%
%%%%%%%%%%
\subsection{Basic definitions}\

Define  $\bar{a}:=b$ and $\bar{b}:=a.$ For $a_1\cdots a_n\in \{a,b\}^n$, define $\overline{a_1\cdots a_n}:=\overline{a}_1\cdots \overline{a}_n.$
Recall that $\varsigma^\infty(a)=u=u_1u_2\cdots$ is  the Thue-Morse sequence.
 It has the following symmetric property:
\begin{equation}\label{parlin}
\overline{u_{2^{n}+1}\cdots u_{2^{n+1}}}=u_1\cdots u_{2^{n}};\ \ u_1\cdots u_{2^{2n}}= u_{2^{2n}}\cdots u_1,\quad \forall n>0.
\end{equation}
We extend $u$  to a  two-sided Thue-Morse sequence as follows
\begin{equation}\label{symmetry-TM}
u_{1-n}=u_n,\quad \forall n>0.
\end{equation}
By this definition, $u$ is symmetric with respect to $1/2$.

 Denote the free group generated by $a,b$ as ${\rm FG}(a,b).$ Given $\lambda, E\in \R$ and $\lambda\ne 0,$  define a homomorphism $\varrho:{\rm
FG}(a,b)\to {\rm SL}(2,\R)$ as
%\begin{equation}\label{def-rho}
$$\varrho(a)= \left[
\begin{array}{cc}
E-\lambda&-1\\
1&0
\end{array}
\right],\  \ \ \ \varrho(b)= \left[
\begin{array}{cc}
E+\lambda&-1\\
1&0
\end{array}
\right]
$$%\end{equation}
and $\varrho(a_1\cdots a_n)=\varrho(a_n)\cdots\varrho(a_1).$ The transfer matrix $T_{m\to n}(E)$ can be rewritten   as
$$
T_{m\to n}(E)=
\begin{cases}
\varrho(u_{m+1}\cdots u_n)& m< n\\
I&m=n\\
\left(\varrho(u_{n+1}\cdots u_m)\right)^{-1}& m>n
\end{cases}
$$
It is basic that $T_{m\to n}(E)$ has determinant 1.

Define $\bar{\varrho}(x)=\varrho(\bar{x})$ for $x=a$ and $b$ and extend the definition of $\bar{\varrho}$ to finite word.
Define
$$
\overline{T}_{m\to n}(E):=
\begin{cases}
\bar{\varrho}(u_{m+1}\cdots u_n)& m< n\\
I&m=n\\
\left(\bar{\varrho}(u_{n+1}\cdots u_m)\right)^{-1}& m>n
\end{cases}
$$

For $n\ge 0$, define
\begin{equation*}
A_n(E):=\varrho(\varsigma^n(a))=T_{2^n}(E)\ \ \ \text{ and }\ \ \  B_n(E):=\varrho(\varsigma^n(b))=\overline{T}_{2^n}(E).
\end{equation*}
Then for $n\ge 0,$ we have
\begin{equation}\label{recu-n+1-n}
A_{n+1}(E)=B_n(E)A_n(E),\quad B_{n+1}(E)=A_n(E)B_n(E).
\end{equation}
As a consequence, ${\rm tr}(A_n(E))={\rm tr}(B_n(E))$ for $n\ge 1$(where ${\rm tr}(A)$ denotes the trace of the matrix $A$).
Recall that we have defined
$
t_n(E)={\rm tr}(T_{2^n}(E))={\rm tr}(A_n(E)).
$
It is shown in \cite{AP,B} that
\begin{equation}\label{t12}
t_1(E)=E^2-\lambda^2-2,\quad t_2(E)=(E^2-\lambda^2)^2-4E^2+2,
\end{equation}
and for $n\ge2$,
\begin{equation}\label{recurrence}
t_{n+1}(E)=t_{n-1}^2(E)(t_n(E)-2)+2.
\end{equation}
$\{t_n: n\ge 1\}$ is the family  of { trace polynomials} related to the Thue-Morse Hamiltonian.

The follow lemma collects  basic facts about  the trace polynomials.

\begin{lemma}\label{basic}
(i) If $t_n(E)=0$, then $t_{k}(E)=2$ for any $k\ge n+2.$ Consequently $\Sigma_I\cap (\Sigma_{II}\cup \Sigma_{III})=\emptyset.$

(ii) If $n\ge 2$ and $t_n(E)=2$, then $t_{k}(E)=2$ for any $k\ge n.$

(iii) If $E\in\sigma(H_\lambda),$ $t_{n-1}(E)\ne0$ and $|t_n(E)|>2$ for some $n>1$, then $t_n(E)<-2$.

(iv) $0\not \in \sigma(H_\lambda)$.
\end{lemma}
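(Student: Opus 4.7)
\smallskip

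\textbf{Plan.} All four parts will be extracted from the recurrence \eqref{recurrence} together with the initial data \eqref{t12}. For part (ii), if $t_n(E)=2$ with $n\ge 2$, then \eqref{recurrence} gives $t_{n+1}(E)=t_{n-1}^2(E)\cdot 0+2=2$, and a trivial induction propagates the value $2$ to every subsequent index. For part (i), if $t_n(E)=0$, one more application of \eqref{recurrence} yields $t_{n+2}(E)=0\cdot(t_{n+1}(E)-2)+2=2$, and (ii) then delivers $t_k(E)=2$ for all $k\ge n+2$. The disjointness assertion $\Sigma_I\cap(\Sigma_{II}\cup\Sigma_{III})=\emptyset$ then follows at once: $|t_k(E)|=2$ places $E\in\sigma_k$ for every $k\ge n+2$, which is incompatible with the type-II and type-III definitions, both of which require $E\notin\sigma_k$ for infinitely many $k$ of a fixed parity.

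For part (iii) the natural approach is contradiction. Assume $t_n(E)>2$; together with $t_{n-1}(E)\ne 0$ and $t_n(E)-2>0$, the recurrence \eqref{recurrence} forces $t_{n+1}(E)>2$. Because $t_k(E)>2$ trivially implies $t_k(E)\ne 0$, one obtains by induction that $t_k(E)>2$ for every $k\ge n$: at each stage the two factors $t_{k-1}^2(E)$ and $t_k(E)-2$ entering \eqref{recurrence} are strictly positive. Hence $|t_k(E)|>2$ for all $k\ge n$, so $E\notin\sigma_k\cup\sigma_{k+1}$ for large $k$, contradicting the characterization \eqref{structure-spectrum} of $\sigma(H_\lambda)$. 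Therefore $t_n(E)<-2$.

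Part (iv) drops out of (iii) after a one-line computation: from \eqref{t12} we have $t_1(0)=-\lambda^2-2\ne 0$ (since $\lambda\ne 0$) and $t_2(0)=\lambda^4+2>2$. If $0$ belonged to $\sigma(H_\lambda)$, then (iii) applied with $n=2$ would force $t_2(0)<-2$, which is absurd. The only place where care is needed is the induction in (iii), where one must simultaneously maintain $t_k(E)>2$ and the nonvanishing of the predecessor trace; both are automatic once $t_k(E)>2$ is established at the current stage, so there is no real obstacle, only bookkeeping.
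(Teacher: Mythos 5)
Your proof is correct and follows essentially the same route as the paper: everything is read off from the recurrence \eqref{recurrence}, the initial data \eqref{t12}, and the spectral characterization \eqref{structure-spectrum}. The only cosmetic differences are that in (iii) you propagate $t_k(E)>2$ to all $k\ge n$ (one step already contradicts \eqref{structure-spectrum}), and in (iv) you invoke (iii) instead of noting directly that $t_1(0)<-2$ and $t_2(0)>2$ exclude $0$ from $\sigma_1\cup\sigma_2$; both variants are fine.
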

\begin{proof}
 (i)  follows directly from \eqref{recurrence} and the definitions of $\Sigma_I, \Sigma_{II}$ and $\Sigma_{III}$.

 (ii) follows directly from \eqref{recurrence}.

By \eqref{recurrence}, if $t_n(E)>2$, then $t_{n+1}(E)=t_{n-1}^2(E)(t_n(E)-2)+2>2$, which contradicts with \eqref{structure-spectrum}. Thus (iii) holds.

By \eqref{t12}, we have $t_1(0)=-\lambda^2-2<-2$ and $t_2(0)=2+\lambda^4>2$.
Thus $0\not\in \sigma_1\cup \sigma_2. $ By \eqref{structure-spectrum}, $0\not\in \sigma(H_\lambda).$
\end{proof}

We usually drop $E$ from the notations
when there is no confusion.
For example, we simplify $A_n(E)$, $T_n(E)$, $t_n(E)$
to $A_n$, $T_n$, $t_n$.

%%%%%%%%%%
%%%%%%%%%%
\subsection{Structures of $A_n$ and $B_n$}\

It is a simple observation that, for any $2\times2$ matrix $A=(a_{ij})$ with determinant $1$, 
\begin{equation}\label{antisym}
UAU=A^{-1}\quad\mbox{if and only if}\quad a_{12}+a_{21}=0,
\end{equation}
where  $U$ is defined in \eqref{4matrices}.  As a consequence, we have
\begin{lemma}[\cite{DP,DGR}]\label{dp}%\cite{DP}
(i) $U\varrho(a)U=\varrho^{-1}(a)$ and $U\varrho(b)U=\varrho^{-1}(b)$.

(ii) More generally, for a word $a_1\cdots a_n\in\{a,b\}^n$,
\begin{equation}\label{inverse}
U\varrho(a_1\cdots a_n)U=\varrho^{-1}(a_n\cdots a_1).
\end{equation}
(iii) In particular, for any $n\ge 1$, we have
\begin{equation}\label{formu-reverse}
T_{-n}(E)=UT_n(E)U.
\end{equation}
\end{lemma}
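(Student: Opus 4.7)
For part (i), the plan is simply to verify the hypothesis of \eqref{antisym} by inspection. Both $\varrho(a)$ and $\varrho(b)$ are $\mathrm{SL}(2,\mathbb{R})$ matrices of the form $\begin{pmatrix} *&-1\\ 1&0 \end{pmatrix}$, so the $(1,2)$- and $(2,1)$-entries sum to zero, and the stated conjugation identity follows.

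For part (ii), the idea is to push $U$ through the product using $U^2=I$ and to reverse the order by inverting. Unwinding the definition $\varrho(a_1\cdots a_n)=\varrho(a_n)\cdots\varrho(a_1)$, I would write
\[
U\varrho(a_1\cdots a_n)U=\bigl(U\varrho(a_n)U\bigr)\bigl(U\varrho(a_{n-1})U\bigr)\cdots\bigl(U\varrho(a_1)U\bigr),
\]
apply part (i) to each factor, and then recognize the resulting product $\varrho^{-1}(a_n)\cdots\varrho^{-1}(a_1)=\bigl(\varrho(a_1)\cdots\varrho(a_n)\bigr)^{-1}$ as $\varrho^{-1}(a_n\cdots a_1)$, where the last identification uses the homomorphism convention once more on the reversed word.

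For part (iii), the plan is to convert $T_{-n}(E)$, which by definition equals $(M_0M_{-1}\cdots M_{-n+1})^{-1}$, into a product indexed by $\{1,\dots,n\}$ by invoking the reflection symmetry \eqref{symmetry-TM}. Since $w(1-j)=w(j)$ for $j\ge 1$, the matrices satisfy $M_{1-j}=M_j$, hence
\[
M_0M_{-1}\cdots M_{-n+1}=M_1M_2\cdots M_n=\varrho(u_1)\varrho(u_2)\cdots\varrho(u_n)=\varrho(u_n u_{n-1}\cdots u_1),
\]
the last equality again being the homomorphism convention. Inverting and applying part (ii) to the reversed word $u_n\cdots u_1$ yields
\[
T_{-n}(E)=\varrho^{-1}(u_n\cdots u_1)=U\varrho(u_1\cdots u_n)U=UT_n(E)U.
\]

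The only step requiring genuine care, rather than routine unwinding, is matching the index conventions in part (iii): one must be precise that $T_{-n}$ reads the potential at sites $0,-1,\dots,-n+1$ and that the symmetry \eqref{symmetry-TM} sends this block bijectively onto sites $1,2,\dots,n$ in the \emph{reversed} order, so that the resulting product is $\varrho(u_n\cdots u_1)$ and not $\varrho(u_1\cdots u_n)$. Once this bookkeeping is settled, parts (i) and (ii) provide the rest mechanically.
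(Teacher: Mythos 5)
Your proof is correct and follows essentially the same route as the paper: part (i) via the criterion \eqref{antisym}, part (ii) by inserting $U^2=I$ between the factors of $\varrho(a_n)\cdots\varrho(a_1)$, and part (iii) by using the reflection symmetry \eqref{symmetry-TM} to identify $T_{-n}(E)$ with $\varrho^{-1}(u_n\cdots u_1)$ and then invoking (ii). Your index bookkeeping at the sites $0,-1,\dots,-n+1$ matches the paper's computation $T_{-n}(E)=(T_{-n\to 0}(E))^{-1}=\varrho^{-1}(u_{1-n}\cdots u_0)=\varrho^{-1}(u_n\cdots u_1)$, so there is nothing to add.
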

\begin{proof}
 (i) and (ii) follow directly  from \eqref{antisym} and the definition of $\varrho$.

By (ii) and the definition of $T_n(E)$, %\eqref{inverse} we have
$$
UT_n(E)U=U\varrho(u_1\cdots u_n)U=\varrho^{-1}(u_n\cdots u_1).
$$
On the other hand, by \eqref{symmetry-TM},
$$
T_{-n}(E)=(T_{-n\to 0}(E))^{-1}=\varrho^{-1}(u_{1-n}\cdots u_0)=\varrho^{-1}(u_n\cdots u_1).
$$
Thus (iii) holds.
\end{proof}

\begin{rem}\label{rembasic}
{\rm
By \eqref{parlin} and Lemma \ref{dp} (ii), for any $n>0$,
we have $UA_{2n}U=A_{2n}^{-1}$ and $UB_{2n}U=B_{2n}^{-1}$.
By \eqref{antisym}, for any $n\ge 0$, if we write
$$A_{2n}=\begin{pmatrix}a_{11}&a_{12}\\ a_{21}&a_{22}\end{pmatrix},\quad
B_{2n}=\begin{pmatrix}b_{11}&b_{12}\\ b_{21}&b_{22}\end{pmatrix},$$
then $a_{12}=-a_{21}$ and $b_{12}=-b_{21}$. See also \cite{DP,DGR}.

By \eqref{recu-n+1-n} and direct computation, for any $n>0$, if we write
$$A_{2n-1}=\begin{pmatrix}a_{11}'&a_{12}'\\ a_{21}'&a_{22}'\end{pmatrix},\quad
B_{2n-1}=\begin{pmatrix}b_{11}'&b_{12}'\\ b_{21}'&b_{22}'\end{pmatrix},$$
then
$a_{11}'=b_{11}',\ a_{12}'=-b_{21}',\ a_{21}'=-b_{12}',\ a_{22}'=b_{22}'$.
}
\end{rem}

\begin{lemma}\label{recurrence-AB}
We have
\begin{equation*}
\begin{array}{rcl}
A_{n+2}&=&t_n(t_{n+1}-1)A_n+t_nB_n+(1-t_n^2)I,\\
B_{n+2}&=&t_n(t_{n+1}-1)B_n+t_nA_n+(1-t_n^2)I,\\
B_n^2A_n&=&t_nA_{n+1}-A_n,\ \
A_n^2B_n=t_nB_{n+1}-B_n.
\end{array}
\end{equation*}
\end{lemma}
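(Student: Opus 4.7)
The plan is to derive all four identities as straightforward consequences of the Cayley--Hamilton theorem applied to $2\times 2$ matrices of determinant $1$, combined with the recurrences $A_{n+1}=B_nA_n$, $B_{n+1}=A_nB_n$ and the trace identity $\tr(A_n)=\tr(B_n)=t_n$ (valid for $n\ge 1$). Since $A_n,B_n\in \mathrm{SL}(2,\mathbb{R})$, Cayley--Hamilton gives the key substitution tools
\[
A_n^2=t_nA_n-I,\qquad B_n^2=t_nB_n-I,\qquad A_n^{-1}=t_nI-A_n,\qquad B_n^{-1}=t_nI-B_n.
\]

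The third and fourth identities are immediate: multiplying $B_n^2=t_nB_n-I$ on the right by $A_n$ gives $B_n^2A_n=t_nB_nA_n-A_n=t_nA_{n+1}-A_n$, and symmetrically for $A_n^2B_n$. These serve as a warm-up and also confirm that $A_{n+1}$ can be recovered from $A_n,B_n$ alone.

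For the first identity, I would write $A_{n+2}=B_{n+1}A_{n+1}=(A_nB_n)(B_nA_n)=A_nB_n^2A_n$, and then peel off $B_n^2=t_nB_n-I$:
\[
A_{n+2}=t_n\bigl(A_nB_nA_n\bigr)-A_n^2=t_n\bigl(A_nB_nA_n\bigr)-(t_nA_n-I).
\]
The main (and only) computational step is to simplify $A_nB_nA_n=B_{n+1}A_n$. Applying Cayley--Hamilton to $B_{n+1}$ gives $B_{n+1}=t_{n+1}I-B_{n+1}^{-1}$, and using $B_{n+1}^{-1}=(A_nB_n)^{-1}=B_n^{-1}A_n^{-1}$ yields
\[
B_{n+1}A_n=t_{n+1}A_n-B_n^{-1}A_n^{-1}A_n=t_{n+1}A_n-B_n^{-1}=t_{n+1}A_n-(t_nI-B_n).
\]
Substituting back gives exactly $A_{n+2}=t_n(t_{n+1}-1)A_n+t_nB_n+(1-t_n^2)I$. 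The second identity follows by interchanging the roles of $A_n$ and $B_n$ throughout: $B_{n+2}=A_{n+1}B_{n+1}=B_nA_n^2B_n$, and the analogous simplification of $B_nA_nB_n=A_{n+1}B_n$ via $A_{n+1}^{-1}=A_n^{-1}B_n^{-1}$ delivers the result.

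There is no serious obstacle here; the proof is essentially a bookkeeping exercise in $\mathrm{SL}(2,\mathbb{R})$. The one conceptual point worth flagging is that replacing a single matrix by its Cayley--Hamilton inverse (the step $B_{n+1}=t_{n+1}I-B_{n+1}^{-1}$) is what allows the product $A_nB_nA_n$ to be rewritten as an $\mathbb{R}$-linear combination of $A_n$, $B_n$ and $I$, which is the form required by the statement.
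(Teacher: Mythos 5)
Your proof is correct and follows essentially the same route as the paper, which simply cites the Cayley--Hamilton theorem for $2\times 2$ matrices together with the recurrences $A_{n+1}=B_nA_n$, $B_{n+1}=A_nB_n$; you have merely written out the bookkeeping that the paper leaves implicit. The only point to keep in mind (which you already flag) is that $\tr(B_n)=t_n$ requires $n\ge 1$, so the identities are used in that range.
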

The proof follows from Cayley-Hamilton Theorem for  matrix $A$ of order $2$, i.e.
$A^2-(\tr A)A+(\det A)I=0.$

The following relation between $A_{2n-1}$ and $B_{2n-1}$ is observed in \cite{DP}:
\begin{lemma}\cite{DP}
There exists a real sequence $\{\mu_n\}_{n=1}^{\infty}$ such that
\begin{equation}\label{A-B-odd}
 A_{2n-1}-B_{2n-1}=\mu_{n}U \ \ \text{ and  }\ \  t_{2n}=t_{2n-1}^2-\mu_{n}^2-2.
\end{equation}
\end{lemma}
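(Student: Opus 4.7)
The plan is to derive both identities directly from the coordinate-wise relations between the entries of $A_{2n-1}$ and $B_{2n-1}$ already spelled out in Remark \ref{rembasic}. Writing $A_{2n-1}=(a_{ij}')$ and $B_{2n-1}=(b_{ij}')$, that remark supplies the four identities
$$a_{11}'=b_{11}',\quad a_{22}'=b_{22}',\quad b_{21}'=-a_{12}',\quad b_{12}'=-a_{21}'.$$
These force the diagonal entries of $A_{2n-1}-B_{2n-1}$ to vanish, while both off-diagonal entries reduce to $a_{12}'+a_{21}'$. Setting $\mu_n:=a_{12}'+a_{21}'\in\R$, we obtain $A_{2n-1}-B_{2n-1}=\mu_n U$ by inspection, which is the first identity.

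For the trace identity, I would combine this with the recurrence $A_{2n}=B_{2n-1}A_{2n-1}$ from \eqref{recu-n+1-n}. Substituting $B_{2n-1}=A_{2n-1}-\mu_n U$ gives $A_{2n}=A_{2n-1}^2-\mu_n\, UA_{2n-1}$, so taking traces yields
$$t_{2n}=\tr(A_{2n-1}^2)-\mu_n\,\tr(UA_{2n-1}).$$
Cayley--Hamilton applied to $A_{2n-1}\in\mathrm{SL}(2,\R)$ (with trace $t_{2n-1}$) delivers $A_{2n-1}^2=t_{2n-1}A_{2n-1}-I$, hence $\tr(A_{2n-1}^2)=t_{2n-1}^2-2$. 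A one-line multiplication shows $\tr(UA_{2n-1})=a_{21}'+a_{12}'=\mu_n$. Assembling the pieces gives $t_{2n}=t_{2n-1}^2-\mu_n^2-2$.

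There is no genuine obstacle here: once Remark \ref{rembasic} is granted, both identities amount to linear-algebra bookkeeping. The only point worth flagging is that the same scalar $\mu_n$ appears in two apparently unrelated places — as the coefficient in $A_{2n-1}-B_{2n-1}=\mu_n U$ and as $\tr(UA_{2n-1})$ — and it is this coincidence, rather than any deeper combinatorial input, that causes the quadratic formula for $t_{2n}$ to fall out cleanly from Cayley--Hamilton.
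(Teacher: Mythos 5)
Your proof is correct, and it is worth noting that the paper itself offers no proof of this lemma at all: it is quoted from \cite{DP}, with the entry relations of Remark \ref{rembasic} (which you use, and which are derived from the even-index antisymmetry plus \eqref{recu-n+1-n}, so there is no circularity) standing in as the justification for the first identity. The closest thing to an in-paper proof is the argument for the companion even-index lemma \eqref{A-B-even}, and there the trace identity is obtained by a slightly different bookkeeping: one applies Cayley--Hamilton to the \emph{traceless difference matrix} itself, i.e.\ $(A-B)^2=-\det(A-B)\,I$, expands $(A-B)^2=A^2+B^2-AB-BA$ using Cayley--Hamilton on $A,B$ and the recurrence, and takes traces. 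Transported to the odd case this would read $(A_{2n-1}-B_{2n-1})^2=\mu_n^2 I$ and then $\mu_n^2=t_{2n-1}^2-2-t_{2n}$, the same conclusion. Your route instead substitutes $B_{2n-1}=A_{2n-1}-\mu_n U$ into $A_{2n}=B_{2n-1}A_{2n-1}$, applies Cayley--Hamilton to $A_{2n-1}$ alone, and identifies $\tr(UA_{2n-1})=a_{12}'+a_{21}'=\mu_n$; this has the advantage of making transparent why the same scalar $\mu_n$ appears both as the coefficient of $U$ and inside the trace computation, whereas the paper's symmetric ``square the difference'' argument is the one that extends verbatim to the even case, where the difference involves two parameters $\nu_n,\omega_n$ and no single substitution is as clean. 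Both are short, correct pieces of linear algebra resting on the same structural input.
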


We observe the following parallel relation between  $A_{2n}$ and $B_{2n}:$

\begin{lemma}
There exist two real sequences  $\{\nu_{n}\}_{n=1}^{\infty}$ and $\{\omega_{n}\}_{n=1}^{\infty}$  such that
\begin{equation}\label{A-B-even}
A_{2n}-B_{2n}=\nu_{n}V+\omega_{n}W \ \ \text{ and }\ \  t_{2n+1}=t_{2n}^2-\nu_{n}^2+\omega_{n}^2-2,
\end{equation}
where $V$ and $W$ are defined in \eqref{4matrices}.
\end{lemma}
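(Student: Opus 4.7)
The first statement I would prove as a purely linear-algebraic observation. By Remark \ref{rembasic}, the matrices $A_{2n}$ and $B_{2n}$ both have antisymmetric off-diagonal (the $(1,2)$ and $(2,1)$ entries sum to zero), and they share the same trace $t_{2n}$. Writing $A_{2n}$ and $B_{2n}$ in this form and setting $\nu_n:=(A_{2n})_{11}-(B_{2n})_{11}$ and $\omega_n:=(B_{2n})_{12}-(A_{2n})_{12}$, the equal-trace condition would force $(A_{2n})_{22}-(B_{2n})_{22}=-\nu_n$, and a direct entry-by-entry comparison then yields $A_{2n}-B_{2n}=\nu_n V+\omega_n W$.

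For the trace identity, I would denote $M:=A_{2n}-B_{2n}=\nu_n V+\omega_n W$. A quick calculation using $V^2=I$, $W^2=-I$ and $VW+WV=0$ gives $M^2=(\nu_n^2-\omega_n^2)I$, and in particular $\tr(M^2)=2(\nu_n^2-\omega_n^2)$. This is the algebraic input that converts the matrix identity into the numerical trace formula.

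Next I would apply both recurrences in \eqref{recu-n+1-n}, writing
$$A_{2n+1}=B_{2n}A_{2n}=A_{2n}^2-MA_{2n}\qquad\text{and}\qquad B_{2n+1}=A_{2n}B_{2n}=B_{2n}^2+MB_{2n}.$$
Taking traces, and using $\tr(X^2)=(\tr X)^2-2$ for $X\in SL(2,\R)$ together with $\tr A_{2n+1}=\tr B_{2n+1}=t_{2n+1}$, I would obtain
$$t_{2n+1}=t_{2n}^2-2-\tr(MA_{2n})=t_{2n}^2-2+\tr(MB_{2n}).$$
Adding these two expressions yields $\tr(MA_{2n})+\tr(MB_{2n})=0$, while $\tr(MA_{2n})-\tr(MB_{2n})=\tr(M(A_{2n}-B_{2n}))=\tr(M^2)=2(\nu_n^2-\omega_n^2)$. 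Solving this $2\times 2$ linear system gives $\tr(MA_{2n})=\nu_n^2-\omega_n^2$, and substituting back into either trace identity produces $t_{2n+1}=t_{2n}^2-\nu_n^2+\omega_n^2-2$.

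The only conceptual point is recognizing that one must exploit both $A_{2n+1}$ and $B_{2n+1}$ simultaneously in order to decouple $\tr(MA_{2n})$ from $\tr(MB_{2n})$; beyond that, every step reduces to routine $SL(2,\R)$ bookkeeping, and the overall structure mirrors the proof of the companion identity \eqref{A-B-odd} for odd indices.
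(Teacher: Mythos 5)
Your proof is correct, but it reaches the decomposition $A_{2n}-B_{2n}=\nu_n V+\omega_n W$ by a different route than the paper. The paper writes $A_{2n}-B_{2n}=B_{2n-1}A_{2n-1}-A_{2n-1}B_{2n-1}$, substitutes the Delyon--Peyri\`ere relation \eqref{A-B-odd} to get $\mu_n(A_{2n-1}U-UA_{2n-1})$, and observes that any commutator with $U$ lies in ${\rm span}(V,W)$; you instead bypass \eqref{A-B-odd} entirely and deduce the decomposition from Remark \ref{rembasic} (the palindromic symmetry forcing $a_{12}=-a_{21}$, $b_{12}=-b_{21}$) together with $\tr A_{2n}=\tr B_{2n}$, by direct entry comparison. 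Both are valid for the range $n\ge1$ where the lemma is stated; yours is more elementary and self-contained, while the paper's commutator form makes the link to the odd-index structure explicit. For the trace identity, the content is the same: you establish $M^2=(\nu_n^2-\omega_n^2)I$ (the paper does this by Cayley--Hamilton, since $\tr M=0$ and $\det M=\omega_n^2-\nu_n^2$) and then do the same $SL(2,\R)$ trace bookkeeping, except the paper expands $(A_{2n}-B_{2n})^2=t_{2n}A_{2n}+t_{2n}B_{2n}-A_{2n+1}-B_{2n+1}-2I$ and takes the trace in one shot, whereas you take traces of $A_{2n+1}=A_{2n}^2-MA_{2n}$ and $B_{2n+1}=B_{2n}^2+MB_{2n}$ separately and solve a $2\times2$ linear system for $\tr(MA_{2n})$ and $\tr(MB_{2n})$; this is slightly longer but equivalent. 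One trivial wording slip: the relation $\tr(MA_{2n})+\tr(MB_{2n})=0$ comes from \emph{equating} (or subtracting) your two expressions for $t_{2n+1}$, not from adding them; the conclusion is unaffected.
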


\begin{proof}
Since $A_{2n-1}U-UA_{2n-1}$ is a linear combination of $V$ and $W$, by \eqref{A-B-odd}, there exist $\nu_n,\omega_n\in\R$ such that
\begin{align*}
A_{2n}-B_{2n}&=B_{2n-1}A_{2n-1}-A_{2n-1}B_{2n-1}\\
&=(A_{2n-1}-\mu_{n}U)A_{2n-1}-A_{2n-1}(A_{2n-1}-\mu_{n}U)\\
&=\mu_{n}(A_{2n-1}U-UA_{2n-1})\\
&=\nu_{n}V+\omega_{n}W.
\end{align*}

Noticing  that $\det(\nu_{n}V+\omega_{n}W)=\omega_n^2-\nu_n^2$
and $\tr(\nu_{n}V+\omega_{n}W)=0$,
by Cayley-Hamilton Theorem,
$$\begin{array}{rcl}
-(\omega_n^2-\nu_n^2)I&=&(A_{2n}-B_{2n})^2\\
&=&A_{2n}^2+B_{2n}^2-A_{2n}B_{2n}-B_{2n}A_{2n}\\
&=&t_{2n}A_{2n}+t_{2n}B_{2n}-A_{2n+1}-B_{2n+1}-2I.
\end{array}
$$
Taking trace on both sides, we get the second equality of the lemma.
\end{proof}

Moreover, $\mu_{n}$, $\nu_{n}$ and $\omega_{n}$ satisfy the following recurrence relations:

\begin{lemma}\label{recur-mu-nu-omega}
The following three recurrence relations hold:
\begin{eqnarray*}
\mu_{n+1}&=&(t_{2n}-2)t_{2n-1}\ \mu_{n},\\
\nu_{n+1}&=&(t_{2n+1}-2)t_{2n}\ \nu_{n},\\
\omega_{n+1}&=&(t_{2n+1}-2)t_{2n}\ \omega_{n}
\end{eqnarray*}
with  initial conditions
\begin{equation}\label{initial-uvw}
\mu_1=-2\lambda;\ \  \nu_1=4\lambda E \ \ \text{ and }\ \  \omega_1=2\lambda(E^2-\lambda^2).
\end{equation}

\end{lemma}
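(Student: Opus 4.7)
The plan is to derive all three recurrences from a single unified identity obtained by subtracting the first two formulas in Lemma \ref{recurrence-AB}, and then verify the initial conditions by a direct $2\times 2$ matrix computation.

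First I would subtract the recurrences for $A_{n+2}$ and $B_{n+2}$ given in Lemma \ref{recurrence-AB}. The inhomogeneous terms $(1-t_n^2)I$ cancel, and gathering the $A_n-B_n$ pieces yields the single clean identity
\begin{equation*}
A_{n+2}-B_{n+2} \;=\; t_n\bigl(t_{n+1}-1\bigr)(A_n-B_n) \;-\; t_n(A_n-B_n) \;=\; t_n\bigl(t_{n+1}-2\bigr)(A_n-B_n),
\end{equation*}
which is the engine of the entire lemma. Specializing $n=2k-1$ and invoking \eqref{A-B-odd} immediately gives $\mu_{k+1}U = t_{2k-1}(t_{2k}-2)\mu_k U$, whence the $\mu$-recurrence. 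Specializing $n=2k$ and invoking \eqref{A-B-even} gives
\begin{equation*}
\nu_{k+1}V+\omega_{k+1}W \;=\; t_{2k}(t_{2k+1}-2)\bigl(\nu_k V+\omega_k W\bigr),
\end{equation*}
and since $V$ and $W$ are linearly independent in $M_2(\R)$, the $\nu$- and $\omega$-recurrences fall out by matching components.

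For the initial conditions I would compute $A_1 = \varrho(b)\varrho(a)$ and $B_1 = \varrho(a)\varrho(b)$ directly. A routine multiplication gives
\begin{equation*}
A_1-B_1 \;=\; \begin{pmatrix} 0 & -2\lambda \\ -2\lambda & 0 \end{pmatrix} \;=\; -2\lambda\,U,
\end{equation*}
so $\mu_1 = -2\lambda$. For $\nu_1,\omega_1$, the cleanest route is to use $B_1 = A_1 - \mu_1 U = A_1 + 2\lambda U$ and write
\begin{equation*}
A_2 - B_2 \;=\; B_1 A_1 - A_1 B_1 \;=\; 2\lambda\,(UA_1 - A_1 U).
\end{equation*}
A short computation of the commutator $UA_1 - A_1 U$ produces $2E\,V + (E^2-\lambda^2)\,W$, so
$A_2-B_2 = 4\lambda E\,V + 2\lambda(E^2-\lambda^2)\,W$, which gives the claimed $\nu_1$ and $\omega_1$.

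No step here is genuinely hard; the only mild subtlety is being careful with the convention $\varrho(a_1\cdots a_n)=\varrho(a_n)\cdots\varrho(a_1)$ when computing $A_1,B_1$, and with the sign conventions of $V,W,U$ from \eqref{4matrices} when reading off $\nu_1$ and $\omega_1$ from the entries of $UA_1-A_1U$. Everything else is a one-line consequence of Lemma \ref{recurrence-AB} plus the identities \eqref{A-B-odd} and \eqref{A-B-even} already established.
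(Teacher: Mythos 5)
Your proposal is correct and follows essentially the same route as the paper: the paper likewise obtains the recurrences by subtracting the two relations coming from Lemma \ref{recurrence-AB} (in the form \eqref{recur-AB-odd} and \eqref{recur-AB-even}) and matching against \eqref{A-B-odd}, \eqref{A-B-even}, while the initial values \eqref{initial-uvw} are dismissed there as a direct computation, which your explicit calculation of $A_1-B_1=-2\lambda U$ and $A_2-B_2=2\lambda(UA_1-A_1U)=4\lambda E\,V+2\lambda(E^2-\lambda^2)\,W$ correctly supplies.
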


\begin{proof}
At first, by direct computation, we get \eqref{initial-uvw}.

By Lemma \ref{recurrence-AB} and \eqref{A-B-odd} we have
\begin{equation}
\begin{cases}\label{recur-AB-odd}
A_{2n+1}&=t_{2n}t_{2n-1}A_{2n-1}-t_{2n-1}\mu_{n}U+(1-t_{2n-1}^2)I,\\
B_{2n+1}&=t_{2n}t_{2n-1}B_{2n-1}+t_{2n-1}\mu_{n}U+(1-t_{2n-1}^2)I.
\end{cases}
\end{equation}
Subtracting the two equations, we get the first recurrence relation.

By Lemma \ref{recurrence-AB} and \eqref{A-B-even} we have
\begin{equation}
\begin{cases}
\label{recur-AB-even}
A_{2n+2}&=t_{2n}t_{2n+1}A_{2n}-t_{2n}\nu_{n}V-t_{2n}\omega_nW+(1-t_{2n}^2)I,\\
 B_{2n+2}&=t_{2n}t_{2n+1}B_{2n}+t_{2n}\nu_{n}V+t_{2n}\omega_nW+(1-t_{2n}^2)I.
\end{cases}
\end{equation}
Subtracting the two equations, we get the second and the third recurrence relations.
\end{proof}

Finally we need the following basic fact to estimate  the norms  of $A_n$ and $B_n,$ the proof of which is elementary and will be omitted.

\begin{lemma}\label{elementary}
Assume $\gamma, c, d>0$ and   $\{a_n:n\ge1\}$ is a nonnegative sequence. If  either
$
a_{n+1}\le c+d \cdot e^{-2^n\gamma} a_n$ for all $n\in\N$ or $ a_{n+1}\le (1+c\cdot e^{-2^n\gamma}) a_n+d\cdot e^{-2^n\gamma}
$ for all $n\in\N,$
then  $\{a_n:n\ge1\}$ is bounded.
\end{lemma}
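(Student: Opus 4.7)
The plan is to reduce both inequalities to standard, easy-to-iterate recurrences by exploiting the doubly-exponential decay of $e^{-2^n\gamma}$. Three quantitative facts about this decay are the backbone: first, $\sum_{n=1}^{\infty} e^{-2^n\gamma}<\infty$; second, $\prod_{n=1}^{\infty}(1+c\,e^{-2^n\gamma})$ converges to a finite value $P_\infty\ge 1$; third, $d\,e^{-2^n\gamma}\to 0$ so it eventually drops below $1/2$. None of these requires any property of the sequence $\{a_n\}$ itself.

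For the first inequality $a_{n+1}\le c+d\,e^{-2^n\gamma}a_n$, I would choose $N$ so large that $d\,e^{-2^n\gamma}\le 1/2$ for every $n\ge N$. For such $n$ the recursion becomes the one-dimensional contraction $a_{n+1}\le c+\tfrac{1}{2}a_n$. A trivial induction gives
\begin{equation*}
a_{N+k}\le 2c\bigl(1-2^{-k}\bigr)+2^{-k}a_N\le 2c+a_N\qquad (k\ge 0),
\end{equation*}
which together with the bound on the finitely many initial terms $a_1,\dots,a_N$ yields boundedness of the whole sequence.

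For the second inequality $a_{n+1}\le(1+c\,e^{-2^n\gamma})a_n+d\,e^{-2^n\gamma}$, I would rescale by the convergent product. Set $P_n:=\prod_{k=1}^{n-1}(1+c\,e^{-2^k\gamma})$ with $P_1:=1$, so that $1\le P_n\le P_\infty$. Dividing the recursion by $P_{n+1}=(1+c\,e^{-2^n\gamma})P_n$ and writing $b_n:=a_n/P_n$ produces the very clean estimate
\begin{equation*}
b_{n+1}\le b_n+\frac{d\,e^{-2^n\gamma}}{P_{n+1}}\le b_n+d\,e^{-2^n\gamma}.
\end{equation*}
Summing a telescoping series bounds $b_n$ by $b_1+d\sum_{k\ge 1}e^{-2^k\gamma}<\infty$, and since $a_n=P_n b_n\le P_\infty b_n$, the sequence $\{a_n\}$ is bounded.

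There is essentially no serious obstacle: the arguments are purely quantitative manipulations of one-term recursions, and the double-exponential decay makes every error series absolutely convergent. The only care needed is to state the bound on $P_\infty$ correctly (using $\log(1+x)\le x$ to compare $\log P_\infty$ with the convergent sum $c\sum e^{-2^k\gamma}$), and to keep track of the finitely many initial terms when quoting boundedness of the full sequence rather than of its tail.
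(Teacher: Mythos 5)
Your proof is correct: both halves are sound (the contraction argument after $d\,e^{-2^n\gamma}\le\tfrac12$ for the first recursion, and the rescaling by the convergent product $P_n$ with the telescoping sum for the second), and the handling of the finitely many initial terms is properly noted. The paper itself omits the proof as ``elementary,'' and your argument is precisely the kind of standard iteration the authors had in mind, so nothing further is needed.
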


%%%%%%%%%%%%%%%%%%%%%%%%%%%%%%%%%%
\section{Upper bounds for $\|T_{\pm n}(E)\|;$ type-I energy}\label{sec-th1-prop2}

In this section, we give the proofs of Theorem \ref{main-subexp} and Proposition \ref{extended}, which are relatively easy.

\subsection{Uniform upper bound for $\|T_{\pm n}(E)\|$}\

Note that for a matrix $A$ of order 2, we have $|{\rm tr}(A)|\le 2\|A\|.$

\noindent{\bf Proof of Theorem \ref{main-subexp}.}\
By Remark \ref{rembasic},  $\|A_{2n-1}\|=\|B_{2n-1}\|$ for any $n\ge1$.

\smallskip

\noindent{\bf Claim:} $\|A_{2n+1}\|\le 12\|A_{2n-1}\|^2.$

\noindent$\lhd$
Since  $t_n=\tr A_n=\tr B_n$ for any $n\ge1$, we have
\begin{equation}\label{trnorm}
|t_n|\le2\min\{ \|A_n\|,\ \|B_n\|\}.
\end{equation}

By Lemma \ref{recurrence-AB},
$$A_{2n+1}=t_{2n-1}(t_{2n}-1)A_{2n-1}+t_{2n-1}B_{2n-1}+(1-t_{2n-1}^2)I.$$

If $|t_{2n}|\le 2$,  then by using $|t_{2n-1}|\le2\|A_{2n-1}\|$, we have
$$\|A_{2n+1}\|\le 12\|A_{2n-1}\|^2.$$

If $|t_{2n}|>2$ and $t_{2n-1}=0$, then $A_{2n+1}=I.$ Thus the claim obviously holds since $\|A_{2n-1}\|\ge 1.$

If $|t_{2n}|>2$ and $t_{2n-1}\ne0$,  then  $t_{2n}<-2$ by Lemma \ref{basic} (iii).
Consequently  $|t_{2n+1}|, \ |t_{2n-1}|\le2$ by \eqref{structure-spectrum}. Hence  by \eqref{recurrence},
$$|t_{2n-1}|=\sqrt{\frac{2-t_{2n+1}}{2-t_{2n}}}\le\frac{2}{\sqrt{|t_{2n}|}}.$$
On the other hand, $|t_{2n}|\le 2\|A_{2n}\|\le 2\|A_{2n-1}\|\|B_{2n-1}\|=2\|A_{2n-1}\|^2,$ so
$$\begin{array}{rcl}
\|A_{2n+1}\|&\le&|t_{2n-1}t_{2n}|\|A_{2n-1}\|+2|t_{2n-1}|\|A_{2n-1}\|+|1-t_{2n-1}^2|\\
&\le& 2\sqrt{|t_{2n}|}\|A_{2n-1}\|+8\|A_{2n-1}\|
\le12 \|A_{2n-1}\|^2.
\end{array}
$$
Thus the claim holds.
\hfill $\rhd$

It is seen that $|E|\le |\lambda|+2$ for any $E\in\sigma(H_\lambda).$ Thus
$$
\|A_1(E)\|=\|\varrho(ab)\|\le\|\varrho(a)\|\|\varrho(b)\|\le 4(2+|\lambda|)^2.
$$
 Define $c_1=2(\ln2+ \ln(2+|\lambda|)$,
 define $c_n=c_{n-1}+2^{-n+1}\ln 12$ for any $n>1$,
it is direct to show by induction that
$$\|A_{2n-1}\|\le e^{c_n2^{n-1}}.$$
Then for any $n>0$,
$$\max\{\|A_{2n}\|,\|B_{2n}\|\}\le \|A_{2n-1}\|^2\le e^{c_n 2^{n}}.$$
Let $c=c_1+\ln 12$. For any $n>0$, we get
$$\max\{\|A_{n}\|,\|B_{n}\|\}\le e^{c2^{n/2}}.$$

For any $n>0$, let $k=\lfloor\log n/\log 2\rfloor$, then
$n=\sum_{j=0}^k a_j 2^j$ with $a_j\in\{0,1\}$ and $a_k=1$.
Define $n_0=0;$ for any $1\le i\le k+1$, define $n_i=\sum_{j=0}^{i-1} a_{k-j} 2^{k-j}$.
Then    $T_{n_{j}\rightarrow n_{j+1}}=A_{k-j}$ or $B_{k-j}$ if $a_{k-j}=1$.
And hence,
$$\|T_n\|\le\prod_{j=0}^{k}\|T_{n_{j}\rightarrow n_{j+1}}\|\le \prod_{j=0}^k e^{a_jc 2^{j/2}}
\le e^{\frac{\sqrt{2}c}{\sqrt{2}-1}(2^{k})^{1/2}}
\le e^{\frac{\sqrt{2}c}{\sqrt{2}-1}n^{1/2}}.$$

By Lemma \ref{dp} (iii) and the fact that $U$ is unitary, we have
  $\|T_{-n}\|=\|T_n\|$ for any $n\in\N$. Thus the result follows.
\hfill $\Box$

Compare with the upper bound of the transfer matrices, it is much more difficult to obtain a non-trivial lower bound, as we will see in the proof of Theorem \ref{main-norm}.

%%%%%%%%%%%
%%%%%%%%%%%
\subsection{Type-I energy}\

  The following simple fact is observed in \cite{AP,DT}, which also follows directly  from Lemma \ref{recurrence-AB}.

\begin{lemma}
If $t_k(E)=0$ for some $k$, then $A_{k+2}=B_{k+2}=I.$
\end{lemma}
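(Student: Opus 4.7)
The plan is to read off the conclusion as a one-line application of Lemma \ref{recurrence-AB}. Setting $n=k$ in the recurrences
$$A_{k+2}=t_k(t_{k+1}-1)A_k+t_kB_k+(1-t_k^2)I, \qquad B_{k+2}=t_k(t_{k+1}-1)B_k+t_kA_k+(1-t_k^2)I,$$
and substituting $t_k=0$, every summand on the right-hand side except $(1-t_k^2)I=I$ vanishes. This yields $A_{k+2}=B_{k+2}=I$ immediately, with no further bookkeeping required.

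If one prefers a proof that avoids the second-order recurrence, the same conclusion is visible from the primitive identities $A_{k+1}=B_kA_k$ and $B_{k+1}=A_kB_k$ of \eqref{recu-n+1-n}. The hypothesis $t_k=\tr A_k=\tr B_k=0$, together with $\det A_k=\det B_k=1$, forces $A_k^2=B_k^2=-I$ by Cayley--Hamilton. Consequently
$$A_{k+2}=B_{k+1}A_{k+1}=A_kB_k^2A_k=-A_k^2=I, \qquad B_{k+2}=A_{k+1}B_{k+1}=B_kA_k^2B_k=-B_k^2=I.$$
Either derivation takes a single step once Lemma \ref{recurrence-AB} (or its Cayley--Hamilton ingredient) is in hand, so there is no real obstacle.

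The only point worth emphasizing is the structural role of this lemma in what follows: the identity $A_{k+2}=I$ means $T_{2^{k+2}}(E)=I$, so the sequence $\{T_n(E)\}_{n\in\Z}$ is $2^{k+2}$-periodic for every type-I energy $E$. Taking $C:=\max_{0\le n<2^{k+2}}\|T_n(E)\|$ and invoking \eqref{transfer-vector} then gives the uniform bounds \eqref{bd-transfer} of Proposition \ref{extended} on the nose; the content of the present lemma is precisely what reduces the type-I case to the periodic analysis.
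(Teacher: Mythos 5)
Your first derivation is exactly the paper's argument: the paper proves this lemma precisely by plugging $t_k=0$ into the recurrences of Lemma \ref{recurrence-AB}, so your proposal is correct and takes the same route. The alternative via $A_k^2=B_k^2=-I$ is also fine (it just re-runs the Cayley--Hamilton step underlying Lemma \ref{recurrence-AB}), and both, like the paper, implicitly use $\tr A_k=\tr B_k=t_k$, which is valid since the trace polynomials are indexed by $k\ge 1$.
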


\noindent{\bf Proof of Proposition \ref{extended}.}\
 By Lemma \ref{dp} (iii), we only need to prove the result for  $n\in \N.$  Take $E\in \Sigma_I$ and assume $t_K(E)=0$,
  then $A_{K+2}=B_{K+2}=I$ by the above lemma. Define
  \begin{equation*}
C:=\max\{\|T_l(E)\|, \|{\overline T}_{l}(E)\|: 1\le l\le 2^{K+2}\}.
\end{equation*}
Fix any $n\in \N$ and assume $n=m 2^{K+2}+l$, with $0\le l<2^{K+2}$. Then $T_n(E)=T_l(E)$ or ${\overline T}_{l}(E)$, since $A_{K+2}=B_{K+2}=I$. Thus \eqref{bd-transfer} follows.
\hfill $\Box$

%%%%%%%%%%%%%%%%%%%%%%%%%%%%%%%%%%
\section{ $\Sigma_{II}$ and $\Sigma_{III}$ are dense and uncountable }\label{sec-exist-dense}

In this section, we will prove Theorem \ref{exist-dense}, i.e.,
both $\Sigma_{II}$ and $\Sigma_{III}$ are dense in $\sigma(H_\lambda)$ and uncountable.

Recall that in the Introduction we have defined a map
\begin{equation*}
f(x,y)=(x^2(y-2)+2,x^2y^2(y-2)+2)
\end{equation*}
and a strip
\begin{equation*}%\label{strip}
S=\{(x,y)\in \mathbb{R}^2: |x|\le 1,\ y\le x^2-2 \}.
\end{equation*}

 As we have explained in the Introduction, it is crucial  to study the behavior of the  inverse branches of  $f$  on the  strip $S$.

\subsection{Dynamical properties of   $f$ }\

  Write  $(x_1,y_1)=f(x,y)$, the following relations will be frequently used later:
\begin{equation}\label{f-inv}
y_1-2=x^2y^2(y-2),\quad y_1-x_1^2+2=(y-x^2+2)(y-2)^2x^2.
\end{equation}
They follow directly from the definition of $f.$

There are four inverse branches for $f$, which are denoted as $F_{\epsilon,\eta}$ with $\epsilon,\eta\in\{+,-\}$,
\begin{equation*}%\label{inversebranches}
F_{\epsilon,\eta}(x,y)=\left( \epsilon\sqrt{\frac{2-x}{2-\eta\sqrt{\frac{2-y}{2-x}}}},\eta\sqrt{\frac{2-y}{2-x}}\right).
\end{equation*}
The corresponding domains are,
\[Dom(F_{\pm,-})=\{(x,y):x<2,y< 2\},\quad Dom(F_{\pm,+})=\mathcal{U},\]
where
$$\begin{array}{rcl}\mathcal{U}&=&\{(x,y):x<2, y<2, y-4x+6>0\}\bigcup\\
&&\{(x,y):x>2, y>2, y-4x+6>0\}.\end{array}$$
Write
$$
F_0=F_{-,-},\ \  F_1=F_{+,-},\ \  F_2=F_{-,+} \ \ \text{ and }\ \  F_3=F_{+,+}.
$$
We have the following basic lemma(see Figure \ref{pic-sigma-3}):
\begin{lemma}\label{S-inclusion}
 $F_2(\mathcal{U})\cap S=F_3(\mathcal{U})\cap S=\emptyset$.  $F_0(S), F_1(S)\subset S$ and $F_0(S)\cap F_1(S)=\emptyset.$
\end{lemma}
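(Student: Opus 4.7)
The plan is to separate the lemma into three assertions and handle each by tracking the second coordinate of $F_{\epsilon,\eta}$. Observe that $F_{\epsilon,\eta}(x,y)$ has second coordinate $\eta\sqrt{(2-y)/(2-x)}$, whose sign coincides with $\eta$. Since every $(x,y)\in S$ satisfies $y\le x^2-2\le -1<0$, the images $F_2(\mathcal{U})$ and $F_3(\mathcal{U})$, which carry a nonnegative second coordinate, cannot lie in $S$. That disposes of the first assertion essentially for free.

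For $F_0(S),F_1(S)\subset S$, I would fix $(x,y)\in S$, set $s:=\sqrt{(2-y)/(2-x)}$, and write $(x',y')=F_{\epsilon,-}(x,y)$, so $y'=-s$ and $(x')^2=(2-x)/(2-y')$. Since $|x|\le 1$ gives $2-x\in[1,3]$ and $y\le -1$ gives $2-y\ge 3$, we have $s\ge 1$, hence $y'\le -1$. The condition $|x'|\le 1$ is equivalent to $(2-x)/(2-y')\le 1$, which simplifies to $y'\le x$, and this holds since $y'\le -1\le x$. For the parabolic constraint $y'\le(x')^2-2$, the cleanest route is to invoke the second identity of \eqref{f-inv} applied at $(x',y')$: since $f(x',y')=(x,y)$,
$$y-x^2+2=(y'-(x')^2+2)(y'-2)^2(x')^2.$$
The left side is $\le 0$ because $(x,y)\in S$, while $(y'-2)^2>0$ (as $y'\le -1$) and $(x')^2>0$ (the formula would force $x=2$ otherwise, contradicting $|x|\le 1$). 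Hence $y'-(x')^2+2\le 0$, i.e.\ $(x',y')\in S$.

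Disjointness $F_0(S)\cap F_1(S)=\emptyset$ follows from the same observation that $x'\ne 0$ on $S$: the two branches differ only in the sign of $x'$, with $F_0$ producing $x'<0$ and $F_1$ producing $x'>0$, so their images are separated by the $y$-axis.

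The step I expect to require the most care is the parabolic invariance $y'\le(x')^2-2$; attacked by brute force it becomes a messy manipulation of nested radicals. The trick is to exploit the identity \eqref{f-inv} already recorded in the paper, which makes the parabola $y=x^2-2$ (and the half-plane below it) forward $f$-invariant and therefore backward invariant under any inverse branch; this reduces the verification to a one-line sign argument and explains \emph{why} $S$ is the correct domain to iterate, as opposed to the more natural-looking rectangle $D=[-2,2]\times(-\infty,-2)$ used in the motivating discussion.
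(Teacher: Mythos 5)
Your proof is correct and follows essentially the same route as the paper's: the sign of the second coordinate rules out $F_2,F_3$, the bound $|x'|\le 1$ reduces to $y'\le -1\le x$, and your use of \eqref{f-inv} for the parabolic constraint is exactly the paper's direct computation $\hat y-\hat x^2+2=\frac{y-x^2+2}{\bigl(2+\sqrt{(2-y)/(2-x)}\bigr)(2-x)}\le 0$ in disguise. The only cosmetic difference is that the paper proves $F_1(S)\subset S$ and obtains $F_0$ via the reflection symmetry of $S$ in the $y$-axis, whereas you treat both signs $\epsilon=\pm$ at once; in both cases disjointness comes from $x'\ne 0$ and the opposite signs of the first coordinate.
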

\begin{proof}
Since $S\subset [-1,1]\times (-\infty,-1]$, the first equation follows easily.

Next we show   $F_1(S)\subset S$.
Notice that  $S\subset Dom(F_1)$. Take any $(x,y)\in S$, denote $F_1(x,y)=(\hat x,\hat y)$, then
$$-\hat y=\sqrt{\frac{2-y}{2-x}}\ge\sqrt{\frac{4-x^2}{2-x}}=\sqrt{2+x}\ge1\ge|x|.$$
Thus we get
$$
0< \hat x=\sqrt{\frac{2-x}{2-\hat y}}\le 1.
$$
 On the other hand, by direct computation,
\begin{equation*}\label{minv}
\hat y-\hat x^2+2=\frac{y-x^2+2}{(2+\sqrt{\frac{2-y}{2-x}})(2-x)}\le 0,
\end{equation*}
so we get $(\hat x,\hat y)\in S$.  Thus $F_1(S)\subset S\cap \{(x,y):x>0\}$.

Since $S$ is symmetric w.r.t. $y$ axis and  $F_0(S)$ is the reflection of $F_1(S)$ w.r.t. $y$ axis, we get $F_0(S)\subset S\cap \{(x,y):x<0\}.$  So $F_0(S)\cap F_1(S)=\emptyset.$
\end{proof}

Recall that we have defined
$$
\Lambda(S)=\bigcap_{n\ge 0} f^{-n}(S).
$$
Define the parabola
 \begin{equation*}
\mathscr{P}:=\{(x,y): y=x^2-2\}.
\end{equation*}

\begin{lemma}\label{repeller}
Write $(x_k,y_k):=f^k(x,y)$ for any $k\ge0.$ If $(x,y)\in \Lambda(S)\setminus \mathscr{P},$
then
$$
\lim_{k\rightarrow\infty}x_k=0,\quad\lim_{k\rightarrow\infty}y_k=-\infty.
$$
\end{lemma}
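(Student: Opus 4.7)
The plan is to track two quantities along the orbit $(x_k, y_k)$: the signed distance
\[
h_k := y_k - x_k^2 + 2
\]
to the parabola $\mathscr{P}$, and the second coordinate $y_k$ itself. From the second identity in \eqref{f-inv} I get the multiplicative recursion $h_{k+1} = h_k \cdot (y_k - 2)^2 x_k^2$, while the membership $(x_{k+1}, y_{k+1}) \in S$, i.e.\ $|x_{k+1}| \le 1$, combined with $x_{k+1} = x_k^2(y_k - 2) + 2$, yields the two-sided bound
\[
\frac{1}{2 - y_k} \le x_k^2 \le \frac{3}{2 - y_k}.
\]

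I would first verify that the orbit never meets $\mathscr{P}$ and satisfies $x_k \ne 0$ throughout: if $x_k = 0$ then $f(x_k, y_k) = (2, 2) \notin S$, contradicting $(x_{k+1}, y_{k+1}) \in S$; and since $h_0 < 0$ and $(y_k - 2)^2 x_k^2 > 0$, induction gives $h_k < 0$ for every $k$. The lower bound on $x_k^2$ above, together with $y_k \le x_k^2 - 2 \le -1$ inside $S$, then produces the crucial uniform estimate
\[
x_k^2 (y_k - 2)^2 \ge \frac{(2 - y_k)^2}{2 - y_k} = 2 - y_k \ge 3.
\]

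Iterating the $h$-recursion therefore gives $|h_k| \ge 3^k |h_0| \to \infty$, while the trivial bound $|h_k| = x_k^2 - 2 - y_k \le -1 - y_k$ available in $S$ forces $-y_k \ge |h_k| + 1 \to \infty$, so $y_k \to -\infty$. Feeding this back into the upper bound $x_k^2 \le 3/(2 - y_k)$ then delivers $x_k \to 0$. The main point that requires care is identifying $h$ as the right Lyapunov-type function: the lamination constraint $x_k^2 \ge 1/(2 - y_k)$, which is forced by the condition at the \emph{next} step of the orbit, precisely cancels one factor of $(2 - y_k)$ in the recursion for $h$, yielding the clean geometric blow-up; without this matching one could only conclude $|h_k|$ is monotone, and the orbit might a priori accumulate on $\mathscr{P}$ without $|y_k|$ escaping to infinity.
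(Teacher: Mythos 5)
Your proof is correct and is essentially the paper's own argument: the paper tracks $a_k=x_k^2-y_k-2$ (your $-h_k$), uses the same identity from \eqref{f-inv} rewritten as $a_k=a_{k-1}(2-y_{k-1})(2-x_k)$ with $2-x_k\ge 1$ and $2-y_{k-1}\ge 3$ — which is exactly your observation that $x_{k-1}^2(2-y_{k-1})=2-x_k\ge 1$ — to get $a_k\ge 3^k a_0$, and then concludes $y_k\to-\infty$ from $|x_k|\le 1$ and $x_k\to 0$ from $x_k^2=(2-x_{k+1})/(2-y_k)$. Your reorganization via the bound $1/(2-y_k)\le x_k^2\le 3/(2-y_k)$ is only a cosmetic repackaging of the same steps, so there is nothing further to add.
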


\begin{proof}
For any $k\ge0$, write  $a_k=x_k^2-y_k-2$. Since
$$
(x,y)\in \Lambda(S)\setminus \mathscr{P}\subset S\setminus \mathscr{P},
$$ we have  $a_0>0.$ By the definition of $f$, we have $x_k=x_{k-1}^2(y_{k-1}-2)+2$. Then
by \eqref{f-inv},
$$
a_k=a_{k-1}(2-y_{k-1})^2x_{k-1}^2=a_{k-1}(2-y_{k-1})(2-x_k).
$$
Since $(x,y)\in \Lambda(S),$ we have $(x_k,y_k)=f^k(x,y)\in S.$
Thus $|x_k|\le1$ and  $y_k\le -1$. Hence $a_k\ge 3^ka_0 $.

Since $y_k=x_k^2-2-a_k$ and $|x_k|\le1$, we have
$\lim_{k\rightarrow\infty}y_k=-\infty.$

Since  $x_k^2=(2-x_{k+1})/(2-y_k)$, we have
$\lim_{k\rightarrow\infty}x_k=0.$
\end{proof}

By Lemma \ref{S-inclusion},
$$
\Lambda(S)= \bigcap_{n\ge 1} \bigcup_{w\in \{0,1\}^n} F_{w_1}\circ\cdots \circ F_{w_n}(S).
$$
For $\omega\in \{0,1\}^\infty$, define
$$
\Lambda_\omega:=\bigcap_{n\ge 1} F_{\omega_1}\circ\cdots \circ F_{\omega_n}(S).
$$
Then
$$
\Lambda(S)=\bigcup_{\omega\in \{0,1\}^\infty}\Lambda_\omega.
$$

\begin{lemma}\label{fiber}
For every $\omega\in \{0,1\}^\infty,$
 $p_\omega:=\Lambda_\omega\cap \mathscr{P}\ne\emptyset$ and  $p_\omega$ is connected.
 Moreover the connected component of $\Lambda_\omega$ which contains $p_\omega$ is  unbounded.  $\Lambda_\omega\cap \Lambda_{\tilde \omega}=\emptyset$ if $\omega\ne\tilde \omega.$
\end{lemma}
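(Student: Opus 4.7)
The plan is to handle the three assertions separately, with the unboundedness being the substantial part.

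\textbf{Parts on $p_\omega$ and disjointness.} For the single-point claim, the key is that the parabola $\mathscr{P}$ is $f$-invariant: substituting $y = x^2 - 2$ into the identity $y_1-2 = x^2 y^2(y-2)$ from \eqref{f-inv} and simplifying gives $y_1 = x_1^2 - 2$. Hence each $F_i$ preserves $\mathscr{P}$, and $F_0, F_1$ in particular restrict to self-maps of the arc $\mathscr{P} \cap S$, which I parameterize by $x \in [-1,1]$. Using the algebraic identity $(2-\sqrt{2+x})(2+\sqrt{2+x}) = 2-x$, one checks that $F_1|_\mathscr{P}$ takes the form $x \mapsto \sqrt{2 - \sqrt{2+x}}$, whose derivative has absolute value uniformly strictly less than $1$ on $[-1,1]$; analogously for $F_0$. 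The arcs $F_{\omega_1}\circ\cdots\circ F_{\omega_n}(\mathscr{P}\cap S)$ therefore form a nested sequence with diameters shrinking to zero, so $p_\omega$ is a single point, trivially connected. For disjointness when $\omega \ne \tilde\omega$, let $k$ be the first disagreement index; $\Lambda_\omega$ lies in $F_{\omega_1}\circ\cdots\circ F_{\omega_{k-1}}(F_{\omega_k}(S))$ and $\Lambda_{\tilde\omega}$ in $F_{\omega_1}\circ\cdots\circ F_{\omega_{k-1}}(F_{\tilde\omega_k}(S))$, and these are disjoint by injectivity of $F_{\omega_1}\circ\cdots\circ F_{\omega_{k-1}}$ together with $F_0(S)\cap F_1(S)=\emptyset$ from Lemma \ref{S-inclusion}.

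\textbf{Unbounded component.} Write $G_n := F_{\omega_1}\circ\cdots\circ F_{\omega_n}$ and, for $M \ge 2$, set $K_n(M) := G_n(S) \cap \{y \ge -M\}$. The sets $K_n(M)$ are nested in $n$, non-empty (they contain the top arc $G_n(\mathscr{P}\cap S)$), and compact (as closed subsets of the compact region $S \cap \{y \ge -M\}$). Once one establishes that each $K_n(M)$ is connected, the nested-intersection principle gives that $\Lambda_\omega \cap \{y \ge -M\} = \bigcap_n K_n(M)$ is a non-empty compact connected set containing $p_\omega$; increasing $M$ then yields a connected subset of $\Lambda_\omega$ through $p_\omega$. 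Unboundedness would follow from a horizontal-slice argument: for any $c < -2$, the curve $t \mapsto G_n(0,t)$, $t \in (-\infty,-2]$, has $y$-coordinate tending to $-\infty$ because each $F_i$ satisfies $|\hat y| \ge \sqrt{(2-y)/3}$ on $S$, so by the intermediate value theorem it meets $\{y=c\}$. The nested non-empty compact slices $G_n(S)\cap\{y=c\}$ then intersect in a non-empty subset of $\Lambda_\omega$, so the connected component of $\Lambda_\omega$ through $p_\omega$ reaches every height $y \le -2$.

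\textbf{Main obstacle.} The crux is proving that each $K_n(M)$ is connected. Geometrically $G_n(S)$ should be a topological half-strip with top boundary the arc $\alpha_n = G_n(\mathscr{P}\cap S) \subset \mathscr{P}$ and two ``side'' curves $\beta_n^\pm = G_n(\{\pm 1\}\times(-\infty,-1])$ descending from the endpoints of $\alpha_n$ toward $y=-\infty$; if the $y$-coordinate along each $\beta_n^\pm$ is strictly monotone decreasing, then $K_n(M)$ is a topological disk bounded above by a subarc of $\mathscr{P}$, on the sides by subarcs of $\beta_n^\pm$, and below by the cut $y=-M$, hence connected. I would prove this monotonicity by induction on $n$; the partial derivatives $\partial_y\hat y > 0$ and $\partial_x\hat y < 0$ of $F_i$ handle the base case, but the inductive step is delicate because the $y$-derivative of $F_i$ applied to a curve $(a(t),b(t))$ requires the quantity $-b'(2-a) + a'(2-b)$ to have a definite sign, and the $x$-derivative $a'$ can both flip sign and grow in magnitude under iteration. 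The right invariant to propagate appears to be a joint bound of the form $|a'/b'| < (2-a)/(2-b)$, which is satisfied in the base case and should survive the recursion because the contraction from $F_i$ on the horizontal direction is strong enough; carrying through this slope estimate uniformly in $n$ is the main technical hurdle.
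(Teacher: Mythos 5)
Your treatment of $p_\omega$ and of the disjointness is essentially fine; in fact the contraction computation on the arc $\mathscr{P}\cap S$ gives more than the lemma asks for (the paper only shows that $p_\omega$ is a nonempty compact connected set, as the nested intersection of the compact connected arcs $F_{\omega_1}\circ\cdots\circ F_{\omega_n}(\mathscr{P}\cap S)$; one also needs the small identification $\Lambda_\omega\cap\mathscr{P}=\bigcap_n F_{\omega_1}\circ\cdots\circ F_{\omega_n}(\mathscr{P}\cap S)$, which follows from the forward invariance of $\mathscr{P}$ under $f$ via \eqref{f-inv} and which you use implicitly). The genuine gap is in the part that is the heart of the lemma: the unboundedness of the component through $p_\omega$. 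Your whole argument hinges on the connectedness of $K_n(M)=G_n(S)\cap\{y\ge-M\}$, which you reduce to strict monotonicity of the $y$-coordinate along the two side curves $G_n(\{\pm1\}\times(-\infty,-1])$, and this you only propose to prove by propagating a slope invariant of the form $|a'/b'|<(2-a)/(2-b)$ through the recursion, explicitly leaving the inductive step as an unresolved ``technical hurdle.'' Note that connectedness of $G_n(S)$ itself is automatic ($G_n$ is continuous and $S$ is connected) but useless here: the danger is exactly that the horizontal cut at $y=-M$ could disconnect $G_n(S)$ if the side curves oscillate in $y$, and nothing in your sketch rules this out. As written, the main assertion of the lemma is therefore not proved.

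For comparison, the paper sidesteps any such quantitative control by a soft topological argument: compactify $\R^2$ to $\mathbb{S}^2$ by stereographic projection, set $\hat S=\Phi(S)\cup\{N\}$ and extend each $F_i$ to a continuous self-map $\hat F_i$ of $\hat S$ fixing the north pole $N$; then $\hat\Lambda_\omega=\bigcap_n\hat F_{\omega_1}\circ\cdots\circ\hat F_{\omega_n}(\hat S)$ is a nested intersection of compact connected sets, hence compact, connected and containing $N$, and the classical plane-topology fact (Newman) that for a compact connected $E\subset\mathbb{S}^2$ with $N\in E$ every connected component of $\Psi(E\setminus\{N\})$ is unbounded immediately gives that every component of $\Lambda_\omega$, in particular the one containing $p_\omega$, is unbounded. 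To complete your proof you would have to either establish the monotonicity/slope estimate in full, or replace that step by such a compactification argument.
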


\begin{proof}
Write $ \mathscr{P}_1=\{(x,y): y=x^2-2, \ -1\le x\le1\},$ then $\mathscr{P}_1= \mathscr{P}\cap S$. By the definition of $F_0$ and $F_1$, it is direct to check that
$$
F_0(\mathscr{P}_1), \ F_1(\mathscr{P}_1)\subset \mathscr{P}_1\ \ \text{ and }\ \ F_0(S\setminus\mathscr{P}_1), \  F_1(S\setminus\mathscr{P}_1) \subset S\setminus\mathscr{P}_1.
$$
Consequently
$$
p_\omega=\left(\bigcap_{n\ge 1} F_{\omega_1}\circ\cdots \circ F_{\omega_n}(S)\right)\cap \mathscr{P}=\bigcap_{n\ge 1} F_{\omega_1}\circ\cdots \circ F_{\omega_n}(\mathscr{P}_1).
$$
Since $F_{\omega_1}\circ\cdots \circ F_{\omega_n}(\mathscr{P}_1)$ is compact, connected and decreasing, $p_\omega$ is nonempty, compact and connected.

Let $\Psi: \mathbb{S}^2\setminus\{N\} \to \R^2$ be   the standard polar projection from the unit sphere $\mathbb{S}^2$ to $\R^2$, where $N=(0,0,1)$ is the north pole.  Let $\Phi:\R^2\to \mathbb{S}^2\setminus\{N\}$ be the inverse of $\Psi.$ Define $\hat S=\Phi(S)\cup\{N\}\subset \mathbb{S}^2,$ then $\hat S$ is compact and connected. For $i=0,1$, define $\hat F_i: \hat S\to\hat S$ as
$$
\hat F_i(x)=
\begin{cases}
\Phi\circ F_i\circ\Psi(x)&x\ne N,\\
N&x=N.
\end{cases}
$$
 From the definition of $F_i$ and $\hat F_i$, it is easy to check that $\hat F_0,\hat F_1$ are continuous. Moreover, by Lemma \ref{S-inclusion},
$$
\hat F_i(\hat S\setminus\{N\})\subset \hat S\setminus\{N\}, \ \ (i=0,1) \ \ \text{ and }\ \ \hat F_0(\hat S)\cap \hat F_1(\hat S)=\{N\}.
$$
For any $\omega\in \{0,1\}^\infty,$  define
$$
\hat \Lambda_\omega:=\bigcap_{n\ge 0}\hat F_{\omega_1}\circ\cdots\circ \hat F_{\omega_n}(\hat S).
$$
Then $\hat \Lambda_\omega$ is  compact, connected and $N\in  \hat \Lambda_\omega.  $ Moreover from the definition of $\hat F_i$, we have
$
\hat \Lambda_\omega\setminus \{N\}= \Phi(\Lambda_\omega).
$
It is known that  if $E$ is a compact and connected set in $\mathbb{S}^2$ and $N\in  E$, then each connected component of $\Psi(E\backslash \{N\})$ is unbounded(see for example \cite{Newman} page 84). Applying to our situation, we conclude that each connected components of $\Lambda_\omega$ is unbounded. In particular, the component containing $p_\omega$ is unbounded.

 Now assume $\omega\ne \tilde \omega.$ Notice that $F_1, F_2$ are all injective, and by Lemma \ref{S-inclusion}, $F_0(S)\cap F_1(S)=\emptyset,$ then it is seen that  $\Lambda_\omega\cap \Lambda_{\tilde \omega}=\emptyset.$
\end{proof}

%%%%%%%%%%%
%%%%%%%%%%%
\subsection{  Type-II and type-III energies are dense and uncountable }\

The following initial condition is useful in what follows.

\begin{lemma}\label{initial-condition}
We have
$$
\begin{cases}
t_1^2(E)-t_2(E)-2=4\lambda^2&  E\in \R,\\
t_2^2(E)-t_3(E)-2>0&E\in {\rm int}(\sigma_1\cup\sigma_2).
\end{cases}
$$
\end{lemma}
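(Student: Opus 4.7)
The first identity is a direct computation. Substituting $t_1(E)=E^2-\lambda^2-2$ and $t_2(E)=(E^2-\lambda^2)^2-4E^2+2$ from \eqref{t12}, everything collapses: the $(E^2-\lambda^2)^2$ terms cancel and the $-4(E^2-\lambda^2)+4E^2$ remainder gives exactly $4\lambda^2$. This is pure algebra.

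For the second identity, the plan is to factor $t_2^2-t_3-2$ using the recurrence \eqref{recurrence}, which gives $t_3=t_1^2(t_2-2)+2$. Hence
\begin{equation*}
t_2^2-t_3-2=(t_2-2)(t_2+2)-t_1^2(t_2-2)=(t_2-2)\bigl(t_2+2-t_1^2\bigr).
\end{equation*}
The already-proved first identity rewrites $t_2+2-t_1^2=-4\lambda^2$, so
\begin{equation*}
t_2^2(E)-t_3(E)-2=4\lambda^2\bigl(2-t_2(E)\bigr).
\end{equation*}
Since $\lambda\neq0$, proving the strict inequality is reduced to showing $t_2(E)<2$ for every $E\in\mathrm{int}(\sigma_1\cup\sigma_2)$.

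The verification of $t_2(E)<2$ on $\mathrm{int}(\sigma_1\cup\sigma_2)$ splits into two easy cases. If $E\in\sigma_1$, set $x:=E^2-\lambda^2\in[0,4]$; then $t_2=x^2-4x-4\lambda^2+2$, and on $[0,4]$ the quadratic $x^2-4x$ attains its maximum $0$ at the endpoints, so $t_2(E)\le 2-4\lambda^2<2$. If instead $E\notin\sigma_1$, the closedness of $\sigma_1$ together with $E\in\mathrm{int}(\sigma_1\cup\sigma_2)$ forces a whole neighborhood of $E$ to lie in $\sigma_2$, hence $E\in\mathrm{int}(\sigma_2)$ and $|t_2(E)|<2$ by definition of $\sigma_2$.

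The only step requiring any care at all is the case analysis ruling out points where $t_2(E)=2$ from $\mathrm{int}(\sigma_1\cup\sigma_2)$; my expectation is that it is handled cleanly by the dichotomy above (sitting inside $\sigma_1$ forces a uniform bound away from $2$ via the extreme coupling $-4\lambda^2$, and otherwise $\sigma_1$ being closed drops us into the open set $\sigma_2\setminus\sigma_1$). So the argument is short and essentially a bookkeeping exercise once one writes $t_2^2-t_3-2$ in the factored form $4\lambda^2(2-t_2)$.
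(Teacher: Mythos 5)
Your computation of the first identity and the factorization
\begin{equation*}
t_2^2(E)-t_3(E)-2=(t_2(E)-2)\bigl(t_2(E)+2-t_1^2(E)\bigr)=4\lambda^2\bigl(2-t_2(E)\bigr)
\end{equation*}
are correct and amount to the same reduction the paper uses (there it is phrased as the observation that $(t_2,t_3)$ lies on the parabola $y=x^2+4\lambda^2x-8\lambda^2-2$). Your treatment of $E\in\sigma_1$ is also fine and even gives the uniform bound $t_2(E)\le 2-4\lambda^2<2$.

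The gap is in the second case, in the sentence ``$E\in\mathrm{int}(\sigma_2)$ and $|t_2(E)|<2$ by definition of $\sigma_2$.'' The definition of $\sigma_2$ only gives $|t_2(E)|\le 2$ for $E\in\sigma_2$; membership in the \emph{interior} of a sublevel set $\{|t_2|\le 2\}$ does not, by itself, force strict inequality, since a polynomial can touch the level $2$ at an interior local maximum. What you actually need is that $t_2(E)=2$ cannot happen at an interior point, and this requires an argument: if $t_2(E)=2$ and $t_2\le 2$ on a neighborhood of $E$, then $t_2-2$ has a zero of even multiplicity at $E$; but
\begin{equation*}
t_2(E)-2=(E^2-\lambda^2)^2-4E^2=(E^2-2E-\lambda^2)(E^2+2E-\lambda^2)
\end{equation*}
has the four roots $\pm1\pm\sqrt{1+\lambda^2}$, which are distinct and simple for $\lambda\ne0$, so $t_2-2$ changes sign at each of its zeros -- a contradiction. (Note these roots do \emph{not} lie in $\sigma_1$, so your first case does not absorb them; the issue genuinely lives in case two.) With this one-line patch your proof is complete. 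The paper sidesteps the point by computing $\{E:t_2(E)\le2\}=\{\sqrt{1+\lambda^2}-1\le|E|\le\sqrt{1+\lambda^2}+1\}$ explicitly, showing $\sigma_1\cup\sigma_2$ is contained in this union of two closed intervals on whose interiors $t_2<2$ holds strictly, since $t_2=2$ only at the four endpoints.
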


\begin{proof}
The first equation follows trivially  from \eqref{t12}.

By \eqref{t12},\eqref{recurrence} and direct computation,  we know that $(t_2(E), t_3(E))$ is in the parabola $y=x^2+4\lambda^2x-8\lambda^2-2.$
Thus   $t_2^2(E)-t_3(E)-2>0$ if  $t_2(E)<2$.
By \eqref{t12}, we have
\begin{eqnarray*}
\sigma_1&=&\{E: |\lambda|\le |E|\le \sqrt{4+\lambda^2}\}\\
\{E: t_2(E)\le 2\}&=&\{E: \sqrt{1+\lambda^2}-1\le |E|\le  \sqrt{1+\lambda^2}+1\}.
\end{eqnarray*}
Since $\sigma_2=\{E: |t_2(E)|\le 2\}$, $\sqrt{1+\lambda^2}-1\le |\lambda|$ and  $\sqrt{4+\lambda^2}\le \sqrt{1+\lambda^2}+1$,  we conclude that $\sigma_1\cup\sigma_2\subset\{E: t_2(E)\le 2\} $ and consequently ${\rm int}(\sigma_1\cup\sigma_2)\subset  \{E:t_2(E)<2\}$.
\end{proof}

Now we relate type-II and type-III energies to $\Lambda(S)$. For any $k\ge 1$, write
$$
 \varphi_k(E)=(t_{k}(E),t_{k+1}(E)).
$$
By \eqref{recurrence}, it is direct to check that    for any $E\in\mathbb{R}$ and any $k\ge1$,
\begin{equation}\label{recurrent}
f(\varphi_k(E))=\varphi_{k+2}(E).
\end{equation}

\begin{lemma}\label{char-II-III}
We have
\begin{eqnarray*}
\Sigma_{II}&=&\bigcup_{k \text{ even}}\{E: \varphi_k(E)\in \Lambda(S)\setminus\mathscr{P}\}=\{E:t_{2n}(E)\to 0; t_{2n-1}(E)\to-\infty\} \\
\Sigma_{III}&=&\bigcup_{k \text{ odd}}\{E: \varphi_k(E)\in \Lambda(S)\setminus\mathscr{P}\}=\{E:t_{2n-1}(E)\to 0; t_{2n}(E)\to-\infty\}.
\end{eqnarray*}
\end{lemma}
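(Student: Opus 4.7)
The plan is to prove both equalities via a cyclic chain of three inclusions. I will write it out for $\Sigma_{II}$; the case $\Sigma_{III}$ is strictly parallel, with the parity of all indices flipped.

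For $\Sigma_{II} \subseteq \bigcup_{k \text{ even}} \{E : \varphi_k(E) \in \Lambda(S) \setminus \mathscr{P}\}$, take $E \in \Sigma_{II}$. By Lemma \ref{basic}(i), $t_k(E) \ne 0$ for every $k$. Choose $n_0$ such that for all $n \ge n_0$, $|t_{2n}(E)| \le 2$ and $|t_{2n+1}(E)| > 2$; Lemma \ref{basic}(iii) then upgrades the latter to $t_{2n+1}(E) < -2$. Feeding this into $t_{2n+2} = t_{2n}^2(t_{2n+1} - 2) + 2$ with $t_{2n+1} - 2 < -4$ and $t_{2n+2} \ge -2$ yields $t_{2n}(E)^2 \le 4/|t_{2n+1}(E) - 2| < 1$, so $|t_{2n}(E)| < 1$ for $n > n_0$. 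Together with the strict inequality $t_{2n+1}(E) < -2 \le t_{2n}(E)^2 - 2$, this puts $\varphi_{2n}(E) \in S \setminus \mathscr{P}$ for every $n > n_0$. Since $f \circ \varphi_k = \varphi_{k+2}$ by \eqref{recurrent}, iteration gives $\varphi_{2n_0+2}(E) \in \bigcap_{m \ge 0} f^{-m}(S) = \Lambda(S)$, and the strict inequality already keeps it off $\mathscr{P}$.

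For the remaining two inclusions: if $\varphi_k(E) \in \Lambda(S) \setminus \mathscr{P}$ for some even $k$, Lemma \ref{repeller} applied at $\varphi_k(E)$ gives $(t_{k+2m}(E), t_{k+2m+1}(E)) \to (0, -\infty)$, which, as $k$ is even, is exactly $t_{2n}(E) \to 0$ and $t_{2n-1}(E) \to -\infty$. Conversely, if these two convergences hold, then for large $n$ we have $|t_{2n}(E)| \le 2$ (so $E \in \sigma_{2n}$) and $|t_{2n+1}(E)| > 2$ (so $E \notin \sigma_{2n+1}$), i.e., $E \in \sigma_{2n} \setminus \sigma_{2n+1}$; since $\{\sigma_n \cup \sigma_{n+1}\}$ is decreasing and $E$ lies in $\sigma_{2n} \cup \sigma_{2n+1}$ for all large $n$, \eqref{structure-spectrum} gives $E \in \sigma(H_\lambda)$ and hence $E \in \Sigma_{II}$. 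For $\Sigma_{III}$ the identical argument applies with odd $k$: one now uses $t_{2n+1} = t_{2n-1}^2(t_{2n} - 2) + 2$ with $t_{2n} < -2$ and $|t_{2n+1}| \le 2$ to force $|t_{2n-1}| < 1$, placing $\varphi_{2n-1}(E) \in S \setminus \mathscr{P}$.

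The main obstacle is the first inclusion, specifically the sharp bound $|t_{2n}(E)| < 1$ rather than the trivial $\le 2$; this is what makes $\varphi_{2n}(E)$ land inside $S$ proper (whose $x$-coordinates are constrained to $[-1, 1]$). The bound falls out cleanly once one runs the recurrence one step forward and imposes the two spectral constraints $t_{2n+1} < -2$ and $t_{2n+2} \ge -2$ simultaneously. Everything else follows immediately from Lemma \ref{repeller} or directly from the definitions.
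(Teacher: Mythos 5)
Your proposal is correct and follows essentially the same route as the paper: the same three-step cycle of inclusions, the same use of Lemma \ref{basic} (i), (iii) together with the recurrence $t_{2n+2}=t_{2n}^2(t_{2n+1}-2)+2$ to force $|t_{2n}(E)|<1$ and hence $\varphi_{2n}(E)\in S\setminus\mathscr{P}$, and the same appeal to Lemma \ref{repeller} and the definition of type-II (resp.\ type-III) energy for the remaining inclusions. No gaps.
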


\begin{proof} We only show the first equation, since the proof of the second one is the same.

At first we fix $E\in \Sigma_{II}$.
By the definition of type-II energy,  there exists an even number $k>0$ such that, for any $n\ge0$,
\begin{equation*}\label{condition-trace}
|t_{k+2n}(E)|\le 2\ \ \text{ and } \ \ |t_{k+2n+1}(E)|>2.
\end{equation*}
By Lemma \ref{basic} (i) and  (iii), we have $t_{k+2n+1}(E)<-2$ for any $n\ge 0$. So $0\le 2-t_{k+2n+2}(E)\le 4$ and $2-t_{k+2n+1}(E)>4$.
By \eqref{recurrence}, we have
$$
2-t_{k+2n+2}(E)=t_{k+2n}^2(E)(2-t_{k+2n+1}(E)).
$$
This implies that $|t_{k+2n}(E)|<1$ for any $n\ge 0$.  That is, for any $n\ge 0$,
$$
\varphi_{k+2n}(E)=(t_{k+2n}(E),t_{k+2n+1}(E))\in [-1,1]\times(-\infty,-2)\subset S.
$$
Since $\varphi_{k+2n}(E)=f^{n}(\varphi_{k}(E))$ by \eqref{recurrent}, we conclude that $\varphi_{k}(E)\in \Lambda(S)$. We also have  $\varphi_{k}(E)\not\in \mathscr{P}$ since $\varphi_{k}(E)\in [-1,1]\times (-\infty,-2)$ and $y\ge -2$ for any $(x,y)\in \mathscr{P}.$
Thus
$
 \varphi_{k}(E)\in \Lambda(S)\setminus\mathscr{P}.
$

Next we fix $k$ even, and take $E$ such that $\varphi_k(E)\in \Lambda(S)\setminus\mathscr{P}.$ By Lemma \ref{repeller} and the fact that $\varphi_{k+2n}(E)=f^{n}(\varphi_k(E))$, we have
$
t_{k+2n}(E)\to 0$ and $ t_{k+2n+1}(E)\to-\infty.
$

Finally if $E$ is such that $t_{2n}(E)\to 0$ and $ t_{2n-1}(E)\to-\infty$,
  by the definition of type-II energy, we conclude that $E\in\Sigma_{II}.$
\end{proof}

We are ready for the proof of Theorem \ref{exist-dense}.
\bigskip

\noindent{\bf Proof of Theorem \ref{exist-dense}}.\
Fix  $E_\ast\in\sigma(H_\lambda)$  and $\varepsilon>0$.
Since $\Sigma_I$ is dense in $\sigma(H_\lambda)$(see \cite{LQ}), there exists $E_0\in \Sigma_I$ such that $|E_0-E_\ast|<\varepsilon/2$.
By the definition of type-I energy, there exists $K\in \N$ such that $t_K(E_0)=0$. Then $t_n(E_0)=2$ for any $n\ge K+2$ by Lemma \ref{basic} (i). It is known that $\sigma_n$ is made of $2^n$ non-overlapping closed bands and $E_0$ is  an end point of some band $B_n\subset \sigma_n$, moreover $t_n$ is monotone on $B_n$ and $t_n(B_n)=[-2,2]$. By choosing odd (even) number $k> K+2$ large enough, we can assume $E_0$ is the endpoint of $B_{k}\subset \sigma_{k}$ and $B_{k}$ has length less than $ \epsilon/2$(see \cite{LQ} Appendix). Let $ E_1,E_2\in B_{k}$ be such that $t_{k}([E_1,E_2])=[-1,1]$.

\noindent {\bf Claim:} \  For any $\omega\in\{0,1\}^\infty$,
$$
\varphi_k([E_1,E_2])\cap \Lambda_\omega\ne \emptyset.
$$
\noindent $\lhd$ Let us show that the curve $\varphi_k([E_1,E_2])\subset S$ stays below $\mathscr{P}$ and touches the left and right boundaries of $S$(see Figure \ref{pic-sigma-3}) .

Fix $E\in [E_1,E_2]$.  Since $f(t_{n}(E),t_{n+1}(E))=(t_{n+2}(E),t_{n+3}(E))$,
by \eqref{f-inv},
\begin{eqnarray*}
&&t_{k}^2(E)-t_{k+1}(E)-2\\
&=&
\begin{cases}
(t_1^2(E)-t_2(E)-2) {\displaystyle\prod_{l=1}^{(k-1)/2}}t_{2l-1}^2(E)(2-t_{2l}(E))^2& k \text{ odd},\\
(t_2^2(E)-t_3(E)-2) {\displaystyle\prod_{l=1}^{k/2-1}}t_{2l}^2(E)(2-t_{2l+1}(E))^2& k \text{ even}.
\end{cases}
\end{eqnarray*}

Notice that $t_{k}(E)\ne 2$ since $E\in [E_1,E_2]$.  By Lemma \ref{basic} (ii),
  $t_i(E)\ne2$ for any $2\le i\le k$,
and  by Lemma \ref{basic} (i), $t_i(E)\ne0$ for any $1\le i\le k-2$.
Moreover, by Lemma \ref{initial-condition}, $t_1^2(E)-t_2(E)-2>0$ and  $t_2^2(E)-t_3(E)-2>0$ since $\sigma_k\subset \sigma_1\cup\sigma_2$ and thus
$$
E\in {\rm int}(B_k)\subset {\rm int}(\sigma_k)\subset {\rm int}(\sigma_1\cup\sigma_2).
$$
(see \cite{LQ}).    Hence $t_{k}^2(E)-t_{k+1}(E)-2>0$, i.e. $\varphi_k(E)$ is below  $\mathscr{P}$. On the other hand, since $t_{k} $ is monotone on $[E_1,E_2]$ and $\{t_{k}(E_1),t_{k}(E_2)\}=\{-1,1\}$,  the curve $\varphi_k([E_1,E_2])$ stays in $S$ and touches the left and right boundaries.

By Lemma \ref{fiber}, $\Lambda_\omega\subset S$ has a   unbounded connected component which is rooted in $\mathscr{P}$, then the curve $\varphi_k([E_1,E_2])$ must intersect $\Lambda_\omega$.
\hfill $\rhd$

\medskip

For any $\omega$, fix some $E_\omega\in [E_1,E_2]$ such that $\varphi_k(E_\omega)\in \Lambda_\omega,$ then $\varphi_k(E_\omega)\not\in \mathscr{P}$ since $\varphi_k([E_1,E_2])$ stays below $\mathscr{P}.$ Moreover, $E_\omega\ne E_{\tilde\omega}$ for $\omega\ne \tilde\omega$ since $\Lambda_\omega$ and $\Lambda_{\tilde \omega}$ are disjoint and $t_{k}$ is monotone on $[E_1,E_2]$.   Then  by Lemma \ref{char-II-III}, $E_\omega\in \Sigma_{III}(\Sigma_{II})$. We  also have
$$
|E_\omega-E_\ast|\le |E_\omega-E_0|+|E_0-E_\ast|\le \epsilon.
$$
Thus $\Sigma_{III}(\Sigma_{II})$ is dense in $\sigma(H_\lambda)$ and uncountable.
\hfill $\Box$

\bigskip

For later use, we show the following relation between $\nu_n(E)$ and $\omega_n(E)$ for type-II and type-III energies(see  \eqref{A-B-even} for the definitions of $\nu_n(E)$ and $\omega_n(E)$).

\begin{lemma}\label{energy-coupling inequality}
For any $E\in \Sigma_{II}\cup\Sigma_{III}$,  $\nu_n(E)\ne 0.$ Moreover,
\begin{equation}\label{kappa}
 \frac{\omega_{n}(E)}{\nu_{n}(E)}=\frac{E^2-\lambda^2}{2E}=:\kappa(E)
\end{equation}
with $|\kappa(E)|<1.$ We take $\theta(E)\in (-\pi/2,\pi/2)$  such that $\sin \theta(E)=\kappa(E)$.
\end{lemma}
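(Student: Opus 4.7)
The key observation is that, by Lemma \ref{recur-mu-nu-omega}, the sequences $\{\nu_n\}$ and $\{\omega_n\}$ satisfy the \emph{same} scalar recurrence $x_{n+1}=(t_{2n+1}-2)\,t_{2n}\,x_n$. Consequently, so long as $\nu_n$ never vanishes, the ratio $\omega_n/\nu_n$ is independent of $n$, and the lemma reduces to evaluating this ratio at $n=1$ and checking that it lies in $(-1,1)$.

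The first step is to show $\nu_n(E)\ne 0$ for every $n\ge 1$ when $E\in\Sigma_{II}\cup\Sigma_{III}$. By \eqref{initial-uvw}, $\nu_1=4\lambda E$; since $\lambda\ne 0$ by assumption and $E\ne 0$ by Lemma \ref{basic}(iv) (which guarantees $0\notin\sigma(H_\lambda)\supset\Sigma_{II}\cup\Sigma_{III}$), we get $\nu_1\ne 0$. Next, by the recurrence, $\nu_{n+1}=0$ can only occur if $t_{2k}(E)=0$ or $t_{2k+1}(E)=2$ for some $k$. The first is excluded because $t_{2k}(E)=0$ would place $E$ in $\Sigma_I$, contradicting $\Sigma_I\cap(\Sigma_{II}\cup\Sigma_{III})=\emptyset$ from Lemma \ref{basic}(i). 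The second is excluded by Lemma \ref{basic}(ii): $t_{2k+1}(E)=2$ would force $t_j(E)=2$ for all $j\ge 2k+1$, in particular $|t_{2n+1}(E)|\le 2$ and $|t_{2n}(E)|\le 2$ for all large $n$, which contradicts the defining condition of both $\Sigma_{II}$ (requires $|t_{2n+1}(E)|>2$ eventually) and $\Sigma_{III}$ (requires $|t_{2n}(E)|>2$ eventually).

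Having established $\nu_n\ne 0$ for all $n$, the ratio is computed directly from \eqref{initial-uvw}:
\[
\frac{\omega_n(E)}{\nu_n(E)}=\frac{\omega_1(E)}{\nu_1(E)}=\frac{2\lambda(E^2-\lambda^2)}{4\lambda E}=\frac{E^2-\lambda^2}{2E}=\kappa(E).
\]

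Finally, for the strict inequality $|\kappa(E)|<1$: as observed in the proof of Lemma \ref{initial-condition}, $\sigma_1\cup\sigma_2\subset\{E:t_2(E)\le 2\}$, and the condition $t_2(E)\le 2$ is exactly $|E^2-\lambda^2|\le 2|E|$, i.e. $|\kappa(E)|\le 1$. Since $\Sigma_{II}\cup\Sigma_{III}\subset\sigma(H_\lambda)\subset\sigma_1\cup\sigma_2$ by \eqref{structure-spectrum}, we get $|\kappa(E)|\le 1$. Equality $|\kappa(E)|=1$ means $t_2(E)=2$, which by Lemma \ref{basic}(ii) propagates to $t_j(E)=2$ for every $j\ge 2$, contradicting membership in $\Sigma_{II}\cup\Sigma_{III}$ exactly as in the previous step. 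Hence $|\kappa(E)|<1$, and the angle $\theta(E)\in(-\pi/2,\pi/2)$ with $\sin\theta(E)=\kappa(E)$ is well defined. There is no genuine obstacle here; the only subtlety is carefully excluding the ``exceptional'' values $t_{2k}=0$ and $t_{2k+1}=2$ at every stage, which is precisely what Lemma \ref{basic}(i)–(ii) is designed to do.
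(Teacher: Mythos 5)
Your proposal is correct, and its first two steps (non-vanishing of $\nu_n$ via Lemma \ref{basic}(i)--(ii) and the constancy of the ratio $\omega_n/\nu_n$ from the identical recurrences in Lemma \ref{recur-mu-nu-omega}, evaluated at $n=1$ via \eqref{initial-uvw}) coincide with the paper's argument. Where you genuinely diverge is the proof that $|\kappa(E)|<1$. The paper derives from \eqref{A-B-even} and \eqref{kappa} the identity $t_{2n+1}(E)=t_{2n}^2(E)+(\kappa^2(E)-1)\nu_n^2(E)-2$ and then invokes the dynamical characterization of Lemma \ref{char-II-III} ($t_{2n}\to 0$, $t_{2n+1}\to-\infty$ for type-II, and the reverse for type-III): if $\kappa^2\ge 1$ the right-hand side stays bounded below by $t_{2n}^2-2$, which is incompatible with those limits. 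You instead observe from \eqref{t12} that $t_2(E)\le 2$ is algebraically equivalent to $|E^2-\lambda^2|\le 2|E|$, i.e.\ $|\kappa(E)|\le 1$ (using $E\ne 0$), and then combine $\Sigma_{II}\cup\Sigma_{III}\subset\sigma(H_\lambda)\subset\sigma_1\cup\sigma_2\subset\{t_2\le 2\}$ (the last inclusion being exactly what is shown inside the proof of Lemma \ref{initial-condition}) with Lemma \ref{basic}(ii) to rule out equality, since $t_2(E)=2$ would force $t_j(E)=2$ for all $j\ge 2$, contradicting the defining conditions of both $\Sigma_{II}$ and $\Sigma_{III}$. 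Your route is more elementary and static: it needs no asymptotics, only the explicit second trace polynomial and the spectral inclusions, and it makes transparent that $|\kappa(E)|<1$ is really a statement about $E$ lying in $\{t_2<2\}$ minus the exceptional level set $t_2=2$. The paper's route costs the limits of Lemma \ref{char-II-III} (already established and needed later anyway) but produces the relation $t_{2n+1}=t_{2n}^2+(\kappa^2-1)\nu_n^2-2$, which is reused immediately afterwards to pin down the asymptotics of $\nu_n$ and $\omega_n$ in \eqref{t-mu-II} and \eqref{t-mu-III}; so within the paper's architecture it is the more economical choice, while yours is the more self-contained one.
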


\begin{proof}
At first we show  that $\nu_n(E)\ne 0$ for all $n\ge1.$
Since $E$ is in the spectrum, $E\ne 0$ by Lemma \ref{basic} (iv). By Lemma \ref{recur-mu-nu-omega}, $\nu_1(E)=4\lambda E\ne 0$ and
\[
\nu_{n+1}(E)=(t_{2n+1}(E)-2)t_{2n}(E)\nu_{n}(E).
\]
Since  E is of type-II or type-III, by Lemma \ref{basic} (i) and (ii),
$t_{n}(E)\neq 0$ for $n\ge 1$ and $t_n(E)\ne 2$ for $n\ge 2$. Consequently $\nu_n(E)\ne 0$ for all $n\ge 1$.

\eqref{kappa} follows directly from Lemma \ref{recur-mu-nu-omega}.

Next, we show  $|\kappa(E)|<1.$
By \eqref{A-B-even} and \eqref{kappa},  we have
\begin{equation*}\label{t-nu}
t_{2n+1}(E)= t_{2n}^2(E)+(\kappa^2(E)-1)\nu_{n}^2(E)-2.
\end{equation*}
By Lemma \ref{char-II-III},
$t_{2n}(E)\to0$ and $t_{2n+1}(E)\to-\infty$ if E is of type-II; $t_{2n+1}(E)\to0$ and $t_{2n}(E)\to-\infty$ if E is of type-III.
In both cases, it is easy to see that    $|\kappa(E)|<1.$
\end{proof}

 %%%%%%%%%%%%%%%%%%%%%%%%
\section{Type-II energy}\label{sec-type-II}

In this section, we prove Theorem \ref{main-norm} for type-II energy and Theorem \ref{main-type-II}.

 In Section \ref{4.1}, we study the asymptotic behaviors  of $t_n, \mu_n, \nu_n$ and $\omega_n$. Based on these asymptotic behaviors,   we obtain  the structures of $A_n$ and $B_n$ in Section \ref{4.2}.
  By these structures of $A_n$ and $B_n$ and an interpolation argument,  we   prove Theorem \ref{main-norm} for type-II energies in Section \ref{4.4}.
To   find the boundary condition of  the subordinate solution to the eigen-equation $H_\lambda\psi=E\psi$, we  study the stable directions of $A_{2n}$ and $B_{2n}$ in Section \ref{4.3}. Finally in Section \ref{4.5}, we   prove Theorem \ref{main-type-II}.

Throughout this section, we fix  $E\in \Sigma_{II}$.

\subsection{Exact asymptotic behaviors  of $t_n, \mu_n, \nu_n$ and $\omega_n$  }\label{4.1}

\begin{thm}\label{asy-t-mu-II}
There exists  $\gamma=\gamma(E)>0$ such that
\begin{equation}\label{t-mu-II}
\begin{cases}
{\displaystyle
\lim_{n\rightarrow\infty}\frac{|t_{2n-1}(E)|}{e^{2^{n}\gamma}}=\frac{1}{2},\
\lim_{n\rightarrow\infty}\frac{|t_{2n}(E)|}{e^{-2^{n}\gamma}}=2,\ \lim_{n\rightarrow\infty}\frac{|\mu_{n}(E)|}{e^{2^{n}\gamma}}=\frac{1}{2},}\\
{\displaystyle \lim_{n\rightarrow\infty}\frac{|\nu_{n}(E)|}{e^{2^{n}\gamma}}=\frac{\sec\theta(E)}{\sqrt{2}},\ \ \ \ \ \ \ \   \lim_{n\rightarrow\infty}\frac{|\omega_{n}(E)|}{e^{2^{n}\gamma}}=\frac{|\tan\theta(E)|}{\sqrt{2}},}
\end{cases}
\end{equation}
where $\theta(E)$ is defined in Lemma \ref{energy-coupling inequality}.
\end{thm}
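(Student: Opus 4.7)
\medskip

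\textbf{Proof plan.} The natural engine is the recurrence
$t_{2n+1}-2=t_{2n-1}^{2}(t_{2n}-2)$ together with $t_{2n+2}-2=t_{2n}^{2}(t_{2n+1}-2)$ from \eqref{recurrence}, combined with the already established fact (Lemma~\ref{char-II-III}) that for $E\in\Sigma_{II}$ one has $t_{2n}(E)\to 0$ and $t_{2n-1}(E)\to -\infty$. The plan is to first pin down the rate of growth of $|t_{2n-1}|$, then bootstrap to every other quantity using purely algebraic identities.

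\medskip

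\emph{Step 1: existence of $\gamma$ and the limit for $t_{2n-1}$.} Set $a_{n}:=2^{-n}\log|t_{2n-1}|$. Since $t_{2n-1}<-2$ for $n$ large and $t_{2n}\to 0$, taking logarithms of $|t_{2n+1}|=t_{2n-1}^{2}(2-t_{2n})-2$ gives
\[
a_{n+1}-a_{n}=\frac{1}{2^{n+1}}\Bigl(\log(2-t_{2n})+\log\bigl(1-\tfrac{2}{t_{2n-1}^{2}(2-t_{2n})}\bigr)\Bigr),
\]
whose right-hand side is $O(2^{-n})$; hence $\{a_{n}\}$ is Cauchy and converges to some $\gamma\in\R$. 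Summing the tail,
\[
2^{n}(\gamma-a_{n})=\sum_{k\ge n}2^{n-k-1}\bigl(\log(2-t_{2k})+o(1)\bigr)\longrightarrow \log 2,
\]
which is precisely $|t_{2n-1}|/e^{2^{n}\gamma}\to 1/2$. To get $\gamma>0$, I use that by the type-II hypothesis and Lemma~\ref{basic}(iii), $t_{2n-1}<-2$ eventually; writing $d_{n}:=\log|t_{2n-1}|-\log 2$, the inequality $|t_{2n+1}|\ge t_{2n-1}^{2}/2$ (valid once $|t_{2n-1}|\ge 2$) yields $d_{n+1}\ge 2d_{n}$, so $d_{n}\ge d_{N}\cdot 2^{n-N}$ for large $N$, forcing $\gamma\ge d_{N}/2^{N}>0$.

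\medskip

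\emph{Step 2: derive the other four limits algebraically.} From $t_{2n+2}-2=t_{2n}^{2}(t_{2n+1}-2)$ I read off
\[
t_{2n}^{2}=\frac{2-t_{2n+2}}{|t_{2n+1}|+2},
\]
whose numerator tends to $2$ and whose denominator, by Step~1, satisfies $(|t_{2n+1}|+2)/e^{2^{n+1}\gamma}\to 1/2$; this gives $|t_{2n}|/e^{-2^{n}\gamma}\to 2$. For $\mu_{n}$, the identity $\mu_{n}^{2}=t_{2n-1}^{2}-t_{2n}-2$ from \eqref{A-B-odd} and Step~1 give $|\mu_{n}|/e^{2^{n}\gamma}\to 1/2$. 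For $\nu_{n},\omega_{n}$, use \eqref{A-B-even} in the form $\nu_{n}^{2}-\omega_{n}^{2}=t_{2n}^{2}-t_{2n+1}-2$ together with $\omega_{n}/\nu_{n}=\sin\theta(E)$ from Lemma~\ref{energy-coupling inequality}; then
\[
\nu_{n}^{2}\cos^{2}\theta(E)=t_{2n}^{2}+|t_{2n+1}|-2,
\]
and dividing by $e^{2^{n+1}\gamma}$ and passing to the limit using Step~1 yields $|\nu_{n}|/e^{2^{n}\gamma}\to \sec\theta(E)/\sqrt{2}$. The formula for $\omega_{n}$ follows at once from $|\omega_{n}|=|\sin\theta(E)|\cdot|\nu_{n}|$.

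\medskip

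\emph{Where I expect the friction.} Steps once Step~1 is secured are essentially bookkeeping with already-proven convergences, so the only genuinely delicate point is the strict positivity $\gamma>0$. Convergence of $a_{n}$ alone is cheap (summability of differences), but $\gamma$ could a priori be zero if $|t_{2n-1}|$ grew slower than any double-exponential. The way I beat this is the doubling inequality $d_{n+1}\ge 2d_{n}$ for $d_{n}=\log(|t_{2n-1}|/2)$, which hinges on $|t_{2n-1}|\ge 2$ for large $n$---and that in turn is exactly what the characterization of $\Sigma_{II}$ via $E\in\sigma_{2n}\setminus\sigma_{2n+1}$ together with Lemma~\ref{basic}(iii) guarantees. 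Once $\gamma>0$, all the error terms $e^{-2^{n}\gamma}$, $O(1/t_{2n-1}^{2})$ that appear later in the paper really are exponentially small, and Step~2 is routine.
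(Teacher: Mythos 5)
Your proposal is correct and follows essentially the same route as the paper: both extract $\gamma$ as the limit of $2^{-n}\log|t_{2n-1}|$ from the trace recurrence, identify the constant $\tfrac12$ by summing/telescoping the $\log(2-t_{2n})$ corrections, and then read off the limits for $t_{2n}$, $\mu_n$, $\nu_n$, $\omega_n$ from the even recurrence together with the identities \eqref{A-B-odd}, \eqref{A-B-even} and \eqref{kappa}. Your doubling argument for $\gamma>0$ is a valid (if slightly redundant) variant of the paper's observation that $|t_{2n-1}|\to\infty$ combined with $|t_{2n-1}|e^{-2^n\gamma}\to\tfrac12$ already forces $\gamma>0$.
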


 \begin{proof}
Since $E\in \Sigma_{II}$,
by Lemma \ref{char-II-III},
\begin{equation}\label{0-infty}
\lim_{n\rightarrow\infty}t_{2n}(E)=0\ \ \ \text{ and }\ \ \ \lim_{n\rightarrow\infty}t_{2n-1}(E)=-\infty.
\end{equation}
   Write
$$
|t_{2n}(E)|=e^{-2^n\gamma_{2n}}\ \ \ \text{ and }\ \ \
t_{2n-1}(E)=-e^{2^n\gamma_{2n-1}},
$$
then $\gamma_n>0$ for $n$ large enough.

\smallskip

\noindent {\bf Claim:} $\lim_{n\to\infty}\gamma_n$ exists.

\smallskip

\noindent $\lhd$ By \eqref{recurrence}, we have $t_{2n}(E)=t_{2n-2}^2(E)(t_{2n-1}(E)-2)+2.$
By taking the limits on both sides and using \eqref{0-infty}, we get
\begin{equation}\label{PreciseRelation1}
\lim_{n\rightarrow\infty}2^{n}(\gamma_{2n-1}-\gamma_{2n-2})=\ln 2.
\end{equation}

By \eqref{recurrence}, we also have $t_{2n+1}(E)=t_{2n-1}^2(E)(t_{2n}(E)-2)+2.$ Thus
$$
\frac{t_{2n+1}(E)}{t_{2n-1}^2(E)}=(t_{2n}(E)-2)+\frac{2}{t_{2n-1}^2(E)}.
$$
By taking the limits on both sides and using \eqref{0-infty}, we get
\begin{equation}\label{PreciseRelation2}
\lim_{n\rightarrow\infty}2^{n+1}(\gamma_{2n+1}-\gamma_{2n-1})=\ln 2.
\end{equation}

\eqref{PreciseRelation1} implies that $\gamma_{2n+1}-\gamma_{2n}=O(2^{-n})$.
\eqref{PreciseRelation2} implies that $\gamma_{2n+1}-\gamma_{2n-1}=O(2^{-n})$.
Consequently $\gamma_{2n}-\gamma_{2n-1}=O(2^{-n}).$ As a result,
$
\gamma_{n+1}-\gamma_n= O(2^{-n/2}).
$
Hence $\{\gamma_n: n\ge 1\}$ is a Cauchy sequence, and the limit exists. We denote the limit by $\gamma.$
\hfill $\rhd $

Now fix any small
$ \epsilon>0$. By \eqref{PreciseRelation2}, there exists $N \in \N$ such that for all $n>N$,
\[
\frac{(1-\epsilon)\ln 2}{2^{n+1}}\leq \gamma_{2n+1}-\gamma_{2n-1}\leq \frac{(1+\epsilon)\ln 2}{2^{n+1}}.
\]
Then we have
\[
\frac{(1-\epsilon)\ln 2}{2^{n}}\leq \gamma-\gamma_{2n-1}\leq \frac{(1+\epsilon)\ln 2}{2^{n}},
\]
or equivalently,
\[
\frac{1}{2^{1+\epsilon}}\le  \frac{|t_{2n-1}(E)|}{e^{2^{n}\gamma}}\le\frac{1}{2^{1-\epsilon}}.
\]
Since $\epsilon>0$ is arbitrary, we get the first equality of
\eqref{t-mu-II}.

By \eqref{0-infty}, $|t_{2n-1}(E)|\to \infty$, then  the first equality of \eqref{t-mu-II} implies that $\gamma>0.$

By \eqref{PreciseRelation1}, we have
$
e^{2^n\gamma_{2n+1}}|t_{2n}(E)|\to \sqrt{2}.
$
By the first equality of \eqref{t-mu-II}, we have
$
e^{2^{n}\gamma-2^{n}\gamma_{2n+1}}\to \sqrt{2}.
$
Combine these two equations, we get the second equality of \eqref{t-mu-II}.

 By \eqref{A-B-odd}, we have  $t_{2n}(E)=t_{2n-1}^2(E)-\mu_{n}^2(E)-2$. By \eqref{0-infty}, we conclude that
 $|\mu_n(E)|/|t_{2n-1}(E)|\to 1$.  Then the third equality of \eqref{t-mu-II} follows from the first one of \eqref{t-mu-II}.

 By \eqref{A-B-even} and \eqref{kappa}, we have  $t_{2n+1}(E)=t_{2n}^2(E)-(1-\kappa^2)\nu_{n}^2(E)-2$. By \eqref{0-infty}, we conclude that
 $\nu_n^2(E)/|t_{2n+1}(E)|\to 1/(1-\kappa^2)$.  Then the fourth and fifth equalities of \eqref{t-mu-II} follow from the first one of \eqref{t-mu-II} and \eqref{kappa}.
\end{proof}

To study the structures of $A_n$ and $B_n$, we need  more precise descriptions of $t_{n}(E) $ and $\mu_n(E):$

\begin{prop}\label{more-precise-II}
We have the following expansions:
\begin{equation}\label{t-prod-II}
\begin{array}{rcl}
e^{2^n\gamma} |t_{2n}(E)|&=&2+ O(e^{-2^n\gamma}),\\
e^{-2^n\gamma} |t_{2n-1}(E)|&=&\frac{1}{2}+O(e^{-2^n\gamma}),\\
e^{-2^n\gamma} |\mu_{n}(E)|&=&\frac{1}{2}+O(e^{-2^n\gamma})
\end{array}
\end{equation}
\end{prop}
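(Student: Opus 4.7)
The plan is to sharpen the proof of Theorem \ref{asy-t-mu-II} by tracking error terms quantitatively in the recurrence, and then to deduce the $\mu_n$-estimate from the identity \eqref{A-B-odd}.

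For the estimate on $|t_{2n-1}(E)|$, the starting point is the recurrence $t_{2n+1}=t_{2n-1}^2(t_{2n}-2)+2$. Dividing by $t_{2n-1}^2$ and inserting the first-order bounds from Theorem \ref{asy-t-mu-II} (which give $|t_{2n}|\lesssim e^{-2^n\gamma}$ and $2/t_{2n-1}^2\lesssim e^{-2^{n+1}\gamma}$) yields
$$\frac{t_{2n+1}}{t_{2n-1}^2}=-2+O(e^{-2^n\gamma}).$$
Taking absolute values, taking logarithms, and writing $|t_{2n-1}|=e^{2^n\gamma_{2n-1}}$ as in the proof of Theorem \ref{asy-t-mu-II}, this becomes
$$\gamma_{2n+1}-\gamma_{2n-1}=\frac{\log 2}{2^{n+1}}+O\!\left(\frac{e^{-2^n\gamma}}{2^{n+1}}\right).$$
Telescoping from $n$ to $\infty$ and using that the error series is dominated by its leading term gives $2^n(\gamma-\gamma_{2n-1})=\log 2+O(e^{-2^n\gamma})$. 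Exponentiating and expanding $e^{O(e^{-2^n\gamma})}=1+O(e^{-2^n\gamma})$ yields the first expansion in \eqref{t-prod-II}.

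For the estimate on $|t_{2n}(E)|$, the right relation to exploit is $t_{2n+2}-2=t_{2n}^2(t_{2n+1}-2)$. The first expansion (applied to index $n+1$) shows $t_{2n+2}=O(e^{-2^{n+1}\gamma})$, so $|t_{2n+2}-2|=2+O(e^{-2^{n+1}\gamma})$; and $|t_{2n+1}-2|=|t_{2n+1}|(1+O(e^{-2^{n+1}\gamma}))$. Combining these with the first expansion for $|t_{2n+1}|$ gives
$$|t_{2n}|^2 e^{2^{n+1}\gamma}=4+O(e^{-2^{n+1}\gamma}),$$
and taking square roots (and absorbing $e^{-2^{n+1}\gamma}$ into $e^{-2^n\gamma}$) yields the second expansion. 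Finally, for $\mu_n$, multiply the identity $\mu_n^2=t_{2n-1}^2-t_{2n}-2$ from \eqref{A-B-odd} by $e^{-2^{n+1}\gamma}$ and substitute the first two expansions to obtain
$$e^{-2^{n+1}\gamma}\mu_n^2=\left(\tfrac{1}{2}+O(e^{-2^n\gamma})\right)^2-e^{-2^{n+1}\gamma}(t_{2n}+2)=\tfrac14+O(e^{-2^n\gamma}),$$
and another square root produces the third expansion.

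The main obstacle, though a modest one, is ensuring the error exponent in the telescoped sum survives intact; this requires the geometric-decay observation that $\sum_{k\ge n}e^{-2^k\gamma}/2^{k+1}=O(e^{-2^n\gamma}/2^n)$, so that the accumulated error at step $n$ is of the same order as the single worst term. Once this is in place, the rest of the argument is bookkeeping: exponentials of small quantities and square roots of $1+\text{small}$ are handled by Taylor expansion, and the constants implicit in the $O$-symbols remain uniform because the arguments tend to zero.
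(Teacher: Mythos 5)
Your proof is correct and follows essentially the same route as the paper: a quantitative error-propagation through the trace recurrences \eqref{recurrence}, starting from the first-order asymptotics of Theorem \ref{asy-t-mu-II}, followed by the identity $t_{2n}=t_{2n-1}^2-\mu_n^2-2$ from \eqref{A-B-odd} for the $\mu_n$ estimate. The only differences are bookkeeping (you telescope the increments of $\gamma_n$ with the geometric-decay observation, while the paper writes $|t_{2n-1}|=\tfrac12 e^{2^n\gamma+\zeta_{2n-1}}$, $|t_{2n}|=2e^{-2^n\gamma+\zeta_{2n}}$ and iterates the near-halving relations $|2\zeta_{2n-1}-\zeta_{2n+1}|\lesssim e^{-2^n\gamma}$, $|2\zeta_{2n}+\zeta_{2n+1}|\lesssim e^{-2^n\gamma}$; and for $\mu_n$ you expand a square root where the paper bounds $\bigl||t_{2n-1}|-|\mu_n|\bigr|$ directly), plus a harmless mislabeling of which line of \eqref{t-prod-II} is ``first''--- the crude bound $t_{2n+2}=O(e^{-2^{n+1}\gamma})$ you need there already follows from Theorem \ref{asy-t-mu-II}, so no circularity arises.
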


\begin{proof}
 By \eqref{t-mu-II}, if we write
\begin{equation}\label{delta}
|t_{2n}|=2 e^{-2^{n}\gamma+\zeta_{2n}},\ \ \text{ and }\ \
|t_{2n-1}|=\frac{e^{2^{n}\gamma+\zeta_{2n-1}}}{2},
\end{equation}
then $\zeta_n\to 0$ when $n\to\infty.$

By \eqref{recurrence}, we have
 $
 t_{2n+1}=t_{2n-1}^2(t_{2n}-2)+2.
 $
By \eqref{delta},  it is equivalent to
$$e^{\zeta_{2n+1}-2\zeta_{2n-1}}=1\pm e^{-2^n\gamma+\zeta_{2n}}-4e^{-2^{n+1}\gamma-2\zeta_{2n-1}}
$$
Consequently,
$%\begin{equation}\label{delta-1}
|2\zeta_{2n-1}-\zeta_{2n+1}|\le 3 e^{-2^n\gamma }
$ %\end{equation}
for   $n$ large enough.
 Hence,
\begin{equation}\label{delta-1}
|\zeta_{2n-1}|\le 3 e^{-2^n\gamma}.
\end{equation}
This implies the second equation of \eqref{t-prod-II}.

Still by \eqref{recurrence}, we have
 $
 t_{2n+2}=t_{2n}^2(t_{2n+1}-2)+2.
 $
By \eqref{delta}, we have
\begin{eqnarray*}
e^{2\zeta_{2n}+\zeta_{2n+1}}=1\pm e^{-2^{n+1}\gamma+\zeta_{2n+2}}-4 e^{-2^{n+1}\gamma+2\zeta_{2n} }.
\end{eqnarray*}
Consequently,
$%\begin{equation}\label{delta-2}
|2\zeta_{2n}+\zeta_{2n+1}|\le 3 e^{-2^n\gamma }
$ for $n$ large enough. %\end{equation}
Then by \eqref{delta-1},
$%\begin{equation}\label{delta-2}
|\zeta_{2n}|\le 3 e^{-2^n\gamma }.
$ %\end{equation}
This implies the first equation of \eqref{t-prod-II}.

By \eqref{A-B-odd} and \eqref{t-mu-II}, for $n$ large enough,
$$
\big||t_{2n-1}|-|\mu_n|\big|=\frac{|2+t_{2n}|}{|t_{2n-1}|+|\mu_n|}\le 3e^{-2^n\gamma}.
$$
This implies  the third equation of \eqref{t-prod-II}.
\end{proof}

%%%%%%%%%%%%%
%%%%%%%%%%%%%
\subsection{Structures of $A_n$ and $B_n$}\label{4.2}\

Write  $\epsilon_n:={\rm sign}(\mu_n)$.
By Lemma \ref{recur-mu-nu-omega} and Lemma \ref{char-II-III}, $\epsilon_n$ is  eventually  constant, which we denote   by $\eta$.

\begin{thm}\label{exact-struc-odd}
  For $n$ large enough,
\begin{equation}\label{struc-AB-odd}
\frac{A_{2n+1}}{t_{2n+1}}=\frac{I- \eta U}{2}+O(e^{-2^{n}\gamma}), \quad
\frac{B_{2n+1}}{t_{2n+1}}=\frac{I+ \eta U}{2}+O(e^{-2^{n}\gamma}).
\end{equation}
\end{thm}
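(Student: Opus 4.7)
My plan is to convert the recurrence \eqref{recur-AB-odd} into a perturbed iteration and apply Lemma \ref{elementary}. Set $C_n := A_{2n-1}/t_{2n-1}$. Dividing the first equation of \eqref{recur-AB-odd} by $t_{2n+1}$, and using $t_{2n+1} = t_{2n-1}^2(t_{2n}-2)+2$ to express each coefficient as a ratio with denominator $(t_{2n}-2) + 2/t_{2n-1}^2$, the recurrence becomes
\begin{equation*}
C_{n+1} = \alpha_n\, C_n - \beta_n\, U + \delta_n\, I,
\end{equation*}
where $\alpha_n = t_{2n-1}^2 t_{2n}/t_{2n+1}$, $\beta_n = t_{2n-1}\mu_n/t_{2n+1}$ and $\delta_n = (1-t_{2n-1}^2)/t_{2n+1}$.

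The main computation is to extract the precise asymptotics of $\alpha_n,\beta_n,\delta_n$. Proposition \ref{more-precise-II} gives the leading behavior of $t_{2n}$, $t_{2n-1}$ and $\mu_n$, while the identity \eqref{A-B-odd}, rewritten as $|t_{2n-1}|-|\mu_n| = (t_{2n}+2)/(|t_{2n-1}|+|\mu_n|) = O(e^{-2^n\gamma})$, controls the cancellation in $\beta_n$ and yields $\mu_n/t_{2n-1} = -\eta + O(e^{-2^{n+1}\gamma})$, where $\eta\in\{+,-\}$ is the eventual sign of $\mu_n$ (which is well-defined thanks to Lemma \ref{recur-mu-nu-omega} and the fact that $t_{2n-1}$ and $t_{2n}-2$ are eventually negative). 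Dividing numerator and denominator of each ratio by $t_{2n-1}^2$, I then obtain
\begin{equation*}
|\alpha_n| = O(e^{-2^n\gamma}),\qquad \beta_n = \tfrac{\eta}{2} + O(e^{-2^n\gamma}),\qquad \delta_n = \tfrac{1}{2} + O(e^{-2^n\gamma}),
\end{equation*}
and hence $C_{n+1} = \alpha_n C_n + \tfrac{1}{2}(I - \eta U) + O(e^{-2^n\gamma})$.

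Taking norms, $\|C_{n+1}\| \le c\,e^{-2^n\gamma}\|C_n\| + K$ for constants $c,K>0$ and $n$ large, so the first alternative of Lemma \ref{elementary} applied to $\|C_n\|$ yields that $\{\|C_n\|\}$ is bounded. Feeding the boundedness back into the iteration gives $\|C_{n+1} - \tfrac{1}{2}(I-\eta U)\| \le |\alpha_n|\|C_n\| + O(e^{-2^n\gamma}) = O(e^{-2^n\gamma})$, which is the first identity of \eqref{struc-AB-odd}. The identity for $B_{2n+1}/t_{2n+1}$ follows by running the identical argument on the second equation of \eqref{recur-AB-odd}, where the sign in front of $\mu_n$ is reversed and the limit becomes $\tfrac{1}{2}(I + \eta U)$. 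The main obstacle I anticipate is the bookkeeping for $\beta_n$: both $t_{2n-1}^2$ and $\mu_n^2$ are of order $\tfrac{1}{4}e^{2^{n+1}\gamma}$ with nearly equal magnitudes, so one must use the algebraic identity \eqref{A-B-odd} rather than pointwise asymptotics to avoid premature cancellation and retain the rate $O(e^{-2^n\gamma})$ claimed in the statement.
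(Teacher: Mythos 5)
Your proposal is correct and follows essentially the same route as the paper: divide the recurrence \eqref{recur-AB-odd} by $t_{2n+1}$, use the refined asymptotics of Proposition \ref{more-precise-II} (whose third equation is itself derived from \eqref{A-B-odd}, exactly the cancellation you flag) to control the three coefficients, invoke Lemma \ref{elementary} to bound $\|A_{2n-1}/t_{2n-1}\|$, and feed the boundedness back into the recurrence to obtain the limit $\tfrac12(I-\eta U)$ with error $O(e^{-2^n\gamma})$, with the sign flip on $\mu_n$ handling $B_{2n+1}$. Your explicit bookkeeping of $\alpha_n,\beta_n,\delta_n$ is just a more detailed writing of the same computation.
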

\begin{proof}
By \eqref{recur-AB-odd}, we have
\begin{equation}\label{div-AB-odd}
\frac{A_{2n+1}}{t_{2n+1}}=\frac{t_{2n}t_{2n-1}^2}{t_{2n+1}}\frac{A_{2n-1}}{t_{2n-1}}-
\frac{t_{2n-1}\mu_{n}}{t_{2n+1}}U-\frac{t_{2n-1}^2-1}{t_{2n+1}}I.
\end{equation}
By \eqref{t-prod-II}, there exist constants $c,d>0$ such that
$$\left\|{\frac{A_{2n+1}}{t_{2n+1}}}\right\|\le c+d\cdot  e^{-2^{n}\gamma}\left\|{\frac{A_{2n-1}}{t_{2n-1}}}\right\|.$$
By Lemma \ref{elementary},
$\left\|A_{2n+1}/t_{2n+1}\right\|=O(1).$
Now by  \eqref{div-AB-odd} and \eqref{t-prod-II},
$$\frac{A_{2n+1}}{t_{2n+1}}=\frac{I- \epsilon_n U}{2}+O(e^{-2^{n}\gamma}).$$
This proves the first equation of \eqref{struc-AB-odd}.

The proof of the second equation of \eqref{struc-AB-odd} is  same.
\end{proof}

 Since $E\in \Sigma_{II},$ $t_n(E)\ne 0$ for all $n\in \N$. Write  $\xi_n:=-{\rm sign}(t_{2n})$ and $\delta_n=\prod_{j=1}^n\xi_j$.

\begin{thm}\label{exact-struc-even}
There exist  nonzero constants $c=c(E)$ and $\hat c=\hat c(E)$ such that for $n$ large enough,
\begin{equation}\label{struc-AB-even}
\begin{array}{l}
t_{2n}A_{2n}=\delta_n c(V+\eta W)+O(e^{-2^{n+1}\gamma}),\\
t_{2n}B_{2n}=\delta_n \hat c(V-\eta W)+O(e^{-2^{n+1}\gamma}).
\end{array}
\end{equation}
\end{thm}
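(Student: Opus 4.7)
The plan is to show that $\delta_n t_{2n}A_{2n}$ and $\delta_n t_{2n}B_{2n}$ converge, with $O(e^{-2^{n+1}\gamma})$ error, to matrices that are proportional to $V+\eta W$ and $V-\eta W$ respectively. The proof splits into a contraction-type argument that establishes convergence via the recurrence \eqref{recur-AB-even}, followed by an algebraic identification of the limits that uses the product relation $A_{2n+1}=B_{2n}A_{2n}$ together with the already-proved structure of the odd-indexed matrices (Theorem \ref{exact-struc-odd}).

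\textbf{Convergence.} Setting $Y_n:=t_{2n}A_{2n}$, I would multiply \eqref{recur-AB-even} by $t_{2n+2}$ to obtain
\begin{equation*}
Y_{n+1} = t_{2n+1}t_{2n+2}Y_n - t_{2n}t_{2n+2}(\nu_n V+\omega_n W) + t_{2n+2}(1-t_{2n}^2)I.
\end{equation*}
Proposition \ref{more-precise-II} together with Theorem \ref{asy-t-mu-II} yields $t_{2n+1}t_{2n+2}=\xi_{n+1}(1+O(e^{-2^{n+1}\gamma}))$ (both magnitude and sign), and bounds each of the remaining three terms on the right by $O(e^{-2^{n+1}\gamma})$ in norm. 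Lemma \ref{elementary} then shows $\|Y_n\|$ is bounded, after which $\|\delta_{n+1}Y_{n+1}-\delta_n Y_n\|=O(e^{-2^{n+1}\gamma})$ makes $\{\delta_n Y_n\}$ Cauchy with limit $\Phi$ satisfying $\delta_n Y_n = \Phi+O(e^{-2^{n+1}\gamma})$. The identical argument gives $\delta_n\tilde Y_n=\Psi+O(e^{-2^{n+1}\gamma})$ where $\tilde Y_n:=t_{2n}B_{2n}$.

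\textbf{Identification and non-vanishing.} Three constraints locate $\Phi,\Psi$ up to a discrete ambiguity: (i) by Remark \ref{rembasic}, $Y_n$ and $\tilde Y_n$ have no $U$-component; (ii) the $I$-component of $Y_n$ equals $t_{2n}^2/2=O(e^{-2^{n+1}\gamma})$, so the limits lie in $\mathrm{span}(V,W)$; (iii) expanding $\det(t_{2n}A_{2n})=t_{2n}^2$ via $\det(\alpha I+\beta V+\gamma W)=\alpha^2-\beta^2+\gamma^2$ forces the $V$- and $W$-coefficients to have equal magnitude in the limit. Hence $\Phi=c(V+\eta' W)$ and $\Psi=\hat c(V+\eta'' W)$ for some $\eta',\eta''\in\{\pm 1\}$. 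To pin down the signs and show $c,\hat c\ne 0$, I would use $\tilde Y_n Y_n = t_{2n}^2 A_{2n+1}$: Theorem \ref{exact-struc-odd} together with the elementary identity $t_{2n}^2 t_{2n+1}=t_{2n+2}-2+2t_{2n}^2\to -2$ (from \eqref{recurrence}) yields $\Psi\Phi = -(I-\eta U)$. Expanding $(V+\eta''W)(V+\eta'W)=(1-\eta'\eta'')I+(\eta''-\eta')U$ and matching coefficients forces $\eta''=-\eta'$ (otherwise the $I$-term vanishes), then $\eta'=\eta$ and $c\hat c=-1/2$; in particular $c,\hat c\ne 0$.

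The delicate step is the algebraic identification. The single-step recurrence for $Y_n$ leaves the sign of the $W$-term ambiguous and, more seriously, does not rule out $c=0$, which would be compatible with the determinant constraint (both $V+\eta W$ and $V-\eta W$ are singular, so the constraint only fixes $|c|=|\hat c \cdot(\text{stuff})|$ relations up to sign). The rigidity needed to fix the sign to $\eta$ \emph{and} establish non-degeneracy comes precisely from the product identity $A_{2n+1}=B_{2n}A_{2n}$ combined with Theorem \ref{exact-struc-odd}, which forces $\Psi\Phi$ to be the specific rank-one matrix $-(I-\eta U)$ rather than zero.
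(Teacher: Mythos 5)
Your argument is correct, and while the convergence half coincides with the paper's (same use of \eqref{recur-AB-even}, Proposition \ref{more-precise-II}, Lemma \ref{elementary}, and the Cauchy/telescoping estimate giving the $O(e^{-2^{n+1}\gamma})$ rate), your identification of the limit matrices is genuinely different. The paper deliberately leaves $C,\widehat C$ as unidentified nonzero matrices at this point (nonvanishing is shown by the norm contradiction $\|A_{2n+1}\|\sim e^{2^{n+1}\gamma}$) and only pins down their form in Section \ref{4.3}, after the Oseledets-type construction of the stable direction $s$ and the proof that $s\parallel v_{-\eta\pi/4}$ (Propositions \ref{Oseledets-II} and \ref{parallel-s-II}), combined with the off-diagonal antisymmetry of Remark \ref{rembasic}. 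You instead identify $\Phi,\Psi$ by purely algebraic constraints available immediately: no $U$-component (Remark \ref{rembasic}), vanishing $I$-component ($\operatorname{tr}(t_{2n}A_{2n})=t_{2n}^2\to0$), singularity of the limit ($\det(t_{2n}A_{2n})=t_{2n}^2\to0$), and then the product identity $\tilde Y_nY_n=t_{2n}^2A_{2n+1}$ with $t_{2n}^2t_{2n+1}\to-2$ and Theorem \ref{exact-struc-odd}, which gives $\Psi\Phi=-(I-\eta U)$; since $(V+\eta''W)(V+\eta'W)=(1-\eta'\eta'')I+(\eta''-\eta')U$, this forces $\eta'=\eta$, $\eta''=-\eta$ and $c\hat c=-\tfrac12\neq0$, so nondegeneracy and the correct signs come out simultaneously. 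What each approach buys: yours is self-contained (no forward reference to the stable-direction section, whose preparatory norm estimates in the paper lean on the as-yet-incomplete Theorem \ref{exact-struc-even} via Corollary \ref{exact-norm-II}), and it yields the extra relation $c\hat c=-1/2$ for free; the paper's route costs nothing extra in the end because the stable-direction analysis is needed anyway for the subordinate solution, and there the identification of $C,\widehat C$ drops out of $A_{2n}s\to0$ once $s\parallel(1,\eta)^t$ is known. Both are valid; your version would let Section \ref{4.3} quote the full strength of Theorem \ref{exact-struc-even} without any delicacy about the order of the arguments.
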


\begin{proof}
By \eqref{recur-AB-even}, we have
\begin{equation}\label{tA-even}
\begin{array}{rcl}
t_{2n}A_{2n}&=&t_{2n}t_{2n-1}t_{2n-2}A_{2n-2}+t_{2n}(1-t_{2n-2}^2)I\\
&&-t_{2n}t_{2n-2}\nu_{n-1}V-t_{2n}t_{2n-2}\omega_{n-1}W.\end{array}
\end{equation}

By \eqref{t-mu-II} and  \eqref{t-prod-II},
\begin{equation*}\label{norm-ite}
\|t_{2n}A_{2n}\| \le \left(1+O(e^{-2^n\gamma})\right)\|t_{2n-2}A_{2n-2}\|+O(e^{-2^n\gamma}).
\end{equation*}
By Lemma \ref{elementary},
$\|t_{2n}A_{2n}\|=O(1).$

On the other hand, \eqref{tA-even} implies that
$$%\begin{equation}\label{2-term}
\delta_nt_{2n}A_{2n}= |t_{2n}t_{2n-1}|\delta_{n-1}t_{2n-2}A_{2n-2}+O(e^{-2^n\gamma}).
$$%\end{equation}
By \eqref{t-prod-II}, $|t_{2n}t_{2n-1}|=1+O(e^{-2^n\gamma}).$ Thus
\begin{equation}\label{2-term}
\delta_nt_{2n}A_{2n}-\delta_{n-1}t_{2n-2}A_{2n-2}=O(e^{-2^n\gamma}).
\end{equation}
Consequently $\{\delta_nt_{2n}A_{2n}: n\ge 1\}$ is a Cauchy sequence.
Let  matrix $C$ be the limit, then \eqref{2-term} implies that
\begin{equation}\label{A}
t_{2n}A_{2n}= \delta_n C+O(e^{-2^{n+1}\gamma}).
\end{equation}

The same proof shows that there exists a matrix $\widehat C$ such that
\begin{equation}\label{B}
t_{2n}B_{2n}= \delta_n \widehat C+O(e^{-2^{n+1}\gamma}).
\end{equation}

In the following we show $C$ is nonzero.
If otherwise, we have $t_{2n}A_{2n}=O(e^{-2^{n+1}\gamma}),$ then  $\|A_{2n}\|=O(e^{-2^{n}\gamma})$ by \eqref{t-mu-II}.
On the other hand, \eqref{B} implies that $\|B_{2n}\|=O(e^{2^{n}\gamma})$.  Thus
$
\|A_{2n+1}\|\le \|B_{2n}\|\|A_{2n}\|=O(1).
$
However, by Theorem \ref{exact-struc-odd},  we have $\|A_{2n+1}\|\sim e^{2^{n+1}\gamma}$, which is a contradiction.
So $C$ is nonzero.

The same proof  shows that $\widehat C$ is nonzero.

To obtain the exact forms of $C$ and $\widehat C$, further informations are needed.  We will postpone the proof to Section \ref{4.3}.
\end{proof}

\begin{rem}\label{osci-II}
{\rm Recall that $A_{n}=T_{2^{n}}$.
Thus Theorem \ref{exact-struc-odd} implies that $\|T_k\|\sim e^{\sqrt{2}\gamma \sqrt{k}} $ for $k=2^{2n+1}$,
while Theorem \ref{exact-struc-even} implies that $\|T_k\|\sim e^{\gamma\sqrt{k}} $ for $k=2^{2n}$. This implies that there  exists fluctuation for the coefficients of $\sqrt{k}$.

}
\end{rem}

 Theorem \ref{exact-struc-even} has the following  consequence, which will be used in the proof of Theorem \ref{main-norm}.

\begin{cor}\label{exact-norm-II}
 We have
$$
\begin{cases}
{\displaystyle\lim_{n\to\infty}\frac{\|B_{2n}^2A_{2n}\|}{e^{2^n\gamma}}=\lim_{n\to\infty}\frac{\|A_{2n}B_{2n}^2\|}{e^{2^n\gamma}}
=\sqrt{1+c^2}},\\
{\displaystyle\lim_{n\to\infty}\frac{\|A_{2n}^2B_{2n}\|}{e^{2^n\gamma}}=\lim_{n\to\infty}\frac{\|B_{2n}A_{2n}^2\|}{e^{2^n\gamma}}
=\sqrt{1+\hat c^2}}.
\end{cases}
$$
\end{cor}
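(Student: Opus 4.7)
My plan is to express each of the four products via the Cayley--Hamilton identities already used in Lemma~\ref{recurrence-AB}, feed in the asymptotics from Theorem~\ref{exact-struc-odd} and Theorem~\ref{exact-struc-even}, and reduce everything to a single matrix-norm computation. Specifically, since $\det A_{2n}=\det B_{2n}=1$, Cayley--Hamilton gives $A_{2n}^2=t_{2n}A_{2n}-I$ and $B_{2n}^2=t_{2n}B_{2n}-I$. Together with $A_{2n+1}=B_{2n}A_{2n}$ and $B_{2n+1}=A_{2n}B_{2n}$, this yields the four identities
\begin{equation*}
\begin{array}{ll}
B_{2n}^2A_{2n}=t_{2n}A_{2n+1}-A_{2n}, & A_{2n}B_{2n}^2=t_{2n}B_{2n+1}-A_{2n},\\
A_{2n}^2B_{2n}=t_{2n}B_{2n+1}-B_{2n}, & B_{2n}A_{2n}^2=t_{2n}A_{2n+1}-B_{2n}.
\end{array}
\end{equation*}

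Next I would derive the asymptotic $t_{2n}^2t_{2n+1}\to -2$. This follows from the recurrence $t_{2n+2}-2=t_{2n}^2(t_{2n+1}-2)$, rewritten as $t_{2n}^2t_{2n+1}=t_{2n+2}-2+2t_{2n}^2$, together with $t_{2n},t_{2n+2}\to 0$ (by Lemma~\ref{char-II-III}). Combining with $|t_{2n}|e^{2^n\gamma}\to 2$ from Proposition~\ref{more-precise-II}, I then substitute the expansions of Theorem~\ref{exact-struc-odd} and Theorem~\ref{exact-struc-even} into each identity. For instance, for $A_{2n}B_{2n}^2$,
\[
t_{2n}B_{2n+1}=\frac{t_{2n}^2t_{2n+1}}{2t_{2n}}(I+\eta U)+O(1)=\frac{-(I+\eta U)+o(1)}{t_{2n}},
\]
while $A_{2n}=\delta_nc(V+\eta W)/t_{2n}+O(e^{-2^n\gamma})$ by Theorem~\ref{exact-struc-even} (after dividing by $t_{2n}$). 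Subtracting and multiplying through by $|t_{2n}|e^{2^n\gamma}\to 2$,
\[
\frac{\|A_{2n}B_{2n}^2\|}{e^{2^n\gamma}}\;\longrightarrow\;\frac{1}{2}\,\bigl\|-(I+\eta U)-\delta_nc(V+\eta W)\bigr\|,
\]
and analogous formulas arise for the other three products.

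The remaining step is the matrix-norm computation, which is the key algebraic input. A direct calculation shows that each limiting matrix has the shape $\bigl(\begin{smallmatrix}-a & \pm\eta a\\ \pm\eta b & -b\end{smallmatrix}\bigr)$ or $\bigl(\begin{smallmatrix}-a & \pm\eta b\\ \pm\eta a & -b\end{smallmatrix}\bigr)$ with $\{a,b\}=\{1+\delta_nc,1-\delta_nc\}$ in two of the cases and $\{1+\delta_n\hat c,1-\delta_n\hat c\}$ in the other two. For either shape one computes $M^TM=(a^2+b^2)\bigl(\begin{smallmatrix}1 & \pm\eta\\ \pm\eta & 1\end{smallmatrix}\bigr)$, whose eigenvalues are $0$ and $2(a^2+b^2)$, so $\|M\|=\sqrt{2(a^2+b^2)}$. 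Since $a^2+b^2=2(1+c^2)$ (independently of the sign $\delta_n=\pm1$), this gives $\|M\|=2\sqrt{1+c^2}$, and dividing by $2$ produces the stated limit $\sqrt{1+c^2}$; the $\hat c$-cases are identical.

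The main obstacle I anticipate is the careful bookkeeping of error terms in the two simultaneous expansions: the factor $t_{2n}t_{2n+1}$ blows up like $e^{2^n\gamma}$, so one must check that the $O(e^{-2^n\gamma})$ error from Theorem~\ref{exact-struc-odd} remains negligible after this amplification, and similarly that the $O(e^{-2^{n+1}\gamma})$ error in Theorem~\ref{exact-struc-even} stays $o(1)$ after dividing by $t_{2n}$. A secondary subtlety is that $\delta_n$ may genuinely oscillate with $n$; the algebraic check above, showing the norm is the same for $\delta_n=+1$ and $\delta_n=-1$, is what makes the overall limit exist in spite of this oscillation.
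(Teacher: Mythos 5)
Your proposal is correct and takes essentially the same route as the paper: the paper also starts from the identities $B_{2n}^2A_{2n}=t_{2n}A_{2n+1}-A_{2n}$ etc. (Lemma \ref{recurrence-AB}), substitutes Theorems \ref{exact-struc-odd} and \ref{exact-struc-even} together with the scalar asymptotics \eqref{t-mu-II}--\eqref{t-prod-II} (your derivation of $t_{2n}^2t_{2n+1}\to-2$ is just another packaging of these), and reads off the norm of the limiting rank-one matrix, with the $\delta_n$-independence of that norm handled exactly as you note. One cosmetic slip: for the shape $\bigl(\begin{smallmatrix}-a&\pm\eta b\\ \pm\eta a&-b\end{smallmatrix}\bigr)$ one actually gets $M^tM=2\bigl(\begin{smallmatrix}a^2&\pm\eta ab\\ \pm\eta ab&b^2\end{smallmatrix}\bigr)$ rather than the form you wrote, but this matrix still has eigenvalues $0$ and $2(a^2+b^2)$, so your value $\|M\|=\sqrt{2(a^2+b^2)}$ and the stated limits are unaffected.
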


\begin{proof}
By Lemma \ref{recurrence-AB},  Theorem \ref{exact-struc-odd},  Theorem \ref{exact-struc-even} and \eqref{t-mu-II}, we have
\begin{eqnarray*}
&&B_{2n}^2A_{2n}\\
&=&t_{2n}A_{2n+1}-A_{2n}
=t_{2n}t_{2n+1}\frac{A_{2n+1}}{t_{2n+1}}-t_{2n}^{-1} t_{2n}A_{2n}\\
&=&t_{2n}t_{2n+1}\left(\frac{I-\epsilon_nU}{2}+O(e^{-2^n\gamma})\right)-t_{2n}^{-1}\left(\delta_n c(V+\epsilon_n W)+O(e^{-2^n\gamma})\right)\\
&=&\frac{e^{2^n\gamma}}{2}\left(\pm(I\pm U)\pm c(V\mp W)+o(1)\right).
\end{eqnarray*}
Then the first equality holds. The same proof shows that the other three equalities hold.
\end{proof}

%%%%%%
%%%%%%
\subsection{Proof of Theorem \ref{main-norm} for type-II energy}\label{4.4}\

We will  frequently use the following two facts: if $X\in {\rm SL}(2,\R)$, then $\|X\|=\|X^{-1}\|; $ if $X$ is invertible and $X=YZ$, then
$$
\|X\|\le\|Y\|\|Z\|\quad \text{ or equivalently }\quad \|Y\|\ge\|X\|/\|Z\|.
$$

Theorem \ref{main-norm} follows from the following lemma, the proof of which is by interpolation.  See also Figure \ref{pic-log-norm} for some intuition.

\begin{lemma}\label{norm-II}
There exists a constant $C=C(E)>1$ such that
\begin{equation}\label{T-k-II}
C^{-n-1}e^{2^n\gamma}\le\|T_{ k}\|, \|\overline{T}_k\|\le C^{n+1} e^{2^{n+1}\gamma}
\end{equation}
if $2^{2n}\le k<2^{2(n+1)}$.
Consequently, if $k<2^{2n},$ then
\begin{equation}\label{up-T-k-II}
 \|T_{ k}\| , \|\overline{T}_k\|\le C^{n} e^{2^{n}\gamma}.
\end{equation}
\end{lemma}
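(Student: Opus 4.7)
The plan is to prove Lemma \ref{norm-II} by strong induction on $n$, with the inductive claim treating $T_k$ and $\overline T_k$ symmetrically and covering both inequalities simultaneously. Theorems \ref{exact-struc-even} and \ref{exact-struc-odd} supply the ``anchor'' estimates $\|A_{2n}\|\sim e^{2^n\gamma}$ and $\|A_{2n+1}\|\sim e^{2^{n+1}\gamma}$, which already give the lemma at the reference points $k=2^{2n}$ and $k=2^{2n+1}$; everything else follows by Thue--Morse interpolation via the identity
\[
T_k \;=\;\overline T_r\cdot A_{2n+\epsilon},\qquad k=2^{2n+\epsilon}+r,\ \epsilon\in\{0,1\},\ 0\le r<2^{2n+\epsilon},
\]
obtained from \eqref{parlin}, together with its $\overline T_k$ analogue.

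For the upper bound, case $\epsilon=0$ is immediate from the triangle inequality and the inductive estimate on $\|\overline T_r\|$. Case $\epsilon=1$ is the main challenge, since both $\|\overline T_r\|$ and $\|A_{2n+1}\|$ can independently be of order $e^{2^{n+1}\gamma}$, so the triangle inequality overshoots by $e^{2^{n+1}\gamma}$. I split on the size of $r$. When $r\ge 2^{2n}$, writing $r=2^{2n}+r'$ gives $\overline T_r = T_{r'}\,B_{2n}$, hence
\[
T_k \;=\; T_{r'}\cdot B_{2n}A_{2n+1} \;=\; T_{r'}\cdot B_{2n}^{2}A_{2n},
\]
and Corollary \ref{exact-norm-II} supplies the crucial cancellation $\|B_{2n}^{2}A_{2n}\|\sim e^{2^n\gamma}$, closing the induction. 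When $r<2^{2n}$, I invoke the near rank-one representation from Theorem \ref{exact-struc-odd},
\[
A_{2n+1} \;=\; t_{2n+1}\,P_\eta \;+\; R_n,\qquad P_\eta=\tfrac{1}{2}(I-\eta U),\qquad \|R_n\|\lesssim e^{2^n\gamma},
\]
so that $\|\overline T_r A_{2n+1}\|\le |t_{2n+1}|\,\|\overline T_r P_\eta\|+\|\overline T_r\|\,\|R_n\|$. The remainder term is handled inductively (the factor $\|R_n\|\lesssim e^{2^n\gamma}$ is much smaller than $|t_{2n+1}|\sim e^{2^{n+1}\gamma}$), while the dominant term reduces, via the rank-one nature of $P_\eta$, to controlling $\|\overline T_r v\|$ for a fixed unit vector $v$ spanning the image of $P_\eta$; a polynomial-in-$r$ bound on this quantity is exactly the output of the stable-direction analysis of Section \ref{4.3}.

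For the lower bound, inverting the decomposition gives $\|T_k\|\ge\|A_{2n+\epsilon}\|/\|\overline T_r\|$. When $\epsilon=1$ and $r<2^{2n}$, the inductive upper bound $\|\overline T_r\|\le C^n e^{2^n\gamma}$ immediately yields $\|T_k\|\gtrsim e^{2^n\gamma}/C^n$. When $\epsilon=1$ and $r\ge 2^{2n}$, I use the right-hand variant obtained from the palindrome $u_1\cdots u_{4^{n+1}}=u_{4^{n+1}}\cdots u_1$ in \eqref{parlin} together with Lemma \ref{dp}(ii), giving $T_k = U T_s U\cdot A_{2n+2}$ with $s=2^{2n+2}-k\le 2^{2n}$, so that $\|T_k\|\ge\|A_{2n+2}\|/\|T_s\|\gtrsim e^{2^n\gamma}/C^n$. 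The $\epsilon=0$ lower bound is the most delicate, since $\|A_{2n}\|/\|\overline T_r\|$ is too weak; here I invoke the rank-one structure of $A_{2n}$ from Theorem \ref{exact-struc-even}, $t_{2n}A_{2n}\approx\delta_n c\,(V+\eta W)$, to reduce the estimate to a uniform lower bound on $\|\overline T_r v\|$ for $v$ transverse to the stable direction---again supplied by Section \ref{4.3}---yielding $\|T_k\|\gtrsim 1/|t_{2n}|\sim e^{2^n\gamma}$. The principal obstacle throughout is the harmonisation of the induction with the rank-one asymptotics: the stable-direction machinery of Section \ref{4.3} has to be strong enough to give simultaneously the polynomial upper bound along the image of $P_\eta$ (for the upper bound) and the uniform lower bound along transverse directions (for the lower bound).
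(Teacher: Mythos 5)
Your overall architecture (top dyadic block decomposition $T_k=\overline T_r A_{2n+\epsilon}$, with the digit-three cancellation $\|B_{2n}^2A_{2n}\|\sim e^{2^n\gamma}$ from Corollary \ref{exact-norm-II} and the palindrome trick for the lower bounds) is genuinely different from the paper's proof, which never touches stable directions: the paper writes $k$ in base $4$, and runs a two-sided induction on the position $\mathcal P(k)$ of the least significant nonzero digit, with refined exponents $e^{(2^n+2^{\mathcal P(k)})\gamma}\le\|T_k\|\le e^{(2^{n+1}-2^{\mathcal P(k)})\gamma}$ (digit $1$ or $3$) versus $e^{2^n\gamma}\le\|T_k\|\le e^{2^{n+1}\gamma}$ (digit $2$), using only the four anchor norms in \eqref{D-II}. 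Several of your cases are correct and would survive scrutiny: the $\epsilon=0$ upper bound, the $\epsilon=1$, $r\ge 2^{2n}$ upper bound via $T_{r'}B_{2n}^2A_{2n}$, and both $\epsilon=1$ lower bounds (the identity $T_k=UT_sU\,A_{2n+2}$ with $s=2^{2n+2}-k\le 2^{2n}$ is fine by \eqref{parlin} and Lemma \ref{dp}(ii)).

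However, there is a genuine gap in the two remaining cases, and it is a circularity. For the $\epsilon=1$, $r<2^{2n}$ upper bound you need $\|\overline T_r\hat s\|$ to be polynomially (or at least sub-exponentially) bounded for \emph{all} $r<2^{2n}$, and for the $\epsilon=0$ lower bound you need a uniform-in-$r$ lower bound on $\|\overline T_r v\|$ for $v$ transverse to the relevant stable direction. You attribute both to ``the stable-direction analysis of Section \ref{4.3}'', but that section only controls the matrices along the dyadic subsequence (Propositions \ref{Oseledets-II}, \ref{parallel-s-II} and \eqref{const-limit-II} give $\|A_{2m}s\|$, $\|B_{2m}\hat s\|$, $\|A_{2m+1}s\|$, $\|B_{2m+1}\hat s\|$), not $\|\overline T_r\hat s\|$ or transverse lower bounds for arbitrary $r$. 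In the paper those statements are exactly the first inequality of \eqref{subord-II} and Lemma \ref{stab-globe-II}, both of which are proved in Sections \ref{4.4}--\ref{4.5} \emph{from} Lemma \ref{norm-II} (their proofs invoke \eqref{up-T-k-II} and \eqref{T-n-lu}); so as written your argument assumes a consequence of the lemma being proved. The $\epsilon=0$ lower bound can be repaired cheaply without any directional input: since the reversal of $\varsigma^{2n+1}(a)$ is $\varsigma^{2n+1}(b)$, one gets $T_k=U\overline T_{s'}U\,A_{2n+1}$ with $s'=2^{2n+1}-k\le 2^{2n}$, hence $\|T_k\|\ge\|A_{2n+1}\|/\|\overline T_{s'}\|\gtrsim C^{-n-1}e^{2^n\gamma}$. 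The $\epsilon=1$, $r<2^{2n}$ upper bound has no such shortcut: to close it you must either adopt the paper's refined $\mathcal P(k)$-induction, or strengthen your induction hypothesis to carry the directional bounds $\|T_r s\|,\|\overline T_r\hat s\|\lesssim C^{n}$ for $r<2^{2n}$ alongside \eqref{T-k-II} (which is feasible using \eqref{const-limit-II} and \eqref{sub-decay-2n}, but is a missing ingredient in your proposal, not something Section \ref{4.3} already supplies).
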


\begin{proof}
The proofs for $T_k$ and $\overline{T}_k$ are essentially the same, so we only prove the result for $T_k.$ \eqref{up-T-k-II} is a direct consequence of \eqref{T-k-II}, so we only need to show \eqref{T-k-II}.

By Theroem \ref{exact-struc-odd}, Theroem \ref{exact-struc-even} and   Corollary \ref{exact-norm-II}, there exists a constant $C=C(E)>1$ such that for any $n\in\N$,
\begin{equation}\label{D-II}
C^{-1} e^{2^n\gamma}\le \|D_{2n-1}\|,\ \|D_{2n}\|,\ \|D_{2n}\bar{D}_{2n}^2\|,\ \|D_{2n}^2\bar{D}_{2n}\|\le C e^{2^n\gamma},
\end{equation}
where $D\in \{A,B\}$ and  $\bar D=B$ if $D=A$ and  $\bar D=A$ if $D=B$.

Assume $2^{2n}\le k< 2^{2(n+1)}$. Write
$
k=\sum_{j=0}^n l_j 2^{2j}
$
with $0\le l_j\le 3$, then $l_n>0$.  Define
$\mathcal{P}(k):=\min\{j:l_j>0\}.$

If $\mathcal{P}(k)=n$, then $k=l_n 2^{2n}$ with $l_n=1$, $2$ or $3$.
By \eqref{D-II},
$$
\begin{cases}
C^{-1}e^{2^{n}\gamma}\le \|T_k\|\le Ce^{2^{n}\gamma},& l_n=1 \text{ or } 3;\\
C^{-1}e^{2^{n+1}\gamma}\le \|T_k\|\le Ce^{2^{n+1}\gamma},&l_n=2.
\end{cases}
$$
Then \eqref{T-k-II} holds.

Now assume $\mathcal{P}(k)<n$. We will prove that if $l_{\mathcal{P}(k)}=1$ or $3$, then
\begin{equation}\label{ind-norm-II}
C^{-n-1+\mathcal{P}(k)}e^{(2^n+2^{\mathcal{P}(k)})\gamma}\le
\|T_k\|\le C^{n+1-\mathcal{P}(k)}e^{(2^{n+1}-2^{\mathcal{P}(k)})\gamma};
\end{equation}
if $l_{\mathcal{P}(k)}=2$ then
\begin{equation}\label{ind-norm-2-II}
C^{-n-1+\mathcal{P}(k)}e^{2^n\gamma}\le \|T_k\|\le C^{n+1-\mathcal{P}(k)}e^{2^{n+1}\gamma},
\end{equation}
which obviously implies \eqref{T-k-II}.
We prove it by induction on $\mathcal{P}(k)$.

Consider firstly the case $\mathcal{P}(k)=n-1$.

If $l_n=1$ or $3$.
By $T_k=T_{l_n 2^{2n}\rightarrow k}T_{l_n 2^{2n}}$ and \eqref{D-II},
$$
\begin{array}{ll}
\|T_k\|\le C^2e^{(2^n+2^{n-1})\gamma}=C^2e^{(2^{n+1}-2^{n-1})\gamma},& \mbox{if } l_{n-1}=1 \mbox{ or } 3;\\
\|T_k\|\le C^2e^{(2^n+2^{n})\gamma}=C^2e^{2^{n+1}\gamma},& \mbox{if } l_{n-1}=2.
\end{array}
$$
By $T_{(l_n+1) 2^{2n}}=T_{k\rightarrow (l_n+1)2^{2n}}T_{k}$ and \eqref{D-II},
$$
\begin{array}{ll}
\|T_k\|\ge C^{-2}e^{(2^{n+1}-2^{n-1})\gamma}=C^{-2}e^{(2^{n}+2^{n-1})\gamma},& \mbox{if } l_{n-1}=1 \mbox{ or } 3;\\
\|T_k\|\ge C^{-2}e^{(2^{n+1}-2^{n})\gamma}=C^{-2}e^{2^{n}\gamma},& \mbox{if } l_{n-1}=2.
\end{array}
$$

If $l_n=2$. By $T_k=T_{k\rightarrow 3\times 2^{2n}}^{-1}T_{3\times 2^{2n}}$ and \eqref{D-II},
$$
\begin{array}{ll}
\|T_k\|\le C^2e^{(2^n+2^{n-1})\gamma}=C^2e^{(2^{n+1}-2^{n-1})\gamma},& \mbox{if } l_{n-1}=1 \mbox{ or } 3;\\
\|T_k\|\le C^2e^{(2^n+2^{n})\gamma}=C^2e^{2^{n+1}\gamma},& \mbox{if } l_{n-1}=2.
\end{array}
$$
By $T_{2\times 2^{2n}}=T_{2\times 2^{2n}\rightarrow k}^{-1}T_k$ and \eqref{D-II},
$$
\begin{array}{ll}
\|T_k\|\ge C^{-2}e^{(2^{n+1}-2^{n-1})\gamma}=C^{-2}e^{(2^{n}+2^{n-1})\gamma},& \mbox{if } l_{n-1}=1 \mbox{ or } 3;\\
\|T_k\|\ge C^{-2}e^{(2^{n+1}-2^{n})\gamma}=C^{-2}e^{2^{n}\gamma},& \mbox{if } l_{n-1}=2.
\end{array}
$$
Thus \eqref{ind-norm-II} and \eqref{ind-norm-2-II} holds for $\mathcal{P}(k)=n-1$.

Now take any $p<n-1$.
Suppose \eqref{ind-norm-II} and \eqref{ind-norm-2-II} hold for any $\tilde k$ with $\mathcal{P}(\tilde k)>p$.
Take any $k$ with $\mathcal{P}(k)=p$.

If $l_{p+1}=0$ or $2$, set $k'=k-l_p 2^{2p}+2^{2(p+1)}$, then $\mathcal{P}(k')=p+1$ and $l_{p+1}(k')=1$ or $3$.
By $T_{k'}=T_{k\rightarrow k'}T_k$, inductive hypothesis and \eqref{D-II},
$$
\begin{array}{ll}
\|T_k\|\le C^{n+1-p}e^{(2^{n+1}-2^{p+1}+2^p)\gamma}=C^{n+1-p}e^{(2^{n+1}-2^{p})\gamma},& \mbox{if } l_{p}=1 \mbox{ or } 3;\\
\|T_k\|\le C^{n+1-p}e^{(2^{n+1}-2^{p+1}+2^{p+1})\gamma}=C^{n+1-p}e^{2^{n+1}\gamma},& \mbox{if } l_{p}=2;\\
\|T_k\|\ge C^{-n-1+p}e^{(2^{n}+2^{p+1}-2^p)\gamma}=C^{-n-1+p}e^{(2^{n}+2^{p})\gamma},& \mbox{if } l_{p}=1 \mbox{ or } 3;\\
\|T_k\|\ge C^{-n-1+p}e^{(2^{n}+2^{p+1}-2^{p+1})\gamma}=C^{-n-1+p}e^{2^{n}\gamma},& \mbox{if } l_{p}=2.
\end{array}
$$

If $l_{p+1}=1$ or $3$, set $k'=k-l_p 2^{2p}$, then $\mathcal{P}(k')=p+1$ and $l_{p+1}(k')=1$ or $3$.
By $T_{k}=T_{k'\rightarrow k}T_{k'}$, inductive hypothesis and \eqref{D-II},
$$
\begin{array}{ll}
\|T_k\|\le C^{n+1-p}e^{(2^{n+1}-2^{p+1}+2^p)\gamma}=C^{n+1-p}e^{(2^{n+1}-2^{p})\gamma},& \mbox{if } l_{p}=1 \mbox{ or } 3;\\
\|T_k\|\le C^{n+1-p}e^{(2^{n+1}-2^{p+1}+2^{p+1})\gamma}=C^{n+1-p}e^{2^{n+1}\gamma},& \mbox{if } l_{p}=2;\\
\|T_k\|\ge C^{-n-1+p}e^{(2^{n}+2^{p+1}-2^p)\gamma}=C^{-n-1+p}e^{(2^{n}+2^{p})\gamma},& \mbox{if } l_{p}=1 \mbox{ or } 3;\\
\|T_k\|\ge C^{-n-1+p}e^{(2^{n}+2^{p+1}-2^{p+1})\gamma}=C^{-n-1+p}e^{2^{n}\gamma},& \mbox{if } l_{p}=2.
\end{array}
$$
Thus \eqref{ind-norm-II} and \eqref{ind-norm-2-II} holds for $\mathcal{P}(k)=p$. This finishes the proof.
\end{proof}

\noindent {\bf Proof of Theorem \ref{main-norm}(type-II energy).} \
By Lemma \ref{dp} (iii), we only need to consider $T_{n}.$
Fix $n\in \N$, let $k$ be the unique integer such that
$2^{2k}\le n<2^{2(k+1)}$. Then
$
2^k\le \sqrt{n}< 2^{k+1}.
$
By \eqref{T-k-II}, we have
\begin{equation}\label{T-n-lu}
 n^{-\alpha} e^{\frac{\gamma}{2}\sqrt{n}}
 \le C^{-k-1}e^{\frac{\gamma}{2}\sqrt{n}}\le \|T_n\|
 \le C^{k+1}e^{2\gamma\sqrt{n}}\le n^{\alpha} e^{2\gamma\sqrt{n}},
\end{equation}
where $\alpha=\log C/\log 2.$ Then Theorem \ref{main-norm} holds for any $c_2>2$ and $c_1\in (0,1/2)$.
\hfill $\Box$

%%%%%%%%%%%%%
%%%%%%%%%%%%%
\subsection{Stable directions of $\{A_{2n}\}$ and $\{B_{2n}\}$}\label{4.3}\

\begin{prop}\label{Oseledets-II}
There exist  unit vectors $s, \hat s\in \R^2$ such that
\begin{equation}\label{sub-decay-2n}
\|A_{2n}s\|,\ \|B_{2n}\hat s\|= O(e^{-2^{n}\gamma})
\end{equation}
and for any unit vector $v (\hat v)$ which is independent with $s (\hat s)$,
$$
\|A_{2n}v\|,\ \|B_{2n}\hat v\| \gtrsim e^{2^n\gamma }.
$$
 \end{prop}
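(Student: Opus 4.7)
The plan is to read off $s$ and $\hat s$ directly from the rank-one asymptotic structure of $A_{2n}$ and $B_{2n}$ furnished by Theorem \ref{exact-struc-even}. Dividing the identity $t_{2n}A_{2n} = \delta_n c(V+\eta W) + O(e^{-2^{n+1}\gamma})$ by $t_{2n}$ and invoking $|t_{2n}|\sim e^{-2^n\gamma}$ from \eqref{t-mu-II}, I would write
$$A_{2n} = M_n + E_n,\qquad M_n := \frac{\delta_n c}{t_{2n}}(V+\eta W),\qquad \|E_n\| = O(e^{-2^n\gamma}),$$
and analogously $B_{2n} = \hat M_n + \hat E_n$ with $\hat M_n := (\delta_n\hat c/t_{2n})(V-\eta W)$ and $\|\hat E_n\|=O(e^{-2^n\gamma})$. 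The point is that $V\pm\eta W$ are rank-one with \emph{fixed} (i.e.\ $n$-independent) one-dimensional kernels; using \eqref{4matrices} one checks directly that $\ker(V+\eta W) = \mathbb R\, v_{-\eta\pi/4}$ and $\ker(V-\eta W) = \mathbb R\, v_{\eta\pi/4}$.

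I would then take $s := v_{-\eta\pi/4}$ and $\hat s := v_{\eta\pi/4}$. Since $M_n s = 0$, the estimate
$$\|A_{2n}s\| = \|E_n s\| \le \|E_n\| = O(e^{-2^n\gamma})$$
is immediate, and likewise for $\|B_{2n}\hat s\|$, which yields \eqref{sub-decay-2n}. For the complementary lower bound, let $v$ be a unit vector independent of $s$ and write $v = \alpha s + \beta s^\perp$ with $\beta\ne 0$. A direct two-line computation shows that $(V+\eta W)s^\perp$ is a nonzero multiple of $s$ with $\|(V+\eta W)s^\perp\| = 2$, so
$$\|M_n v\| = |\beta|\cdot\frac{2|c|}{|t_{2n}|} \sim |\beta|\, e^{2^n\gamma}.$$
Combining with $\|E_n v\| = O(e^{-2^n\gamma})$, the triangle inequality gives $\|A_{2n} v\| \gtrsim e^{2^n\gamma}$. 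The argument for $B_{2n}$ is identical after replacing $V+\eta W$ by $V-\eta W$ and $c$ by $\hat c$.

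I do not anticipate a serious obstacle: once Theorem \ref{exact-struc-even} is in hand, the stable direction is essentially forced by the fact that $V\pm\eta W$ is rank-one with a fixed kernel, and the $SL(2,\mathbb R)$ constraint $\det A_{2n}=1$ is perfectly consistent with a stable/unstable splitting at the claimed rates. The only care-points are that $c$ and $\hat c$ are nonzero (already part of Theorem \ref{exact-struc-even}) and that $(V\pm\eta W)s^\perp$ is genuinely of order one rather than vanishing—both of which reduce to explicit $2\times 2$ computations from \eqref{4matrices}.
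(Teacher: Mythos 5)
Your computations are all correct, but the input you rely on is not available at the point where this proposition sits in the paper --- in fact it is derived \emph{from} it. You quote Theorem \ref{exact-struc-even} in its full strength, i.e.\ with the exact rank-one limits $c(V+\eta W)$ and $\hat c(V-\eta W)$, and read off $s=v_{-\eta\pi/4}$, $\hat s=v_{\eta\pi/4}$ as their kernels. However, the proof of Theorem \ref{exact-struc-even} given in the paper only establishes $t_{2n}A_{2n}=\delta_n C+O(e^{-2^{n+1}\gamma})$ and $t_{2n}B_{2n}=\delta_n\widehat C+O(e^{-2^{n+1}\gamma})$ with $C,\widehat C\ne 0$ (equations \eqref{A} and \eqref{B}), and explicitly postpones the identification $C=c(V+\eta W)$, $\widehat C=\hat c(V-\eta W)$ to the subsection on stable directions (Section~\ref{4.3}), where it is deduced from Proposition \ref{Oseledets-II} together with Proposition \ref{parallel-s-II}. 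So, inserted into the paper as it stands, your argument is circular. The paper's own proof avoids the exact form altogether: it runs an Oseledets-type argument on singular value decompositions, using $A_{2n+2}=A_{2n}B_{2n}^2A_{2n}$ and the norm bounds $\|A_{2n}\|,\ \|A_{2n}B_{2n}^2\|\sim e^{2^n\gamma}$ to show that the singular directions $s_n$ form a Cauchy sequence with $|\sin\measuredangle(s_n,s)|=O(e^{-2^{n+1}\gamma})$; the explicit direction $v_{-\eta\pi/4}$ appears only afterwards, in Proposition \ref{parallel-s-II}.

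The good news is that your idea survives with a small repair that removes the circularity. From the part of Theorem \ref{exact-struc-even} proved before Section~\ref{4.3} one has $\delta_n t_{2n}A_{2n}\to C\ne 0$; since $\det(t_{2n}A_{2n})=t_{2n}^2\to 0$ by \eqref{t-mu-II}, continuity of the determinant gives $\det C=0$, so $C$ is rank one and its kernel is a well-defined line. Take $s$ to be a unit vector spanning $\ker C$ (without claiming yet that $s=v_{-\eta\pi/4}$). Then exactly as in your write-up, $\|A_{2n}s\|\le |t_{2n}|^{-1}O(e^{-2^{n+1}\gamma})=O(e^{-2^n\gamma})$, and for a unit vector $v$ independent of $s$, $\|A_{2n}v\|\ge |t_{2n}|^{-1}\bigl(\|Cv\|-O(e^{-2^{n+1}\gamma})\bigr)\gtrsim e^{2^n\gamma}$ because $\|Cv\|>0$ is a fixed constant; the same works for $B_{2n}$ with $\widehat C$ and $\hat s$. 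The identification of $s$ with $v_{-\eta\pi/4}$ and the sign structure $V+\eta W$ are precisely what Proposition \ref{parallel-s-II} and the completion of the proof of Theorem \ref{exact-struc-even} supply afterwards, and they are not needed for the present statement; you should therefore not import them here.
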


 We only prove the result for $\{A_{2n}\}$, since the prove for $\{B_{2n}\}$ is the same. The proof is inspired by Oseledets' argument.

By the singular value decomposition, there exist four  unit vectors $s_n, u_n, v_n $ and $w_n$   such that
$$
s_n\perp u_n,\quad v_n\perp w_n,\quad A_{2n}s_n=\|A_{2n}\|^{-1}v_n,\quad
A_{2n}u_n=\|A_{2n}\|w_n.
$$

\begin{lemma}%\label{angle}
The sequence $(s_n)_n$ is a Cauchy sequence in projective space. Let $s$ be the limit, then
\begin{equation}\label{angle-s-sn}
|\sin\measuredangle(s_n,s)|=O(e^{-2^{n+1}\gamma}).
\end{equation}
\end{lemma}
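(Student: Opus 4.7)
The strategy, following Oseledets, is to exploit the multiplicative identity $A_{2n+2}=A_{2n}B_{2n}^{2}A_{2n}$, which is obtained by twice iterating \eqref{recu-n+1-n}, in order to compare $s_{n+1}$ directly with $s_{n}$. I would decompose $s_{n+1}=\alpha_{n}s_{n}+\beta_{n}u_{n}$ in the orthonormal SVD basis $\{s_{n},u_{n}\}$ of $A_{2n}$, so that $\alpha_{n}^{2}+\beta_{n}^{2}=1$; after the sign normalization $\alpha_{n}\ge 0$, one has $\|s_{n+1}-s_{n}\|\lesssim|\beta_{n}|=|\sin\measuredangle(s_{n+1},s_{n})|$, which reduces the problem to bounding $|\beta_{n}|$.

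Applying $A_{2n}$ and using $v_{n}\perp w_{n}$ gives the lower bound
$$
\|A_{2n}s_{n+1}\|^{2}=\alpha_{n}^{2}\|A_{2n}\|^{-2}+\beta_{n}^{2}\|A_{2n}\|^{2}\ge \beta_{n}^{2}\|A_{2n}\|^{2}.
$$
For the matching upper bound, the identity $A_{2n+2}s_{n+1}=(A_{2n}B_{2n}^{2})(A_{2n}s_{n+1})$, together with $\|(A_{2n}B_{2n}^{2})^{-1}\|=\|A_{2n}B_{2n}^{2}\|$ (since the determinant is $1$) and $\|A_{2n+2}s_{n+1}\|=\|A_{2n+2}\|^{-1}$, yields $\|A_{2n}s_{n+1}\|\le \|A_{2n}B_{2n}^{2}\|\cdot\|A_{2n+2}\|^{-1}$. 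Inserting the precise asymptotics $\|A_{2n}\|\sim e^{2^{n}\gamma}$, $\|A_{2n+2}\|\sim e^{2^{n+1}\gamma}$ (from Theorem \ref{exact-struc-even}) and $\|A_{2n}B_{2n}^{2}\|\sim e^{2^{n}\gamma}$ (from Corollary \ref{exact-norm-II}) produces $|\beta_{n}|=O(e^{-2^{n+1}\gamma})$.

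The Cauchy property and the bound \eqref{angle-s-sn} then follow by summation: $\|s_{n}-s\|\le\sum_{k\ge n}\|s_{k+1}-s_{k}\|\lesssim\sum_{k\ge n}e^{-2^{k+1}\gamma}$, and the super-exponential decay makes this tail comparable to its first term $e^{-2^{n+1}\gamma}$. The essential point, and the only delicate one, is that the asymptotic norms of Section \ref{4.2} interlock so precisely that $\|A_{2n}\|\cdot\|A_{2n}B_{2n}^{2}\|\sim\|A_{2n+2}\|$; this exact cancellation is what produces a genuine contraction factor of $e^{-2^{n+1}\gamma}$ in the stable direction, whereas any cruder estimate on $\|A_{2n}B_{2n}^{2}\|$ would yield only a weaker decay rate for $|\beta_{n}|$.
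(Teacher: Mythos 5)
Your proof is correct and follows essentially the same route as the paper: both arguments rest on the identity $A_{2n+2}=A_{2n}B_{2n}^{2}A_{2n}$, the singular value decomposition, and the norm asymptotics $\|A_{2n}\|,\|A_{2n}B_{2n}^{2}\|\sim e^{2^{n}\gamma}$, $\|A_{2n+2}\|\sim e^{2^{n+1}\gamma}$, yielding the same bound $|\sin\measuredangle(s_{n},s_{n+1})|\lesssim \|A_{2n}B_{2n}^{2}\|/(\|A_{2n}\|\,\|A_{2n+2}\|)=O(e^{-2^{n+1}\gamma})$. The only (immaterial) difference is that you expand $s_{n+1}$ in the singular frame of $A_{2n}$ and estimate $\|A_{2n}s_{n+1}\|$ via the inverse of $A_{2n}B_{2n}^{2}$, whereas the paper expands $s_{n}$ in the frame of $A_{2n+2}$ and estimates $\|A_{2n+2}s_{n}\|$ directly.
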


\begin{proof}
Write  $\beta_n=\measuredangle(s_n,s_{n+1})$.
Then $s_n=s_{n+1}\cos\beta_n + u_{n+1}\sin\beta_n.$
Thus
$$
A_{2n+2}s_n=\cos\beta_n \|A_{2n+2}\|^{-1}v_{n+1}+\sin\beta_n \|A_{2n+2}\|w_{n+1}.
$$
Recall  that $A_{2n+2}=A_{2n}B_{2n}^2 A_{2n}$, then
\begin{eqnarray*}
|\sin\beta_n|\le\frac{\|A_{2n+2}s_n\|}
{\|A_{2n+2}\|}=\frac{\|A_{2n}B_{2n}^2v_n\|}
{\|A_{2n}\|\|A_{2n+2}\|}.
\end{eqnarray*}
By Theorem \ref{exact-struc-even}, Corollary \ref{exact-norm-II} and
\eqref{t-mu-II},  we get
$
\|A_{2n}\|, \|A_{2n}B_{2n}^2\|\sim e^{2^n\gamma}.
$
Consequently,
\begin{eqnarray}\label{beta-n}
|\sin\beta_n|=O(\|A_{2n+2}\|^{-1})=O(e^{-2^{n+1}\gamma}).
\end{eqnarray}
This implies that $\{s_n\}_n$ is a Cauchy sequence.
Let $s$ be the limit, then \eqref{angle-s-sn} follows from \eqref{beta-n}.
\end{proof}

 Indeed $s$ is the direction we are looking for.

\noindent {\bf Proof of Proposition \ref{Oseledets-II}.}\
 Let $\theta_n$ be the angle between $s_n$ and $s$,
then $
|\sin\theta_{n}|=O(e^{-2^{n+1}\gamma})
$ by \eqref{angle-s-sn}.
Since  $s=\cos \theta_n  s_n+\sin\theta_n u_n,$ we have
\begin{eqnarray*}
\|A_{2n}s\|&=&\|\cos \theta_n A_{2n} s_n+\sin\theta_nA_{2n}u_n\|
\le \|A_{2n}\|^{-1}+|\sin \theta_n|\|A_{2n}\|\\
&=&O(e^{-2^{n}\gamma}).
\end{eqnarray*}

Now fix an  unit vector $v$ which is independent with $s$, denote the angle between $v$ and $ s_n$ in projective space by $\phi_n$.
Then  $\measuredangle(s,v)/2 \le \phi_n\le \pi/2$ for $n$ large enough.
Since $v=\cos\phi_n s_n+\sin \phi_n u_n,$ we have
$$
\|A_{2n}v\|=\|\cos\phi_n A_{2n}s_n+\sin \phi_n A_{2n}u_n\|\ge |\sin\phi_n|\|A_{2n}\|\gtrsim e^{2^n\gamma}.
$$
Thus the result follows.
\hfill $\Box$

We call $s$ and $\hat s$ the {\it stable directions} of $\{A_{2n}\}$ and $\{B_{2n}\}$, respectively. Recall that $v_\theta=(\cos\theta,-\sin\theta)^t.$

\begin{prop}\label{parallel-s-II}
  $s\parallel v_{-\frac{\eta \pi}{4}}$ and $\hat s\parallel v_{\frac{\eta \pi}{4}}$. Consequently
  \begin{equation}\label{s-hat-s}
  s\perp \hat s, \ \  Us=\eta s\ \ \text{ and } \ \  U\hat s=-\eta \hat s.
  \end{equation}
\end{prop}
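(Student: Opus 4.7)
The plan is to extract $s$ and $\hat s$ as kernel vectors of the fixed $2\times 2$ matrices $V+\eta W$ and $V-\eta W$ respectively, by matching the precise exponential rates already established in the previous subsections. Concretely, I will combine the decay estimate $\|A_{2n}s\|=O(e^{-2^n\gamma})$ from Proposition \ref{Oseledets-II} with the asymptotic $|t_{2n}|\sim e^{-2^n\gamma}$ from Theorem \ref{asy-t-mu-II} to obtain
\[
\|t_{2n}A_{2n}s\|=O(e^{-2^{n+1}\gamma}).
\]
On the other hand, Theorem \ref{exact-struc-even} gives $t_{2n}A_{2n}=\delta_n c(V+\eta W)+O(e^{-2^{n+1}\gamma})$ with $c\ne0$. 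Applying this to $s$ and using the triangle inequality,
\[
|c|\,\|(V+\eta W)s\|\le \|t_{2n}A_{2n}s\|+O(e^{-2^{n+1}\gamma})\longrightarrow 0,
\]
so $(V+\eta W)s=0$. The analogous argument with the second line of Theorem \ref{exact-struc-even} and the stable direction $\hat s$ of $\{B_{2n}\}$ yields $(V-\eta W)\hat s=0$.

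It then remains a direct linear-algebra computation: for $\eta=+$,
\[
V+W=\begin{pmatrix}1&-1\\ 1&-1\end{pmatrix},\qquad V-W=\begin{pmatrix}1&1\\ -1&-1\end{pmatrix},
\]
whose kernels are spanned by $(1,1)^t=\sqrt{2}\,v_{-\pi/4}$ and $(1,-1)^t=\sqrt{2}\,v_{\pi/4}$, respectively. Swapping the roles gives the $\eta=-$ case. In both cases this is exactly $s\parallel v_{-\eta\pi/4}$ and $\hat s\parallel v_{\eta\pi/4}$.

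Finally, the three consequences collected in \eqref{s-hat-s} follow by inspection: the directions $v_{-\eta\pi/4}$ and $v_{\eta\pi/4}$ are separated by angle $\eta\pi/2$, so $s\perp\hat s$; writing $s=v_{-\eta\pi/4}=(\cos(\eta\pi/4),\sin(\eta\pi/4))^t$, applying $U$ swaps the two coordinates, which yields $Us=\eta s$ when $\eta=\pm1$; the same computation with $\hat s=v_{\eta\pi/4}=(\cos(\eta\pi/4),-\sin(\eta\pi/4))^t$ gives $U\hat s=-\eta\hat s$.

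There is no serious obstacle here: every ingredient has already been set up, and the only thing to notice is that the matched rates $|t_{2n}|\sim e^{-2^n\gamma}$ and $\|A_{2n}s\|=O(e^{-2^n\gamma})$ multiply to precisely $e^{-2^{n+1}\gamma}$, which is exactly the error term in Theorem \ref{exact-struc-even}. It is this exact matching that allows one to squeeze out the kernel identity $(V\pm\eta W)\cdot =0$; without the sharp decay in Proposition \ref{Oseledets-II} one could not separate the main term from the error in the structure theorem.
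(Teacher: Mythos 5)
Your linear algebra at the end is fine, and the squeezing step itself is legitimate, but the argument is circular within the paper's development. You invoke Theorem \ref{exact-struc-even} in its sharp form $t_{2n}A_{2n}=\delta_n c(V+\eta W)+O(e^{-2^{n+1}\gamma})$, $t_{2n}B_{2n}=\delta_n\hat c(V-\eta W)+O(e^{-2^{n+1}\gamma})$. However, at the point where Proposition \ref{parallel-s-II} must be proved, only the weaker statement is actually available: the proof of Theorem \ref{exact-struc-even} establishes $t_{2n}A_{2n}=\delta_n C+O(e^{-2^{n+1}\gamma})$ for some nonzero matrix $C$ (and similarly with $\widehat C$) and explicitly postpones the identification $C=c(V+\eta W)$, $\widehat C=\hat c(V-\eta W)$ to Section \ref{4.3}; that identification is then carried out \emph{using} Proposition \ref{parallel-s-II} together with Proposition \ref{Oseledets-II} and Remark \ref{rembasic}. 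So deducing the proposition from the rank-one form of $C$ assumes exactly what the proposition is needed to establish.

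What your estimate really gives is $Cs=\vec 0$ and $\widehat C\hat s=\vec 0$ for the (unidentified) limit matrices. Even if you supplement this with Remark \ref{rembasic} and $\tr(t_{2n}A_{2n})=t_{2n}^2\to 0$, you only get $C=aV+bW$ with $a^2=b^2$, hence $s\parallel(1,1)^t$ or $(1,-1)^t$; this cannot decide which of $v_{\pm\pi/4}$ is $s$, i.e. it does not tie the sign to $\eta={\rm sign}(\mu_n)$, nor does it give $s\perp\hat s$ — and these signs are precisely what \eqref{s-hat-s} and the later symmetry $\vec{\psi}_{-n}=\eta U\vec{\psi}_n$ require. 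The paper avoids the circle by using the fully proved odd-index structure, Theorem \ref{exact-struc-odd}: if $s=v_\theta$ with $\theta\ne-\frac{\eta\pi}{4}+k\pi$, then $\|A_{2n-1}s\|=\frac{|t_{2n-1}|}{2}\|(\cos\theta+\eta\sin\theta)(1,-\eta)^t+O(e^{-2^{n-1}\gamma})\|\sim e^{2^{n}\gamma}$, whence $\|A_{2n}s\|=\|B_{2n-1}A_{2n-1}s\|\ge\|A_{2n-1}s\|/\|B_{2n-1}\|\sim 1$, contradicting $\|A_{2n}s\|=O(e^{-2^{n}\gamma})$. To repair your proof, route it through Theorem \ref{exact-struc-odd} (or some other input that already carries the $\eta$-dependence) rather than through the exact form of Theorem \ref{exact-struc-even}.
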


\begin{proof}
At first we show that $s\parallel v_{-\frac{\eta \pi}{4}}.$
If otherwise, $s=v_\theta$ with  $\theta\ne -\frac{\eta \pi}{4}+k\pi$, then $\cos\theta+\eta\sin\theta\ne 0.$
On the other hand, by \eqref{struc-AB-odd},
$$
\frac{A_{2n-1}}{t_{2n-1}}=\frac{I- \eta U}{2}+O(e^{-2^{n-1}\gamma}).
$$
Consequently,
$$
\|A_{2n-1}s\|=\frac{|t_{2n-1}|}{2}\|(\cos\theta+\eta\sin\theta)(1,-\eta)^t+O(e^{-2^{n-1}\gamma}) \|\sim e^{2^{n}\gamma}.
$$
By \eqref{struc-AB-odd}, $\|B_{2n-1}\|\sim e^{2^{n}\gamma}.$ Since $\det B_{2n-1}=1,$ we have $\|B_{2n-1}^{-1}\|=\|B_{2n-1}\|$. Thus
\[
\|A_{2n}s\|=\|B_{2n-1}A_{2n-1}s\|\geq \frac{\|A_{2n-1}s\|}{\|B_{2n-1}^{-1}\|}=\frac{\|A_{2n-1}s\|}{\|B_{2n-1}\|}\sim 1,
\]
 which  contradicts with \eqref{sub-decay-2n}.

 The same proof shows that $\hat s\parallel v_{\frac{\eta \pi}{4}}$.

 Since $s\parallel (1,\eta)^t$ and $\hat s\parallel (1,-\eta)^t$, \eqref{s-hat-s} follows easily.
\end{proof}

Now we can finish the proof of Theorem \ref{exact-struc-even}.

\noindent {\bf Proof of Theorem \ref{exact-struc-even}(continued).}\
By \eqref{A}, for  $n$ large enough,
$t_{2n}A_{2n}= \delta_n C+O(e^{-2^{n+1}\gamma}).$
Write
$$
C=
\begin{pmatrix}
c_{11}&c_{12}\\
c_{21}&c_{22}
\end{pmatrix}.
$$
By Proposition  \ref{parallel-s-II},  $s\parallel (1,\eta)^t$. By Proposition \ref{Oseledets-II},
for   $n$ large enough,
$$
\begin{cases}
|t_{2n}^{-1}(c_{11}+\eta c_{12}+O(e^{-2^{n+1}\gamma}))|&=O(e^{-2^n\gamma})\\
|t_{2n}^{-1}(c_{21}+\eta c_{22}+O(e^{-2^{n+1}\gamma}))|&=O(e^{-2^n\gamma})
\end{cases}
$$
On the other hand, by  Remark \ref{rembasic}, we have
$
t_{2n}^{-1}(c_{21}+ c_{12} )=O(e^{-2^n\gamma}).
$
Since $t_{2n}\sim e^{-2^n\gamma}$, we conclude that
$$c_{11}=-\eta c_{12}, \ \ c_{21}=-\eta c_{22}\ \ \text{ and } \ \ c_{21}=-c_{12}.
$$
This means that $C=c(V+\eta W)$ for some $c$ nonzreo.

The same argument shows that $\widehat C=\hat c (V-\eta W)$ for some $\hat c$ nonzero.
\hfill $\Box$

\medskip

%%%%%%%%%%%%%
%%%%%%%%%%%%%
Now we study the behaviors of $A_{2n+1} s$ and $B_{2n+1}\hat s$.

\begin{prop}%\label{addition-II}
Let  $c$ and $\hat c$ be given as in Theorem \ref{exact-struc-even}, then
\begin{equation}\label{const-limit-II}
A_{2n+1}s=\delta_n\hat c\hat s +O(e^{-2^{n+1}\gamma})\ \  \text{ and }\ \  B_{2n+1}\hat s=\delta_n c s +O(e^{-2^{n+1}\gamma}).
\end{equation}
\end{prop}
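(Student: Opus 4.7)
The plan is to start from the multiplicative identities $A_{2n+1}=B_{2n}A_{2n}$ and $B_{2n+1}=A_{2n}B_{2n}$, to substitute in the leading-order description of $t_{2n}A_{2n}$ and $t_{2n}B_{2n}$ supplied by Theorem \ref{exact-struc-even}, and then to pin down the limit constants by a short diagonal computation that uses only the symmetry of $A_{2n}$ and $B_{2n}$ recorded in Remark \ref{rembasic}. I will carry out the details for $A_{2n+1}s$; the argument for $B_{2n+1}\hat s$ is parallel, with the roles of $(A,c,s)$ and $(B,\hat c,\hat s)$ interchanged.

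The key algebraic input is how $V\pm\eta W$ act on the orthonormal pair $\{s,\hat s\}$ provided by \eqref{s-hat-s}. Using $s\propto(1,\eta)^{t}$ and $\hat s\propto(1,-\eta)^{t}$, a direct check gives
\begin{equation*}
(V+\eta W)s=0,\qquad (V+\eta W)\hat s=2s,\qquad (V-\eta W)s=2\hat s,\qquad (V-\eta W)\hat s=0.
\end{equation*}
If I decompose $A_{2n}s=\alpha_ns+\beta_n\hat s$ in this basis, it follows that $(V-\eta W)A_{2n}s=2\alpha_n\hat s$, so only the coefficient $\alpha_n=s^{t}A_{2n}s$ matters to leading order. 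By the relation $a_{12}=-a_{21}$ from Remark \ref{rembasic} together with $s=(1,\eta)^{t}/\sqrt 2$,
\begin{equation*}
\alpha_n=\tfrac{1}{2}(1,\eta)A_{2n}(1,\eta)^{t}=\tfrac{1}{2}(a_{11}+a_{22})=\tfrac{t_{2n}}{2}.
\end{equation*}

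Combining everything, I rewrite Theorem \ref{exact-struc-even} as $B_{2n}=\delta_n\hat c\,t_{2n}^{-1}(V-\eta W)+R_n$ with $\|R_n\|=O(|t_{2n}|^{-1}e^{-2^{n+1}\gamma})=O(e^{-2^n\gamma})$, and invoke Proposition \ref{Oseledets-II} to get $\|A_{2n}s\|=O(e^{-2^n\gamma})$. Then
\begin{equation*}
A_{2n+1}s=B_{2n}A_{2n}s=\delta_n\hat c\,t_{2n}^{-1}\cdot 2\alpha_n\hat s+R_nA_{2n}s=\delta_n\hat c\,\hat s+O(e^{-2^{n+1}\gamma}),
\end{equation*}
since $2\alpha_n=t_{2n}$ cancels the $t_{2n}^{-1}$, and $\|R_nA_{2n}s\|\le\|R_n\|\|A_{2n}s\|=O(e^{-2^{n+1}\gamma})$. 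For $B_{2n+1}\hat s=A_{2n}B_{2n}\hat s$ the scheme is identical: the symmetry $b_{12}=-b_{21}$ yields $\hat s^{t}B_{2n}\hat s=t_{2n}/2$, the action $(V+\eta W)\hat s=2s$ produces the constant $c$ in the $s$-direction, and the estimate $\|B_{2n}\hat s\|=O(e^{-2^n\gamma})$ controls the remainder. I do not foresee a substantive obstacle; the only delicate point is the exact equality $s^{t}A_{2n}s=t_{2n}/2$, which is forced purely by the antisymmetric off-diagonal structure of $A_{2n}$ together with the fact that $s$ is a $\pm1$-eigenvector of $U$, and it is precisely this exactness that converts the apparently divergent factor $t_{2n}^{-1}$ into the finite nonzero limit $\delta_n\hat c\hat s$.
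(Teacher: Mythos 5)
Your proof is correct and is essentially the paper's own argument in different packaging: both rest on the rank-one structure $V+\eta W=2s\hat s^t$, $V-\eta W=2\hat s s^t$ coming from Theorem \ref{exact-struc-even}, and on the exact cancellation forced by the antisymmetric off-diagonal entries of Remark \ref{rembasic} (your identity $s^tA_{2n}s=t_{2n}/2$ is the same fact the paper encodes as the trace identity $p+q=t_{2n}^2$ for its ``rank-one plus small diagonal'' decomposition). The only cosmetic difference is that you import the bound $\|A_{2n}s\|,\|B_{2n}\hat s\|=O(e^{-2^n\gamma})$ from Proposition \ref{Oseledets-II} to control the remainder, whereas the paper reads the same smallness off the diagonal remainder $p,q,\hat p,\hat q=O(e^{-2^{n+1}\gamma})$; both are legitimate at this point of the development.
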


\begin{proof}
By  Proposition \ref{parallel-s-II}, without loss of generality, we take $s=(1,\eta)^t/\sqrt{2}$ and $\hat s=(1,-\eta)^t/\sqrt{2}$.  By Remark \ref{rembasic} and  \eqref{struc-AB-even}, there exist real $c_*, p, q $ and $ \hat c_*, \hat p, \hat q$ such that
\begin{eqnarray}
\label{trace}t_{2n}A_{2n}&=& c_\ast
\begin{pmatrix}
1&-\eta\\
\eta&-1
\end{pmatrix}+
\begin{pmatrix}
p&0\\
0&q
\end{pmatrix}
=2c_\ast s \hat s^t+
\begin{pmatrix}
p&0\\
0&q
\end{pmatrix},\\
\label{trace-B}
t_{2n}B_{2n}&=& \hat c_\ast
\begin{pmatrix}
1&\eta\\
-\eta&-1
\end{pmatrix}+
\begin{pmatrix}
\hat p&0\\
0&\hat q
\end{pmatrix}
=2\hat c_\ast \hat s s^t+
\begin{pmatrix}
\hat p&0\\
0&\hat q
\end{pmatrix}
\end{eqnarray}
with
\begin{equation}\label{est-eta-ep}
|\delta_n c-c_\ast|,\ |\delta_n\hat c-\hat c_\ast| =O(e^{-2^{n+1}\gamma}),\
 p, q, \hat p, \hat q=O(e^{-2^{n+1}\gamma}).
\end{equation}

By taking the trace on both sides of \eqref{trace}, we get $p+q=t_{2n}^2$.
Since $A_{2n+1}=B_{2n}A_{2n}$ and $s\perp \hat s$, by \eqref{trace}, \eqref{trace-B}, \eqref{est-eta-ep} and direct computation,
\begin{eqnarray*}
t_{2n}^2A_{2n+1}s&=&\hat c_\ast (p+q)\hat s+
\frac{1}{\sqrt{2}}
\begin{pmatrix}
p\hat p\\
\eta q\hat q
\end{pmatrix}
=\hat c_\ast t_{2n}^2 \hat s+ O(e^{-2^{n+2}\gamma}).
\end{eqnarray*}
Consequently by \eqref{t-mu-II} and \eqref{est-eta-ep}, we have
$$
A_{2n+1}s=\hat c_\ast  \hat s+ t_{2n}^{-2}O(e^{-2^{n+2}\gamma})
=\hat c_\ast  \hat s+ O(e^{-2^{n+1}\gamma})=\delta_n\hat c \hat s +O(e^{-2^{n+1}\gamma}).
$$

Similarly, we get
$
B_{2n+1}\hat s=\delta_n c s +O(e^{-2^{n+1}\gamma}).
$
\end{proof}

 %%%%
%%%%
%%%%
\subsection{Subordinate solution}\label{4.5}\

In this subsection, we prove Theorem \ref{main-type-II}. As a preparation, we have

\begin{lemma}\label{stab-globe-II}
There exist $\alpha=\alpha(E)>0$ and  unit vector $\hat v$ such that for any $n\in\N$ and  any  vector $w \nparallel \hat v$, we have
$$
\|T_n w\|\gtrsim n^{-\alpha} e^{\frac{\gamma}{2}\sqrt{n}}.
$$
\end{lemma}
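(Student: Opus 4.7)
The plan is to adapt the Oseledets-style argument used for $\{A_{2n}\}$ in Proposition \ref{Oseledets-II} to the \emph{full} sequence $\{T_n\}_{n\ge 1}$. For each $n$, let $s_n, u_n\in\R^2$ be the orthonormal right singular directions of $T_n\in\mathrm{SL}(2,\R)$, so that $\|T_n s_n\|=\|T_n\|^{-1}$ and $\|T_n u_n\|=\|T_n\|$, and $T_n s_n\perp T_n u_n$. I will define $\hat v$ as the projective limit of $\{s_n\}$; granting the existence of this limit, the lower bound $\|T_n w\|\gtrsim n^{-\alpha}e^{\gamma\sqrt{n}/2}$ for $w\nparallel\hat v$ then follows from a standard singular-value computation together with the lower bound on $\|T_n\|$ already produced during the proof of Theorem \ref{main-norm}.

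The key step is therefore to show $\{s_n\}$ is Cauchy in $\mathbb{RP}^1$ at a sufficiently fast rate. Writing $s_n=\cos\beta_n\, s_{n+1}+\sin\beta_n\, u_{n+1}$ and applying $T_{n+1}=T_{n\to n+1}T_n$, the orthogonality of $T_{n+1}s_{n+1}$ and $T_{n+1}u_{n+1}$ gives
\[
|\sin\beta_n|\,\|T_{n+1}\|\le \|T_{n+1}s_n\|\le \|T_{n\to n+1}\|\,\|T_n s_n\|=\frac{\|T_{n\to n+1}\|}{\|T_n\|}.
\]
Since $T_{n\to n+1}$ is a single factor $\varrho(u_{n+1})$, its norm is bounded by a constant $K(\lambda,E)$. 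Invoking the sub-exponential lower bound $\|T_n\|\gtrsim n^{-\alpha}e^{\gamma\sqrt n/2}$ established in \eqref{T-n-lu} in the proof of Theorem \ref{main-norm} for type-II energies, one concludes $|\sin\beta_n|\lesssim n^{2\alpha}e^{-\gamma\sqrt n}$, which is summable. Hence $s_n$ converges projectively to some $\hat v\in\mathbb{RP}^1$.

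Finally, for any $w\nparallel\hat v$, the angle $\theta_n:=\measuredangle(w,s_n)$ tends to $\measuredangle(w,\hat v)\ne 0$, so $|\sin\theta_n|$ stays bounded below by a constant $c(w)>0$ for $n$ large. Decomposing $w=\|w\|(\cos\theta_n\, s_n+\sin\theta_n\, u_n)$ and using the orthogonality of the images yields
\[
\|T_n w\|^2=\|w\|^2\Bigl(\frac{\cos^2\theta_n}{\|T_n\|^2}+\sin^2\theta_n\,\|T_n\|^2\Bigr)\ge \|w\|^2\sin^2\theta_n\,\|T_n\|^2,
\]
so $\|T_n w\|\ge \|w\|\,|\sin\theta_n|\,\|T_n\|\gtrsim n^{-\alpha}e^{\gamma\sqrt n/2}$ for large $n$, where the implicit constant depends on $w$; the finitely many small $n$ can be absorbed by enlarging this constant. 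The main obstacle of the argument is obtaining Cauchyness of $\{s_n\}$ at a rate fast enough to overcome the sub-exponential growth — precisely what the lower bound from Theorem \ref{main-norm} delivers, so no extra analysis of the structural matrices $A_n,B_n$ is needed in this step.
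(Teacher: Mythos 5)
Your proposal is correct and follows essentially the same route as the paper's own proof: the same Oseledets-type singular value decomposition argument for the full sequence $\{T_n\}$, with the contraction rate of the right singular directions controlled by $|\sin\beta_n|\lesssim\|T_n\|^{-2}\lesssim n^{2\alpha}e^{-\gamma\sqrt n}$ via the lower bound \eqref{T-n-lu}, followed by the same angle-decomposition estimate for $w\nparallel\hat v$. The only cosmetic differences (the explicit boundedness of the single-step factor $\varrho(u_{n+1})$ and absorbing finitely many small $n$) are implicit in the paper's version as well.
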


\begin{proof}
By singular value decomposition,  there exist unit vectors $s_n$,
$u_n$, $x_n$ and $y_n$ such that $s_n\perp u_n$, $x_n\perp y_n$ and
$
T_ns_n=\|T_n\|^{-1}x_n, T_nu_n=\|T_n\|y_n.
$
Write $\beta_n=\angle(s_n,s_{n+1}),$ then
$
s_n=s_{n+1}\cos \beta_n+u_{n+1}\sin\beta_n.
$
Since  $T_{n+1}=T_{n\to n+1}T_n$, we have
\begin{equation*}
\|T_{n+1}s_n\|\ge \|T_{n+1}u_{n+1}\sin\beta_n\|=|\sin \beta_n| \|T_{n+1}\|\gtrsim|\sin \beta_n| \|T_{n}\| .
\end{equation*}
On the other hand
\begin{equation*}
\|T_{n+1}s_n\|=\|T_{n\to n+1}T_ns_n\|\lesssim\|T_ns_n\|\lesssim \|T_n\|^{-1}.
\end{equation*}
Combine with \eqref{T-n-lu}, we have
\begin{equation*}
|\sin \beta_n|\lesssim \|T_n\|^{-2}\lesssim n^{2\alpha} e^{-\gamma\sqrt{n}}.
\end{equation*}
This implies that $s_n$ converges to an unit vector $\hat v$ in the projective space.

Now assume unit vector $w$ is independent with $\hat v$, let $\varphi$ be the angle of $w$ and $\hat v$ in projective space, then  $0<\varphi<\pi/2$.
Let $\theta_n$ be the angle of $w$ and $s_n$ in projective space, then when $n$ large enough, we have
$
{\varphi}/{2}<\theta_n<{\pi}/{2}.
$
By $w=s_n\cos\theta_n+u_n\sin\theta_n$ and \eqref{T-n-lu}, we get
\begin{equation*}
\|T_nw\|\ge |\sin\theta_n|\|T_nu_n\|\ge
|\sin\frac{\varphi}{2}|\|T_n\|\gtrsim n^{-\alpha}e^{\frac{\gamma}{2}\sqrt{n}}.
\end{equation*}
Now for any $w\nparallel \hat v$, we have $\|T_n w\|=\|w\|\|T_n(w/\|w\|)\|\gtrsim n^{-\alpha}e^{\frac{\gamma}{2}\sqrt{n}}.$
\end{proof}

Recall that any solution of $H_\lambda\phi=E\phi$ is uniquely determined by $\vec{\phi}_0=(\phi_1,\phi_0)^t$.
We denote by $\psi^\theta$ the solution with $\vec{\psi}^\theta_0=v_\theta.$

\smallskip

\noindent {\bf Proof of Theorem \ref{main-type-II}.\ }
We will show that $\psi^{-\eta\pi/4}$ is the desired subordinate solution. By  Proposition \ref{parallel-s-II}, without loss of generality, we take $s=(1,\eta)^t/\sqrt{2}=v_{-\eta\pi/4}, $ where $s$ is the stable direction of $\{A_{2n}:n\ge1\}$.

At first we note that, if $\phi$ is a solution of $H_\lambda\phi=E\phi$, then by Lemma \ref{dp} (iii),  for any $n\in \N,$
\begin{equation}\label{U-phi-n}
U\vec{\phi}_{-n}=T_n U\vec{\phi}_0.
\end{equation}

Next we claim that  $\hat v\parallel s$, where $\hat v$ is from Lemma \ref{stab-globe-II}.
Indeed if on the contrary, by Lemma \ref{stab-globe-II},
$$
\|T_ns\|\gtrsim n^{-\alpha}e^{\frac{\gamma}{2}\sqrt{n}}\to \infty.
$$
On the other hand, by Proposition \ref{Oseledets-II} and Proposition \ref{parallel-s-II}, we have
$$
\|T_{2^{2n}}s\|=\|A_{2n}s\|=O(e^{-2^n\gamma})\to 0.
$$
 Thus we get a contradiction.

As a result, we have $\hat v\parallel s$.  Let $\psi=\psi^{-\eta\pi/4}$, then $\vec{\psi}_0=s$ and $\vec{\psi}_n=T_ns$. By \eqref{s-hat-s},  $U\vec{\psi}_0=\eta\vec{\psi}_0$. By \eqref{U-phi-n},  we have
\begin{equation}\label{symmetry}
\vec{\psi}_{-n}=\eta U\vec{\psi}_n.
\end{equation}
Thus  the second equation of \eqref{subord-II} follows from \eqref{sub-decay-2n} and \eqref{symmetry};
the third equation of \eqref{subord-II} follows from \eqref{s-hat-s}, \eqref{const-limit-II} and \eqref{symmetry}.
Assume $\phi$ is
independent with $\psi$. Since  $\vec{\psi}_0=s=v_{-\eta\pi/4}$,  both $\vec{\phi}_0 $ and $U\vec{\phi}_0$ are not parallel to $\vec{\psi}_0.$ Thus  \eqref{non-subord-II} follows from
Lemma \ref{stab-globe-II} and \eqref{U-phi-n}.

Now we show the first equation of \eqref{subord-II}. By  \eqref{symmetry}, we only need to consider $\|\vec{\psi}_n\|.$ We know that
$
\vec{\psi}_n=T_ns.
$
Let $k$ be the unique integer such that
$2^{2k}\le n<2^{2(k+1)}$. We discuss two cases.

{\bf Case I}: $2^{2k}\le n< 2^{2k+1}$. In this case, by \eqref{parlin},  we have
$$
T_n=T_{2^{2k}\to n}T_{2^{2k}}=\overline{T}_{n-2^{2k}}A_{2k}.
$$
Since $n-2^{2k}<2^{2k}$, by \eqref{up-T-k-II},
\begin{equation*}
\|\overline{T}_{n-2^{2k}}\|\le C^ke^{2^k\gamma}\le n^\alpha e^{2^k\gamma}.
\end{equation*}
Combine with \eqref{sub-decay-2n}, we get
\begin{equation}\label{up-globe-1-II}
\|\vec{\psi}_n\|=\|T_ns\|\le \|\overline{T}_{n-2^{2k}}\|\|A_{2k}s\|\lesssim n^\alpha.
\end{equation}

{\bf Case II}: $2^{2k+1}\le n< 2^{2(k+1)}$. In this case, by \eqref{parlin} and \eqref{inverse},  we have
$$
T_n=T_{ n\to 2^{2(k+1)}}^{-1}T_{2^{2(k+1)}}=UT_{2^{2(k+1)}-n}UA_{2(k+1)}.
$$
Since $2^{2(k+1)}-n<2^{2k+1}$, by \eqref{up-T-k-II},
\begin{equation*}
\|T_{2^{2(k+1)}-n}\|\le C^{k+1}e^{2^{k+1}\gamma}\le n^\alpha e^{2^{k+1}\gamma}.
\end{equation*}
Combine with \eqref{sub-decay-2n}, we get
\begin{equation}\label{up-globe-2-II}
\|\vec{\psi}_n\|=\|T_ns\|\le \|T_{2^{2(k+1)}-n}\|\|A_{2(k+1)}s\|\lesssim n^\alpha.
\end{equation}
The first equation of \eqref{subord-II} now follows from \eqref{up-globe-1-II} and \eqref{up-globe-2-II}.
\hfill $\Box$

\section{Type-III energy}\label{sec-type-III}

 In this section, we prove Theorem \ref{main-norm} for type-III energy and Theorem \ref{main-type-III}. Basically we follow the strategy of last section. Since considerable part of the proof is the same, we only sketch it, and focus  on the part which is different.

 Throughout this section, we fix  $E\in \Sigma_{III}$.

\subsection{Exact asymptotic behaviors  of $t_n$, $\mu_n$, $\nu_n$ and $\omega_n$   }\label{5.1}\

\begin{thm}%\label{asy-t-mu-III}
There exists  $\gamma=\gamma(E)>0$ such that
\begin{equation}\label{t-mu-III}
\begin{cases}
{\displaystyle
\lim_{n\rightarrow\infty}\frac{|t_{2n}(E)|}{e^{2^{n}\gamma}}=\frac{1}{2},\
\lim_{n\rightarrow\infty}\frac{|t_{2n+1}(E)|}{e^{-2^{n}\gamma}}=2,\
\lim_{n\rightarrow\infty}\frac{|\mu_{n}(E)|}{e^{2^{n-1}\gamma}}=\frac{1}{\sqrt{2}},}\\
{\displaystyle\lim_{n\to\infty}\frac{|\nu_{n}(E)|}{e^{2^n
\gamma}}=\frac{\sec\theta(E)}{2}, \ \  \ \ \ \ \ \ \ \
\lim_{n\to\infty}\frac{|\omega_{n}(E)|}{e^{2^n
\gamma}}=\frac{|\tan\theta(E)|}{2}.}
\end{cases}
\end{equation}
where $\theta(E)$ is defined in Lemma \ref{energy-coupling inequality}.
\end{thm}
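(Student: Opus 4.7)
The plan is to mirror the strategy already used for type-II energies (Theorem~\ref{asy-t-mu-II} combined with Proposition~\ref{more-precise-II}), but with the roles of odd and even indices interchanged. By Lemma~\ref{char-II-III}, for $E\in\Sigma_{III}$ we have $t_{2n-1}(E)\to 0$ and $t_{2n}(E)\to-\infty$, which is precisely the type-II picture shifted by one. So I would start by writing
\[
|t_{2n}(E)|=e^{2^{n}\tilde\gamma_{2n}},\qquad |t_{2n+1}(E)|=e^{-2^{n}\tilde\gamma_{2n+1}},
\]
with $\tilde\gamma_n>0$ eventually, and prove that $\{\tilde\gamma_n\}$ is Cauchy with a common positive limit $\gamma$.

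The two input recurrences come from \eqref{recurrence}. From $t_{2n+2}=t_{2n}^{2}(t_{2n+1}-2)+2$ and $t_{2n+1}\to 0$ I would extract
\[
\lim_{n\to\infty}2^{n+1}(\tilde\gamma_{2n+2}-\tilde\gamma_{2n})=\ln 2,
\]
and from $t_{2n+1}=t_{2n-1}^{2}(t_{2n}-2)+2$, using $t_{2n}\to-\infty$ and $t_{2n+1}\to 0$, one sees $t_{2n-1}^{2}\,|t_{2n}|\to 2$, which gives
\[
\lim_{n\to\infty}2^{n}(\tilde\gamma_{2n}-\tilde\gamma_{2n-1})=\ln 2.
\]
Together these force $\tilde\gamma_{n+1}-\tilde\gamma_n=O(2^{-n/2})$, so the limit $\gamma:=\lim_n\tilde\gamma_n$ exists; and $\gamma>0$ because $|t_{2n}|\to\infty$. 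Telescoping the two relations yields $\gamma-\tilde\gamma_{2n}\sim (\ln 2)/2^{n}$ and $\gamma-\tilde\gamma_{2n-1}\sim (2\ln 2)/2^{n}$, which immediately give the first two claimed limits $|t_{2n}|/e^{2^n\gamma}\to 1/2$ and $|t_{2n+1}|/e^{-2^n\gamma}\to 2$.

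With these in hand, the remaining three limits are algebraic consequences. For $\mu_n$ I would use \eqref{A-B-odd}: $\mu_n^{2}=t_{2n-1}^{2}-t_{2n}-2$. Since $t_{2n-1}\to 0$, the dominant term is $-t_{2n}=|t_{2n}|\sim e^{2^n\gamma}/2$, so $|\mu_n|\sim e^{2^{n-1}\gamma}/\sqrt{2}$. For $\nu_n$ and $\omega_n$ I would use \eqref{A-B-even} and \eqref{kappa} to write $t_{2n+1}=t_{2n}^{2}-(1-\kappa^{2})\nu_n^{2}-2$; since $t_{2n+1}\to 0$, this yields $(1-\kappa^{2})\nu_n^{2}\sim t_{2n}^{2}$, hence $|\nu_n|\sim |t_{2n}|/\cos\theta(E)=\sec\theta(E)\,e^{2^n\gamma}/2$, and $|\omega_n|=|\kappa|\,|\nu_n|\sim|\tan\theta(E)|\,e^{2^n\gamma}/2$.

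The one delicate point—where care is needed relative to the type-II proof—is that in type-III the ``divergent'' trace is $t_{2n}$ rather than $t_{2n-1}$, so $\mu_n^{2}$ is dominated by $|t_{2n}|$ instead of by $t_{2n-1}^{2}$. This is why the scale for $|\mu_n|$ becomes $e^{2^{n-1}\gamma}$ (half the exponent of the type-II case) with constant $1/\sqrt{2}$, rather than $e^{2^{n}\gamma}$ with constant $1/2$. I expect this bookkeeping—tracking which index carries the dominant growth, and keeping the factors of $2$ in the telescoping sums consistent—to be the only real obstacle; once the indexing is correctly set up, the argument is essentially a rerun of Theorem~\ref{asy-t-mu-II} and Proposition~\ref{more-precise-II}. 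A parallel refinement along the lines of Proposition~\ref{more-precise-II}, writing $|t_{2n}|=\tfrac12 e^{2^n\gamma+\zeta_{2n}}$ etc., would then furnish error bounds of the form $O(e^{-2^n\gamma})$ that will be needed later for the structural theorems on $A_n$ and $B_n$.
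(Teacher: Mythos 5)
Your proposal is correct and follows essentially the same route as the paper, which itself proves the type-III asymptotics by repeating the type-II argument (Theorem \ref{asy-t-mu-II}) with the roles of odd and even indices exchanged and then deducing the $\mu_n,\nu_n,\omega_n$ limits from \eqref{A-B-odd}, \eqref{A-B-even} and \eqref{kappa}. Your bookkeeping of the exponents — in particular that $\mu_n^2$ is now dominated by $|t_{2n}|$, giving the scale $e^{2^{n-1}\gamma}$ with constant $1/\sqrt{2}$ — matches the paper's statement and proof.
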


\begin{proof}
Since $E\in \Sigma_{III}$, by Lemma \ref{char-II-III},
\begin{equation*}\label{0-infty-III}
\lim_{n\rightarrow\infty}|t_{2n+1}(E)|=0\ \ \ \text{ and }\ \ \ \lim_{n\rightarrow\infty}t_{2n}(E)=-\infty.
\end{equation*}
   Write
$$
|t_{2n}(E)|=e^{2^n\gamma_{2n}}\ \ \ \text{ and }\ \ \
|t_{2n+1}(E)|=e^{-2^n\gamma_{2n+1}}.
$$
By repeating  the proof of Theorem \ref{asy-t-mu-II}, we can show that $\lim_n \gamma_n=\gamma>0$ exists.
Moreover the first two equalities of \eqref{t-mu-III} can be obtained in the same way.
The other three equalities follows from
 \eqref{A-B-odd},  \eqref{A-B-even} and \eqref{kappa}.
\end{proof}

We also need the following more precise descriptions of $t_n(E), \nu_n(E)$ and $\omega_n(E)$ to study the structure of $A_n$ and $B_n$.

 \begin{prop}
We have the following expansions:
\begin{equation}\label{t-prod-III}
\begin{array}{rcl}
|t_{2n}(E)|e^{-2^n\gamma}&=&\frac{1}{2}+O( e^{-2^n\gamma})\\
|t_{2n+1}(E)|e^{2^{n}\gamma}&=&2+O( e^{-2^n\gamma})\\
|\nu_n(E)| e^{-2^n\gamma}&=&\frac{\sec \theta(E)}{2}+O( e^{-2^n\gamma})\\
|\omega_n(E)| e^{-2^n\gamma}&=&\frac{|\tan \theta(E)|}{2}+O( e^{-2^n\gamma}).
\end{array}
\end{equation}
\end{prop}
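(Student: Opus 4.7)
The strategy mirrors Proposition \ref{more-precise-II} for the type-II case, with the roles of odd and even indices swapped to match the type-III definition. First I would set
\begin{equation*}
|t_{2n}(E)|=\tfrac{1}{2}\,e^{2^{n}\gamma+\zeta_{2n}},\qquad |t_{2n+1}(E)|=2\,e^{-2^{n}\gamma+\zeta_{2n+1}},
\end{equation*}
so that the leading-order asymptotics \eqref{t-mu-III} translate to $\zeta_n\to 0$. The first two expansions in \eqref{t-prod-III} are then equivalent to the quantitative bounds $|\zeta_{2n}|,|\zeta_{2n+1}|\lesssim e^{-2^{n}\gamma}$, which I would extract from the two instances of the recurrence \eqref{recurrence}, namely $t_{2n+2}=t_{2n}^{2}(t_{2n+1}-2)+2$ and $t_{2n+1}=t_{2n-1}^{2}(t_{2n}-2)+2$.

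For the first recurrence, since $t_{2n}<-2$ and $t_{2n+2}<0$ for $n$ large (by Lemma \ref{basic}(iii) together with $t_{2n+2}\approx-2t_{2n}^{2}$), substituting the Ansatz and dividing by $\tfrac{1}{2}e^{2^{n+1}\gamma}$ yields
\begin{equation*}
e^{\zeta_{2n+2}-2\zeta_{2n}}=1\pm e^{-2^{n}\gamma+\zeta_{2n+1}}-4\,e^{-2^{n+1}\gamma-2\zeta_{2n}}=1+O(e^{-2^{n}\gamma}),
\end{equation*}
so $|\zeta_{2n+2}-2\zeta_{2n}|\le Ce^{-2^{n}\gamma}$ for $n$ large. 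Writing $\zeta_{2n}=\tfrac{1}{2}\zeta_{2n+2}+O(e^{-2^{n}\gamma})$ and iterating gives $\zeta_{2n}=2^{-k}\zeta_{2n+2k}+O\bigl(\sum_{j=0}^{k-1}2^{-j}e^{-2^{n+j}\gamma}\bigr)$; letting $k\to\infty$, the first term vanishes and the geometric tail is dominated by $Ce^{-2^{n}\gamma}$, so $|\zeta_{2n}|\lesssim e^{-2^{n}\gamma}$. Feeding this back into the second recurrence and performing the analogous expansion (using $|\zeta_{2n}|\lesssim e^{-2^{n}\gamma}$) then gives the comparable bound for $|\zeta_{2n+1}|$, completing the first two expansions of \eqref{t-prod-III}.

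For the last two expansions, I would use the algebraic identity from \eqref{A-B-even} together with $\omega_{n}=\kappa(E)\nu_{n}$ and $\sin\theta(E)=\kappa(E)$ from Lemma \ref{energy-coupling inequality} to rewrite
\begin{equation*}
\cos^{2}\theta(E)\,\nu_{n}^{2}(E)=(1-\kappa^{2}(E))\,\nu_{n}^{2}(E)=t_{2n}^{2}(E)-t_{2n+1}(E)-2.
\end{equation*}
Inserting the already-established expansion $t_{2n}^{2}=\tfrac{1}{4}e^{2^{n+1}\gamma}(1+O(e^{-2^{n}\gamma}))$ and the bound $t_{2n+1}=O(e^{-2^{n}\gamma})$, the right-hand side equals $\tfrac{1}{4}e^{2^{n+1}\gamma}(1+O(e^{-2^{n}\gamma}))$, and taking positive square roots (using $\sec\theta(E)>0$ since $\theta(E)\in(-\pi/2,\pi/2)$) gives the third expansion. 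Multiplying by $|\kappa(E)|=|\sin\theta(E)|$ then gives the fourth.

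The main difficulty is purely bookkeeping: one must pin down signs at each step (in particular that $t_{2n+2}<0$ for large $n$, so that $|t_{2n+2}|=-t_{2n+2}$) in order to convert the recurrence cleanly into an equation about the exponentials, and one must verify that the geometric tail in the iteration collapses to a single $O(e^{-2^{n}\gamma})$ rather than accumulating an additional polynomial factor in $n$. The latter is guaranteed by the double-exponential decay $e^{-2^{n+j}\gamma}$ across the tail, which absorbs the factor $2^{-j}$ trivially.
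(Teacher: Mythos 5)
Your proposal is correct and follows essentially the same route as the paper: the ansatz $\zeta_n$ plus the two instances of the recurrence \eqref{recurrence} (with even/odd roles swapped relative to Proposition \ref{more-precise-II}) for the first two expansions, and the identity $t_{2n+1}=t_{2n}^2-(1-\kappa^2)\nu_n^2-2$ from \eqref{A-B-even} and \eqref{kappa} for the last two. The paper handles that identity by factoring $\bigl||t_{2n}|-|\nu_n|\cos\theta\bigr|=|2+t_{2n+1}|/(|t_{2n}|+|\nu_n|\cos\theta)$ rather than taking square roots, but this is only a cosmetic difference.
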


\begin{proof}
  The proof of the first two equations of \eqref{t-prod-III} is similar   as that of Proposition \ref{more-precise-II}.
  By \eqref{A-B-even}, \eqref{kappa} and  \eqref{t-mu-III},  for $n$ large enough, we have
$$
\big||t_{2n}(E)|-|\nu_n(E)|\cos\theta(E)\big|=\frac{|2+t_{2n+1}(E)|}{|t_{2n}(E)|+|\nu_n(E)|\cos\theta(E)}\le 6e^{-2^n\gamma}.
$$
Then the third equation of \eqref{t-prod-III} holds.
By \eqref{kappa},  the last equation of \eqref{t-prod-III} holds.
\end{proof}

\subsection{Structures  of $A_n$ and $B_n$}\label{5.2}\

Write  $\varepsilon_n:={\rm sign}(\nu_n)$. By Lemma \ref{recur-mu-nu-omega} and Lemma \ref{char-II-III},
$\varepsilon_n$ is eventually  constant, which we denote by $\hat \eta$.
By \eqref{kappa}, the sign of $\omega_n(E)$ is equal to the sign of $\nu_n(E)\sin\theta(E)$.
Write
\begin{equation}\label{const-III}
\begin{array}{rl}
C_{\pm}(E):=&I\mp\sec\theta(E)\  V\mp\tan\theta(E)\  W\\
=&\begin{pmatrix}
1\mp\sec\theta(E)&\pm\tan\theta(E)\\
\mp\tan\theta(E)&1\pm\sec\theta(E)
\end{pmatrix}.\end{array}
\end{equation}

\begin{thm}\label{exact-struc-even-III}
  For $n$ large enough, we have
\begin{equation}\label{struc-A-even-III}
\frac{A_{2n}}{t_{2n}}= \frac{C_{\hat \eta}(E)}{2}
+O(e^{-2^{n-1}\gamma}),\quad
\frac{B_{2n}}{t_{2n}}=\frac{C_{-\hat \eta}(E)}{2}+O(e^{-2^{n-1}\gamma}).
%\label{struc-B-even-III}
\end{equation}
\end{thm}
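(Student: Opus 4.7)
The plan is to mimic the argument used for Theorem~\ref{exact-struc-odd}, now built on the even-block recurrence \eqref{recur-AB-even}. Dividing its first line by $t_{2n+2}$ gives
\begin{equation*}
\frac{A_{2n+2}}{t_{2n+2}}=\frac{t_{2n}^{2}\,t_{2n+1}}{t_{2n+2}}\cdot\frac{A_{2n}}{t_{2n}}-\frac{t_{2n}\nu_{n}}{t_{2n+2}}V-\frac{t_{2n}\omega_{n}}{t_{2n+2}}W+\frac{1-t_{2n}^{2}}{t_{2n+2}}I.
\end{equation*}
Rewriting \eqref{recurrence} in the form $t_{2n}^{2}t_{2n+1}=(t_{2n+2}-2)+2t_{2n}^{2}$ and applying the asymptotics \eqref{t-mu-III}, the coefficient $t_{2n}^{2}t_{2n+1}/t_{2n+2}$ is $O(e^{-2^{n}\gamma})$, while the remaining three coefficients are $O(1)$. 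Lemma~\ref{elementary} then yields boundedness of $\{\|A_{2n}/t_{2n}\|\}$.

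Next I would replace the rough asymptotics by the refined expansions \eqref{t-prod-III}. Tracking signs carefully — namely $t_{2n}<0$ eventually by Lemma~\ref{basic}(iii), $\nu_{n}$ eventually of sign $\hat\eta$, and $\omega_{n}=\nu_{n}\sin\theta(E)$ by \eqref{kappa} — one computes
\begin{equation*}
\frac{1-t_{2n}^{2}}{t_{2n+2}}=\frac{1}{2}+O(e^{-2^{n}\gamma}),\quad \frac{t_{2n}\nu_{n}}{t_{2n+2}}=\frac{\hat\eta\sec\theta(E)}{2}+O(e^{-2^{n}\gamma}),\quad \frac{t_{2n}\omega_{n}}{t_{2n+2}}=\frac{\hat\eta\tan\theta(E)}{2}+O(e^{-2^{n}\gamma}).
\end{equation*}
Plugging these into the displayed recurrence and using the just-established boundedness of $A_{2n}/t_{2n}$ gives
\begin{equation*}
\frac{A_{2n+2}}{t_{2n+2}}=\frac{I-\hat\eta\sec\theta(E)\,V-\hat\eta\tan\theta(E)\,W}{2}+O(e^{-2^{n}\gamma})=\frac{C_{\hat\eta}(E)}{2}+O(e^{-2^{n}\gamma}),
\end{equation*}
which, reindexed with $N=n+1$, is the first equation of \eqref{struc-A-even-III}. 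For the companion statement, applying the same procedure to the second line of \eqref{recur-AB-even} changes only the signs in front of the $V$- and $W$-terms, so the limit becomes $(I+\hat\eta\sec\theta(E)\,V+\hat\eta\tan\theta(E)\,W)/2=C_{-\hat\eta}(E)/2$.

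The main obstacle is the delicate cancellation in $(1-t_{2n}^{2})/t_{2n+2}$, where numerator and denominator both grow like $e^{2^{n+1}\gamma}$, and the careful sign bookkeeping needed to produce the coefficients $\hat\eta\sec\theta(E)/2$ and $\hat\eta\tan\theta(E)/2$ with the advertised $O(e^{-2^{n}\gamma})$ error. Once the refined expansions \eqref{t-prod-III} are used to extract the leading constants, the rest is a routine adaptation of the Cauchy-sequence reasoning from Theorem~\ref{exact-struc-odd}.
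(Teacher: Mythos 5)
Your argument is correct and is essentially the paper's own proof: divide the even-index recurrence \eqref{recur-AB-even} by $t_{2n+2}$, observe that the coefficient of $A_{2n}/t_{2n}$ is $O(e^{-2^{n}\gamma})$ while the other coefficients are bounded, invoke Lemma \ref{elementary} for boundedness of $\|A_{2n}/t_{2n}\|$, and then feed back the refined expansions \eqref{t-prod-III} together with the sign bookkeeping for $t_{2n}$, $\nu_n$, $\omega_n$ to identify the limit $C_{\pm\hat\eta}(E)/2$ with the stated error. The only cosmetic remark is that the detour through the identity $t_{2n}^{2}t_{2n+1}=(t_{2n+2}-2)+2t_{2n}^{2}$ is unnecessary (and by itself, with only \eqref{t-mu-III}, would give merely $o(1)$); the direct product estimate $|t_{2n}|^{2}|t_{2n+1}|/|t_{2n+2}|\sim e^{-2^{n}\gamma}$ already yields the needed rate, exactly as in the paper.
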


\begin{proof}%[Proof of Theorem \ref{exact-struc-even-III}]
By \eqref{recur-AB-even}, we have
\begin{equation}\label{III-even}
\frac{A_{2n+2}}{t_{2n+2}}=
\frac{t_{2n}t_{2n+1}}{t_{2n+2}}A_{2n}-\frac{t_{2n}\nu_{n}}{t_{2n+2}}V-\frac{t_{2n}
\omega_n}{t_{2n+2}}W+\frac{(1-t_{2n}^2)}{t_{2n+2}}I
\end{equation}
Then by \eqref{t-prod-III},  there exist two constants $c, d>0$ such that
$$\left\|{\frac{A_{2n+2}}{t_{2n+2}}}\right\|\le c+d e^{-2^{n}\gamma}\left\|{\frac{A_{2n}}{t_{2n}}}\right\|.$$
By Lemma \ref{elementary},
$\left\|{{A_{2n}}/{t_{2n}}}\right\|=O(1).$
Together with \eqref{III-even} and \eqref{t-prod-III}, we get
$$
\frac{A_{2n+2}}{t_{2n+2}}=\frac{1}{2}(I-\varepsilon_n\sec\theta\  V-\varepsilon_n\tan\theta\  W)+O(e^{-2^{n}\gamma}).
$$
This proves the first equation of \eqref{struc-A-even-III}.

The proof of the second equation of \eqref{struc-A-even-III} is analogous.
\end{proof}

Since $E\in \Sigma_{III}$, $t_n(E)\ne0$ for all $n\in \N$. Write  $\hat \xi_n=-{\rm sign}(t_{2n-1})$ and $\hat \delta_n=\prod_{j=1}^n\hat \xi_j$.

\begin{thm}\label{exact-struc-odd-III}
For $n$ large enough, we have
 \begin{eqnarray*}\label{struc-AB-odd-III}
t_{2n-1}A_{2n-1}&=& \pm\frac{\hat\delta_n U C_{\hat\eta}(E)}{\sqrt{2}}
+O(e^{-2^{n}\gamma}),\\
t_{2n-1}B_{2n-1}&=&\mp\frac{\hat \delta_n UC_{-\hat\eta}(E)}{\sqrt{2}} +O(e^{-2^{n}\gamma}).
\end{eqnarray*}
\end{thm}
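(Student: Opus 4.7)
The plan is to mirror the proof of Theorem \ref{exact-struc-even}, applied this time to the subsequence $\{A_{2n-1}\}_n$ and driven by the odd-indexed recurrence \eqref{recur-AB-odd}. Multiplying $A_{2n+1}=t_{2n}t_{2n-1}A_{2n-1}-t_{2n-1}\mu_n U+(1-t_{2n-1}^2)I$ through by $t_{2n+1}$ yields
\begin{equation*}
t_{2n+1}A_{2n+1}=(t_{2n+1}t_{2n})\cdot t_{2n-1}A_{2n-1}-(t_{2n+1}t_{2n-1}\mu_n)U+t_{2n+1}(1-t_{2n-1}^2)I.
\end{equation*}
The exact asymptotics from \eqref{t-mu-III} and \eqref{t-prod-III} give $|t_{2n+1}t_{2n}|=1+O(e^{-2^n\gamma})$ for the transfer coefficient and $O(e^{-2^n\gamma})$ for each of the two source terms, so Lemma \ref{elementary} supplies the uniform bound $\|t_{2n-1}A_{2n-1}\|=O(1)$.

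Sign tracking comes next. Since $t_{2n}<-2$ eventually by Lemma \ref{basic}(iii) and $\mathrm{sign}(t_{2n-1})=-\hat\xi_n$, I get $\mathrm{sign}(t_{2n+1}t_{2n})=\hat\xi_{n+1}$ for $n$ large. Multiplying the recurrence by $\hat\delta_{n+1}$ and using $\hat\delta_{n+1}\hat\xi_{n+1}=\hat\delta_n$ telescopes it into $\hat\delta_{n+1}t_{2n+1}A_{2n+1}-\hat\delta_n t_{2n-1}A_{2n-1}=O(e^{-2^n\gamma})$, so $\{\hat\delta_n t_{2n-1}A_{2n-1}\}$ is Cauchy with some limit $D$. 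The companion $B$-statement then reduces to the $A$-statement via Remark \ref{rembasic}: the relations $a'_{ii}=b'_{ii}$, $a'_{12}=-b'_{21}$, $a'_{21}=-b'_{12}$ give the identity $A_{2n-1}+UB_{2n-1}U=t_{2n-1}I$, which passes to the limit as $\hat\delta_n t_{2n-1}B_{2n-1}\to -UDU$. Combined with the anticommutation $UM+MU=0$ for $M=\sec\theta\,V+\tan\theta\,W$ (giving $UC_{-\hat\eta}=C_{\hat\eta}U$), the identification $D=\pm UC_{\hat\eta}(E)/\sqrt{2}$ automatically produces $-UDU=\mp UC_{-\hat\eta}(E)/\sqrt{2}$, so only the first equation of the theorem needs to be verified.

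Identifying $D$ itself is the last step. Immediately $\tr(D)=\lim t_{2n-1}^2=0$ and $\det(D)=\lim t_{2n-1}^2\det A_{2n-1}=0$, so $D$ is nilpotent of rank at most one. To locate its kernel, I would adapt the Oseledets argument of Propositions \ref{Oseledets-II} and \ref{parallel-s-II} to the subsequence $\{A_{2n+1}\}_n$: a singular-value-decomposition Cauchy argument produces a unit vector $s$ with $\|A_{2n+1}s\|\lesssim e^{-2^n\gamma}$, forcing $Ds=0$. If $s$ were not in $\ker C_{\hat\eta}(E)=\mathrm{span}(v_{-\theta/2})$ (take $\hat\eta=+$ for concreteness), then \eqref{struc-A-even-III} would give $\|A_{2n}s\|\sim|t_{2n}|\sim e^{2^n\gamma}$, and since $\|B_{2n}^{-1}\|=\|B_{2n}\|\sim e^{2^n\gamma}$, the relation $A_{2n}s=B_{2n}^{-1}A_{2n+1}s$ would force $\|A_{2n+1}s\|\gtrsim 1$, contradicting the exponential decay. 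The prefactor $1/\sqrt{2}$ is then pinned by the scalar invariant $t_{2n-1}(a'_{12}+a'_{21})=t_{2n-1}\mu_n\to\pm\sqrt{2}$ (deduced from Remark \ref{rembasic} together with \eqref{t-mu-III}), which matches the sum of off-diagonal entries of $\pm UC_{\hat\eta}(E)/\sqrt{2}$.

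The main obstacle is the rank-one cancellation $C_{\hat\eta}(E)C_{-\hat\eta}(E)=0$ (coming from $C_\pm=I\mp M$ with $M^2=I$), which prevents any direct reading of $A_{2n+1}=B_{2n}A_{2n}$ off of Theorem \ref{exact-struc-even-III}: the leading $t_{2n}^2/4$ term vanishes and the surviving contribution lives at a subleading order not controlled by that theorem. Consequently the identification of $D$ must proceed through the odd-indexed recurrence together with the auxiliary stable-direction analysis for $\{A_{2n+1}\}_n$, which is the analog of Section \ref{4.3} with $v_{-\theta/2}$ replacing $v_{-\eta\pi/4}$.
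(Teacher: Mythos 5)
Your proposal follows essentially the same route as the paper: the Cauchy-sequence argument driven by \eqref{recur-AB-odd} with the asymptotics \eqref{t-mu-III}--\eqref{t-prod-III}, the stable direction of $\{A_{2n-1}\}$ obtained by the Oseledets-type argument together with the contradiction via Theorem \ref{exact-struc-even-III} to show the trace-zero limit annihilates $\ker C_{\hat\eta}(E)$ (this is exactly Proposition \ref{Oseledets-III} and Lemma \ref{lem-6.8}), and the $\mu_n$-asymptotics to pin the prefactor $\pm 1/\sqrt{2}$. Your only deviations are cosmetic and correct: the paper determines $c$ and $\hat c$ simultaneously by comparing $\hat\delta_n t_{2n-1}(A_{2n-1}-B_{2n-1})=\hat\delta_n t_{2n-1}\mu_n U$ with the $U,V,W$-expansion of the two limits, whereas you read off the off-diagonal entry sum of the $A$-limit alone and recover the $B$-equation from the identity $A_{2n-1}+UB_{2n-1}U=t_{2n-1}I$ (Remark \ref{rembasic}) together with $C_{\hat\eta}(E)U=UC_{-\hat\eta}(E)$.
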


\begin{proof}
By using \eqref{recur-AB-odd}, \eqref{t-mu-III} and  \eqref{t-prod-III},  the same proof as that in Theorem \ref{exact-struc-even}  shows that, there exist two nonzero matrices $C'$ and $\widehat C'$ such that
\begin{equation}\label{A-III}
t_{2n-1}A_{2n-1}= \hat \delta_n C'+O(e^{-2^{n}\gamma});\ t_{2n-1}B_{2n-1}= \hat \delta_n \widehat C'+O(e^{-2^{n}\gamma}).
\end{equation}

To get the exact forms of $C'$ and $\widehat C'$, we need further informations. We postpone the proof to Section \ref{5.3}.
\end{proof}

\begin{rem}\label{osci-III}
{\rm
Recall that $A_{n}=T_{2^{n}}$.
Thus Theorem \ref{exact-struc-even-III} implies that $\|T_k\|\sim e^{\gamma\sqrt{k}} $ for $k=2^{2n}$, while Theorem \ref{exact-struc-odd-III} implies that $\|T_k\|\sim e^{\frac{\gamma}{\sqrt{2}} \sqrt{k}} $ for $k=2^{2n-1}$. This implies that there  exists fluctuation for the coefficients of $\sqrt{k}$. For type-II energy, we observe the same phenomenon, as Remark \ref{osci-II} has suggested.
}
\end{rem}

 %%%%%%%%%
 %%%%%%%%%

 \subsection{Proof of Theorem \ref{main-norm} for type-III energy}\label{5.4}\

Similar as Lemma \ref{norm-II}, we have the following result, the proof of which is essentially the same and will be omitted.

\begin{lemma}%\label{norm-III}
There exists constant $C=C(E)>1$ such that
\begin{equation}\label{T-k-III}
C^{-n-2}e^{2^{n-1}\gamma}\le\|T_{ k}\|, \|\overline{T}_k\|\le C^{n+2} e^{2^{n}\gamma}.
\end{equation}
if $2^{2n-1}\le k<2^{2n+1}$.
Consequently, if $k<2^{2n+1},$ then
\begin{equation*}\label{up-T-k-III}
 \|T_{ k}\| , \|\overline{T}_k\|\le C^{n+2} e^{2^{n}\gamma}.
\end{equation*}
\end{lemma}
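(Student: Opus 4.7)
The plan is to adapt the interpolation proof of Lemma \ref{norm-II} to the type-III setting. The essential shift is that now odd-indexed traces $|t_{2j-1}|\sim e^{-2^{j-1}\gamma}$ decay while even-indexed traces $|t_{2j}|\sim e^{2^j\gamma}$ grow, so on each dyadic window $[2^{2n-1},2^{2n+1})$ the norm $\|T_k\|$ peaks near the midpoint $k=2^{2n}$ (yielding the upper exponent $e^{2^n\gamma}$) and dips near the endpoints $k=2^{2n-1}$ and $k=3\cdot 2^{2n-1}$ (yielding the lower exponent $e^{2^{n-1}\gamma}$).

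First I would establish the block estimates analogous to \eqref{D-II}. From Theorems \ref{exact-struc-even-III} and \ref{exact-struc-odd-III} combined with \eqref{t-prod-III} one obtains, for $D\in\{A,B\}$,
\begin{equation*}
C^{-1}e^{2^n\gamma}\le\|D_{2n}\|\le Ce^{2^n\gamma},\qquad
C^{-1}e^{2^{n-1}\gamma}\le\|D_{2n-1}\|\le Ce^{2^{n-1}\gamma}.
\end{equation*}
For the triple-product factors analogous to those in Corollary \ref{exact-norm-II}, the identity $B_{2n-1}^2A_{2n-1}=t_{2n-1}A_{2n}-A_{2n-1}$ from Lemma \ref{recurrence-AB} yields the upper bound
$$\|B_{2n-1}^2A_{2n-1}\|\le|t_{2n-1}|\|A_{2n}\|+\|A_{2n-1}\|\lesssim e^{-2^{n-1}\gamma}e^{2^n\gamma}+e^{2^{n-1}\gamma}\lesssim e^{2^{n-1}\gamma},$$
while the matching lower bound follows from the factorization $A_{2n+1}=A_{2n-1}(B_{2n-1}^2A_{2n-1})$ (obtained by expanding $A_{2n+1}=B_{2n}A_{2n}$ and substituting $B_{2n}=A_{2n-1}B_{2n-1}$, $A_{2n}=B_{2n-1}A_{2n-1}$), giving $\|B_{2n-1}^2A_{2n-1}\|\ge\|A_{2n+1}\|/\|A_{2n-1}\|\sim e^{2^n\gamma}/e^{2^{n-1}\gamma}=e^{2^{n-1}\gamma}$. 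The other three orderings $A_{2n-1}B_{2n-1}^2$, $A_{2n-1}^2B_{2n-1}$, $B_{2n-1}A_{2n-1}^2$ are treated symmetrically.

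Next, since $[2^{2n-1},2^{2n+1})$ has length $3\cdot 2^{2n-1}$, I would write $k=\sum_{j=1}^{n}l_j\cdot 2^{2j-1}$ with $l_j\in\{0,1,2,3\}$ and $l_n\in\{1,2,3\}$, set $\mathcal{P}(k):=\min\{j:l_j>0\}$, and run the same two-step induction as in the proof of Lemma \ref{norm-II}. In the base case $\mathcal{P}(k)=n$, $T_k$ is directly one of $A_{2n-1}$ or $B_{2n-1}$ (when $l_n=1$), $A_{2n}$ or $B_{2n}$ (when $l_n=2$), or a triple product of the form $D_{2n-1}^2\bar D_{2n-1}$ (when $l_n=3$), all bounded by Step 1. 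In the inductive step at $\mathcal{P}(k)=p<n$, I would use the same constructions: if $l_{p+1}\in\{0,2\}$, set $k'=k-l_p\cdot 2^{2p-1}+2^{2p+1}$; if $l_{p+1}\in\{1,3\}$, set $k'=k-l_p\cdot 2^{2p-1}$. In both cases $\mathcal{P}(k')=p+1$, and the intermediate block $T_{k\to k'}^{\pm 1}$ is a single $D_{2p-1}$, $D_{2p}$, or triple product whose norm is bounded by Step 1; combining with the inductive hypothesis for $T_{k'}$ and telescoping the constants $C^{\pm 1}$ across the $n-\mathcal{P}(k)$ induction steps produces the factor $C^{\pm(n+2)}$ in \eqref{T-k-III}.

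The main tedium, and the only real obstacle, is the combinatorial bookkeeping of which exact triple product appears at each level (whether $A_{2p-1}B_{2p-1}^2$ or $B_{2p-1}^2A_{2p-1}$, and so on), determined by the parity of the enclosing Thue--Morse sub-block; since this tracking is identical \emph{mutatis mutandis} to the type-II case, the paper omits it. The bound for $\|\overline T_k\|$ follows by the symmetric argument with the roles of $A$ and $B$ interchanged. Finally, the consequence for arbitrary $k<2^{2n+1}$ is immediate: pick the unique $m\le n$ with $2^{2m-1}\le k<2^{2m+1}$ and observe that $C^{m+2}e^{2^m\gamma}\le C^{n+2}e^{2^n\gamma}$.
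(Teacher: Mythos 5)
Your strategy is exactly the one the paper intends (it omits this proof, declaring it ``essentially the same'' as that of Lemma \ref{norm-II}), and most of the adaptation is sound: the block estimates $\|D_{2n}\|\sim e^{2^n\gamma}$, $\|D_{2n-1}\|\sim e^{2^{n-1}\gamma}$ do follow from Theorems \ref{exact-struc-even-III}, \ref{exact-struc-odd-III} and \eqref{t-prod-III}, and your treatment of the triple products is a clean shortcut: the upper bound from the Cayley--Hamilton identity $B_{2n-1}^2A_{2n-1}=t_{2n-1}A_{2n}-A_{2n-1}$ of Lemma \ref{recurrence-AB}, and the lower bound from $A_{2n+1}=A_{2n-1}B_{2n-1}^2A_{2n-1}$, replace the analogue of Corollary \ref{exact-norm-II} at the cost of only order-of-magnitude (rather than exact) constants, which is all the interpolation needs.

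There is, however, one concrete defect in the combinatorial scheme: the expansion $k=\sum_{j=1}^{n}l_j\cdot 2^{2j-1}$ with $l_j\in\{0,1,2,3\}$ represents only \emph{even} integers (every term is even, and the maximum is $2^{2n+1}-2$), so your induction never produces any bound for odd $k$ in $[2^{2n-1},2^{2n+1})$ --- e.g.\ $k=2^{2n-1}+1$ is unreachable, and the base case and inductive step as written simply do not apply to it. This is precisely the point where type III differs from type II: in Lemma \ref{norm-II} the ordinary base-$4$ expansion goes down to scale $2^0=1$ automatically, whereas here your smallest scale is $2^1$. The repair is short: add a unit digit, $k=l_0+\sum_{j=1}^{n}l_j 2^{2j-1}$ with $l_0\in\{0,1\}$, and when $l_0=1$ write $T_k=M_k(E)\,T_{k-1}$ (and $\overline{T}_k$ analogously), where the single-step matrix and its inverse have norm bounded by a constant depending only on $\lambda$ and $E$; since $k-1$ is even and still lies in $[2^{2n-1},2^{2n+1})$, the even case applies and the extra factor is absorbed into $C$ (the exponent $n+2$ in \eqref{T-k-III}, as opposed to $n+1$ in Lemma \ref{norm-II}, leaves room for exactly this). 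With that one-line patch, together with the observation that $k\in\{0,1\}$ is trivial in the ``consequently'' part, your argument is complete and coincides with the paper's intended proof.
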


 \noindent {\bf Proof of Theorem \ref{main-norm}(type-III energy).} \
By Lemma \ref{dp} (iii), we only need to consider $T_n.$ Fix  $n\in \N$, let $k$ be the unique integer such that
$2^{2k-1}\le n<2^{2k+1}$. Then
$
2^{k-1/2}\le \sqrt{n}< 2^{k+1/2}.
$
By \eqref{T-k-III}, we have
\begin{equation*}
 n^{-\alpha} e^{\frac{\gamma}{2\sqrt{2}}\sqrt{n}}\le C^{-k-2}e^{\frac{\gamma}{2\sqrt{2}}\sqrt{n}}
 \le \|T_n\|\le C^{k+2}e^{\sqrt{2}\gamma\sqrt{n}}\le n^{\alpha} e^{\sqrt{2}\gamma\sqrt{n}}.
\end{equation*}
where $\alpha=3\log C/\log 2.$ Then Theorem \ref{main-norm} holds for any $c_2>\sqrt{2}$ and $c_1\in(0,\sqrt{2}/4).$
\hfill $\Box$

%%%%%%%%%%%%%
%%%%%%%%%%%%%

\subsection{Stable directions of $\{A_{2n-1}\}$ and $\{B_{2n-1}\}$}\label{5.3}\

By an analogous proof with Proposition \ref{Oseledets-II}, we have the following

\begin{prop}\label{Oseledets-III}
There exist  unit vectors $s, \hat s\in \R^2$ such that
\begin{equation}\label{sub-decay-2n-III}
\|A_{2n-1}s\|,\ \|B_{2n-1}\hat s\|= O(e^{-2^{n-1}\gamma}).
\end{equation}
and for any unit vector $v (\hat v)$ which is independent with $s (\hat s)$,
$$
\|A_{2n-1}v\|,\ \|B_{2n-1}\hat v\| \gtrsim e^{2^{n-1}\gamma}.
$$
\end{prop}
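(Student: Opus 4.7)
The plan is to adapt the Oseledets-style argument of Proposition \ref{Oseledets-II} to the odd-index subsequence, with one extra algebraic twist needed to keep the norm estimate tight. The key replacement is the identity
$$A_{2n+1}=B_{2n}A_{2n}=(A_{2n-1}B_{2n-1})(B_{2n-1}A_{2n-1})=A_{2n-1}B_{2n-1}^{2}A_{2n-1},$$
together with its mirror $B_{2n+1}=B_{2n-1}A_{2n-1}^{2}B_{2n-1}$. Applying the singular value decomposition to $A_{2n-1}$, I would take unit vectors $s_n\perp u_n$ and $v_n\perp w_n$ with $A_{2n-1}s_n=\|A_{2n-1}\|^{-1}v_n$ and $A_{2n-1}u_n=\|A_{2n-1}\|w_n$; writing $s_n=\cos\beta_n\,s_{n+1}+\sin\beta_n\,u_{n+1}$ exactly as in Section \ref{4.3} and applying $A_{2n+1}$ gives
$$|\sin\beta_n|\le\frac{\|A_{2n+1}s_n\|}{\|A_{2n+1}\|}\le\frac{\|A_{2n-1}B_{2n-1}^{2}\|}{\|A_{2n-1}\|\,\|A_{2n+1}\|}.$$

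The main obstacle is bounding $\|A_{2n-1}B_{2n-1}^{2}\|$: the trivial product estimate $\|A_{2n-1}\|\|B_{2n-1}\|^{2}\sim e^{3\cdot 2^{n-1}\gamma}$ agrees in order with $\|A_{2n-1}\|\,\|A_{2n+1}\|$, leaving the ratio merely bounded rather than summable. I would resolve this with the Cayley--Hamilton identity $B_{2n-1}^{2}=t_{2n-1}B_{2n-1}-I$, which together with $B_{2n}=A_{2n-1}B_{2n-1}$ yields
$$A_{2n-1}B_{2n-1}^{2}=t_{2n-1}B_{2n}-A_{2n-1}.$$
By \eqref{t-prod-III}, $|t_{2n-1}|\sim e^{-2^{n-1}\gamma}$; by Theorem \ref{exact-struc-even-III}, $\|B_{2n}\|\sim e^{2^{n}\gamma}$; and by the already-established part \eqref{A-III} of Theorem \ref{exact-struc-odd-III} (nonvanishing of $C'$ suffices), $\|A_{2n-1}\|\sim e^{2^{n-1}\gamma}$. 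Hence $\|A_{2n-1}B_{2n-1}^{2}\|=O(e^{2^{n-1}\gamma})$, and with $\|A_{2n+1}\|\sim e^{2^{n}\gamma}$ this produces $|\sin\beta_n|=O(e^{-2^{n}\gamma})$.

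Summability gives that $\{s_n\}$ is Cauchy in projective space; letting $s$ be the limit, the same triangle-type argument of Section \ref{4.3} yields $|\sin\measuredangle(s_n,s)|=O(e^{-2^{n}\gamma})$. Writing $s=\cos\theta_n\,s_n+\sin\theta_n\,u_n$ and applying $A_{2n-1}$, the two resulting terms are bounded by $\|A_{2n-1}\|^{-1}$ and $|\sin\theta_n|\|A_{2n-1}\|$, both of order $e^{-2^{n-1}\gamma}$, giving the first half of \eqref{sub-decay-2n-III}. For the lower bound, if a unit vector $v$ is independent of $s$, then for $n$ large the projective angle $\phi_n$ between $v$ and $s_n$ satisfies $\phi_n\ge\tfrac{1}{2}\measuredangle(v,s)>0$, and decomposing $v=\cos\phi_n\,s_n+\sin\phi_n\,u_n$ gives $\|A_{2n-1}v\|\ge|\sin\phi_n|\|A_{2n-1}\|\gtrsim e^{2^{n-1}\gamma}$.

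The construction of $\hat s$ is entirely symmetric, using $B_{2n-1}$ in place of $A_{2n-1}$: the parallel Cayley--Hamilton computation $B_{2n-1}A_{2n-1}^{2}=t_{2n-1}A_{2n}-B_{2n-1}$ has norm $O(e^{2^{n-1}\gamma})$, and the recursion $B_{2n+1}=B_{2n-1}A_{2n-1}^{2}B_{2n-1}$ plays exactly the role of the $A$-version above.
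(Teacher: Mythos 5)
Your proposal is correct and is essentially the paper's argument: the paper proves this proposition by declaring it "analogous" to Proposition \ref{Oseledets-II}, i.e.\ by the same singular-value-decomposition/Cauchy-angle scheme applied to $A_{2n+1}=A_{2n-1}B_{2n-1}^2A_{2n-1}$, which is exactly what you do. The one ingredient the paper leaves implicit — the odd-index analogue of Corollary \ref{exact-norm-II}, namely $\|A_{2n-1}B_{2n-1}^{2}\|=O(e^{2^{n-1}\gamma})$ — you supply correctly via the Cayley--Hamilton identity $A_{2n-1}B_{2n-1}^{2}=t_{2n-1}B_{2n}-A_{2n-1}$ together with \eqref{t-prod-III}, \eqref{struc-A-even-III} and the non-vanishing of $C'$ in \eqref{A-III}, which is precisely how the paper obtains the corresponding bound in the type-II case.
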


We call $s$ and $\hat s$ the {\it stable directions} of $\{A_{2n-1}\}$ and $\{B_{2n-1}\}$, respectively.
In the following,  we study the exact forms of $s$ and $\hat s$.

\begin{lemma}\label{lem-6.8}
$C_{\hat \eta}(E)s=\vec{0}$  and $ C_{-\hat\eta}(E)\hat s=\vec{0}.$
\end{lemma}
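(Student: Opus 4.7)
My plan is to prove the two assertions in parallel by combining the smallness information from Proposition \ref{Oseledets-III} with the asymptotic matrix formulas of Theorem \ref{exact-struc-even-III} via the one-step recurrence $A_{n+1} = B_n A_n$, $B_{n+1} = A_n B_n$. The key observation is that the stable direction $s$ for the \emph{odd} subsequence $\{A_{2n-1}\}$ must also produce a vector of only moderate size under $A_{2n}$, and this moderate size is incompatible with $C_{\hat\eta}(E) s$ being nonzero, since $|t_{2n}|$ grows like $e^{2^n\gamma}$ while the error term in Theorem \ref{exact-struc-even-III} contributes only $O(|t_{2n}|\,e^{-2^{n-1}\gamma})$.

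Concretely, first I would record the consequence of Theorem \ref{exact-struc-odd-III} and \eqref{t-prod-III} that $\|B_{2n+1}\| \sim e^{2^n\gamma}$ and $\|A_{2n+1}\| \sim e^{2^n\gamma}$. Using $A_{2n+2} = B_{2n+1} A_{2n+1}$ together with \eqref{sub-decay-2n-III}, I then estimate
\begin{equation*}
\|A_{2n+2} s\| \le \|B_{2n+1}\|\,\|A_{2n+1} s\| = O(e^{2^n\gamma}) \cdot O(e^{-2^n\gamma}) = O(1).
\end{equation*}
On the other hand, the first half of \eqref{struc-A-even-III} applied to $s$ gives
\begin{equation*}
A_{2n+2} s \;=\; \frac{t_{2n+2}}{2}\,C_{\hat\eta}(E)\,s \;+\; O\!\bigl(|t_{2n+2}|\,e^{-2^n\gamma}\bigr)
\;=\; \frac{t_{2n+2}}{2}\,C_{\hat\eta}(E)\,s \;+\; O(e^{2^n\gamma}).
\end{equation*}
Comparing these two relations and dividing by $|t_{2n+2}| \sim e^{2^{n+1}\gamma}$ yields $\|C_{\hat\eta}(E)\,s\| = O(e^{-2^n\gamma})$, which forces $C_{\hat\eta}(E)\,s = \vec{0}$ by letting $n \to \infty$.

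The proof of the second identity is entirely symmetric: I apply the identity $B_{2n+2} = A_{2n+1} B_{2n+1}$ together with the estimate $\|B_{2n+1}\hat s\| = O(e^{-2^n\gamma})$ from \eqref{sub-decay-2n-III} and the bound $\|A_{2n+1}\|\sim e^{2^n\gamma}$ to conclude $\|B_{2n+2}\hat s\| = O(1)$, and then match this against the second half of \eqref{struc-A-even-III} to force $C_{-\hat\eta}(E)\,\hat s = \vec{0}$.

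I do not anticipate any serious obstacle in this argument; the only delicate point is the bookkeeping of the two competing scales $e^{2^n\gamma}$ and $e^{-2^n\gamma}$, and in particular verifying that the product $\|B_{2n+1}\|\cdot\|A_{2n+1}s\|$ is genuinely $O(1)$ rather than merely subexponential, so that the dominant term $\tfrac{t_{2n+2}}{2}C_{\hat\eta}(E)s$ in the asymptotic formula must cancel to leading order. This is what ultimately forces the vector $s$ to lie in the kernel of $C_{\hat\eta}(E)$, and then (in subsequent arguments in the paper) pins down $s$ up to sign since $C_{\hat\eta}(E)$ has rank $1$.
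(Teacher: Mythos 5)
Your proof is correct and is essentially the paper's argument in mirrored form: the paper argues by contradiction, stepping from $2n$ to $2n+1$ via $\|A_{2n+1}s\|\ge \|A_{2n}s\|/\|B_{2n}\|$ (where $\|A_{2n}s\|\sim e^{2^n\gamma}$ would follow from Theorem \ref{exact-struc-even-III} if $C_{\hat\eta}(E)s\ne\vec{0}$, contradicting Proposition \ref{Oseledets-III}), whereas you step from $2n+1$ to $2n+2$ and feed the resulting $O(1)$ bound back into Theorem \ref{exact-struc-even-III}, using exactly the same ingredients (the decay \eqref{sub-decay-2n-III}, the even-index structure theorem, the recursion \eqref{recu-n+1-n}, and $\|A_{2n+1}\|,\|B_{2n+1}\|\lesssim e^{2^n\gamma}$). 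One small remark: your appeal to Theorem \ref{exact-struc-odd-III} for those norm bounds only requires the already-established estimate \eqref{A-III}, not the exact forms of $C'$ and $\widehat C'$ (whose derivation uses this very lemma), so your argument is free of circularity.
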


\begin{proof}
At first we show that  $C_{\hat\eta}(E)s=\vec{0}$. Indeed, if otherwise,
$C_{\hat\eta}(E)s\ne\vec{0}.$ By Theorem \ref{exact-struc-even-III}, we have
$
\|A_{2n}s\|=\frac{|t_{2n}|}{2}\|C_{\hat\eta}(E)s+o(1)\|\sim e^{2^n\gamma} $ and $\|B_{2n}\|\sim  e^{2^n\gamma}.
$
Consequently
$$
\|A_{2n+1}s\|=\|B_{2n}A_{2n}s\|\ge \frac{\|A_{2n}s\|}{\|B_{2n}^{-1}\|}=\frac{\|A_{2n}s\|}{\|B_{2n}\|}\sim 1,
$$
which contradicts with \eqref{sub-decay-2n-III}.

The same argument shows  that $C_{-\hat\eta}(E)\hat s=\vec{0}$.
\end{proof}

By \eqref{const-III}, $C_+(E)$ and $C_-(E)$ are of rank one,  and by direct computation,
\begin{equation}\label{C+C-}
C_+(E)v_{-\frac{\theta(E)}{2}}=\vec{0} \ \ \text{ and }\ \ C_-(E)v_{\frac{\theta(E)-\pi}{2}}=\vec{0}.
\end{equation}
Thus we have:

\begin{cor}\label{stab-ustab}
{\rm (i)} If $\hat\eta=+$, then
$$s\parallel v_{-\frac{\theta(E)}{2}},\quad \hat s\parallel v_{\frac{\theta(E)-\pi}{2}}.$$
If $\hat \eta=-$, then
$$s\parallel v_{\frac{\theta(E)-\pi}{2}},\quad \hat s\parallel v_{-\frac{\theta(E)}{2}}.$$
Consequently $Us\parallel \hat s$ and $s\nparallel \hat s.$

{\rm (ii)} If $|E|=\lambda$, then $s\parallel v_0$ or $v_{\pi/2}.$

{\rm (iii)} If $|E|\ne\lambda$, then $s\nparallel  v_0$, $v_{\pm\pi/4}$, $v_{\pi/2}$.
\end{cor}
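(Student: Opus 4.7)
The plan is to read off (i) directly from Lemma \ref{lem-6.8} together with the explicit kernel vectors recorded in \eqref{C+C-}, and then derive (ii) and (iii) as elementary trigonometric consequences using the constraint $\theta(E)\in(-\pi/2,\pi/2)$ from Lemma \ref{energy-coupling inequality}.

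First I would observe that each matrix $C_\pm(E)$ in \eqref{const-III} has determinant
$(1\mp\sec\theta)(1\pm\sec\theta)+\tan^2\theta=1-\sec^2\theta+\tan^2\theta=0$
while its trace equals $2\ne 0$, so $C_\pm(E)$ has rank exactly one. Hence its null space is a one-dimensional line, which by \eqref{C+C-} is spanned by $v_{-\theta(E)/2}$ for $C_+(E)$ and by $v_{(\theta(E)-\pi)/2}$ for $C_-(E)$. Combining this with Lemma \ref{lem-6.8}, namely $C_{\hat\eta}(E)s=\vec{0}$ and $C_{-\hat\eta}(E)\hat s=\vec 0$, the four cases of (i) follow at once.

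Next, for the \emph{consequently} part, I would verify the identity $Uv_{-\theta/2}=v_{(\theta-\pi)/2}$ by the trigonometric facts $\cos(\theta/2)=-\sin((\theta-\pi)/2)$ and $\sin(\theta/2)=\cos((\theta-\pi)/2)$; this shows $Us\parallel \hat s$ in both cases $\hat\eta=\pm$. To see $s\nparallel \hat s$, I would use that $v_\alpha\parallel v_\beta$ iff $\alpha\equiv\beta\pmod\pi$: the difference of the two candidate angles is $(-\theta/2)-((\theta-\pi)/2)=(\pi-2\theta)/2$, which lies strictly between $0$ and $\pi$ since $\theta\in(-\pi/2,\pi/2)$, so the angles are never congruent modulo $\pi$.

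For (ii), since $\kappa(E)=(E^2-\lambda^2)/(2E)$ by \eqref{kappa}, the hypothesis $|E|=\lambda$ is equivalent to $\kappa(E)=0$, hence $\theta(E)=0$. Substituting $\theta=0$ into part (i) gives $s\parallel v_0$ when $\hat\eta=+$, or $s\parallel v_{-\pi/2}=-v_{\pi/2}$ when $\hat\eta=-$, which is (ii). For (iii), one has $\theta\in(-\pi/2,0)\cup(0,\pi/2)$ and I would simply verify that neither $-\theta/2$ nor $(\theta-\pi)/2$ is congruent modulo $\pi$ to any of $0,\pm\pi/4,\pi/2$; each such exclusion reduces to $\theta\notin\{0,\pm\pi/2,\pi\}\pmod{2\pi}$, which is immediate from the open-interval constraint on $\theta$ together with $\theta\ne 0$. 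No real obstacle arises: once Lemma \ref{lem-6.8} is in hand, the corollary is a sequence of direct rank-one and trigonometric verifications.
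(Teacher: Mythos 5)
Your proposal is correct and follows essentially the same route as the paper: part (i) is read off from the rank-one structure of $C_\pm(E)$ and the kernel vectors in \eqref{C+C-} via Lemma \ref{lem-6.8}, the \emph{consequently} statement from $Uv_{-\theta/2}=v_{(\theta-\pi)/2}$ together with $|\theta(E)|<\pi/2$, and (ii)--(iii) by substituting the corresponding values of $\theta(E)$. Your explicit determinant/trace check of the rank-one claim is a minor addition the paper leaves as ``direct computation,'' so there is nothing further to flag.
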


\begin{proof}
(i) The first two statements follow directly from \eqref{C+C-}.  It is seen that $Uv_{-\frac{\theta(E)}{2}}=v_{\frac{\theta(E)-\pi}{2}},$ consequently $Us\parallel \hat s. $ To have $s\parallel \hat s$, we need $-\frac{\theta(E)}{2}=\frac{\theta(E)-\pi}{2}+k\pi$, or equivalently, $\theta(E)=\frac{\pi}{2}-k\pi$. However by Lemma \ref{energy-coupling inequality}, $|\theta(E)|< \frac{\pi}{2}.$  So $s\nparallel \hat s.$

(ii) By \eqref{kappa} and the definition of $\theta(E)$, $|E|=\lambda$ if and only if $\theta(E)=0$.  Then (ii) follows from (i).

(iii) Since $|E|\ne \lambda$, we have $0<|\theta(E)|<\frac{\pi}{2}.$ The result follows from (i).
\end{proof}

Now we can finish the proof of Theorem \ref{exact-struc-odd-III}:

\noindent {\bf Proof of Theorem \ref{exact-struc-odd-III}(continued).}\
Recall that by \eqref{A-III},
\begin{equation*}
t_{2n-1}A_{2n-1}=\hat\delta_{n} C'+O(e^{-2^{n}\gamma}).
\end{equation*}
By taking the traces on both sides, we get
$
\tr( C' )= \hat\delta_{n}t_{2n-1}^2+O(e^{-2^{n}\gamma}).
$
Since $t_{2n-1}\sim e^{-2^{n-1}\gamma}$, we get $\tr ( C' )=0.$
Moreover, by \eqref{sub-decay-2n-III}, we have
$
A_{2n-1}s= t_{2n-1}^{-1}(\hat \delta_{n}C's+O(e^{-2^{n}\gamma}))=O(e^{-2^{n-1}\gamma}).
$
Since $t_{2n-1}\sim e^{-2^{n-1}\gamma}$, we conclude that
$C's=\vec{0}$.

On the other hand, by Lemma \ref{lem-6.8} and  direct computation, we have
$$
UC_{\hat\eta}(E)s=\vec{0} \ \ \text{ and }\ \ \tr(UC_{\hat\eta}(E))=0.
$$
Thus
 there exists   $c\ne 0$ such that
 $C'=c\ UC_{\hat \eta}(E)$.
By the same argument, there exists  $\hat c\ne 0$ such that
$\hat C'=\hat c\ UC_{-\hat\eta}(E)$.

In the following we compute $c $ and $\hat c.$
By the formulas of $C'$ and $\hat C'$, for $n$ large enough, we have
\begin{eqnarray*}
&&\hat \delta_{n}t_{2n-1}(A_{2n-1}-B_{2n-1})\\
&=&(c-\hat c)U+\hat \eta\sec\theta(c+\hat c)W+\hat\eta\tan\theta(c+\hat c)V+O(e^{-2^{n}\gamma}).
\end{eqnarray*}
Recall that  $A_{2n-1}-B_{2n-1}=\mu_{n}U.$ Then by \eqref{t-mu-III}, we conclude that
$$
t_{2n-1}(A_{2n-1}-B_{2n-1})=\pm(\sqrt{2}+o(1))U.
$$
Consequently,
$$
(\pm\sqrt{2}+\hat c-c)U+\hat\eta\sec\theta(c+\hat c)W+\hat\eta\tan\theta(c+\hat c)V=o(1).
$$
Since $U,V,W$ are linearly independent, we have
$$
\pm\sqrt{2}+\hat c-c=\sec\theta(c+\hat c)=\tan\theta(c+\hat c)=0.
$$
Since $\sec\theta\ne0$, we get $c=-\hat c=\pm\sqrt{2}/2.$
\hfill $\Box$

\smallskip

Next we study the behaviors of $A_{2n}s$ and $B_{2n}\hat s.$

\begin{prop}\label{addition-III}
We have
\begin{equation*}\label{asym-AB-2n-s}
A_{2n}s=\pm \frac{\sqrt{2}}{2}\hat s +O(e^{-2^{n}\gamma}),\quad \text{ and } \ \ \ B_{2n}\hat s=\mp \frac{\sqrt{2}}{2} s +O(e^{-2^{n}\gamma}).
\end{equation*}
\end{prop}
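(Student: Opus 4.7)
We sketch the proof for $A_{2n}s$; the claim for $B_{2n}\hat s$ follows by the symmetric argument with the roles of $A$ and $B$ interchanged (using $B_{n+1}=A_nB_n$ in place of $A_{n+1}=B_nA_n$). The strategy parallels the proof of \eqref{const-limit-II}, but must be adapted to the type-III asymptotics and to the fact that $s$ and $\hat s$ are no longer orthogonal.

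Set $p_n := A_{2n}s$. Applying \eqref{recur-AB-even} to $s$ yields
\begin{equation*}
p_{n+1} = t_{2n}t_{2n+1}\,p_n + (1-t_{2n}^2)s - t_{2n}\nu_n\,Vs - t_{2n}\omega_n\,Ws.
\end{equation*}
The heart of the argument is to show, using the relation $C_{\hat\eta}(E)s = \vec 0$, that the gigantic $O(t_{2n}^2)$ contributions in the last three terms cancel, and that the surviving residuals combine to $O(e^{-2^n\gamma})$. Since $|t_{2n}t_{2n+1}| = 1 + O(e^{-2^n\gamma})$, the recurrence then reduces to $p_{n+1} = \pm p_n + O(e^{-2^n\gamma})$, so the even and odd subsequences of $\{p_n\}$ are each Cauchy and $p_n = \pm p^* + O(e^{-2^n\gamma})$ for some limit vector $p^*$.

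For the cancellation, set $\delta_n^\nu := \nu_n + t_{2n}\hat\eta\sec\theta$ and $\delta_n^\omega := \omega_n + t_{2n}\hat\eta\tan\theta$. The sign information from Corollary \ref{stab-ustab} and Lemma \ref{energy-coupling inequality}, combined with the sharp bound $\bigl||t_{2n}|-|\nu_n|\cos\theta\bigr| = O(e^{-2^n\gamma})$ (the identity underlying \eqref{t-prod-III}), yields $\delta_n^\nu,\,\delta_n^\omega = O(e^{-2^n\gamma})$. Feeding $\nu_n = -t_{2n}\hat\eta\sec\theta + \delta_n^\nu$ into the recurrence $\nu_{n+1}=(t_{2n+1}-2)t_{2n}\nu_n$ of Lemma \ref{recur-mu-nu-omega}, and using $(t_{2n+1}-2) = -2+O(e^{-2^n\gamma})$ together with $\delta_{n+1}^\nu = O(e^{-2^{n+1}\gamma})$, one bootstraps to the finer asymptotics
\begin{equation*}
t_{2n}\delta_n^\nu = \hat\eta\sec\theta + O(e^{-2^n\gamma}),\qquad t_{2n}\delta_n^\omega = \hat\eta\tan\theta + O(e^{-2^n\gamma}).
\end{equation*}
Since $C_{\hat\eta}(E)s = \vec 0$ rewrites as $\hat\eta\sec\theta\,Vs + \hat\eta\tan\theta\,Ws = s$, these asymptotics turn the triple $(1-t_{2n}^2)s - t_{2n}\nu_n Vs - t_{2n}\omega_n Ws$ into $s-(s+O(e^{-2^n\gamma})) = O(e^{-2^n\gamma})$, giving the desired simplification.

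Finally, to identify $p^*$, use $A_{2n+1}s = B_{2n}p_n$ together with $\|A_{2n+1}s\| = O(e^{-2^n\gamma})$, which follows from Theorem \ref{exact-struc-odd-III} and $C_{\hat\eta}(E)s = \vec 0$. Since $\|B_{2n}\|\sim e^{2^n\gamma}$ while the dominant part $\tfrac{t_{2n}}{2}C_{-\hat\eta}(E)$ of $B_{2n}$ annihilates $\hat s$ by Lemma \ref{lem-6.8}, the only way $B_{2n}p_n$ can be this small is for $p^* \in \ker C_{-\hat\eta}(E) = \mathrm{span}(\hat s)$. The precise magnitude $\sqrt 2/2$ is extracted from the factorization $A_{2n} = B_{2n-1}A_{2n-1}$: expanding both factors via Theorem \ref{exact-struc-odd-III} and applying to $s$, the leading term $UC_{-\hat\eta}(E)\cdot UC_{\hat\eta}(E)\cdot s$ vanishes by $C_{\hat\eta}(E)s = \vec 0$, and the surviving cross term $\mp\tfrac{\hat\delta_n}{\sqrt 2}\,t_{2n-1}^{-2}\,UC_{-\hat\eta}(E)E_ns$ (where $E_n$ denotes the error matrix in Theorem \ref{exact-struc-odd-III}) combined with $Us = \hat s$ (Corollary \ref{stab-ustab}) and the asymptotic $|\mu_n|e^{-2^{n-1}\gamma}\to 1/\sqrt 2$ from \eqref{t-mu-III} produces the coefficient $\pm\sqrt 2/2$. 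The main obstacle throughout is the bootstrap step: without the sharpened asymptotic $t_{2n}\delta_n^\nu = \hat\eta\sec\theta + O(e^{-2^n\gamma})$, the residuals in the recurrence for $p_n$ would be only $O(1)$, far too large to conclude convergence.
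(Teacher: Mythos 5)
Your first two steps are correct and genuinely different from the paper's route: the cancellation in the recurrence for $p_n:=A_{2n}s$ is real (from $\omega_n=\sin\theta(E)\,\nu_n$ and the identity $t_{2n+1}=t_{2n}^2-\cos^2\theta(E)\,\nu_n^2-2$ behind \eqref{A-B-even} and \eqref{kappa}, together with the signs of $t_{2n}$ and $\nu_n$, one does get $t_{2n}\delta_n^\nu=\hat\eta\sec\theta(E)+O(e^{-2^n\gamma})$ and $t_{2n}\delta_n^\omega=\hat\eta\tan\theta(E)+O(e^{-2^n\gamma})$, so the residual is a multiple of $C_{\hat\eta}(E)s=\vec 0$ up to $O(e^{-2^n\gamma})$), and the identification $p^*\in\ker C_{-\hat\eta}(E)=\mathrm{span}(\hat s)$ from $B_{2n}p_n=A_{2n+1}s=O(e^{-2^n\gamma})$ also goes through once you note that the error in Theorem \ref{exact-struc-even-III} only gives $C_{-\hat\eta}(E)p_n=O(e^{-2^{n-1}\gamma})$, which still suffices. (The paper instead works entirely with the factorization $A_{2n}=B_{2n-1}A_{2n-1}$; your recurrence argument is a valid alternative for the convergence and the direction. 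The "even and odd subsequences are Cauchy" phrasing should be replaced by bookkeeping of the signs $\mathrm{sign}(t_{2n}t_{2n+1})$, but that is cosmetic.)

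The genuine gap is the magnitude $\sqrt2/2$, which is exactly the content of the proposition. In your cross term $\mp\frac{\hat\delta_n}{\sqrt 2}\,t_{2n-1}^{-2}\,UC_{-\hat\eta}(E)E_ns$, both $t_{2n-1}^{2}$ and $E_n$ are of size $e^{-2^n\gamma}$, so you are evaluating a ratio of two comparable small quantities: you need the leading term of $E_ns$ to \emph{relative} precision $O(e^{-2^n\gamma})$, while Theorem \ref{exact-struc-odd-III} only gives the order bound $E_n=O(e^{-2^n\gamma})$. The ingredients you cite cannot supply this: \eqref{t-mu-III} is a limit with no rate, and via $A_{2n-1}-B_{2n-1}=\mu_nU$ it constrains at best the difference of the two error matrices (a multiple of $U$), never $E_n$ itself; and your recurrence from the first step only propagates the limit $p^*$ forward, it does not create it, so it cannot determine $\|p^*\|$ either (a priori your argument even allows $p^*=0$). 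This is precisely where the paper's proof does its real work: it normalizes $t_{2n-1}A_{2n-1}=c_nUC_{\hat\eta}(E)+X$ with $X$ lower triangular by Remark \ref{rembasic}, and then pins down $t_{2n-1}^{-2}x_n$, $t_{2n-1}^{-2}y_n$, $t_{2n-1}^{-2}z_n$ up to $O(e^{-2^n\gamma})$ using three scalar relations: the trace identity $t_{2n-1}^2=x_n+z_n$, the determinant identity, and a third relation obtained by pushing the decay of $A_{2n+1}s$ and $B_{2n+1}\hat s$ down to order $e^{-3\times 2^n\gamma}$ (the $\Delta_1,\dots,\Delta_4$ computation). Only after that does $t_{2n-1}^{-2}\Delta_1=\frac{\cos\theta}{2}+O(e^{-2^n\gamma})$ follow and produce the constant $\pm\sqrt2/2$. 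Without an argument of this type (or a substitute for it), your proof establishes $A_{2n}s=\pm\rho\,\hat s+O(e^{-2^n\gamma})$ for an unidentified scalar $\rho$, not the stated result.
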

\begin{proof}
By Theorem \ref{exact-struc-odd-III}, without loss  of generality, we take $\hat\eta=+$ and assume
%\begin{equation}
$$%\begin{array}{rcl}
t_{2n-1}A_{2n-1}=\frac{1}{\sqrt{2}}UC_+
+\begin{pmatrix}
\tilde x_n&\tilde u_n\\
\tilde y_n&\tilde z_n
\end{pmatrix}
$$
with $\tilde x_n, \tilde y_n, \tilde u_n, \tilde z_n=O(e^{-2^{n}\gamma})$.
Define
$$c_n=\frac{1}{\sqrt{2}}+\frac{\tilde u_n}{1+\sec\theta}.
$$
Since $\theta=\theta(E)\in (-\pi/2,\pi/2),$ $\sec\theta>0$. Thus $c_n=1/\sqrt{2}+O(e^{-2^{n}\gamma})$. By \eqref{const-III}, there are $x_n,y_n,z_n$ such that
\begin{equation}\label{odd-III}
t_{2n-1}A_{2n-1}=c_n UC_+
+\begin{pmatrix}
x_n&0\\
y_n&z_n
\end{pmatrix}
\end{equation}
with  $x_n, y_n, z_n=O(e^{-2^n\gamma}).$
By \eqref{odd-III} and Remark \ref{rembasic},
\begin{equation}\label{odd-III-B}
t_{2n-1}B_{2n-1}=-c_n UC_-
+\begin{pmatrix}
x_n&-y_n\\
0&z_n
\end{pmatrix}.
\end{equation}
Taking traces on both sides of \eqref{odd-III}, we have
\begin{equation}\label{trace-|||}
t_{2n-1}^2=x_n+z_n.
\end{equation}
Taking determinants  on both sides of \eqref{odd-III}, we have
\begin{equation}\label{det}
t_{2n-1}^2=c_n((x_n-z_n)\tan\theta-y_n(1+\sec\theta))+x_nz_n.
\end{equation}

Since $\hat\eta=+,$ by Corollary \ref{stab-ustab}, without loss of generality, we can take
$$
s=v_{-\frac{\theta}{2}}=(\cos\frac{\theta}{2},\sin\frac{\theta}{2})^t\ \ \text{ and }\ \ \hat s=v_{\frac{\theta-\pi}{2}}=(\sin\frac{\theta}{2},\cos\frac{\theta}{2})^t.
$$
Write
$$
s_{\perp}:=(-\sin\frac{\theta}{2},\cos\frac{\theta}{2})^t\ \ \text{ and }\ \ \hat s_{\perp}:=(\cos\frac{\theta}{2},-\sin\frac{\theta}{2})^t,
$$
 then $s\perp s_\perp$ and $\hat s\perp \hat s_\perp.$  Moreover by direct computation,
\begin{equation}\label{vec}
UC_+=2\sec\theta\  s\cdot s_\perp^t,\quad
UC_-=2\sec\theta\ \hat s\cdot \hat s_\perp^t.
\end{equation}
Thus $(UC_+)^2=(UC_-)^2$ is zero matrix.
Write
$$
X:=\begin{pmatrix}
x_n&0\\
y_n&z_n
\end{pmatrix},\quad Y:=\begin{pmatrix}
x_n&-y_n\\
0&z_n
\end{pmatrix}
$$
and
$$\begin{array}{rccl}
\Delta_1:=&\hat s_\perp^tX  s&=&
x_n\cos^2\frac{\theta}{2}-y_n\sin\frac{\theta}{2}\cos\frac{\theta}{2}-z_n\sin^2\frac{\theta}{2}\\
\Delta_2:=&s_\perp^t  Y\hat s&=&
-x_n\sin^2\frac{\theta}{2}+y_n\sin\frac{\theta}{2}\cos\frac{\theta}{2}+z_n\cos^2\frac{\theta}{2}\\
\Delta_3:=&\hat s_\perp^tYX s&=&
x_n^2\cos^2\frac{\theta}{2}-(y_n\cos\frac{\theta}{2}+z_n\sin\frac{\theta}{2})^2\\
\Delta_4:=&s_\perp^t XY\hat s&=&
z_n^2\cos^2\frac{\theta}{2}-(y_n\cos\frac{\theta}{2}-x_n\sin\frac{\theta}{2})^2.
\end{array}
$$
By \eqref{odd-III}, \eqref{odd-III-B}, \eqref{vec} and Lemma \ref{lem-6.8}, we have
\begin{equation*}%\label{fab}
\begin{array}{l}
t_{2n-1}^2A_{2n}s=-c_nUC_- Xs+YXs=-(2c_n\Delta_1\sec\theta)\hat s
+O(e^{-2^{n+1}\gamma}),\\
t_{2n-1}^2B_{2n}\hat s=c_nUC_+Y\hat s+XY\hat s=
(2c_n\Delta_2\sec\theta) s
+O(e^{-2^{n+1}\gamma}).
\end{array}
\end{equation*}
It is seen that $\Delta_1,\Delta_2=O(e^{-2^n\gamma}).$ Since $c_n=1/\sqrt{2}+O(e^{-2^{n}\gamma}), $ we have
\begin{equation}\label{fab}
\begin{array}{l}
 A_{2n}s=-\sqrt{2}\sec\theta \ t_{2n-1}^{-2}\Delta_1 \ \hat s
+O(e^{-2^{n}\gamma}),\\
 B_{2n}\hat s=\ \ \sqrt{2}\sec\theta\ t_{2n-1}^{-2}\Delta_2\   s
+O(e^{-2^{n}\gamma}).
\end{array}
\end{equation}

Now we study $t_{2n-1}^{-2}\Delta_i$. Since $C_+s=\vec{0}$ and $(UC_-)^2$ is zero matrix, by \eqref{vec},
$$
\begin{array}{cl}
&t_{2n-1}^4A_{2n+1}s\\
=&(t_{2n-1}A_{2n-1})(t_{2n-1}B_{2n-1})(t_{2n-1}B_{2n-1})(t_{2n-1}A_{2n-1})s\\
=&(c_nUC_++X)(-c_nUC_-+Y)(-c_nUC_-+Y)(c_nUC_++X)s\\
=&-c_n^2UC_+UC_-YXs-c_n^2UC_+YUC_-Xs+O(e^{-3\times2^n\gamma})\\
=&-(4c_n^2\sec^2\theta)s
\left( s_\perp^t \hat s \hat s_\perp^t YXs+s_\perp^t Y \hat s \hat s_\perp^t X s
\right)+O(e^{-3\times2^n\gamma})\\
=&-4c_n^2\sec^2\theta(\Delta_3\cos\theta+\Delta_1\Delta_2)s
+O(e^{-3\times2^n\gamma}).
\end{array}
$$
By  \eqref{sub-decay-2n-III}, $A_{2n+1}s=O(e^{-2^n\gamma})$.  By \eqref{t-mu-III}, $t_{2n-1}^4\sim O(e^{-2^{n+1}\gamma})$. Thus we have
\begin{equation*}
\Delta_3\cos\theta+\Delta_1\Delta_2=O(e^{-3\times2^n\gamma}).
\end{equation*}
Analogously, we have
$$
t_{2n-1}^4B_{2n+1}\hat s=
-4c_n^2\sec^2\theta(\Delta_4\cos\theta+\Delta_1\Delta_2)\hat s
+O(e^{-3\times2^n\gamma}),
$$
and consequently
\begin{equation*}
\Delta_4\cos\theta+\Delta_1\Delta_2=O(e^{-3\times2^n\gamma}).
\end{equation*}
Since $\theta\in(-\pi/2,\pi/2),$ we have $\cos \theta\ne 0$.  Thus we conclude that
\begin{equation*}
\Delta_3-\Delta_4=O(e^{-3\times2^n\gamma}).
\end{equation*}
On the other hand, by \eqref{trace-|||},
$$
\Delta_3-\Delta_4=(x_n+z_n)(x_n-z_n-y_n\sin\theta)=t_{2n-1}^2(x_n-z_n-y_n\sin\theta),
$$
Thus we have
$$
t_{2n-1}^{-2}(x_n-z_n-y_n\sin\theta)=t_{2n-1}^{-4}(\Delta_3-\Delta_4)=O(e^{-2^n\gamma}).
$$
Together with \eqref{trace-|||}, \eqref{det}, we get
$$%\begin{equation}
\begin{cases}
t_{2n-1}^{-2}x_n&=\frac{1}{2}-\frac{\sqrt{2}\sin\theta}{2(1+\cos\theta)}+O(e^{-2^{n}\gamma}),\\
t_{2n-1}^{-2}y_n&=-\frac{\sqrt{2}}{1+\cos\theta}+O(e^{-2^{n}\gamma}),\\
t_{2n-1}^{-2}z_n&=\frac{1}{2}+\frac{\sqrt{2}\sin\theta}{2(1+\cos\theta)}+O(e^{-2^{n}\gamma}).\\
\end{cases}
$$%\end{equation}
Then $t_{2n-1}^{-2}\Delta_1, t_{2n-1}^{-2}\Delta_2=\frac{\cos\theta}{2}+O(e^{-2^{n}\gamma})$.  By \eqref{fab}, the result follows.
\end{proof}

 %%%%%%%%%%
 %%%%%%%%%%

\subsection{One-sided Subordinate solution}\label{5.5}\

In this subsection, we prove Theorem \ref{main-type-III}.
At first, analogous to Lemma \ref{stab-globe-II}, we have:

\begin{lemma}\label{stab-globe-III}
There exist $\alpha=\alpha(E)>0$ and  unit vector $\hat v$ such that for any  $n\in\N$ and  any vector $w\nparallel \hat v$, we have
$$
\|T_n w\|\gtrsim n^{-\alpha} e^{\frac{\gamma}{2\sqrt{2}}\sqrt{n}}.
$$
\end{lemma}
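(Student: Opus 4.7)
\textbf{Proof plan for Lemma \ref{stab-globe-III}.} The strategy mirrors the proof of Lemma \ref{stab-globe-II} in the type-II case, exploiting the upper and lower bounds on $\|T_n\|$ obtained for type-III energies (namely, $n^{-\alpha} e^{\frac{\gamma}{2\sqrt{2}}\sqrt{n}} \le \|T_n(E)\| \le n^\alpha e^{\sqrt 2 \gamma\sqrt n}$) together with an Oseledets-type argument to produce the candidate stable direction $\hat v$.

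First, for each $n$ I would perform the singular value decomposition, obtaining unit vectors $s_n \perp u_n$, $x_n \perp y_n$ with $T_n s_n = \|T_n\|^{-1} x_n$ and $T_n u_n = \|T_n\| y_n$. Writing $\beta_n = \measuredangle(s_n, s_{n+1})$, the decomposition $s_n = s_{n+1}\cos\beta_n + u_{n+1}\sin\beta_n$ combined with $\|T_{n+1} s_n\| \ge |\sin\beta_n|\,\|T_{n+1}\|$ and $\|T_{n+1} s_n\| = \|T_{n\to n+1} T_n s_n\| \lesssim \|T_n\|^{-1}$ (using that $T_{n\to n+1} \in \{\varrho(a),\varrho(b)\}$ has bounded norm) yields
\begin{equation*}
|\sin\beta_n| \lesssim \frac{1}{\|T_n\|\,\|T_{n+1}\|} \lesssim n^{2\alpha}\, e^{-\frac{\gamma}{\sqrt 2}\sqrt n}.
\end{equation*}
Since the right-hand side is summable, $\{s_n\}$ is a Cauchy sequence in projective space, and I set $\hat v := \lim_n s_n$.

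For the lower bound on $\|T_n w\|$, fix a unit vector $w \nparallel \hat v$ and let $\varphi = \measuredangle(w, \hat v) \in (0,\pi/2]$. For $n$ large enough, the angle $\theta_n := \measuredangle(w, s_n)$ satisfies $\varphi/2 < \theta_n \le \pi/2$ by continuity. Decomposing $w = s_n \cos\theta_n + u_n \sin\theta_n$ and using $T_n u_n = \|T_n\| y_n$ with $\|y_n\|=1$, I obtain
\begin{equation*}
\|T_n w\| \ge |\sin\theta_n|\,\|T_n\| \gtrsim |\sin(\varphi/2)|\,\|T_n\| \gtrsim n^{-\alpha}\, e^{\frac{\gamma}{2\sqrt 2}\sqrt n}.
\end{equation*}
The general case of arbitrary (non-unit) $w \nparallel \hat v$ follows by rescaling, exactly as in Lemma \ref{stab-globe-II}.

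No step here presents a genuine obstacle: the only subtlety, compared with the type-II case, is that the coefficient in the exponential is now $\gamma/(2\sqrt 2)$ rather than $\gamma/2$, because the lower bound of Theorem \ref{main-norm} for type-III yields the weaker lower envelope $e^{c_1 \gamma \sqrt n}$ with $c_1 \in (0,\sqrt 2/4)$. One has to be careful to use this correct exponent both in estimating $|\sin\beta_n|$ (where summability still holds, as $e^{-c\sqrt n}$ is summable for any $c>0$) and in the final lower bound. Everything else is a verbatim transcription of the type-II argument.
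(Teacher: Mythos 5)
Your proposal is correct and is essentially the paper's argument: the paper omits the proof of this lemma precisely because it is the verbatim adaptation of Lemma \ref{stab-globe-II}, replacing the type-II bound \eqref{T-n-lu} by the type-III bound $n^{-\alpha}e^{\frac{\gamma}{2\sqrt{2}}\sqrt{n}}\le\|T_n\|\le n^{\alpha}e^{\sqrt{2}\gamma\sqrt{n}}$ coming from \eqref{T-k-III}, exactly as you do. Your SVD/Oseledets-type estimate of $|\sin\beta_n|$, the limit direction $\hat v$, and the final angle decomposition all match the paper's proof in Section \ref{4.5}, with the correct adjusted exponent $\gamma/(2\sqrt{2})$.
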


\smallskip

Recall that $\psi^\theta$ is the solution of $H_\lambda\psi=E\psi$ with $\vec{\psi}_0=v_\theta.$

\noindent {\bf Proof of Theorem \ref{main-type-III}.\ }
(i) Let  $s$ and $\hat s$ be the   the stable directions of $\{A_{2n-1}:n\ge 1\}$ and $\{B_{2n-1}:n\ge 1\}$, respectively.
Assume $\hat \theta$ is such that $v_{\hat \theta}=s.$
 By Corollary \ref{stab-ustab}, $\hat \theta\ne \pm\pi/4+k\pi$ and    $\hat s\parallel v_{-\hat\theta -\frac{\pi}{2}}$.
Define  $\psi^r:=\psi^{\hat \theta}$ and $\psi^l:=\psi^{-\hat \theta-\frac{\pi}{2}}.$ Let us check that $\psi^r$ and $\psi^l$ satisfy the desired properties.

We deal with $\psi^r$ firstly. At first we claim that $\hat v\parallel s,$ where $\hat v$ is from Lemma \ref{stab-globe-III}.
Indeed if on the contrary,  we have
$$
\|T_ns\|\gtrsim n^{-\alpha}e^{\frac{\gamma}{2\sqrt{2}}\sqrt{n}}\to \infty.
$$
On the other hand, by Proposition \ref{Oseledets-III}, we have
$$
\|T_{2^{2n-1}}s\|=\|A_{2n-1}s\|=O(e^{-2^{n-1}\gamma})\to 0.
$$
 Thus we get a contradiction.

 Since $\vec{\psi}_n^r=T_n\vec{\psi}_0^r=T_ns,$ the third equation of \eqref{subord-r-III} follows from Proposition \ref{Oseledets-III}; since $\hat s\parallel v_{-\hat \theta-\frac{\pi}{2}}$,  the fourth  equation of \eqref{subord-r-III} follows from Proposition \ref{addition-III}.

By Corollary \ref{stab-ustab} (i), $s\nparallel \hat s$. Since $Uv_{\hat \theta}=v_{-\frac{\pi}{2}-\hat \theta}$, by \eqref{formu-reverse} and Lemma \ref{stab-globe-III},
$$
\|\vec{\psi}^r_{-n}\| =
\|T_{-n}\vec{\psi}^{\hat \theta}_0\|=\|UT_{n}Uv_{\hat \theta}\|=
\|T_{n}v_{-\frac{\pi}{2}-\hat \theta}\|=\|T_n\hat s\|\gtrsim e^{\frac{\gamma}{4}\sqrt{n}}.
$$
That is, the second equation of \eqref{subord-r-III} holds.

Finally the proof of the first equation of \eqref{subord-r-III} is the same as that of the first equation of \eqref{subord-II}.

Now we consider $\psi^l.$ By \eqref{formu-reverse}, for $n\in\Z$,
$$
\vec{\psi}^l_{-n}=
T_{-n}v_{-\frac{\pi}{2}-\hat \theta}=UT_{n}Uv_{-\frac{\pi}{2}-\hat \theta}=
UT_{n}v_{\hat \theta}=U\vec{\psi}_n,
$$
or equivalently, $\psi^l_n=\psi^r_{1-n}$ for $n\in \Z$. Then \eqref{subord-l-III} follows from \eqref{subord-r-III}.

(ii)    By the assumption, $\vec{\phi}_0 \nparallel s, \hat s.$ Then for $n\in\N$, by Lemma \ref{stab-globe-III},
$$
\|\vec{\phi}_n\|=\|T_n\vec{\phi}_0\|\gtrsim e^{\frac{\gamma}{4}\sqrt{n}}.
$$
Since $\vec{\phi}_0\nparallel \hat s$, by Corollary \ref{stab-ustab} (i), $U\vec{\phi}_0 \nparallel U \hat s\parallel s.$ By \eqref{formu-reverse} and Lemma \ref{stab-globe-III},
$$
\|\vec{\phi}_{-n}\|=
\|T_{-n}\vec{\phi}_0\|=\|UT_{n}U\vec{\phi}_0\|=\|T_{n}U\vec{\phi}_0\|
 \gtrsim e^{\frac{\gamma}{4}\sqrt{n}}.
$$
Thus \eqref{non-subord-III} holds.
 \hfill $\Box$

%%%%%%%%%%%%%%%%%%%%%%%%%%%%%%%%%%
\section{Local dimensions of the  spectral measure}\label{sec-loc-dim}

In this section, we  prove Theorem \ref{main-locdim}. The main tool is the subordinacy theory developed in \cite{GP,G,KP,JL1,JL00}.
Let us recall the basic settings and the main results of \cite{JL1,JL00}.

\subsection{Basic notations and  facts}\

Fix a bounded real  potential $V$ and consider the operator $H=H_V.$
Let $\Z^+=\{1,2,3,\cdots\}$ and $\Z^-=\{\cdots,-2,-1,0\}$.  Consider the family of phase boundary conditions:
\begin{equation}\label{bdy-condition}
\psi_0\cos\beta+\psi_1\sin\beta=0,
\end{equation}
where $-\pi/2< \beta\le\pi/2.$

 Let $z=E+i\epsilon$ and consider equation
 \begin{equation}\label{eigen-equation}
 Hu=zu.
 \end{equation}
 It is known that for $\epsilon>0$, the eigen-equation  has unique(up to normalization) solutions $\hat u^{\pm}_z$ %%%\hat
 which are in  $\ell^2$ at, correspondingly, $\pm\infty.$ Let $\hat u^{\pm}_{\beta,z}$ be $\hat u^{\pm}_z$ normalized by %%%
 $\hat u^{\pm}_{\beta,z}(0)\cos\beta+\hat u^{\pm}_{\beta,z}(1)\sin\beta=1.$  We denote by $u^{\pm}_{\beta,z}$
 the solution of \eqref{eigen-equation} on $\Z^{\pm}$  satisfying the boundary condition \eqref{bdy-condition} and
 normalized by $|u_{\beta,z}^\pm(0)|^2+|u_{\beta,z}^\pm(1)|^2=1.$
 For $z$ in the upper half-plane, the right and left Weyl-Titchmarsh $m$-functions are uniquely defined by
 \begin{equation*}
\hat u_{\beta,z}^\pm=u_{\beta+\pi/2,z}^\pm\mp m_\beta^\pm(z)u_{\beta,z}^\pm.
\end{equation*}
From the definition, we know that
$
m_{\beta+\pi}^\pm(z)=m_\beta^\pm(z).
$

For $\Im z>0$,  define the whole-line $m$-function as
\begin{equation*}
M(z)=\frac{m_0^+(z)m_0^-(z)-1}{m_0^+(z)+m_0^-(z)}=
\frac{m_\beta^+(z)m_\beta^-(z)-1}{m_\beta^+(z)+m_\beta^-(z)}.
\end{equation*}
$M(z)$ coincides with the Borel transform of the spectral measure $\mu$ of $H$:
\begin{equation*}
M(z)=\int\frac{d\mu(x)}{x-z}.
\end{equation*}
It is shown in  \cite{dJLS} that, for $\alpha\in [0,1),$
\begin{equation}\label{M-density}
\begin{cases}
\limsup_{\epsilon\to 0} \epsilon^{1-\alpha}|M(E+i\epsilon)|=\infty&\Leftrightarrow D_\mu^\alpha(E)=\infty\\
\limsup_{\epsilon\to 0} \epsilon^{1-\alpha}|M(E+i\epsilon)|=0&\Leftrightarrow D_\mu^\alpha(E)=0.
\end{cases}
\end{equation}

For any $u:\Z^+\to\R$,  define the norm of $u$ over an interval of length $L$ as
\begin{equation*}
\|u\|_L:=\left( \sum_{n=1}^{[L]} |u(n)|^2+ (L-[L]) |u([L]+1)|^2\right)^{1/2},
\end{equation*}
where $[L]$ denotes the integer part of $L.$ Similarly,
 for any $u:\Z^-\to\R$,  define the norm of $u$ over an interval of length $L$ as
\begin{equation*}
\|u\|_L:=\left( \sum_{n=0}^{[L]-1} |u(-n)|^2+ (L-[L]) |u(-[L])|^2\right)^{1/2}.
\end{equation*}
Now given any $\epsilon>0$, define the lengths $L_\beta^\pm(\epsilon)$ by requiring the equality
\begin{equation}\label{def-L-epsilon}
\|u_{\beta,E}^\pm\|_{L_\beta^\pm(\epsilon)}\|u_{\beta+\frac{\pi}{2},E}^\pm\|_{L_\beta^\pm(\epsilon)}=\frac{1}{2\epsilon}.
\end{equation}
Note that $L_\beta^\pm(\epsilon)\to \infty$ when $\epsilon\to 0.$
It is proven in \cite{JL1} that
\begin{equation}\label{est-M+}
|m_\beta^+(E+i\epsilon)|\sim
\frac{\|u_{\beta+\frac{\pi}{2},E}^+\|_{L_\beta^+(\epsilon)}}{\|u_{\beta,E}^+\|_{L_\beta^+(\epsilon)}}.
\end{equation}

Now introduce  operator $\mathscr{U}:\ell^2(\Z)\to \ell^2(\Z)$ as  $(\mathscr{U}\psi)_n=\psi_{1-n},  n\in\Z.$
By this operator, the space $\ell^2(\Z^-)$ is naturally identified with $\ell^2(\Z^+)$.
If we define an operator $\tilde H:=\mathscr{U}H\mathscr{U}^{-1}$,
it is seen that $\tilde H$ is another Schr\"odinger  operator with  potential $\tilde V$,
 which is the reflection of $V$ with respect to the point $1/2.$
 Let $\tilde m^+_\beta, \tilde u_{\beta}^\pm, \tilde L_\beta^\pm$, denote, correspondingly,
 $ m^+_\beta,  u_{\beta}^\pm,  L_\beta^\pm$ of the operator $\tilde H.$ Then by an elementary computation, we have
\begin{equation*}
M(z)=\frac{m_\beta^+(z)\tilde m_{\frac{\pi}{2}-\beta}^+(z)-1}{m_\beta^+(z)+\tilde m_{\frac{\pi}{2}-\beta}^+(z)}.
\end{equation*}

Now we go back to the  Thue-Morse potential  $\lambda w$.
By \eqref{symmetry-TM}, we know that $\lambda w$ is symmetric with respect to $1/2$,
consequently we have $\tilde H_\lambda=H_\lambda$ and hence  $\tilde m_\beta^+=m_\beta^+.$
Thus for Thue-Morse Hamiltonian $H_\lambda$, the Weyl's $m$-function satisfies
\begin{equation}\label{formula-M}
M(z)=\frac{m_\beta^+(z) m_{\frac{\pi}{2}-\beta}^+(z)-1}{m_\beta^+(z)+m_{\frac{\pi}{2}-\beta}^+(z)}, \ \ \ \
(\forall \beta\in (-\pi/2,\pi/2]).
\end{equation}

\subsection{Proof of Theorem \ref{main-locdim}}\

At first we consider   type-I energy.  By  \eqref{bd-transfer},
if $E\in \Sigma_I$, then for any $\beta\in (-\pi/2,\pi/2]$,
\begin{equation}\label{norm-u-L-I}
\|u_{\beta,E}^+\|_L\sim L^{1/2}.
\end{equation}

\noindent{\bf Proof of Theorem \ref{main-locdim} ($E\in \Sigma_I$).}  \
Take $\beta=\pi/4.$ By \eqref{norm-u-L-I} and \eqref{est-M+},
we have $|m_{\pi/4}^+(E+i\epsilon)|\sim 1.$ So by \eqref{formula-M},
$
|M(E+i\epsilon)|\lesssim 1.
$
Then by \eqref{M-density}, $D_\mu^\alpha(E)=0$ for all $\alpha<1$. Thus by \eqref{cha-locdim}, we have
$d_\mu(E)\ge 1. $

Next we show that $\mu(\Sigma_I)=0.$ If otherwise, we have $\mu(\Sigma_I)>0.$ Since $d_\mu(E)\ge 1$ for every $E\in \Sigma_I$, it is well-known that $\dim_H \Sigma_I\ge 1$(see for example \cite{Fa}). On the other hand,  notice that  $\Sigma_I$ is countable, then $\dim_H\Sigma_I=0$, we get a contradiction.
\hfill $\Box$

\smallskip

Next we consider  type-II energy.
\begin{lemma}\label{box-norm}
If $E\in \Sigma_{II}$, then there exists $\alpha,\gamma>0$  and $\theta=\pm\pi/4$ such that
\begin{equation}\label{norm-u-L-II}
\|u_{\theta,E}^+\|_L\lesssim L^{\alpha+1} \ \ \text{and }\ \
\|u_{\beta,E}^+\|_L\gtrsim e^{{\gamma\sqrt{L}}/{4}},\ \ (\forall \beta\in (-\pi/2,\pi/2]\setminus\{\theta\}).
\end{equation}
\end{lemma}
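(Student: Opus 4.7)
The plan is to identify $u_{\theta,E}^+$ with the pseudo-eigenfunction $\psi$ furnished by Theorem \ref{main-type-II}, and to identify $u_{\beta,E}^+$ for $\beta\ne\theta$ with a solution transverse to $\psi$, then read off both bounds from the pointwise estimates already proved.

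First I would unpack the definition of $u_{\beta,E}^+$. The boundary condition $\psi_0\cos\beta+\psi_1\sin\beta=0$ together with the normalization $|u^+(0)|^2+|u^+(1)|^2=1$ forces, by a one-line computation, $\vec{u}_{\beta,E,0}^+=\pm(\cos\beta,-\sin\beta)^t=\pm v_\beta$. By Theorem \ref{main-type-II}(i) the subordinate solution $\psi$ satisfies $\vec{\psi}_0=v_{\eta\pi/4}$ for some sign $\eta\in\{+,-\}$, so setting $\theta:=\eta\pi/4\in\{\pm\pi/4\}$ one has $u_{\theta,E}^+=\pm\psi$. The first inequality in \eqref{norm-u-L-II} then follows immediately from the polynomial bound $\|\vec{\psi}_n\|\lesssim n^\alpha$ in \eqref{subord-II}: this gives $|\psi(n)|\le\|\vec{\psi}_{n-1}\|\lesssim n^\alpha$, and hence
$$\|u_{\theta,E}^+\|_L^2\le\sum_{n=1}^{[L]+1}|\psi(n)|^2\lesssim L\cdot L^{2\alpha},$$
so $\|u_{\theta,E}^+\|_L\lesssim L^{\alpha+1/2}\le L^{\alpha+1}$.

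For the second inequality, fix $\beta\in(-\pi/2,\pi/2]\setminus\{\theta\}$. Since $v_\beta\nparallel v_\theta=\vec{\psi}_0$, the solution $\phi:=u_{\beta,E}^+$ is independent with $\psi$, and Theorem \ref{main-type-II}(ii) yields $\|\vec{\phi}_n\|\gtrsim e^{\gamma\sqrt{n}/4}$. To convert this pair-norm estimate into an interval $\ell^2$-norm estimate, I use $\|\vec{\phi}_n\|^2=|\phi(n+1)|^2+|\phi(n)|^2$: for $L$ large, taking $n_0=[L]-1$,
$$\|u_{\beta,E}^+\|_L^2\ge|\phi(n_0)|^2+|\phi(n_0+1)|^2=\|\vec{\phi}_{n_0}\|^2\gtrsim e^{\gamma\sqrt{n_0}/2}\gtrsim e^{c\sqrt{L}}$$
for some constant $c>0$. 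Taking square roots and, if necessary, relabelling the constant in the exponent as $\gamma/4$ gives the claimed bound.

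The argument is essentially bookkeeping on top of Theorem \ref{main-type-II}. The only mild nuance, and the step I would double-check most carefully, is the lossy conversion from the two-term norm $\|\vec{\phi}_n\|$ to the $L$-interval $\ell^2$-norm: this costs a multiplicative factor in the exponent of $\sqrt{n}$ (going from $n_0$ to $L$) and a bounded factor in front, both of which are harmlessly absorbed into the constant $\gamma$. There is no genuine obstacle because the subordinate and transverse solutions are both pointwise controlled on \emph{both} signs of $n$ (we only need $n\in\Z^+$ here), so the half-line setting is even simpler than the full-line case handled in Theorem \ref{main-type-II}.
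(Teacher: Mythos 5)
Your proposal is correct and follows essentially the same route as the paper: identify $u_{\theta,E}^+$ with the subordinate solution $\psi$ of Theorem \ref{main-type-II} via the initial vector $v_{\pm\pi/4}$, sum the pointwise polynomial bound from \eqref{subord-II} for the upper estimate, and bound $\|u_{\beta,E}^+\|_L^2$ from below by the last two terms $|\phi([L]-1)|^2+|\phi([L])|^2=\|\vec{\phi}_{[L]-1}\|^2$ using \eqref{non-subord-II}. The only cosmetic differences (your slightly sharper $L^{\alpha+1/2}$ and the remark about relabelling the exponent, which is in fact unnecessary since the square root yields exactly $\gamma/4$) do not change the argument.
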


\begin{proof}
Let $\psi$ be the unique subordinate solution in Theorem \ref{main-type-II}, assume $\vec{\psi}_0=v_{\theta}$, then $\theta=\pm\pi/4.$ By the definition of $u_{\theta,E}^+$ and the first inequality of \eqref{subord-II},   for any $n\ge 1,$
we have $
|u_{\theta,E}^+(n)|=|\psi(n)|\lesssim
 n^{\alpha}.
$
Consequently
\begin{equation*}
\|u_{\theta,E}^+\|_L\le \left(\sum_{n=1}^{[L]+1} |u_{\theta,E}^+(n)|^2\right)^{1/2}\lesssim L^{\alpha+1}.
\end{equation*}

 If $\beta\in (-\pi/2,\pi/2]$ and $\beta\ne \theta$, by \eqref{non-subord-II} we have
$
|u_{\beta,E}^+(n)|^2+|u_{\beta,E}^+(n+1)|^2  \gtrsim e^{\gamma \sqrt{n}/2}.
$
In particular,
$
\|u_{\beta,E}^+\|_L^2\ge|u_{\beta,E}^+([L]-1)|^2+|u_{\beta,E}^+([L])|^2 \gtrsim e^{{\gamma\sqrt{L}}/{2}}.
$
\end{proof}

\noindent{\bf Proof of Theorem \ref{main-locdim} ($E\in \Sigma_{II}$).}  Fix any $E\in\Sigma_{II}.$ Assume $\theta=\pm\pi/4$ is given by Lemma \ref{box-norm},
then by \eqref{est-M+} and  \eqref{norm-u-L-II},
\begin{equation}\label{7.10}
|m_{\theta}^+(E+i\epsilon)|\sim\frac{\|u_{\theta+\pi/2,E}^+
\|_{L_\theta^+(\epsilon)}}{\|u_{\theta,E}^+\|_{L_\theta^+(\epsilon)}}
\gtrsim \frac{e^{\gamma\sqrt{L_\theta^+(\epsilon)}/4}}{(L_{\theta}^+(\epsilon))^{\alpha+1}}
\gtrsim e^{\gamma\sqrt{L_\theta^+(\epsilon)}/5}.
\end{equation}
Since $\theta=\pm \pi/4$, we have  $m^+_{\theta}=m^+_{\pi/2-\theta}$.  By \eqref{formula-M} and \eqref{7.10},
\begin{equation*}
|M(E+i\epsilon)|\sim|m_{\theta}^+(E+i\epsilon)|.
\end{equation*}
Fix any $\eta\in (0,1)$, by \eqref{def-L-epsilon}, \eqref{est-M+} and \eqref{norm-u-L-II}, we have
\begin{eqnarray*}
\epsilon^{1-\eta}|M(E+i\epsilon)|&\sim&\epsilon^{1-\eta}|m_\theta^+(E+i\epsilon)|
\sim\frac{\|u_{\theta+\pi/2,E}^+\|_{L_\theta^+(\epsilon)}^\eta}{\|u_{\theta,E}^+\|_{L_\theta^+(\epsilon)}^{2-\eta}}\\
&\gtrsim& \frac{e^{\gamma\eta\sqrt{L_\theta^+(\epsilon)}/4}}{(L_{\theta}^+(\epsilon))^{(\alpha+1)(2-\eta)}}
\gtrsim e^{\gamma\eta\sqrt{L_\theta^+(\epsilon)}/5}\to \infty
\end{eqnarray*}
when $\epsilon\downarrow 0.$ By \eqref{M-density}, we conclude that $D_\mu^\eta(E)=\infty.$
Then by \eqref{cha-locdim}, we conclude that $d_\mu(E)=0.$

Now a well known result in fractal geometry shows that $\dim_H \Sigma_{II}=0$ (see for example \cite{Fa}).
\hfill $\Box$

\smallskip

At last,  we consider  type-III energy.

\begin{lemma}
If $E\in \Sigma_{III},$ then there exist $\alpha,\gamma>0$, $\beta(E)\in (-\pi/2,\pi/2]$ such that
\begin{equation}\label{norm-u-L-III}
\|u_{\beta(E),E}^+\|_L\lesssim L^{\alpha+1} \ \ \text{and }\ \ \|u_{\beta,E}^+\|_L\gtrsim e^{{\gamma\sqrt{L}}/{4}},\ \ (\forall \beta\in (-\pi/2,\pi/2]\setminus\{\beta(E)\}).
\end{equation}
Moreover if $E\in \Sigma_{III,1}$, then $\beta(E)=0$ or $\pi/2$; if $E\in\Sigma_{III,2}$, then $\beta(E)\ne 0,\pm\pi/4,\pi/2.$
\end{lemma}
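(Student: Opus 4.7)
The plan is to read off the lemma directly from Theorem \ref{main-type-III} after translating boundary phases into initial vectors. I would first observe that the boundary condition $u^+_{\beta,E}(0)\cos\beta+u^+_{\beta,E}(1)\sin\beta=0$ together with the normalization $|u^+_{\beta,E}(0)|^2+|u^+_{\beta,E}(1)|^2=1$ forces the initial vector $\vec{u}^+_{\beta,E}(0)=(u^+_{\beta,E}(1),u^+_{\beta,E}(0))^t$ to be $\pm v_\beta$, since $v_\beta=(\cos\beta,-\sin\beta)^t$ is a unit vector perpendicular to $(\sin\beta,\cos\beta)^t$. Thus $\beta\mapsto[\vec u^+_{\beta,E}(0)]$ is a bijection from $(-\pi/2,\pi/2]$ onto the projective line.

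I would then define $\beta(E)$ to be the unique phase in $(-\pi/2,\pi/2]$ with $v_{\beta(E)}\parallel \vec\psi^r_0=v_\theta$, where $\theta$ and $\psi^r$ come from Theorem \ref{main-type-III}(i). Because that theorem guarantees $\theta\ne\pm\pi/4+k\pi$, we immediately get $\beta(E)\ne\pm\pi/4$. With this choice $u^+_{\beta(E),E}$ is a scalar multiple of $\psi^r$, so the polynomial bound in \eqref{subord-r-III} gives $|u^+_{\beta(E),E}(n)|\lesssim n^\alpha$, whence
\[
\|u^+_{\beta(E),E}\|_L^2\;\lesssim\;\sum_{n=1}^{[L]+1}n^{2\alpha}\;\lesssim\;L^{2\alpha+1},
\]
which is stronger than the claimed $L^{\alpha+1}$ bound.

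For $\beta\ne\beta(E)$, the initial vector $\pm v_\beta$ is not parallel to $v_\theta=\vec\psi^r_0$, so $u^+_{\beta,E}\notin\mathrm{span}(\psi^r)$. If additionally $u^+_{\beta,E}\notin\mathrm{span}(\psi^l)$ then \eqref{non-subord-III} applies; otherwise $u^+_{\beta,E}$ is a multiple of $\psi^l$ and the left-subordinate estimate $\|\vec\psi^l_n\|\gtrsim e^{\gamma\sqrt n/4}$ on $\mathbb{Z}^+$ from \eqref{subord-l-III} applies. Either way $\|\vec u^+_{\beta,E}(n)\|\gtrsim e^{\gamma\sqrt n/4}$, and retaining the dominant pair near $n=[L]$ gives $|u^+_{\beta,E}([L]-1)|^2+|u^+_{\beta,E}([L])|^2\gtrsim e^{\gamma\sqrt L/2}$, hence $\|u^+_{\beta,E}\|_L\gtrsim e^{\gamma\sqrt L/4}$.

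Finally, the classification follows immediately from the definitions: if $E\in\Sigma_{III,1}$ then $\vec\psi^r_0=v_0$ or $v_{\pi/2}$ and hence $\beta(E)=0$ or $\pi/2$; if $E\in\Sigma_{III,2}$ then $\vec\psi^r_0\notin\{v_0,v_{\pi/2}\}$, so $\beta(E)\notin\{0,\pi/2\}$, and combined with $\beta(E)\ne\pm\pi/4$ we get $\beta(E)\notin\{0,\pm\pi/4,\pi/2\}$. No step presents a real obstacle; the only subtlety worth highlighting is that being non-parallel to $\psi^r$ alone suffices for the lower bound, because on $\mathbb{Z}^+$ even the left-subordinate solution $\psi^l$ already blows up sub-exponentially.
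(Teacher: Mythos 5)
Your proof is correct and follows essentially the same route as the paper: take $\beta(E)$ so that $v_{\beta(E)}\parallel\vec{\psi}^r_0$ (the stable direction of $\{A_{2n-1}\}$), get the upper bound from \eqref{subord-r-III}, the lower bound from \eqref{non-subord-III} together with the growth of $\psi^l$ on $\Z^+$ in \eqref{subord-l-III}, and the classification from the definitions plus $\theta\ne\pm\pi/4+k\pi$ (i.e.\ Corollary \ref{stab-ustab}). You also make explicit the one subtlety the paper leaves implicit, namely that a solution independent of $\psi^r$ may coincide with $\psi^l$ and is then handled by \eqref{subord-l-III} rather than \eqref{non-subord-III}.
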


\begin{proof}
By the proof of Theorem \ref{main-type-III}, $\psi^r$ is the solution with $\vec{\psi}^r_0=s$, where $s$ is the stable direction of $\{A_{2n-1}:n\in\N\}$. Assume $\beta(E)\in (-\pi/2,\pi/2]$ is such that $s=v_{\beta(E)}$, then $u_{\beta(E),E}^+(n)=\psi^r(n)$ for $n\ge 1.$
Now the proof of \eqref{norm-u-L-III}  is the same as that of \eqref{norm-u-L-II}, by using \eqref{subord-r-III}, \eqref{subord-l-III} and \eqref{non-subord-III}.
  $\beta(E)=0$ or $\pi/2$ for $E\in \Sigma_{III,1}$ follows from the definition.
$\beta(E)\ne 0,\pm\pi/4,\pi/2$ for $E\in \Sigma_{III,2}$ follows
from the definition and Corollary \ref{stab-ustab} (iii).
\end{proof}

  Indeed,  we know more about $\Sigma_{III,1}$:

\begin{lemma}\label{sigma-3-1}
$\Sigma_{III,1}\subset \{\lambda, -\lambda\}$. Moreover,
 if $|\lambda|>1,$ then  $\Sigma_{III,1}=\emptyset$; on the other hand, there exists a uncountable set $\Gamma\subset [-1,1]\setminus\{0\}$ such that for each $\lambda\in \Gamma$, we have $\Sigma_{III,1}= \{\lambda, -\lambda\}.$
\end{lemma}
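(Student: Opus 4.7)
The proof splits into three parts. The inclusion $\Sigma_{III,1}\subset\{\lambda,-\lambda\}$ follows immediately from Corollary \ref{stab-ustab}(iii): if $E\in\Sigma_{III,1}$, then the stable direction $s=\vec\psi_0^r$ is parallel to $v_0$ or $v_{\pi/2}$, which by (iii) forces $|E|=|\lambda|$, i.e.\ $E\in\{\lambda,-\lambda\}$. For $|\lambda|>1$, I would compute directly from \eqref{t12} and \eqref{recurrence}: $t_1(\pm\lambda)=-2$, $t_2(\pm\lambda)=2-4\lambda^2<-2$, and $t_3(\pm\lambda)=t_1^2(t_2-2)+2=2-16\lambda^2<-2$. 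Hence $\pm\lambda\notin\sigma_2\cup\sigma_3$, and \eqref{structure-spectrum} gives $\pm\lambda\notin\sigma(H_\lambda)\supset\Sigma_{III,1}$, so $\Sigma_{III,1}=\emptyset$.

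For the main content, I assume $|\lambda|<1$ and parametrize by $x:=4\lambda^2\in(0,4)$. At $E=\pm\lambda$ the initial data of the trace sequence are $(t_1,t_2)(\pm\lambda)=(-2,2-x)$, and applying $f$ once gives
$$
h(x):=(t_3,t_4)(\pm\lambda)=\bigl(2-4x,\ 2-4x(2-x)^2\bigr).
$$
A direct algebraic check yields the identity
$$
(2-4x)^2-2-\bigl(2-4x(2-x)^2\bigr)=4x^3,
$$
so $h(x)$ lies strictly below the parabola $\mathscr{P}$ for all $x>0$. Restricting to $x\in[1/4,3/4]$ one has $2-4x\in[-1,1]$, hence $h(x)\in S\setminus\mathscr{P}$; the endpoints $h(1/4)=(1,-17/16)$ and $h(3/4)=(-1,-43/16)$ sit on the right and left lateral boundaries of $S$ respectively, and $h$ is injective because its first coordinate $2-4x$ is strictly monotone in $x$. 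This is precisely the type of curve used in the proof of Theorem \ref{exist-dense}, so the same topological argument based on Lemma \ref{fiber} — each fiber $\Lambda_\omega$, $\omega\in\{0,1\}^\infty$, is connected, rooted in $\mathscr{P}$, and unbounded in $S$, and therefore separates the two lateral sides of $S\setminus\mathscr{P}$ — forces $h([1/4,3/4])$ to meet every $\Lambda_\omega$. Using the disjointness of distinct fibers (Lemma \ref{fiber}) and the injectivity of $h$, this produces uncountably many distinct $x_\omega\in[1/4,3/4]$ with $h(x_\omega)\in\Lambda(S)\setminus\mathscr{P}$.

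By Lemma \ref{char-II-III} applied with the odd index $k=3$, each such $x_\omega$ produces $\pm\lambda_\omega:=\pm\tfrac12\sqrt{x_\omega}\in\Sigma_{III}$; since $|\pm\lambda_\omega|=|\lambda_\omega|$, the first paragraph then upgrades this to $\Sigma_{III,1}=\{\lambda_\omega,-\lambda_\omega\}$ for the coupling $\lambda_\omega$. Setting $\Gamma:=\{\pm\tfrac12\sqrt{x_\omega}:\omega\in\{0,1\}^\infty\}\subset[-1,1]\setminus\{0\}$ then gives the required uncountable set. The only potentially delicate point in the plan is the topological separation step, but this is exactly what was already handled in the proof of Theorem \ref{exist-dense} under precisely the hypotheses satisfied by $h([1/4,3/4])$; everything else reduces to the elementary identity $(s_3^2-2)-s_4=4x^3$ and the monotonicity of $s_3$ in $x$, so I expect no substantial new obstacle.
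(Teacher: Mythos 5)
Your proposal is correct and is essentially the paper's own argument: your curve $h([1/4,3/4])=\{f(-2,2-x):x\in[1/4,3/4]\}$ is exactly the paper's curve $\mathcal{C}=\{f(-2,y):5/4\le y\le 7/4\}$ reparametrized by $x=4\lambda^2=2-y$, and the remaining steps (the $|\lambda|>1$ trace computation, the topological intersection with every fiber $\Lambda_\omega$ as in Theorem \ref{exist-dense}, Lemma \ref{char-II-III} with $k=3$, and Corollary \ref{stab-ustab}) coincide with the paper's. One small point of precision: the final upgrade from $\pm\lambda_\omega\in\Sigma_{III}$ to $\Sigma_{III,1}=\{\lambda_\omega,-\lambda_\omega\}$ rests on Corollary \ref{stab-ustab}(ii) (at $|E|=\lambda$ the stable direction is $v_0$ or $v_{\pi/2}$), not merely on the inclusion you derived from part (iii) in your first paragraph.
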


\begin{proof}
At first we show that $\Sigma_{III,1}\subset \{\lambda, -\lambda\}$.
Assume $\Sigma_{III,1}$ is nonempty and take $E\in \Sigma_{III,1}.$ By the definition of $\Sigma_{III,1}$,   $\vec{\psi}^r_0=v_0$ or $v_{\frac{\pi}{2}}$. By the proof of Theorem \ref{main-type-III},   $\vec{\psi}^r_0=s$, where $s$ is the stable direction of $\{A_{2n-1}:n\in\N\}$. By   Corollary \ref{stab-ustab},  $E=\pm \lambda.$

Now we show that $\Sigma_{III,1}$ is empty when $|\lambda|>1$. If otherwise, take $E\in \Sigma_{III,1}$, then $E=\pm \lambda.$ By \eqref{t12}, $t_1(E)=-2$ and $t_2(E)=2-4\lambda^2<-2.$ Then by \eqref{recurrence},
$
t_3(E)=2-16\lambda^2<-2.
$
Thus $E\not\in \sigma_2\cup\sigma_3.$ Hence  $E\not\in \sigma(H_\lambda) $ by \eqref{structure-spectrum}, which is a contradiction.

Define a curve $\mathcal C$ as
$$
\mathcal C:=\{f(-2,y): \frac{5}{4}\le y\le \frac{7}{4}\}.
$$
Since $f(-2,y)=(4y-6,4y^2(y-2)+2)$, it is easy to check  that  $\mathcal C\subset S$,
$\mathcal C$ stays below  the parabola $\mathscr{P}$ and touches  the left and right boundaries of $S$. Then the same proof as  Theorem \ref{exist-dense} shows that $(\Lambda_\omega\cap \mathcal C)\setminus \mathscr{P}\ne \emptyset$ for each $\omega\in \{0,1\}^\N$. Assume $y_\omega \in [5/4,7/4]$ is such that $f(-2,y_\omega)\in (\Lambda_\omega\cap \mathcal C)\setminus \mathscr{P}$, then $y_\omega\ne y_{\tilde \omega}$ for $\omega\ne\tilde \omega.$ For each $\omega$, define
$$
\lambda_{\omega,\pm}:= \pm \frac{\sqrt{2-y_\omega}}{2}.
$$
Define $\Gamma:=\{\lambda_{\omega,\pm}:\omega\in \{0,1\}^\N\}$.

Now fix $\lambda\in \Gamma$, we will show that $\pm\lambda\in \sigma(H_\lambda)$ and they are type-III energies.  At first, by \eqref{t12},  we have $t_1(\pm\lambda)=-2$ and $(t_2(\pm\lambda))=2-4\lambda^2$. By the definition of $\Gamma$, there exists $y_\omega\in [5/4,7/4]$ such that
$$
f(t_1(\pm\lambda),t_2(\pm\lambda))=f(-2,2-4\lambda^2)=f(-2,y_\omega)\in \Lambda_\omega\setminus \mathscr{P}.
$$
On the other hand, we know that
$$
f(t_1(\pm\lambda),t_2(\pm\lambda))=(t_3(\pm\lambda),t_4(\pm\lambda))=\varphi_3(\pm\lambda).
$$
This means that $\varphi_3(\pm\lambda)\in \Lambda(S)\setminus \mathscr{P}$. Hence by Lemma \ref{char-II-III}, we conclude that $\pm\lambda\in \Sigma_{III}.$ Now by Corollary \ref{stab-ustab}, $\Sigma_{III,1}=\{\lambda,-\lambda\}$.
\end{proof}

\noindent{\bf Proof of Theorem \ref{main-locdim} ($E\in \Sigma_{III}$).}  \
At first we fix $E\in\Sigma_{III,1}$. Then we have $\beta(E)=0$ or $\pi/2$.
 By \eqref{norm-u-L-III} and \eqref{est-M+}, we have
 \begin{eqnarray*}
  |m_{\beta(E)}^+(E+i\epsilon)|&\sim&
\frac{\|u_{\beta(E)+\pi/2,E}^+\|_{L_{\beta(E)}^+(\epsilon)}}{\|u_{\beta(E),E}^+\|_{L_{\beta(E)}^+(\epsilon)}}\to \infty\\
\nonumber |m_{\frac{\pi}{2}-\beta(E)}^+(E+i\epsilon)|&\sim&
\frac{\|u_{\pi-\beta(E),E}^+\|_{L_{\beta(E)}^+(\epsilon)}}{\|u_{\pi/2-\beta(E),E}^+\|_{L_{\beta(E)}^+(\epsilon)}}=
\frac{\|u_{\beta(E),E}^+\|_{L_{\beta(E)}^+(\epsilon)}}{\|u_{\beta(E)+\pi/2,E}^+\|_{L_{\beta(E)}^+(\epsilon)}}\to 0
 \end{eqnarray*}
 when $\epsilon\to 0$. Consequently $|m_{\beta(E)}^+(E+i\epsilon)||m_{\pi/2-\beta(E)}^+(E+i\epsilon)|\sim1$.
 Then by \eqref{formula-M}, we know that
\begin{equation*}
|M(E+i\epsilon)|\lesssim |m_{\beta(E)}^+(E+i\epsilon)|^{-1}.
\end{equation*}
We know that ( equation (3.1) of \cite{dJLS} )
$$
\Im M(E+i\epsilon)=\epsilon\int_{-\infty}^\infty\frac{d\mu(y)}{(E-y)^2+\epsilon^2}\ge\frac{\mu(E-\epsilon,E+\epsilon)}{2\epsilon}.
$$
Thus for any $\eta\in (0,2),$ by \eqref{est-M+}, \eqref{def-L-epsilon} and \eqref{norm-u-L-III},
\begin{eqnarray*}
\frac{\mu((E-\epsilon,E+\epsilon))}{(2\epsilon)^\eta}&\lesssim&
\epsilon^{1-\eta}|M(E+i\epsilon)|\lesssim \epsilon^{1-\eta} |m_{\beta(E)}^+(E+i\epsilon)|^{-1}\\
&\sim&{\|u_{\beta(E)+\pi/2,E}^+\|_{L_{\beta(E)}^+(\epsilon)}^{\eta-2}}{\|u_{\beta(E),E}^+\|_{L_{\beta(E)}^+(\epsilon)}^\eta}\\
&\lesssim& e^{\gamma(\eta-2)\sqrt{L_{\beta(E)}^+(\epsilon)}/4}L_{\beta(E)}^+(\epsilon)^{\eta(\alpha+1)}\to0.
\end{eqnarray*}
Thus  $D_\mu^\eta(E)=0$ for all $\eta<2$. Thus by \eqref{cha-locdim}, we have
$d_\mu(E)\ge 2. $

Since $d_\mu(E)\ge2$ for any $E\in \Sigma_{III,1}$,  and $\Sigma_{III,1}$ contains at most two points by Lemma \ref{sigma-3-1},  the same argument for  $\mu(\Sigma_I)=0$ shows that $\mu(\Sigma_{III,1})=0$.

Now fix $E\in\Sigma_{III,2}$, then  $\beta(E)\ne 0, \pm\pi/4, \pi/2$.
Still by \eqref{norm-u-L-III} and \eqref{est-M+},
\begin{equation}\label{713}
|m_{\beta(E)}^+(E+i\epsilon)|\sim
\frac{\|u_{\beta(E)+\pi/2,E}^+\|_{L_{\beta(E)}^+(\epsilon)}}{\|u_{\beta(E),E}^+\|_{L_{\theta(E)}^+(\epsilon)}}
\to \infty.
\end{equation}
On the other hand, since $\beta(E)\ne 0,\pm\pi/4, \pi/2,$ we conclude that
$$
\pi-\beta(E), \ \ \pi/2-\beta(E)\not\equiv \beta(E)\pmod \pi.
$$
Consequently $u_{\pi-\beta(E),E}^+$ and $u_{\frac{\pi}{2}-\beta(E),E}^+$ are independent with $u_{\beta(E),E}^+.$
Notice that the space of solutions of $H_\lambda u=Eu$ has dimension $2$,
and
$$\{u_{\beta(E),E}^+, u_{\beta(E)+\pi/2,E}^+\}
$$ is a basis of that space,
then there exist real numbers  $s_1,s_2$ and $t_1,t_2$, with $s_2,t_2\ne 0,$ such that
\begin{eqnarray*}
u_{\pi-\beta(E),E}^+&=&s_1u_{\beta(E),E}^++s_2u_{\beta(E)+\pi/2,E}^+,\\
u_{\frac{\pi}{2}-\beta(E),E}^+&=&t_1u_{\beta(E),E}^++t_2u_{\beta(E)+\pi/2,E}^+.
\end{eqnarray*}
By \eqref{norm-u-L-III}, we get
\begin{equation*}
\|u_{\pi-\beta(E),E}^+\|_{L}, \ \ \|u_{\frac{\pi}{2}-\beta(E),E}^+\|_{L}\sim \|u_{\beta(E)+\pi/2,E}^+\|_L
\end{equation*}
By \eqref{est-M+}, we get
\begin{equation*}
|m_{\pi/2-\beta(E)}^+(E+i\epsilon)|\sim 1.
\end{equation*}
Combine with  \eqref{formula-M} and \eqref{713}, we conclude that
\begin{equation*}
|M(E+i\epsilon)|\sim 1.
\end{equation*}
Then by \eqref{M-density}, $D_\mu^\alpha(E)=0$ for all $\alpha<1$. By \eqref{cha-locdim}, we have
$d_\mu(E)\ge 1. $
\hfill $\Box$

%%%%%%%%%%%%%%%%%%%%%%%%%%%

\section{Appendix}

 In this appendix, we give another proof for the fact  that the Thue-Morse Hamiltonian has no point spectrum.

\begin{thm}{\rm (\cite{HKS,DGR})}\label{no-point}
$H_\lambda$ has no point spectrum.
\end{thm}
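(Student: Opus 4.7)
The proof is by contradiction: suppose $\psi\in\ell^2(\Z)$ is a nonzero eigenfunction of $H_\lambda$ with eigenvalue $E$. Using the reflection identity \eqref{formu-reverse}, $\vec{\psi}_{-n}=UT_n(E)U\vec{\psi}_0$, so the $\ell^2$ hypothesis forces both $T_n(E)\vec{\psi}_0\to 0$ and $T_n(E)(U\vec{\psi}_0)\to 0$. Since $\det T_n(E)=1$, the wedge product
$$
T_n(E)\vec{\psi}_0\wedge T_n(E)(U\vec{\psi}_0)=\vec{\psi}_0\wedge U\vec{\psi}_0
$$
is constant and hence equals $0$. Therefore $\vec{\psi}_0$ is an eigenvector of $U$; since $U^2=I$, its eigenvalues are $\pm1$, which forces $\vec{\psi}_0\parallel v_{\pi/4}$ or $\vec{\psi}_0\parallel v_{-\pi/4}$.

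Next I exploit the block palindrome \eqref{parlin}: by Remark \ref{rembasic}, $A_{2n}(E)$ has opposite off-diagonal entries and trace $t_{2n}(E)$. A short computation shows that for $\vec{\psi}_0\parallel(1,\pm1)^t$ the sum (respectively difference) of the two components of $A_{2n}(E)\vec{\psi}_0$ has absolute value $|t_{2n}(E)|\,\|\vec{\psi}_0\|/\sqrt{2}$, so the inequality $(x\pm y)^2\le 2(x^2+y^2)$ gives
$$
\|\vec{\psi}_{2^{2n}}\|=\|A_{2n}(E)\vec{\psi}_0\|\ge \tfrac12|t_{2n}(E)|\,\|\vec{\psi}_0\|.
$$
Since $\vec{\psi}_{2^{2n}}\to 0$ we deduce $t_{2n}(E)\to 0$. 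Plugging this into the recurrence \eqref{recurrence}, written as $t_{2n}=t_{2n-2}^2(t_{2n-1}-2)+2$: if no $t_k(E)$ vanishes, then $t_{2n-2}(E)\to 0$ together with $t_{2n}(E)\to 0$ forces $t_{2n-1}(E)\to-\infty$. Hence either $E\in\Sigma_I$ (if $t_k(E)=0$ for some $k$) or $E\in\Sigma_{II}$ by Lemma \ref{char-II-III}.

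Both alternatives are then ruled out by earlier results. If $E\in\Sigma_I$, Proposition \ref{extended} gives $\|\vec{\psi}_n\|\ge C^{-1}\|\vec{\psi}_0\|$ for every $n$, contradicting $\psi\in\ell^2$. If $E\in\Sigma_{II}$, Theorem \ref{main-type-II} produces the unique subordinate solution with initial vector $v_{\eta\pi/4}$; by \eqref{subord-II} it satisfies $\vec{\psi}_{2^{2n+1}}\to\pm c\, v_{-\eta\pi/4}$ with $c\ne 0$ and so is not in $\ell^2$, while by \eqref{non-subord-II} any independent solution grows like $e^{\gamma\sqrt{n}/4}$. Because the first step already pins $\vec{\psi}_0$ into $\{v_{\pi/4},v_{-\pi/4}\}$, our $\psi$ must be either a scalar multiple of the subordinate solution or independent from it, and both cases contradict $\psi\in\ell^2$. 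The main obstacle in this plan is the second step: coupling the global reflection symmetry (which pins the boundary vector) with the antisymmetric structure of $A_{2n}$ arising from the block palindrome (which yields the lower bound $\gtrsim|t_{2n}(E)|$) is what turns the abstract $\ell^2$ assumption into the decisive trace condition $t_{2n}(E)\to 0$.
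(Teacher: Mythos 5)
Your proof is correct and follows essentially the same route as the paper's appendix argument: the reflection symmetry $T_{-n}=UT_nU$ pins $\vec{\psi}_0\parallel v_{\pm\pi/4}$, the antisymmetric off-diagonal structure of $A_{2n}$ forces $t_{2n}(E)\to 0$, the trace recurrence gives $t_{2n-1}(E)\to-\infty$ so $E\in\Sigma_{II}$ by Lemma \ref{char-II-III}, and Theorem \ref{main-type-II} rules out an $\ell^2$ solution. The only differences are cosmetic: you make the initial-vector argument (cited from \cite{DGR} in the paper) explicit via the wedge/determinant computation, and you split off the $\Sigma_I$ case via Proposition \ref{extended}, which the paper's version renders unnecessary since $t_{2n}\to 0$ already excludes any vanishing trace.
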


We begin with the following observation:

\begin{lemma}
If $E\in \R$ is such that $H_\lambda\phi=E\phi$ has a solution $\phi$ with $\phi_{\pm n}\to 0$ when $n\to\infty,$
then   $E$ is a type-II energy.
\end{lemma}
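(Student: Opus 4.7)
The plan is to extract from two-sided decay of $\phi$ the precise asymptotics of the trace polynomials that characterize $\Sigma_{II}$ via Lemma \ref{char-II-III}, rather than ruling out $\Sigma_{III}$ and ``other'' energies separately.

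\textbf{Step 1} (the initial condition satisfies $\vec{\phi}_0\parallel v_{\pm\pi/4}$).  The hypothesis gives $\|T_n\vec{\phi}_0\|\to 0$, and Lemma \ref{dp}(iii), together with the fact that $U$ is isometric, yields
\begin{equation*}
\|T_nU\vec{\phi}_0\|=\|UT_nU\vec{\phi}_0\|=\|\vec{\phi}_{-n}\|\to 0.
\end{equation*}
Specializing to $n=2^k$, we get $A_k\vec{\phi}_0\to 0$ and $A_kU\vec{\phi}_0\to 0$.  If $\vec{\phi}_0$ and $U\vec{\phi}_0$ were linearly independent, $A_k$ would contract a basis to zero, contradicting $\det A_k=1$.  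Hence $U\vec{\phi}_0\parallel\vec{\phi}_0$, so $\vec{\phi}_0$ is a $\pm1$-eigenvector of $U$; since $Uv_{-\pi/4}=v_{-\pi/4}$ and $Uv_{\pi/4}=-v_{\pi/4}$, we conclude $\vec{\phi}_0\parallel v_{\pm\pi/4}$.

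\textbf{Step 2} ($t_{2n}(E)\to 0$).  In the orthonormal basis $\{v_{-\pi/4},v_{\pi/4}\}$ one has $U=\mathrm{diag}(1,-1)$.  Remark \ref{rembasic} states $UA_{2n}U=A_{2n}^{-1}$; combined with $\det A_{2n}=1$, this forces the matrix of $A_{2n}$ in the new basis to have equal diagonal entries, say $\alpha_{2n}$, so that $t_{2n}(E)=\tr A_{2n}=2\alpha_{2n}$.  By Step 1, $\vec{\phi}_0$ is (up to a scalar) one of the two basis vectors, so $A_{2n}\vec{\phi}_0\to 0$ forces the corresponding column of the basis-matrix of $A_{2n}$ to vanish; in particular $\alpha_{2n}\to 0$, whence $t_{2n}(E)\to 0$.

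\textbf{Step 3} ($t_{2n-1}(E)\to -\infty$ and conclusion).  Since $t_{2n}(E)\to 0$, the sequence cannot stabilize at $2$, so Lemma \ref{basic}(i)-(ii) gives $E\notin\Sigma_I$ and $t_{2n}(E)\ne 0$ for every $n\ge 1$.  Rewriting \eqref{recurrence} at index $2n+1$ yields
\begin{equation*}
t_{2n+1}(E)-2=\frac{t_{2n+2}(E)-2}{t_{2n}(E)^2}.
\end{equation*}
The numerator tends to $-2$ and is eventually $\le -1$, while the denominator is positive and tends to $0$; hence $t_{2n+1}(E)\to -\infty$.  Combining the two limits $t_{2n}\to 0$ and $t_{2n-1}\to -\infty$ with Lemma \ref{char-II-III} gives $E\in\Sigma_{II}$.

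The conceptual heart of the argument is Step 1, in which the reflection symmetry $T_{-n}=UT_nU$ of the two-sided Thue-Morse potential about $1/2$, together with $\det T_n=1$, pins down the only possible boundary condition for a two-sided decaying solution; once this is in hand, the $U$-symmetry of the even transfer matrices reduces $t_{2n}(E)$ to a single matrix entry that must vanish, and the trace recurrence delivers the remaining half of the type-II characterization.
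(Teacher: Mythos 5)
Your proposal is correct and takes essentially the same route as the paper's own proof: pin the boundary condition to $v_{\pm\pi/4}$ via the reflection symmetry $T_{-n}=UT_nU$, use the symmetric structure of $A_{2n}$ (Remark \ref{rembasic}) to force $t_{2n}(E)\to 0$, then feed the trace recurrence to get $t_{2n-1}(E)\to-\infty$ and conclude by Lemma \ref{char-II-III}. The only differences are presentational: your determinant argument in Step 1 makes self-contained what the paper delegates to the citation \cite{DGR}, and your Step 2 works in the $U$-eigenbasis rather than with the explicit entries of $A_{2n}$; you are also slightly more careful than the paper in justifying $t_{2n}(E)\ne 0$ before dividing in Step 3.
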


\proof\ Since  the two-sided Thue-Morse sequence is symmetric w.r.t. $1/2$,
$\vec{\phi}_0$ has only two choices: $v_{\pi/4}$ or $v_{-\pi/4}$, as argued in \cite{DGR}.
Without loss of generality, we assume $\vec{\phi}_0=v_{\pi/4}=(1,-1)^t/\sqrt{2}$. Then we have
$$
A_{2n}\vec{\phi}_0= (\phi_{2^{2n}+1},\phi_{2^{2n}})^t\to\vec{0}
$$
when $n\to\infty.$
On the other hand, by Remark  \ref{rembasic}, $A_{2n}$ has the following form( see also \cite{DGR}):
$$
A_{2n}=
\begin{pmatrix}
a_n& b_n\\
-b_n& c_n
\end{pmatrix}.
$$
Thus $a_n-b_n=\sqrt{2}\phi_{2^{2n}+1}$ and $-b_n-c_n=\sqrt{2}\phi_{2^{2n}}$.
Consequently
$$
t_{2n}(E)=a_n+c_n=\sqrt{2}(\phi_{2^{2n}+1}-\phi_{2^{2n}})\to 0.
$$
By the recurrence relation of $t_n$, we get
$$
t_{2n+1}(E)=2-\frac{2-t_{2n+2}(E)}{t_{2n}^2(E)}\to -\infty.
$$
By Lemma \ref{char-II-III}, $E$ is a type-II energy.
\hfill $\Box$

\bigskip

\noindent {\bf Proof of Theorem \ref{no-point}}. \
Assume on the contrary that  $E$ is a point spectrum of $H_\lambda$,
then there exists nonzero $\phi\in\ell^2(\Z)$ such that $H_\lambda\phi=E\phi.$
Then by the above lemma, $E$ is a type-II enenrgy.
However this will contradict with Theorem \ref{main-type-II},
since by that theorem, no $\ell^2$ solution is possible for $E\in \Sigma_{II}.$
\hfill $\Box$

\smallskip

\noindent{\bf Acknowledgement}. The authors  thank Professor Wen Zhiying and Professor Rao Hui for helpful discussions. Part of the work was done when the second author visited the Chinese University of Hong Kong in July 2015, he thanks CUHK for their hospitality.   Liu was  supported by the National Natural Science Foundation of China, No. 11371055 and No. 11571030.  Qu was supported by the National Natural Science Foundation of China, No. 11201256, No. 11371055 and No. 11431007.

%%%%%%%%%%%%%%%%%%%%%%%%%%%%%%%%%%

\end{document}